\newif\ifjrssb
\jrssbfalse

\ifjrssb
\documentclass{statsoc}
\usepackage[a4paper, top=1in, bottom=1in, left=1in, right=1in]{geometry}

\else
\documentclass[letterpaper,11pt]{article}
\usepackage{geometry}
\geometry{margin=1in}
\usepackage{authblk}
\fi

\usepackage{amsmath}
\usepackage{commath}
\usepackage{amsthm}
\usepackage{amssymb}
\usepackage{setspace,mathrsfs}
\usepackage{algorithm}
\usepackage{algorithmic}
\usepackage{url,amsmath,amsfonts,amssymb}
\usepackage{graphicx}
\usepackage{listings,enumerate,color,relsize,multirow,rotating}
\usepackage{bbm}
\usepackage{natbib}
\usepackage{hyperref}
\hypersetup{colorlinks,citecolor=blue,urlcolor=blue,linkcolor=blue}
\usepackage{bm}
\usepackage{lipsum}
\usepackage[english]{babel}
\usepackage{enumitem}
\usepackage{footmisc}
\usepackage{makecell}

\definecolor{purple}{rgb}{0.84, 0.17, 0.89}
\definecolor{red2}{rgb}{0.7, 0, 0.1}

\usepackage{color}
\newcommand{\darkred}{\color{black}}

\usepackage{my_tex}

\def\ep{\varepsilon}
\def\bep{\boldsymbol{\ep}}

\def\bs{\boldsymbol{s}}

\newcommand\blfootnote[1]{%
  \begingroup
  \renewcommand\thefootnote{}\footnote{#1}%
  \addtocounter{footnote}{-1}%
  \endgroup
}

\ifjrssb
\title[Maxway CRT]{Maxway CRT: Improving the Robustness of the Model-X Inference}
\author{Shuangning Li}
\address{Department of Statistics, Harvard University}
\author[Shuangning Li and Molei Liu]{Molei Liu \footnote{The two authors have equal contributions to this work. We thank Emmanuel Cand\`es, Lucas Janson, Lihua Lei, and Zhimei Ren for insightful comments and helpful discussions, and Tianxi Cai for helpful suggestions on the real case study.}}
\address{Department of Biostatistics, Columbia Mailman School of Public Health}

\else
\title{Maxway CRT: Improving the Robustness of the Model-X Inference \blfootnote{The two authors have equal contributions to this work. We thank Emmanuel Cand\`es, Lucas Janson, Lihua Lei, and Zhimei Ren for insightful comments and helpful discussions, and Tianxi Cai for helpful suggestions on the real case study.
}}
\author[1]{Shuangning Li}
\author[2]{Molei Liu}

\affil[1]{Department of Statistics, Harvard University}
\affil[2]{Department of Biostatistics, Columbia Mailman School of Public Health}
\fi

\begin{document}

\maketitle
\date{}

\begin{abstract}
\noindent The model-X conditional randomization test (CRT) proposed by \cite{candes2018panning} is known as a flexible and powerful testing procedure for conditional independence: $X\indp Y\mid Z$. Though having many attractive properties, it relies on the model-X assumption that we have access to perfect knowledge of the distribution of $X$ conditional on $Z$. If there is an error in modeling the distribution of $X$ conditional on $Z$, this approach may lose its validity. This problem is even more severe when the adjustment covariates $Z$ are of high dimensionality, in which situation precise modeling of $X$ against $Z$ can be hard.
In response to this, we propose the Maxway ({\bf M}odel and {\bf A}djust {\bf X} {\bf W}ith the {\bf A}ssistance of {\bf Y}) CRT, a robust inference framework for conditional independence when the conditional distribution of $X$ is unknown and needs to be estimated from the data. The Maxway CRT learns the distribution of $Y\mid Z$, using it to calibrate the resampling distribution of $X$ to gain robustness to the error in modeling $X$. 
We prove that the type-I error inflation of the Maxway CRT can be controlled by the learning error for the low-dimensional adjusting model plus the product of learning errors for $X\mid Z$ and $Y\mid Z$, which could be interpreted as an ``almost doubly robust" property. {\darkred Based on this, we develop implementing algorithms of the Maxway CRT in practical scenarios including (surrogate-assisted) semi-supervised learning and transfer learning where valid information about $Y\mid Z$ can be potentially provided by some auxiliary or external data.} Through extensive simulation studies under different scenarios, we demonstrate that the Maxway CRT achieves significantly better type-I error control than existing model-X inference approaches while preserving similar powers. {\darkred Finally, we apply our methodology to two real examples, including (1) studying obesity paradox with electronic health record (EHR) data assisted by surrogate variables; (2) inferring the side effect of statins among the ethnic minority group via transferring knowledge from the majority group.}

\end{abstract}

\noindent{\bf Keywords}: Conditional randomization test; Machine learning; Double robustness; Semi-supervised learning; Surrogate; Transfer learning.

\section{Introduction}
In many fields such as biology, biomedical science, economics, and political science, it is often of great importance to understand the causal or association relationship between some response $Y$ and some explanatory variable $X$ conditioning on a large number of confounding variables $Z\in \mathbb{R}^p$. For example, geneticists may want to know whether a particular genetic variant is related to the risk of a disease conditional on other genetic variants. 
These problems are often handled by modeling $Y$ against $X$ and $Z$ through some parametric or semiparametric model. Researchers may consider a gaussian linear model, and encode the relevance between $Y$ and $X$ conditional on $Z$ in some key model parameter, e.g. the regression coefficient of $X$. Methodology and theory for the statistical inference of such parametric or semiparametric models have been well studied and widely applied in practice; see e.g. \cite{chernozhukov2016double}. However, this relatively standard and classic strategy has been criticized as being invalid and powerless in certain cases due to potential model misspecification and relatively limited observations of $Y$.

As an alternative strategy, the model-X framework and conditional randomization test (CRT) proposed by \cite{candes2018panning} formulate this fundamental and central problem as testing for the general conditional independence hypothesis $H_0: X\indp Y\mid Z$ free of any specific effect parameters. Instead of imposing model assumptions for $Y\mid (X,Z)$ and testing for the effect of $X$ based on its estimate, the model-X CRT assumes the distribution of $X\mid Z$ to be known. With perfect knowledge of the distribution of $X\mid Z$, it controls the type-I error exactly (non-asymptotically) and allows for the choice of any test statistic.
This strategy can be particularly useful when there is either strong and reliable scientific knowledge of the distribution of $X\mid Z$ \citep[e.g.,][]{sesia2020multi,bates2020causal} or an auxiliary dataset of $(X,Z)$ of potentially large sample size, known as the semi-supervised setting. Though either case enables more precise than the usual characterization of the conditional distribution of $X$ given $Z$, the model-X CRT in these practical situations is still far from being perfect; it suffers from estimation errors, raising concerns regarding its robustness under imperfect knowledge of $X\mid Z$. 

In this paper, our goal is to develop an approach that improves the robustness of the model-X CRT, i.e., lessens type-I error inflation arising from the specification error of $X\mid Z$, while at the same time preserving its advantages over parametric (or semiparametric) inference, e.g., generality, flexibility, and powerfulness.

\subsection{Background}\label{sec:background}

Suppose there are $n$ i.i.d. samples of $(Y,X,Z)$ denoted as $(Y_i,X_i,Z_{i \cdot})$, and let $\by=(Y_1,Y_2,\ldots,Y_n)\trans\in\mathbb{R}^n$, $\bx=(X_1,X_2,\ldots,X_n)\trans\in\mathbb{R}^n$, and $\Z=(Z_{1\cdot},Z_{2\cdot},\ldots,Z_{n\cdot})\trans\in\mathbb{R}^{n\times p}$. The goal is to test whether
\begin{equation}
\label{eqn:indep_test}
X \indp Y \mid Z,
\end{equation}
i.e., whether $X$ provides extra information about $Y$ beyond what is already provided by $Z$. The model-X CRT presented in Algorithm \ref{alg:crt} is a general framework for testing \eqref{eqn:indep_test}. 
Proposition \ref{thm:crt} establishes that with perfect knowledge of the distribution of $\x\mid\Z$, this approach controls the type-I error rate exactly. 
\begin{algorithm}[htbp]
\caption{\label{alg:crt} The model-X conditional randomization test (CRT).}
{\bf Input:} Knowledge of the distribution of $\X\mid\bZ$, data $\bD=(\by,\bx,\bZ)$, test statistic (i.e., importance measure) function $T$, and number of randomizations $M$.\\
\vspace{0.2cm}
{\bf For} $m=1,2,...,M$: Sample $\bx\supm$ from the distribution of $\bx\mid\bZ$ independently of $(\bx,\by)$.\label{algline:cr}\\
\vspace{0.2cm}
{\bf Output:} CRT $p$-value $p_{\operatorname{mx}}(\bD)=\frac{1}{M+1}\left(1+\sum_{m=1}^M \One{T(\by,\bx\supm,\bZ)\geq T(\by,\bx,\bZ)}\right)$.
\end{algorithm}

\begin{prop}[\cite{candes2018panning}]
\label{thm:crt}
The $\pval$ $p_{\operatorname{mx}}(\bD)$ satisfies $\mathbb{P}_{H_0}(p_{\operatorname{mx}}(\bD)\le \alpha)\le\alpha$ for any $\alpha\in [0,1].$
\end{prop}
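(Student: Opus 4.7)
The plan is to prove the bound via a conditional exchangeability argument. First, under $H_0$ and i.i.d.\ sampling of $(Y_i,X_i,Z_{i\cdot})$, we have $\bx\indp\by\mid\bZ$ at the sample level, so the conditional distribution of $\bx$ given $(\by,\bZ)$ coincides with the conditional distribution of $\bx$ given $\bZ$. Since the algorithm draws each $\bx\supm$ from the latter distribution, independently of $(\bx,\by)$ and of one another, the augmented tuple $(\bx,\bx^{(1)},\ldots,\bx^{(M)})$ is i.i.d.\ conditional on $(\by,\bZ)$, and therefore exchangeable given $(\by,\bZ)$.

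Next, I would apply the deterministic map $T(\by,\cdot,\bZ)$ coordinate-wise to conclude that the $(M+1)$-tuple of test statistics
\[
\bigl(T(\by,\bx,\bZ),\,T(\by,\bx^{(1)},\bZ),\,\ldots,\,T(\by,\bx^{(M)},\bZ)\bigr)
\]
is also exchangeable given $(\by,\bZ)$. The output $p_{\operatorname{mx}}(\bD)$ is then the standard exchangeability-based rank statistic: by a classical argument, the rank of any designated coordinate within an exchangeable tuple is uniform on $\{1,\ldots,M+1\}$, so
\[
\mathbb{P}\bigl(p_{\operatorname{mx}}(\bD)\le\alpha\,\big|\,\by,\bZ\bigr) \le \alpha
\]
for every $\alpha\in[0,1]$, and marginalizing over $(\by,\bZ)$ yields the unconditional bound.

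The only subtle point, and the main obstacle to a one-line argument, is the possibility of ties among the test statistics. This is handled automatically by the construction of $p_{\operatorname{mx}}$: the ``$+1$'' in the numerator and the use of ``$\ge$'' rather than ``$>$'' make the resulting $p$-value conservative relative to the one obtained by breaking ties uniformly at random, and the tie-broken rank is exactly uniform on $\{1,\ldots,M+1\}$ by exchangeability. Thus the genuinely substantive step is verifying the exchangeability of the augmented tuple, which hinges crucially on the fact that under $H_0$ the conditional law of $\bx$ given $\bZ$ is unaffected by further conditioning on $\by$; the rank-uniformity calculation is then routine.
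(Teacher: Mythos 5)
Your proof is correct and follows the standard exchangeability argument: the paper states this result as a cited proposition from \cite{candes2018panning} without reproving it, but the identical reasoning (under $H_0$ the law of $\bx$ given $(\by,\bZ)$ equals that given $\bZ$, so the resampled copies are exchangeable with $\bx$ even after conditioning on $\by$, and the rank-based $p$-value is stochastically no smaller than uniform, with ties handled conservatively by the ``$\ge$'' and the ``$+1$'') is exactly what the paper deploys in case (i) of its proof of Theorem \ref{theo:exact_inference}. No gaps.
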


Now suppose that we do not know the distribution of $\x\mid\Z$, instead, we learn it through some learning algorithms based on some auxiliary information and dataset; see Section \ref{sec:in_practice} for some specific examples.
Let $f(\x \mid \Z)$ be the probability (density or mass) function of $\x$ given $\Z$, and let $\fhat(\x\mid\Z)$ be the estimate of $f(\x \mid \Z)$ learned from the auxiliary data. Then we use the fitted $\fhat(\x\mid\Z)$ as the (imperfect) knowledge of the distribution of $\bx\mid\bZ$ in Algorithm \ref{alg:crt} to implement the CRT and obtain the $p$-value denoted by $p_{\operatorname{mx}}(\bD;\fhat)$. As shown by \cite{berrett2018conditional}, for certain test statistic, the type-I error inflation of the model-X CRT with $p_{\operatorname{mx}}(\bD;\fhat)$ can be as large as the expectation of $d\subTV(f,\fhat)$, where $d\subTV(a,b)$ represents the total variation distance between two distributions $a$ and $b$. Meanwhile, for semiparametric inference approaches like post-double-selection \citep{belloni2014inference} and double machine learning \citep{chernozhukov2016double}, their type-I error inflation can generally be expressed as $n^{-1/2}$ (built upon the regular central limit theorem) plus the product of the learning errors for $X\mid Z$ and $Y\mid Z$, the latter known as ``double robustness". Nevertheless, the model-X CRT, though also used for detecting conditional independence, does not pursue such double robustness through learning and adjusting for both $X\mid Z$ and $Y\mid Z$ like post-double-selection or DML. Thus, it is substantially more sensitive to the learning error of $X\mid Z$. As illustrated in Figure 1 of \cite{chernozhukov2016double}, in a different context but in a similar vein, such loss in robustness can incur severe invalidity.


\subsection{Our contribution}\label{sec:intro:cont}

To improve the robustness of the model-X CRT to the specification error of the distribution of $X\mid Z$, we propose a new approach for conditional independence testing named Maxway (Model and Adjust $X$ with the assistance of $Y$) CRT. As a special class of the CRT procedure, the Maxway CRT is more elaborate in terms of both specifying the resampling distribution of $X$ and choosing the test statistic $T$, achieving the double robustness in type-I error control that cannot be readily achieved by the model-X CRT. In addition to modeling $X$ through some $h(\Z)$ sufficient to characterize $\x\mid\Z$, our approach learns and extracts some low-dimensional $g(\Z)$ sufficient for characterizing the dependence of $\by$ on $\Z$. Next, it adjusts the estimated distribution of $\x\mid\Z$ against $g(\Z)$ and resamples from the adjusted distribution for randomization testing with the test statistic restricted to the form $T(\by,\x,g(\Z),h(\Z))$.

As we show, the gain of the additional adjustment is to substantially reduce the type-I error inflation of the model-X CRT (See equation (\ref{eqn:original_CRT_bound_var})) to $\mathbb{E}[\Delta_x\Delta_y]+\mathbb{E}[\Delta_{x|g,h}]$ of the Maxway CRT, where the terms $\Delta_x$, $\Delta_y$ and $\Delta_{x|g,h}$ represent the estimation errors (measured by the total variation distance) of the distributions of $\x\mid \Z$, $\by\mid \Z$ and $\x\mid \{g(\Z),h(\Z)\}$ respectively; see Section \ref{sec:theory:robust:1} for more details. As mentioned in Section \ref{sec:background}, the term $\Delta_x\Delta_y$ captures the double robustness in the sense that either accurate specification of $\x\mid \Z$ or $\by\mid \Z$ means that this term will be small. Meanwhile, $\Delta_{x|g,h}$ is the learning error under a much lower dimensionality than that of $\Z$, and thus can be substantially smaller than $\Delta_x$ with the most typical learning tools used in practice. In Section \ref{section:specific_stats}, we also establish sharper bounds for the type-I error inflation when the Maxway CRT is applied with specific test statistics. 

{\darkred
Based on this Maxway framework, we design novel robustified CRT algorithms under three scenarios frequently appearing in contemporary data-driven studies, including typical semi-supervised learning (SSL), surrogate-assisted SSL (SA-SSL), and transfer learning (TL). In both the SA-SSL and TL settings, large auxiliary samples are incorporated to learn side information about the above-introduced low-dimensional $g(\Z)$ characterizing $Y\mid \Z$. Consistent with our theoretical findings, such procedures are shown to enhance the robustness of the CRT effectively. We demonstrate the finite-sample utility and practical versatility of the proposed Maxway methods under the three scenarios through comprehensive simulation studies and two real examples.

}



\subsection{Related work}\label{sec:intro:com}

Our work builds upon the model-X framework proposed by \cite{candes2018panning}. The robustness of the commonly used model-X approaches under imperfect knowledge of the conditional distribution of $X$ is a crucial problem and has been frequently studied. Specifically, \cite{barber2020robust} showed for the knockoffs \citep{candes2018panning} that the inflation of its false discovery rate is proportional to the estimation error of every single feature's conditional distribution given the remaining features. Similarly, \cite{berrett2018conditional} proved that type-I error inflation of the model-X CRT conducted with an estimated conditional distribution of $X$ can be bounded by the total variation distance between the estimated distribution and the true distribution. 

Recent progress has enhanced the robustness of model-X inference. \cite{huang2020relaxing} proposed a robust knockoff inference procedure that is conditional on the observed sufficient statistic of the model of the features. Their approach is shown to be exactly valid when the features' distribution is not known but a parametric model is correctly specified for the conditional distribution of the response given the features. 
\cite{berrett2018conditional} proposed conditional permutation test (CPT), a variant of the CRT that fixes $\x$'s values and randomizes only on its permutation given $\Z$ to conduct the randomization test. They showed that the proposed CPT has no larger type-I error inflation than the CRT. \cite{sudarshan2021contra} enhanced robustness of the holdout randomization test \citep{tansey2018holdout} using an equal mixture of two contrarian models for importance measure. When $X\mid Z$ is misspecified, their importance measure fits better on the resampled $X$ than the observed $X$ and their procedure alleviates the inflation of FDR. 

To the best of our knowledge, \cite{berrett2018conditional} is perhaps the most relevant to our approach in terms of the overall setup. However, our approach mainly aims to handle the failure or inadequacy in adjusting for the effect of $Z$ confounding $X$ and $Y$, whereas the CPT generally handles all types of specification errors, e.g., the estimation error in $\sigma^2$ when $X\mid Z\sim\mathcal{N}(0,\sigma^2)$. Both theoretically and numerically, we show that our approach is more effective than the CPT in removing the confounding effects of $Z$ to control the type-I error inflation. In addition, the idea of \cite{berrett2018conditional} is to condition on more information than just $\Z$ for randomization tests, i.e. the order statistics of $\x$ given $\Z$. This strategy tends to make the test conservative and to reduce the power compared with the original CRT, no matter whether $X\mid Z$ is estimated well or not. In comparison, as discussed in Section \ref{sec:sim}, our approach does not necessarily sacrifice power due to such overall conservativeness. Finally, our approach can be naturally combined with \cite{berrett2018conditional} for potentially more robust inference than either of them alone, which means ours is a tool-kit not to replace but to improve existing robust model-X inference procedures.

In our approach, the test statistic depends on $\Z$ only through some low-dimensional $g(\Z)$ and $h(\Z)$ given $\x$ and $\by$. This restriction is technically similar to the distilled CRT (dCRT) proposed by \cite{liu2020fast}. Their purpose is to reduce the computation burden of the CRT. dCRT first distills the information about $\x\mid\Z$ and $\y\mid\Z$ into some low-dimensional functions of $\Z$. Then, it constructs the test statistic with the distilled data (and $\by,\x$) to avoid refitting any high dimensional models and thus speed up the resampling procedure. {\darkred As was shown by extensive simulation studies in \cite{liu2020fast} and asymptotic power analysis in \cite{katsevich2020theoretical} and \cite{wang2020power}, restricting to test statistic of the form $T(\by,\x,g(\Z),h(\Z))$ in the CRT preserves the high power of the original CRT in most common setups like partially linear and hierarchical interaction relationships between $X$ and $Y$ given $Z$. In Section \ref{sec:discuss}, we discuss on the relationship between our approach and the dCRT, as well as the implication of the above-mentioned power studies on the performance of our approach.
}


Our work is related to the semiparametric (asymptotic) inference approaches that have been studied for a long time. Among them, the post-double-selection approach \citep{belloni2014inference} and the double machine learning (DML) framework \citep{chernozhukov2016double} are perhaps the most relevant to us, since they also rely on complex machine learning algorithms to estimate $X\mid Z$ and $Y\mid Z$. One recent manuscript, \cite{dukes2021doubly} proposed a calibration method to further improve the robustness of the DML inference to inconsistent or misspecified machine learning (ML) estimators of the nuisance models. Their idea of calibration seems similar to ours from a high-level viewpoint. Nonetheless, these approaches have substantially different contexts and utility from our work. Specifically, it only covers some special cases of the testing models for $Y$, e.g., the partially linear or generalized linear model \citep{denis2021regularized,liu2021double}. For more complex test statistics accommodated by our approach, e.g., the variable importance of random forest, DML cannot provide a valid implementation. Also, DML draws inference based on the asymptotic normality of the test statistics, while our approach is non-asymptotic and thus more robust to data with small sample sizes or heavy tail variables; see Section 4.2 and D.5 of \cite{liu2020fast} for demonstration. 


{\darkred
In addition, we consider the implementation of our new framework in several prevalent practical scenarios frequently studied in recent literature, including semi-supervised learning (SSL), surrogate-assisted SSL (SA-SSL), and transfer learning (TL). In the model-X context, \cite{berrett2018conditional} also studied the SSL setting where the unlabeled (without $Y$) data have a large sample size to enable relatively accurate (but not perfect) estimation of $\x\mid \Z$. Our strategy of using some prior estimators about $\y\mid \Z$ learned from learned auxiliary data to guide the CRT, is conceptually related to lots of recent work in SA-SSL \citep[e.g][]{hong2019semi,zhang2022prior} and TL \citep[e.g.][]{li2022transfer,tian2022transfer,cai2022semi}, which has shown the great potential of application in fields like biomedical studies. While the main focus of this track is usually estimating $\y\sim \Z$ itself instead of adjusting for it in conditional independence testing, it also highlights adaptivity and robustness to the poor quality of the external knowledge about $\y\mid \Z$, which are pursued in our proposed framework as well.

Finally, we notice a recent paper by \citet{niu2022reconciling}, where the authors show that the model-X CRT with the d$_0$ statistic is doubly robust under weak assumptions. 
We want to point out that the results in \cite{niu2022reconciling} do not contradict with ours; rather, the two sets of results are largely orthogonal. In essence, the Maxway CRT achieves (almost) double robustness through sampling $\x^{(m)}$ conditioning additionally on $g(\Z)$, which captures the relationship between $\y$ and $\Z$. On the other hand, the d$_0$CRT achieves double robustness through specific construction of statistics, in this case, the d$_0$ statistic. Our Maxway CRT framework can be applied with more general test statistics, including ones involving random forest (See Implementation example \ref{example:2}). Moreover, we can combine the Maxway CRT framework with the d$_0$ statistic to increase the procedure's robustness even further. We will demonstrate this through simulation studies in Section \ref{sec:sim}.}

\section{Method}\label{sec:method}
\subsection{Motivation: a simple model}
\label{subsection:motivation}

Suppose $Z$ is a high dimensional random variable, but $X$ and $Y$ depend only on a small subset of the $Z_j$'s. Specifically, assume that $Z=(Z_1,\ldots,Z_p)$, and that conditioning on $Z$, $X$ and $Y$ follow the following simple model under the null hypothesis: 
\begin{equation}
\label{eqn:mov_linear}
X = \phi(Z_1, Z_2) + \varepsilon,
\quad 
Y = \psi(Z_2, Z_3) + \eta,
\end{equation}
where $\varepsilon, \eta \sim \mathcal{N}(0,1)$ independently of $Z$. 
Let $\mcS_{x} = \cb{1,2}$ and $\mcS_{y} = \cb{2,3}$. Correspondingly, 
let $Z_{\mcS_{x}} = \cb{Z_1, Z_2}$ be the set of $Z_j$'s that $X$ depends on and $Z_{\mcS_{y}} = \cb{Z_2, Z_3}$ be the set of $Z_j$'s that $Y$ depends on. 
In such a setting, in order to implement the original model-X CRT, we need to know the set $\mcS_x$ and the distribution of $X \mid Z_{\mcS_x}$. 
Assume for now that given a set $\mcS$ whose cardinality is not huge, we are able to learn the distribution of $X \mid Z_S$ accurately. 
In practice, this can be achieved when one is willing to make some assumptions on the function $\phi$; for example, linear regression could be used if $\phi$ is assumed to be linear and kernel regression can be used if $\phi$ is assumed to be smooth. 
Now if we are given a set of indices $\mcS$ and the test statistic $T(\cdot)$ is taken to be a function of $\x$, $\y$ and $\Z_{\cdot \mcS}$, then the procedure of the model-X CRT simplifies to the following:
\begin{enumerate}
\item For $m=1,2,...,M$: 

\qquad Sample $\bx\supm$ from the distribution of $\bx\mid\bZ_{\cdot \mcS}$ independently of $(\bx,\by)$.
\item Output the model-X CRT $p$-value
\begin{equation}
p_{\operatorname{mx}}(\bD)=\frac{1}{M+1}\left(1+\sum_{m=1}^M \One{T(\by,\bx\supm,\bZ_{\cdot \mcS})\geq T(\by,\bx,\bZ_{\cdot \mcS})}\right). 
\end{equation}
\end{enumerate}

On the one hand, we can immediately verify that the above procedure produces a valid $\pval$ if the set $\mcS$ contains $\cb{1,2}$. In this case, $Z_{\mcS}$ contains all the $Z_j$'s that $X$ depends on, and thus under the null hypothesis, the distribution of $\x \mid \Z_{\cdot \mcS}$ is the same as the distribution of $\x \mid \Z$. Therefore, the validity of the procedure follows from the standard analysis of the model-X CRT. 

On the other hand, the $\pval$ is also valid if the set $\mcS$ contains $\cb{2,3}$. To see this, note that when $\mcS$ contains all the $Z_j$'s that $Y$ depends on, $Y \indp Z \mid Z_{\mcS}$. Under the null hypothesis, we have $X \indp Y \mid Z$, therefore, $X \indp Y \mid Z_{\mcS}$. A close look at the above procedure shows that it can be treated alternatively as a CRT for $(X, Y, Z_{\mcS})$, i.e., the procedure is a valid testing procedure for whether $X \indp Y  \mid Z_{\mcS} $. These imply that the $\pval$ is valid if the set $\mcS$ is a superset of $\mcS_y$.
                                                                           
At a high level, the above observation shows that some knowledge of how $Y$ depends on $Z$ can be useful in enhancing robustness in CRT. More specifically,  to achieve validity in this motivating example, the best we can hope for in the absence of any information or prior knowledge on $Y$ is the set $\mcS$ to contain the index set $\mcS_x$. Extra information on the distribution of $Y$ relaxes the condition of the validity of the $\pval$; the procedure is valid when the set $\mcS$ contains either $\mcS_x$ or $\mcS_y$. 

\subsection{Maxway: model and adjust X with the assistance of Y}

More generally, assume that under the null hypothesis, there exist low-dimensional functions $h(\cdot)$ and $g(\cdot)$ that are sufficient for characterizing the distribution of $X\mid Z$ and $Y\mid Z$ in the sense that 
\begin{equation}
X\indp Z\mid h(Z),\quad Y\indp Z\mid g(Z).
\end{equation}
For the example we considered in Section \ref{subsection:motivation}, one possible choice of $h$ and $g$ is $h(Z) = (Z_1, Z_2)$ and $g(Z) = (Z_2, Z_3)$. Figure \ref{fig:diagram} shows an illustration of the relationship of $X, Y, Z$ in this example. In other settings, the form of $h$ and $g$ could be different. For example, if we consider a sparse linear model, i.e. let $X = \theta\trans Z_{\mcS_x} + \varepsilon$, where $\mcS_x \subset \cb{1, \dots, p}$ and $\varepsilon \sim \mathcal{N}(0,1)$ independently of $\Z$, then $h(Z)$ could be the vector of random variables $Z_{\mcS_x}$, or it could also be the mean function $\theta\trans Z_{\mcS_x}$. In particular, $X \indp Z \mid Z_{\mcS_x}$ and $X \indp Z \mid \theta\trans Z_{\mcS_x}$. 
In practice, the choice of functions $h$ and $g$ may not be ``perfect". In other words, $X \indp Z \mid h(Z)$ and $Y \indp Z \mid g(Z)$ may not hold exactly for our specified $h$ and $g$. 
Oftentimes, we learn $h$ and $g$ from some auxiliary or external data; see Section \ref{sec:in_practice} for more details. 

\begin{figure}[t]
    \centering
    \includegraphics[width=0.7\textwidth]{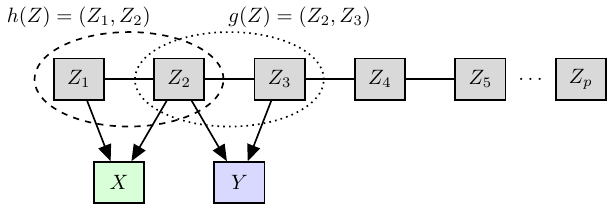}
    \caption{An illustration of a possible choice of the functions $h$ and $g$ in the simple model}
    \label{fig:diagram}
\end{figure}

Throughout this section, 
our goal is to study whether a phenomenon similar to that described in Section \ref{subsection:motivation} can be seen here and whether it is possible to develop a more general procedure that uses information of $Y$ to enhance the robustness of the model-X CRT. Algorithm \ref{alg:maxway} describes an algorithm that makes use of such information and is more robust to misspecification of the distribution of $X\mid Z$. The name ``Model and adjust $X$ with the assistance of $Y$" emphasizes the importance of X-modeling itself and the helpfulness of Y-modeling in enhancing robustness. 

Before proceeding, we introduce a few more notations here. For any function $a$ of $Z$, we often overload the notation and write $a(\Z) = (a(Z_{\cdot 1}), \dots, a(Z_{\cdot n}))\trans$. Let $\rhos(\cdot \mid g(Z), h(Z))$ be the conditional distribution of $X$ given $g(Z)$ and $h(Z)$, and 
$\rho^{\star n}(\x \mid g(\Z), h(\Z)) = \prod_{i = 1}^n \rhos(X_i \mid g(Z_{i \cdot}), h(Z_{i \cdot}))$. Let $\rho$ be an estimate of $\rhos$ and $\rho^{n}(\x \mid g(\Z), h(\Z)) = \prod_{i = 1}^n \rho(X_i \mid g(Z_{i \cdot}), h(Z_{i \cdot}))$. We call $\rho$ the \textit{Maxway distribution}.



\begin{algorithm}
\caption{The model and adjust $X$ with the assistance of $Y$ (Maxway) CRT}
\label{alg:maxway}
{\bf Input:} functions $h$, $g$; Maxway distribution $\rho$, which is an estimate of $\rhos$; data $\bD=(\by,\bx,\bZ)$; test statistic (i.e. importance measure) function $T$; and number of randomizations $M$.

\vspace{3pt}

\textbf{For} $m=1,2,...,M$:

\qquad Sample $\x\supm$ from the Maxway distribution $\rho^n (\cdot \mid g(\Z), h(\Z))$ independently of $(\bx,\by)$.

\vspace{3pt}

{\bf Output:} The Maxway CRT $p$-value 
\begin{equation}
\label{eqn:maxway_CRT}
p_{\operatorname{maxway}}(\bD)= \frac{1}{M+1}\p{1+\sum_{m=1}^M \One{T(\y,\x\supm,g(\Z), h(\Z))\geq T(\y,\x, g(\Z), h(\Z) )}}. 
\end{equation}
\end{algorithm}

While the model-X CRT samples $\x\supm$ from an estimated distribution of $\x \mid \Z$, the Maxway CRT samples $\x\supm$ from an estimated distribution of $\x \mid g(\Z), h(\Z)$. In order to have a more intuitive comparison of the two, consider 
the model-X CRT where $\x\supm$'s are sampled from $\x \mid h(\Z)$. Compared to this, the Maxway CRT additionally conditions $\x$ on $g(\Z)$ after conditioning on $h(\Z)$. The name of the proposed approach, ``Maxway", captures exactly this point. ``Model-X" corresponds to conditioning $\x$ on $h$, and ``adjust X with the assistance of Y" corresponds to further adjusting for $g$. 
After conditioning on $h$, there may still be information about $\Z$ left in the residual of $\x$. The ``adjusting for $g$" step aims at removing this information in the residual, especially the part which is dependent on $g$ and $\y$. 



{We shall point out that taking $\rho^{\star}$ to be the conditional distribution of $\x$ given $g(\Z)$ and $h(\Z)$ is neither the only way nor the most effective way to construct the Maxway distribution adjusting for $g(\Z)$ based on $\x\mid h(\Z)$. In the last paragraph of Section \ref{sec:discuss}, we propose a more general definition of the Maxway distribution to be used in Algorithm \ref{alg:maxway} that includes the current definition as a special case.} We also note that the Maxway CRT (Algorithm \ref{alg:maxway}) requires the test statistic $T$ to be a function of $\y$, $\x$ and $g(\Z)$ and $h(\Z)$. This requirement is not as general as that of the model-X CRT, where $T$ can be taken to be a function of the full $\Z$. Nonetheless, this specific form of the test statistic has been shown to have a computational advantage by \citet{liu2020fast}, and more importantly, this form of the test is typically powerful \citep{katsevich2020theoretical}. 

\subsection{A transformed Maxway CRT}

In implementing the Maxway CRT, a natural question to ask is whether there always exists a low-dimensional function $h(Z)$ responsible for all the dependence of $X$ given $Z$, i.e., whether there exists a low-dimensional $h$ such that $X \indp Z \mid h(Z)$, as well as the same question for $g(Z)$ and $Y$. The answer appears to be negative at first sight. Think of a case where the distribution of $X\mid Z$ is learned with a generative adversarial network, and tens of thousands of parameters are used to describe the conditional distribution of $X$ given $Z$; thus the existence of a low-dimensional $h$ does not appear to be plausible. Nonetheless, in this example, there still exists a proper {\em residual-transformation} function $R$ such that $R(X,Z) \indp Z \mid h(Z)$ where $h(Z)=\emptyset$. 
In specific, one can set $R(X, Z) = F_X(X \mid Z)$ to be the cumulative distribution function (CDF) of $X$ conditional on $Z$. With this transformation, if $X$ has a continuous distribution conditional on $Z$, then $R(X, Z) = F_X(X \mid Z)$ follows a $\operatorname{Unif}[0,1]$ distribution and is independent of $Z$, which can be viewed as a ``residual" of $X$ on $Z$.  Hence, $h$ can simply be taken as the null set. For a discrete $X$, we can always add an arbitrarily small continuous random noise to it and use the perturbed $X$ for such CDF transformation. Since the artificial noises can be arbitrarily small, the power of the Maxway CRT will not be impacted by the perturbation. 
We again emphasize that the above discussion shows that there exists a \textit{perfect} transformation $R$ and a \textit{perfect} low-dimensional $h$, such that $R(X,Z) \indp Z \mid h(Z)$. In practice, however, such functions $R$ are estimated from data, and thus $R(X,Z) \indp Z \mid h(Z)$ may not hold exactly. 

With this transformation step, we then proceed with testing whether $R(X,Z)$ is independent of $Y$ conditional on $Z$. We note that this transformation step would not invalidate the testing procedure. Under the null hypothesis, i.e., when $X \indp Y \mid Z$, $R(X,Z)$ is also independent of $Y$ given $Z$. So any valid testing procedure will still be valid if $X$ is replaced by $R(X,Z)$. Conceptually, this transformation step is essentially extracting the residual of $X$ after removing the influence of $Z$. {This residual is low-dimensional but contains useful (sometimes even complete) information to characterize the effect of $X$ on $Y$ conditional on $Z$.} In Algorithm \ref{alg:transform_maxway}, we present this ``transformed" Maxway CRT. For a transformation $R$, let $\r = (R(X_1, Z_{1\cdot}), \dots, R(X_n, Z_{n\cdot}))\trans$, $\rhos_r(\cdot \mid g(Z), h(Z))$ be the conditional distribution of $R(X,Z)$ given $g(Z)$ and $h(Z)$, and 
$\rho_r^{\star n}(\r \mid g(\Z), h(\Z)) = \prod_{i = 1}^n \rhos_r(R(X_i, Z_{i\cdot}) \mid g(Z_{i \cdot}), h(Z_{i \cdot}))$. Let $\rho_r$ be an estimate of $\rhos_r$ and $\rho_r^{n}(\r \mid g(\Z), h(\Z)) = \prod_{i = 1}^n \rho_r(R(X_i, Z_{i\cdot}) \mid g(Z_{i \cdot}), h(Z_{i \cdot}))$. We call $\rho_r$ the \textit{transformed Maxway distribution}. 

\begin{algorithm}
\caption{The Transformed Maxway CRT}
\label{alg:transform_maxway}
{\bf Input:} a transformation function $R$; functions $h$, $g$; transformed Maxway distribution $\rho_r$, which is an estimate of $\rhos_r$; data $\bD=(\by,\bx,\bZ)$; test statistic function $T$; and number of randomizations $M$.

\vspace{3pt}

Compute $\r = (R(X_1, Z_{1\cdot}),R(X_2, Z_{2\cdot}), \dots, R(X_n, Z_{n\cdot}))\trans$. 

\textbf{For} $m=1,2,...,M$:

\qquad Sample $\r\supm$ from the (transformed) Maxway distribution $\rho_r^n (\cdot \mid g(\Z), h(\Z))$ independently of $(\bx,\by)$.

\vspace{3pt}

{\bf Output:} The transformed Maxway CRT $p$-value 
\begin{equation}
\label{eqn:transform_maxway_CRT}
p_{\operatorname{t-maxway}}(\bD)= \frac{1}{M+1}\p{1+\sum_{m=1}^M \One{T(\y,\r\supm,g(\Z), h(\Z))\geq T(\y,\r, g(\Z), h(\Z) )}}. 
\end{equation}
\end{algorithm}

This transformation step is not only useful for allowing the existence of a low-dimensional $h$, but also helpful in reducing the dimensionality of $h$, thus making the (transformed) Maxway distribution easier to learn. 
For example, in an additive model where $X = f(Z) + \eta$ and $\eta$ is independent of $Z$, we can run the Maxway CRT (Algorithm \ref{alg:maxway}), or we can run the transformed Maxway CRT (Algorithm \ref{alg:transform_maxway}) with $R(X,Z) = X - f(Z)$ and $h$ being the null set.
For the original Maxway CRT, the Maxway distribution is the conditional distribution of $X$ given $g(Z)$ and $h(Z)$; for the transformed Maxway CRT, the (transformed) Maxway distribution is the conditional distribution of $X - f(Z)$ given $g(Z)$. Typically, the latter can be easier to learn due to its lower dimensionality of the predictors. 

\section{Robustness of the Maxway CRT}\label{sec:theory:robust}

In this section, we study the robustness of the Maxway CRT. Unless mentioned otherwise, we focus on Algorithm \ref{alg:maxway}. Results for the transformed Maxway CRT (Algorithm \ref{alg:transform_maxway}) follow similarly. See Appendix \ref{section:robust_transformed} for more details. 

\subsection{Conditions to achieve exact validity}
\label{section:exact_inference}
We start with analyzing the Maxway CRT when we have perfect information on a subset of the distributions of $X \mid Z$, $Y \mid Z$, and $X \mid g(Z), h(Z)$. Our first theoretical result (Theorem \ref{theo:exact_inference}) establishes a sufficient condition for the Maxway CRT $p$-value to be exactly valid, which is indeed strictly weaker than that of the model-X CRT. 

\begin{theorem}
\label{theo:exact_inference}
Suppose that either of the following conditions holds: (i) the vectors $\x,  \x^{(1)}, \dots, \x^{(M)}$ are exchangeable conditioning on $\Z$; (ii) the vectors $\x,  \x^{(1)}, \dots, \x^{(M)}$ are exchangeable conditioning on $\{g(\Z),h(\Z)\}$, and $\Z \indp \y \mid g(\Z)$. Then the Maxway CRT $p$-value defined in \eqref{eqn:maxway_CRT} is valid, i.e., $\PP{p_{\operatorname{maxway}}(\bD) \leq \alpha} \leq \alpha$ for any $\alpha \in [0,1]$ under the null hypothesis \eqref{eqn:indep_test}. 
\end{theorem}

Compared to the conditions required for the model-X CRT to be exactly valid, those stated in Theorem \ref{theo:exact_inference} are strictly weaker. More specifically, condition (i) is necessary for the model-X CRT to be valid, while the Maxway CRT is valid when either (i) or (ii) holds. 
In words, condition (i) requires that the knowledge of the distribution of $X$ given $Z$ is perfect, and condition (ii) requires that $g(Z)$ contains all the information about $Y$ that $Z$ can possibly provide and that the distribution of $X$ given the low-dimensional $g(Z)$ and $h(Z)$ is known. 
Therefore, 
when knowledge of the distribution of $Y$ given $Z$, and knowledge of the distribution of $X$ given low-dimensional objects is available, the Maxway CRT can be more robust to model misspecification of the distribution of $X$ given $Z$ than the model-X CRT. 

We also note that even though the problem description appears to be symmetric in $X$ and $Y$, $X$ and $Y$ do not play the same role in Theorem \ref{theo:exact_inference}. On the one hand, the steps in the Maxway CRT algorithm are asymmetric in $X$ and $Y$. Specifically, in the random sample generation step in Algorithm \ref{alg:maxway}, the algorithm aims at generating independent samples that are exchangeable with $\x$ instead of $\y$. On the other hand, the algorithm requires more information on $X$ than on $Y$; we need to sample from the distribution of $X$, while, for Y, we need only a sufficient statistic $g(Z)$.

\subsection{Bound on type-I error inflation}
\label{sec:theory:robust:1}

We then move on to study the behavior of the Maxway CRT when none of $g$, $h$ and $\rho$ is guaranteed to be perfect anymore. We establish that when a good estimate of the distribution of $X \mid g(Z), h(Z)$ is available, either a good estimate of the distribution of $X \mid Z$ or a good estimate of the distribution of $Y \mid Z$ will suffice for the approximate validity of the Maxway CRT. 

{\darkred

\begin{prop}[Type-I error bound: arbitrary test statistic]
\label{theo:main0}
Let $\x', \y', \tilde{\x}$ and $\tilde{\y}$ be four random vectors sampled independently conditioning on $\Z$ from the following distributions respectively: 
\begin{equation}
\begin{split}
    &\x' \sim f_{\x \mid \Z}(\cdot \mid \Z), \quad 
    \tilde{\x} \sim \rho^n(\cdot \mid  g(\Z), h(\Z)), \\
    &\y' \sim f_{\y \mid \Z}(\cdot \mid \Z), \quad 
    \tilde{\y} \sim f_{\y \mid g(\Z), h(\Z)}(\cdot \mid  g(\Z), h(\Z)). 
\end{split}
\end{equation}
Under the null hypothesis \eqref{eqn:indep_test}, for any $\alpha \in (0,1)$,  
\begin{equation}
\label{eqn:mayway_bound0}
\begin{split}
&\PP{p_{\operatorname{maxway}}(\bD) \leq \alpha \mid g(\Z), h(\Z)} \\
& \qquad \qquad   \qquad \qquad  \leq \alpha + d_{\operatorname{TV}} \Big(p( \cdot \mid g(\Z), h(\Z)), q( \cdot \mid g(\Z), h(\Z)) \Big) ,
\end{split}
\end{equation} 
where
\begin{equation}
    \begin{split}
        p( \cdot \mid g(\Z), h(\Z)) &= f_{\x', \y' \mid g(\Z), h(\Z)}( \cdot \mid g(\Z), h(\Z)),\\
        q( \cdot \mid g(\Z), h(\Z)) &= f_{\tilde{\x}, \tilde{\y} \mid g(\Z), h(\Z)}( \cdot \mid g(\Z), h(\Z)).
    \end{split}
\end{equation}
Here, recall that $\rho^{n}$ is the distribution from which $\x\supm$ is sampled in Algorithm \ref{alg:maxway}. 
\end{prop}

Our first result demonstrates that the type-I error inflation can be upper bounded by the total variation distance of two distributions $p( \cdot \mid g(\Z), h(\Z))$ and $q( \cdot \mid g(\Z), h(\Z))$. This result is a direct adaptation of Theorem 5 in \citet{berrett2018conditional} to the Maxway CRT.} We further establish an interesting upper bound of this total variation distance in Theorem~\ref{theo:almost_double_robust}. 

\begin{theorem}[Type-I error bound: almost double robustness]
Under the null hypothesis \eqref{eqn:indep_test}, for any $\alpha \in (0,1)$, 
\begin{equation}
\label{eqn:mayway_bound1}
\PP{p_{\operatorname{maxway}}(\bD) \leq \alpha \mid g(\Z), h(\Z)} 
 \leq \alpha + 2\Delta_x\Delta_y + \Delta_{x|g,h},
\end{equation} 
where 
\begin{equation}
    \begin{split}
        \Delta_x  &= d_{\operatorname{TV}} \Big(f_{\x \mid \Z}( \cdot \mid \Z), \rho^{\star n}( \cdot \mid g(\Z), h(\Z))\Big),\\
        \Delta_y &= d_{\operatorname{TV}}\p{f_{\y \mid \Z} ( \cdot \mid \Z), f_{\y \mid g(\Z), h(\Z)} ( \cdot \mid g(\Z), h(\Z))},\\
        \Delta_{x|g,h} &= d_{\rm TV} \p{\rho^{\star n}( \cdot \mid g(\Z) , h(\Z)),\rho^n( \cdot \mid g(\Z) , h(\Z))}.
    \end{split}
\end{equation}
Here, recall that $\rho^{\star n}$ is the distribution of $\x \mid g(\Z), h(\Z)$, and $\rho^{n}$, as an estimate $\rho^{\star n}$, is the distribution from which $\x\supm$ is sampled in Algorithm \ref{alg:maxway}. 
\label{theo:almost_double_robust} 
\end{theorem}

\begin{remark}
We prove Theorem \ref{theo:almost_double_robust} by showing that the total variation distance in \eqref{eqn:mayway_bound0} is upper bounded by $2 \Delta_x \Delta_y + \Delta_{x|g,h}$. 
\end{remark}

The $\Delta_x$ term in \eqref{eqn:mayway_bound1} captures the degree to which $X$ is independent of $Z$ conditional on $h(Z)$. When $X \indp Z \mid h(Z)$, the conditional distribution of $X \mid Z$ will be the same as $X \mid h(Z)$, and thus $\Delta_x=0$; when $h$ is not perfect, the more accurately $h$ characterizes the distribution of $X\mid Z$, the smaller $\Delta_x$ is.  Similarly, the $\Delta_y$ term quantifies the quality of information we have about the distribution of $Y\mid Z$, and becomes zero when $Y \indp Z \mid g(Z)$.
The $\Delta_{x|g,h}$ term quantifies the accuracy of $\rho$, i.e., how accurately we can estimate the distribution of $X$ given the low-dimensional objects $g(Z)$ and $h(Z)$. The bound in \eqref{eqn:mayway_bound1} can be interpreted as an ``almost doubly robust" property of the Maxway CRT because the term $\Delta_x \Delta_y$ will be small if either $\Delta_x$ or $\Delta_y$ is small. The word ``almost" is to emphasize that we have another $\Delta_{x|g,h}$ term in the error bound, although it is typically much smaller than either $\Delta_x$ or $\Delta_y$ because it concerns a much easier task: regressing $X$ against the low-dimensional $g(Z)$ and $h(Z)$. 

We also note that the results provided by Proposition \ref{theo:main0} and Theorem \ref{theo:almost_double_robust} do not depend on the form of the test statistic $T$, as long as $T$ is a function of $\y,\x, g(\Z)$, and $h(\Z)$. The advantage of this is the full flexibility in choosing any complicated statistic (see, for instance, the Gini index statistic computed with random forest introduced in Section \ref{sec:const:semi}). Nevertheless, bound \eqref{eqn:mayway_bound0} and \eqref{eqn:mayway_bound1} may not be sharp given a fixed statistic. In Section \ref{section:specific_stats}, we derive sharper bounds for some specific statistics commonly used in practice. 

Compared with the model-X CRT, Theorem \ref{theo:almost_double_robust} indicates that the type-I error inflation of the Maxway CRT is generally smaller. To see this, we note that for the model-X CRT, \citet{berrett2018conditional} established an upper bound for the type-I error inflation:
\begin{equation}
\label{eqn:original_CRT_bound}
\PP{p_{\operatorname{mx}}(\bD) \leq \alpha} 
 \leq \alpha + \EE{d_{\operatorname{TV}}\p{f_{\x \mid \Z}( \cdot \mid \Z) , \widetilde{\rho}^n_{\x \mid \Z}( \cdot \mid \Z) }},
\end{equation}
where $\widetilde{\rho}^n$ is the distribution from which $\x\supm$ is sampled, which is typically taken as an estimator of $f_{\x \mid h(\Z)}(\cdot \mid \Z)$. 
They also showed that there exists a matching lower bound for some test statistic when the number of randomizations $M \to \infty$. 
In order to compare \eqref{eqn:mayway_bound1} and \eqref{eqn:original_CRT_bound} side by side, 
we further bound $\PP{p_{\operatorname{mx}}(\bD) \leq \alpha} $ by
\begin{equation}
\label{eqn:original_CRT_bound_var}
\begin{split}
\PP{p_{\operatorname{mx}}(\bD) \leq \alpha} 
 &\leq \alpha + \EE{d_{\operatorname{TV}}\p{f_{\x \mid \Z}( \cdot \mid \Z) , f_{\x \mid h(\Z)}( \cdot \mid h(\Z)) }} +\\
& \qquad \qquad \qquad \qquad  \EE{d_{\operatorname{TV}}\p{f_{\x \mid h(\Z)}( \cdot \mid h(\Z)) , \widetilde{\rho}^n_{\x \mid \Z}( \cdot \mid \Z) }}. 
\end{split}
\end{equation}
For notation simplicity, let 
\begin{equation}
\label{eqn:d_x_prime}
\Delta_x' =  d_{\operatorname{TV}}\p{f_{\x \mid \Z}( \cdot \mid \Z) , f_{\x \mid h(\Z)}( \cdot \mid h(\Z)) };\quad \Delta_{x|g,h}' =  d_{\operatorname{TV}}\p{f_{\x \mid h(\Z)}( \cdot \mid h(\Z)) , \widetilde{\rho}^n_{\x \mid \Z}( \cdot \mid \Z) }.
\end{equation}
Similar to $\Delta_x$ and $\Delta_{x|g,h}$, $\Delta_{x}'$ quantifies how good the X-modeling is, while $\Delta_{x|g,h}'$ captures how accurately we can estimate the distribution of $X$ given a low-dimensional object. Therefore, we can normally assume that $\Delta_x \approx \Delta_x'$ and $\Delta_{x|g,h} \approx \Delta_{x|g,h}'$. With the above notations, the type-I error inflation bound of the Maxway CRT can be written as $\EE{2 \Delta_x \Delta_y + \Delta_{x|g,h}}$, while that of the model-X CRT can be written as $\EE{\Delta_x' + \Delta_{x|g,h}'}$. Hence, if $\Delta_y$ is small, then 
\begin{equation}
\EE{2 \Delta_x \Delta_y + \Delta_{x|g,h}} < \EE{\Delta_x +\Delta_{x|g,h}} \approx \EE{\Delta_x' + \Delta_{x|g,h}'}. 
\end{equation} 
We will make the above arguments precise in Section \ref{sec:sa:ssl} Convergence rate example \ref{exam:SIM_theory} and Section \ref{subsection:example} Convergence rate example \ref{exam:gauss_variable_selection}, where we study specific examples and present the convergence rates of the bounds in \eqref{eqn:mayway_bound1} and \eqref{eqn:original_CRT_bound}.

{\darkred
\subsection{A lower bound}

In this section, we establish a lower bound on the type-I error inflation. Proposition \ref{theo:lower_bound} shows that if the number of randomization $M$ is large, and the test statistic $T$ and threshold $\alpha$ are chosen adversarially, then the type-I error inflation can be lower bounded (up to a vanishing error) by the total variation distance between the distributions of $\x, \y \mid g(\Z), h(\Z)$ and  $\x^{(m)}, \y \mid g(\Z), h(\Z)$. Since the lower bound coincides with the upper bound in Proposition \ref{theo:main0}, this result also implies that the bound in Proposition \ref{theo:main0} is tight. 
\begin{prop}[Lower bound]
\label{theo:lower_bound}
Let $\x', \y', \tilde{\x}$ and $\tilde{\y}$ be four random vectors sampled independently conditioning on $\Z$ from the following distributions respectively: 
\begin{equation}
\begin{split}
    &\x' \sim f_{\x \mid \Z}(\cdot \mid \Z), \quad 
    \tilde{\x} \sim \rho^n(\cdot \mid  g(\Z), h(\Z)), \\
    &\y' \sim f_{\y \mid \Z}(\cdot \mid \Z), \quad 
    \tilde{\y} \sim f_{\y \mid g(\Z), h(\Z)}(\cdot \mid  g(\Z), h(\Z)). 
\end{split}
\end{equation}
Under the null hypothesis \eqref{eqn:indep_test}, there exists a statistic
$T$ such that
\begin{equation}
\begin{split}
&\EE{\sup _{\alpha \in[0,1]}\p{ \PP{p_{\operatorname{maxway}}(\bD) \leq \alpha \mid \y, g(\Z), h(\Z)} -\alpha} \mid g(\Z), h(\Z)} \\
& \qquad \qquad \qquad  \geq d_{\operatorname{TV}} \Big(p( \cdot \mid g(\Z), h(\Z)), q( \cdot \mid g(\Z), h(\Z)) \Big)   -0.5(1+o(1)) \sqrt{\frac{\log (M)}{M}}, 
\end{split}
\end{equation}
as $M \rightarrow \infty$,
where
\begin{equation}
    \begin{split}
        p( \cdot \mid g(\Z), h(\Z)) &= f_{\x', \y' \mid g(\Z), h(\Z)}( \cdot \mid g(\Z), h(\Z)),\\
        q( \cdot \mid g(\Z), h(\Z)) &= f_{\tilde{\x}, \tilde{\y} \mid g(\Z), h(\Z)}( \cdot \mid g(\Z), h(\Z)).
    \end{split}
\end{equation}
\end{prop}

Finally, we make a remark on the statistic $T$ that achieves the lower bound \eqref{theo:lower_bound}. The specific construction of this statistic can be found in Appendix \ref{appendix:lower_bound_proof}. At a high level, the statistic $T$ aims at distinguishing the two distributions $p$ and $q$, where $p$ is the distribution of $\x', \y' \mid g(\Z), h(\Z)$ and $q$ is the distribution of $\tilde{\x}, \tilde{\y} \mid g(\Z), h(\Z)$. 
We know that $p$ and $q$ are the same if $\rho = \rhos$ and $g(\Z)$ and $h(\Z)$ are sufficient for characterizing the distribution of $\x \mid \Z$ and $\y \mid \Z$; this is because when the above conditions hold, $\x' \indp \y' \mid g(\Z), h(\Z)$ under the null hypothesis. Therefore, the lower bound achieving statistic $T$ focuses on detecting the extent to which $\rho$ differs from $\rhos$ and how well $g(\Z)$ and $h(\Z)$ capture the distribution of $\x \mid \Z$ and $\y \mid \Z$. This choice of statistic, even though allowed for the implementation of the CRT, is not a typical choice in practice. Instead, commonly used statistics aim to determine whether $\x$ and $\y$ are independent given $\Z$. 
}

\subsection{Tighter bounds for specific test statistics}
\label{section:specific_stats}

As discussed above, bound \eqref{eqn:mayway_bound0} and \eqref{eqn:mayway_bound1} do not depend on the test statistic $T$. The bounds are concise, useful for analyzing complicated statistics, and interesting conceptually; yet the bound can be loose for some specific statistics. In particular, the statistic that achieves the matching lower bound of \eqref{eqn:mayway_bound0} does not aim at learning whether $\x$ and $\y$ are independent conditioning on $\Z$, and thus it is not a typical choice of statistics in practice. 

In this section, we study with some commonly used test statistics, whether we can establish sharper bounds on the type-I error inflation of the Maxway CRT. To simplify our analysis, we make the following assumption.
\begin{assumption}
\label{assu:mean_model}
Under the null hypothesis, $Y = \smuy(Z) + \eta$ and $X = \smux(Z) + \varepsilon$, where $\eta$ and $\epsilon$ are mean-zero random variables independent of $Z$ and independent of each other, and $\varepsilon \sim \mathcal{N}(0,1)$. 
\end{assumption} 

Let $\mu_x(Z)$ be an estimator of $\smux(Z)$ and  $\mu_y(Z)$ an estimator of $\smuy(Z)$, such that $\mu_x(Z)$ and $\mu_y(Z)$ are measurable with respect to $g(Z)$ and $h(Z)$ used in Algorithm \ref{alg:maxway}. For example, in a linear model, if $g(Z) = Z_{\mathcal{S}}$ is a subset of $Z$, then one possible choice of $\mu_x(Z)$ is to set $\mu_x(Z) = Z_{\mathcal{S}} \trans \beta$ for some vector $\beta$. 
We study two statistics, the inner-product statistic:
\begin{equation}
\label{eqn:corr_stats}
T(\by,\x,g(\Z),h(\Z)) = \abs{\x \trans \y},
\end{equation}
and the d$_0$ statistic:
\begin{equation}
\label{eqn:d_0_def_bound}
T(\by,\x,g(\Z),h(\Z)) = \abs{\p{\x -  \mu_x(\Z)}\trans \p{\y - \mu_y(\Z)}}. 
\end{equation}
The d$_0$ statistic is proposed by \citet{liu2020fast} as a fast, powerful, and intuitive statistic; see more description in Section \ref{sec:const:semi}. 
The inner-product statistic can be viewed as a simpler version of the d$_0$ statistic. 

\subsubsection{Analysis of the inner-product statistic}
 
Let $\tmuy(Z) = \EE{Y \mid g(Z), h(Z)} = \EE{\smuy(Z) \mid g(Z), h(Z)}$ and $\tmux(Z) = \EE{X \mid g(Z), h(Z)} = \EE{\smux(Z) \mid g(Z), h(Z)}$. Theorem \ref{theo:inner-product_stats} establishes a bound on the type-I error of the Maxway CRT with the inner-product statistic. 

\begin{theorem}[Type-I error bound: inner-product statistic]
\label{theo:inner-product_stats}
Under Assumption \ref{assu:mean_model}, assume further that the Maxway CRT (Algorithm \ref{alg:maxway}) samples $\x\supm$ from a normal distribution, i.e., $\rho(\cdot \mid g(Z), h(Z))$ corresponds to $\mathcal{N}(\mu_x(Z), 1)$.
{\darkred Assume that $\eta$ is a continuous random variable whose density is upper bounded by a constant $C_1$.}
Assume further that there exists a positive constant $C_2$ such that $\EE{\eta^2} \leq C_2$. 
Then there exists a positive constant $C$ such that for any $\alpha \in (0,1)$, the type-I error of the Maxway CRT using the inner-product statistic defined in \eqref{eqn:corr_stats} can be bounded by 
\begin{equation}
\label{eqn:mayway_bound_inner}
\PP{p_{\operatorname{maxway}}(\bD) \leq \alpha}  \leq \alpha  + 
C  \p{\sqrt{n}\Delta_{\rho, \operatorname{mean}}  + \sqrt{n}\Delta_{x, \operatorname{mean}} \Delta_{y, \operatorname{mean}} + (\Delta_{x, \operatorname{mean}} + \tilde{\Delta}_{x, \operatorname{mean}})},
\end{equation} 
where  
\begin{equation}
    \begin{split}
         &\Delta_{x, \operatorname{mean}} = \sqrt{\EE{(\smux(Z) - \tmux(Z))^2}} , \qquad 
        \tilde{\Delta}_{x, \operatorname{mean}} = \sqrt{\EE{((\smux(Z) - \tmux(Z)\tmuy(Z))^2}}\\
        &\Delta_{y, \operatorname{mean}} = \sqrt{\EE{(\smuy(Z) - \tmuy(Z))^2}}, \qquad 
        \Delta_{\rho, \operatorname{mean}} = \sqrt{\EE{(\tmux(Z) - \mu_x(Z))^2}}. 
    \end{split}
\end{equation}
\end{theorem}

Theorem \ref{theo:inner-product_stats} can again be interpreted as an ``almost double robustness" of the Maxway CRT. The terms $\Delta_{x, \operatorname{mean}}$ and $\tilde{\Delta}_{x, \operatorname{mean}}$ quantify the difference between $\EE{X \mid Z}$ and $\EE{X \mid g(Z), h(Z)}$, $\Delta_{y, \operatorname{mean}}$ quantifies the difference between $\EE{Y \mid Z}$ and $\EE{Y \mid g(Z), h(Z)}$, and $\Delta_{\rho, \operatorname{mean}}$ quantifies the estimation error of $\EE{X \mid g(Z), h(Z)}$. When the X-modeling is of high accuracy, $\EE{X \mid Z}$ will be close to $\EE{X \mid g(Z), h(Z)}$; thus $\Delta_{x, \operatorname{mean}}$ and $\tilde{\Delta}_{x, \operatorname{mean}}$ will be small. When the Y-modeling is of high accuracy, $\EE{Y \mid Z}$ will be close to $\EE{Y \mid g(Z), h(Z)}$; thus $\Delta_{y, \operatorname{mean}}$ will be small. The term $\Delta_{\rho, \operatorname{mean}}$ is in general smaller than $\Delta_{x, \operatorname{mean}}$ and $\Delta_{y, \operatorname{mean}}$ because it concerns low-dimensional regression. Finally, in \eqref{eqn:mayway_bound_inner}, the term $\Delta_{x, \operatorname{mean}} + 
 \tilde{\Delta}_{x, \operatorname{mean}}$ is smaller than the other terms since they do not have the multiple of $\sqrt{n}$. To summarize the above discussion,  \eqref{eqn:mayway_bound_inner} can be thought of as 
\begin{equation}
\PP{p_{\operatorname{maxway}}(\bD) \leq \alpha}   \leq \alpha  + 
\textnormal{ small term } + C \p{\Delta_{\rho, \operatorname{mean}} + \sqrt{n} \Delta_{x, \operatorname{mean}} \Delta_{y, \operatorname{mean}}} . 
\end{equation} 
Using the same proof technique, we can obtain a bound for the type-I error of the model-X CRT:
\begin{equation}
\label{eqn:mx_bound_inner}
\begin{split}
 \PP{p_{\operatorname{mx}}(\bD) \leq \alpha} \leq \alpha + 
C \sqrt{n\EE{\p{ \smux(Z) - \mu_{x, \operatorname{mx}}(Z) }^2}},
 \end{split}
\end{equation}
where each $\x\supm_i$ is sampled from $\mathcal{N}(\mu_{x, \operatorname{mx}}(Z_i), 1)$ in the CRT. See Appendix \ref{subsection:proof_mx_bound_inner} for a proof of \eqref{eqn:mx_bound_inner}. Unlike the Maxway CRT, this bound for the model-X CRT does not enjoy any double robustness property. Indeed, if the model for the mean function of $X$ is misspecified, then the $\pval$ $p_{\operatorname{mx}}$ will be far from a $\operatorname{Unif}[0,1]$ under the null hypothesis. 

We then compare the two bounds on the Type-I error of the Maxway CRT obtained from Theorems \ref{theo:almost_double_robust} and \ref{theo:inner-product_stats}, and the corresponding bounds on that of the model-X CRT. 

In particular, we examine a gaussian linear example. 

{\darkred
\begin{rateexample}[Gaussian linear model]
\label{exam:gaussian_linear0}
Assume that $Z_{i \cdot} \sim \mathcal{N}(0, \Sigma_z)$, $Y_i = Z_{i \cdot} \trans \theta^{\star} + \eta_i$, and $X_i = Z_{i \cdot}\trans \beta^{\star} + \varepsilon_i$, where $\eta_i, \varepsilon_i \sim \mathcal{N} (0,1)$ are noise terms independent of $Z_{i \cdot}$. We implement the (transformed) Maxway CRT in the following way: we take $g(Z_{i \cdot}) =  Z_{i \cdot} \trans \theta$ to be an estimate of the conditional mean function of $Y_i$, we take the transformation $R(X_i, Z_{i \cdot}) = X_i - Z_{i \cdot} \trans \beta$ and we take $h(Z_{i \cdot})$ as null.

Assume that $\beta^{\star}$ can be estimated with rate $\norm{\beta - \beta^{\star}} \lesssim \Delta_{x,\lin}$, and $\theta^{\star}$ can be estimated with rate $\norm{\theta - \theta^{\star}} \lesssim \Delta_{y,\lin}$. We get $\rho_r(Z_{i \cdot})$ by running linear regression of $R(X_i, Z_{i \cdot})$ on $Z_{i \cdot}\trans\theta$ (potentially on an external dataset). Assume the linear regression has an error rate $\Delta_{\rho,\lin}$. 

In this setting, we can show that for the terms in Theorem \ref{theo:inner-product_stats},
\begin{equation}
\Delta_{x, \operatorname{mean}} \lesssim  \Delta_{x,\lin}, \quad
\Delta_{y, \operatorname{mean}} \lesssim  \Delta_{y,\lin}, \quad
\Delta_{\rho, \operatorname{mean}} \lesssim  \Delta_{\rho,\lin};
\end{equation}
and for the terms in Theorem \ref{theo:almost_double_robust},
\begin{equation}
\Delta_{x} \lesssim  \sqrt{n}\Delta_{x,\lin}, \quad
\Delta_{y} \lesssim  \sqrt{n}\Delta_{y,\lin}, \quad
\Delta_{x\mid g,h} \lesssim  \sqrt{n}\Delta_{\rho,\lin}. 
\end{equation}

A consequence of the above is the following: 
Bound \eqref{eqn:mayway_bound1} in Theorem \ref{theo:almost_double_robust} implies that with arbitrary test statistic, 
\begin{equation}
\textnormal{Type-I error inflation of the Maxway CRT }
 \lesssim \sqrt{n}\Delta_{\rho,\lin} + n\Delta_{x,\lin}\Delta_{y,\lin},
\label{equ:example_arbitrary_maxway}
\end{equation}
bound \eqref{eqn:mayway_bound_inner} in Theorem \ref{theo:inner-product_stats} implies that with the inner-product statistic, 
\begin{equation}
\textnormal{Type-I error inflation of the Maxway CRT }
 \lesssim \sqrt{n}\Delta_{\rho,\lin} + \sqrt{n}\Delta_{x,\lin}\Delta_{y,\lin},
\label{equ:example_inner_product_maxway}
\end{equation}
and bound \eqref{eqn:original_CRT_bound}/bound \eqref{eqn:mx_bound_inner} implies that
\begin{equation}
\textnormal{Type-I error inflation of the model-X CRT }
 \lesssim \sqrt{n} \Delta_{\rho,\lin} + \sqrt{n}\Delta_{x,\lin}. 
\label{equ:example:_inner_product_CRT}
\end{equation}
We note that the second term of \eqref{equ:example_inner_product_maxway} is smaller than that of \eqref{equ:example_arbitrary_maxway} by a factor of $\sqrt{n}$. This difference comes from focusing on the specific test statistic. 
Comparing bound \eqref{equ:example_inner_product_maxway} with \eqref{equ:example:_inner_product_CRT}, we note that the type-I error inflation of the Maxway CRT is smaller than that of the CRT when
$\Delta_{\rho,\lin} \lesssim \Delta_{x,\lin}$, and $\Delta_{y,\lin} \lesssim 1$. 
We will provide more specific values of the bounds in Section \ref{sec:in_practice} (see Convergence rate example \ref{exam:SIM_theory}), where we
discuss the estimation of $\thetas$ and $\betas$ with semi-supervised learning and surrogate-assisted semi-supervised learning (or transfer learning). We provide details of this example in Appendix \ref{subsection:detail_example_gauss_linear} and additional examples in Appendix \ref{subsection:example}. 
\end{rateexample}
}

\subsubsection{Analysis of the \texorpdfstring{d$_0$}{d0} statistic}

Finally, we turn to the d$_0$ statistic. Theorem \ref{theo:d0_stats} establishes a bound on the type-I error inflation of the Maxway CRT with the d$_0$ statistic. The subsequent discussions following Theorem \ref{theo:inner-product_stats} can also be made for Theorem \ref{theo:d0_stats}. We omit them here for conciseness.  

\begin{theorem}[Type-I error bound: d$_0$ statistic]
\label{theo:d0_stats}
Under the conditions of Theorem \ref{theo:inner-product_stats}, there exists a positive constant $C$ such that for any $\alpha \in (0,1)$, the type-I error of the Maxway CRT using the d$_0$ statistic defined in \eqref{eqn:d_0_def_bound} can be bounded by 
\begin{equation}
\label{eqn:mayway_bound_d0}
\PP{p_{\operatorname{maxway}}(\bD) \leq \alpha}  \leq \alpha  + 
C  \p{\sqrt{n}\Delta_{\rho, \operatorname{mean}}  + \sqrt{n}\Delta_{x, \operatorname{mean}} \Delta_{y, \operatorname{mean}} + (\Delta_{x, \operatorname{mean}} +  \check{\Delta}_{x, \operatorname{mean}})},
\end{equation} 
where  
\begin{equation}
    \begin{split}
         &\Delta_{x, \operatorname{mean}} = \sqrt{\EE{(\smux(Z) - \tmux(Z))^2}} , \quad 
        \check{\Delta}_{x, \operatorname{mean}} = \sqrt{\EE{(\smux(Z) - \tmux(Z))^2(\tmuy(Z) - \mu_y(Z))^2}}\\
        &\Delta_{y, \operatorname{mean}} = \sqrt{\EE{(\smuy(Z) - \tmuy(Z))^2}}, \quad 
        \Delta_{\rho, \operatorname{mean}} = \sqrt{\EE{(\tmux(Z) - \mu_x(Z))^2}}
    \end{split}
\end{equation}
\end{theorem}

{\darkred
Theorem \ref{theo:d0_stats} establishes that the Maxway CRT with the d$_0$ statistic achieves almost double robustness. However, recent research by \cite{niu2022reconciling} shows that the model-X CRT with the d$_0$ statistic is already doubly robust under weak assumptions. Therefore, in terms of the rate of type-I error inflation, using the Maxway CRT framework does not seem to provide further improvement in robustness when using the d$_0$ statistic. However, empirically, as we will show in Section~\ref{sec:sim}, combining the Maxway CRT framework with the d$_0$ statistic can lead to even greater robustness.
}

\section{Construction in practice}
\label{sec:in_practice}

We now consider the implementation of Algorithm \ref{alg:maxway} in data-driven studies. We propose learning strategies for the functions $g$, $h$ and the distribution $\rho$ in three practical scenarios: typical semi-supervised learning (SSL) scenario, where $h$ is learned on an external dataset with observations of $(X,Z)$, a surrogate-assisted SSL (SA-SSL) scenario where $g$ can be learned from an additional surrogate variable $S$ for the unobserved outcome $Y$ in this external dataset, and a transfer learning (TL) scenario where $g$ can be learned leveraging some external source data with the same set of response and covariates.

\subsection{Semi-supervised learning scenario}\label{sec:const:semi}

Consider a semi-supervised learning (SSL) scenario with labeled data $\bD=(\by,\bx,\bZ)$ of sample size $n$ and unlabeled data $\bD^u=(\bx^u,\bZ^u)$ where $\bx^u=(X_1^u,X_2^u,\ldots,X^u_N)\trans\in\mathbb{R}^N$, $\Z^u=(Z_{1\cdot}^u,Z_{2\cdot}^u,\ldots,Z_{N\cdot}^u)\trans\in\mathbb{R}^{N\times p}$, and $N$ represents its sample size typically much larger than $n$. In this scenario, we will implement certain pre-specified learning algorithms on $\bD^u$ to obtain $h(\Z)$ and $\rho^n (\cdot \mid g(\Z), h(\Z))$ and use $\bD=(\by,\bx,\bZ)$ for the CRT. For $g(\Z)$, we consider two cases: (SS.I) to learn $g(\Z)$ with some ``holdout'' training samples $\bD^h=(\by^h,\bZ^h)$ independent from $\bD=(\by,\bx,\bZ)$; (SS.II) to learn $g(\Z)$ simply using $\bD=(\by,\bx,\bZ)$, as presented in Algorithms \ref{alg:outsample:cart} and \ref{alg:insample:cart} respectively.

\begin{algorithm}[H]
\caption{\label{alg:outsample:cart} The holdout training Maxway (Maxway$_{\rm out}$) CRT.}
{\bf Input:} learning algorithms $\Lsc_g$, $\Lsc_h$, and $\Lsc_{\rho}$, as well as the transformation function $R$ when using the transformed Maxway CRT (Algorithm \ref{alg:transform_maxway}); a test statistic function $T$; unlabeled data $\bD^u=(\bx^u,\bZ^u)$; labeled holdout training data $\bD^h=(\by^h,\bZ^h)$; labeled testing data $\bD=(\by,\bx,\bZ)$; and number of randomizations $M$. 

\vspace{3pt}

Implement the learning algorithms to obtain: $g(\Z)=\Lsc_g(\bD^h;\Z)$ and $h(\Z)=\Lsc_h(\bD^u;\Z)$. Also transform $\x$ into $\r=R(\x,\Z)$ when following Algorithm \ref{alg:transform_maxway}.

\vspace{3pt}

Adjust the conditional model of $\x$ or its transformation $\r$: 
\[
\rho^n(\cdot \mid g(\Z), h(\Z))\mbox{ or }\rho_r^n(\cdot \mid g(\Z), h(\Z))=\Lsc_{\rho}(\bD^u;g,h,\Z).
\]

\vspace{3pt}

Conduct the CRT resampling $M$ times from $\rho^n(\cdot \mid g(\Z), h(\Z))$ or $\rho_r^n(\cdot \mid g(\Z), h(\Z))$ as presented in Algorithm \ref{alg:maxway} or Algorithm \ref{alg:transform_maxway}.

\end{algorithm}

\begin{algorithm}[H]
\caption{\label{alg:insample:cart} The in-sample training Maxway (Maxway$_{\rm in}$) CRT.}
{\bf Input:} the same as Algorithm \ref{alg:outsample:cart} except that we no longer has holdout training data $\bD^h$.

\vspace{3pt}

Implementation procedures are the same as those in Algorithm \ref{alg:outsample:cart} except that we obtain $g(\Z)$ through $g(\Z)=\Lsc_g((\by,\bZ);\Z)$.

\end{algorithm}

Theoretical analysis in Section \ref{sec:theory:robust} applies to the holdout training version, i.e. the Maxway$_{\rm out}$ CRT presented in Algorithm \ref{alg:outsample:cart}. However, under a typical SSL scenario, it requires splitting the whole labeled dataset into training and testing sets similar to \cite{tansey2018holdout}. This can essentially impact the power in finite sample studies as shown in \cite{liu2020fast}. In contrast, the in-sample version Maxway$_{\rm in}$ fully utilizes the labeled samples for constructing the CRT and thus can be typically more powerful than Maxway$_{\rm out}$ when using the same number of labels. Nevertheless, the robustness of Maxway$_{\rm in}$ may be impacted since the estimated $g$ is not independent of the testing data $\D$, which can be viewed as an over-fitting issue conceding the theoretical guarantee on the robustness of Maxway$_{\rm in}$ provided in Theorem \ref{theo:almost_double_robust}. 

In our numerical experiments, we further study the robustness of Maxway$_{\rm in}$ and compare it with Maxway$_{\rm out}$. Interestingly, we found that Maxway$_{\rm in}$ is not necessarily less robust than Maxway$_{\rm out}$. In specific, when using learning algorithms with strong shrinkage or regularization like lasso, Maxway$_{\rm in}$ actually shows better type-I error control than Maxway$_{\rm out}$; see more details in Section \ref{sec:sim}. We now present two examples for choices of the learning algorithms and test statistic functions used in Algorithms \ref{alg:outsample:cart}. For Algorithms \ref{alg:insample:cart}, one just needs to simply replace the holdout training data $\bD^h$ with $(\by,\bZ)$ in these two examples.

\begin{CARTexample}[Lasso]
Let $\widehat{\gamma}_{yz}$ be the fitted lasso coefficients for the generalized linear model (GLM) of $\by^h$ against $\Z^h$ and take $g(\Z)=\Lsc_g(\bD^h;\Z)=(\bZ\widehat{\gamma}_{yz},\bZ_{\bullet,\mathrm{top}(k)})$ with $\mathrm{top}(k)$ representing the indices of the $k$ largest entries in $|\widehat{\gamma}_{yz}|$. Let $\widehat{\gamma}_{xz}$ be the fitted lasso coefficients for the GLM of $\x^u\sim\Z^u$. Consider two different scenarios:

\begin{enumerate}
    \item[(i)] When $X\mid Z$ is assumed to be a gaussian linear model, transform $\x$ and $\x^u$ through 
    \[
    \{\r,\r^u\}=\{R(\x,\Z),R(\x^u,\Z^u)\}=\{\x-\bZ\widehat{\gamma}_{xz},\x^u-\bZ^u\widehat{\gamma}_{xz}\},
    \]
    and set $h(\Z)=\Lsc_h(\bD^u;\Z)$ as null. Then fit a linear regression for $\r^u\sim g(\Z^u)$ to estimate $\rho^n_r(\cdot \mid g(\Z), h(\Z))$ as the gaussian linear distribution of $\r\mid g(\Z)$ with the estimated mean denoted as $\EEhat{R\mid g(\Z)}$ and the variance estimated by $\|\r-\EEhat{R\mid g(\Z)}\|_2^2/n$.
    
    \item[(ii)] When $X$ is binary and $X\mid Z$ is assumed to be a logistic model, do not transform $\x$ but take $h(\Z)=\Lsc_h(\bD^u;\Z)=\bZ\widehat{\gamma}_{xz}$ and fit a logistic regression for $\x^u$ against $g(\Z^u)$ and $h(\Z^u)$, to estimate the Maxway distribution $\rho^n(\cdot \mid g(\Z), h(\Z))$ with the estimated mean denoted as $\EEhat{X\mid g(\Z),h(\Z)}$.
\end{enumerate}

\noindent Let $\EEhat{Y\mid\Z}$ denote the predictor for $\by$ determined from $\Z$ and $\widehat\bep_y=\by-\EEhat{Y\mid\Z}$. For gaussian $X$, let $\widehat\bep_x=\r-\EEhat{R\mid g(\Z)}$ while for binary $X$, let $\widehat\bep_x=\x-\EEhat{X\mid g(\Z),h(\Z)}$. Inspired by \cite{liu2020fast}, we introduce two choices on the test statistic function $T$ as follows.

\begin{enumerate}
    \item[(1)] {\rm Main effect (d$_0$ statistic):} $T(\by,\x,g(\Z),h(\Z))\mbox{ or }T(\by,\r,g(\Z),h(\Z)) =|\widehat\bep_y\trans\widehat\bep_x|$.
    
    \item[(2)] {\rm Interaction effect (d$_{\mathrm{I}}$ statistic):}
    \[
    T(\by,\x,g(\Z),h(\Z))\mbox{ or }T(\by,\r,g(\Z),h(\Z)) = \widehat{\beta}_{x,1}^2+k^{-1}\sum_{j=2}^{k+1}\widehat{\beta}_{x,j}^2,
    \]
    where $\widehat{\beta}_x\in\mathbb{R}^{k+1}$ are the least-squares coefficients of $\widehat\bep_y$ against $(\widehat\bep_x,\widehat\bep_x\odot\bZ_{\bullet,\mathrm{top}(k)})$.
\end{enumerate}

\label{example:1}
\end{CARTexample}

\begin{CARTexample}[Random forest]
Fit a random forest (RF) model for $\by^h\sim\Z^h$ and take $g(\Z)=\Lsc_g(\bD^h;\Z)=(\EEhat{Y\mid\Z},\bZ_{\bullet,\mathrm{top}(k)})$ where $\EEhat{Y\mid\Z}$ represents the prediction for $\by$ and $\mathrm{top}(k)$ is the indices of the $k$ largest Gini indices fitted by RF (a common measure of variable importance in RF, see \url{https://cran.r-project.org/web/packages/randomForest} for more details). Also, fit an RF model for $\x^u\sim\Z^u$ to obtain $\EEhat{X\mid\Z}$ as the prediction of $\x$. Again, consider two scenarios:
\begin{enumerate}
    \item[(i)] When $X\mid Z$ is assumed to be gaussian, transform $\x$ and $\x^u$ through 
    \[
    \{\r,\r^u\}=\{R(\x,\Z),R(\x^u,\Z^u)\}=\{\x-\EEhat{X\mid\Z},\x^u-\EEhat{X\mid\Z^u}\},
    \]
    and set $h(\Z)=\Lsc_h(\bD^u;\Z)$ as null. Then fit an RF for $\r^u\sim g(\Z^u)$ to learn the Maxway distribution $\rho^n_r(\cdot \mid g(\Z), h(\Z))$ with the variance again estimated by $\|\r-\EEhat{R\mid g(\Z)}\|_2^2/n$.
    
    \item[(ii)] When $X$ is binary, do not transform $\x$ but take $h(\Z)=\EEhat{X\mid\Z}$ and fit an RF for $\x^u$ against $\{g(\Z^u),h(\Z^u)\}$ to learn the Maxway distribution $\rho^n(\cdot \mid g(\Z), h(\Z))$.
\end{enumerate}
Finally, fit an RF for $\widehat\bep_y\sim(\widehat\bep_x,\bZ_{\bullet,\mathrm{top}(k)})$ where $\widehat\bep_y=\by-\EEhat{Y\mid\Z}$, $\widehat\bep_x=\r-\EEhat{R\mid g(\Z)}$ for gaussian $X$ and $\widehat\bep_x=\x-\EEhat{X\mid g(\Z),h(\Z)}$ for binary $X$, and take the test statistic $T$ as the fitted Gini index of $\widehat\bep_x$.

\label{example:2}
\end{CARTexample}

\subsection{Surrogate-assisted semi-supervised learning}\label{sec:sa:ssl}
{\darkred 
As an important topic in SSL, surrogate-assisted semi-supervised learning (SA-SSL) has gained extensive interest in many application fields such as electronic health record (EHR) based data-driven biomedical studies \citep{zhang2022prior,hou2021surrogate}. In SA-SSL, in addition to the triplet $(X,Y,Z)$, there is a surrogate or silver standard label $S$ that is much more feasible and accessible than $Y$ in data collection and can be viewed as a noisy measure of $Y$.
In the literature, there are two common types of surrogates, early-endpoint surrogates \citep{prentice1989surrogate,vanderweele2013surrogate}, and post-hoc surrogates \citep{hong2019semi,zhang2020maximum, zhang2022prior}. 

\begin{figure}
    \centering
    \includegraphics[width=0.9\textwidth]{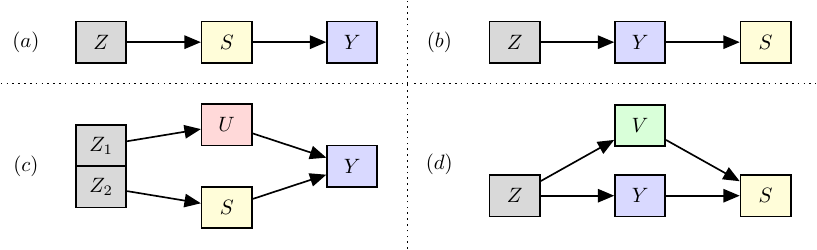}
    \caption{Directed acyclic graphs (DAGs) of four data generation examples about $(Z,Y,S)$ discussed in Section \ref{sec:sa:ssl}. (a) Data generation process for early-endpoint surrogates; (b) Data generation process for post-hoc surrogates; (c) An example where $Z\indp S\mid g(Z) \Rightarrow Z\indp Y\mid (g(Z), Z_1)$; (d) An example under which the surrogate $S$ is not valid.}
    \label{fig:diagram:surrogate}
\end{figure}

Early-endpoint surrogates, also referred to as surrogate endpoints, are measures that can be used to predict the effect of a treatment on a longer-term outcome \citep{prentice1989surrogate,vanderweele2013surrogate}. These surrogates can be biomarkers or clinical parameters that can be measured relatively quickly, usually within a few weeks or months of starting treatment. For example, in clinical trials, tumor response rate ($S$) is often used as a surrogate for overall survival ($Y$); and blood pressure ($S$) is commonly used as a surrogate for cardiovascular events such as heart attacks ($Y$). In both examples, $Z$ may represent other baseline variables collected before the clinical trial. 
Figure \ref{fig:diagram:surrogate}(a) illustrates the data generating mechanism for early-endpoint surrogates. As can be seen from the figure that the early-endpoint surrogates satisfy $Z \indp Y \mid S$.

Post-hoc surrogates, on the other hand, are measures usually taken after measurement of the true outcome \citep{hong2019semi,zhang2020maximum, zhang2022prior}. For example, in studies linking EHR with genomic data \citep{hong2019semi}, $Z$ represents biological markers observed at baseline, $Y\in\{0,1\}$ is the true status of some disease or condition of our interests associated with $Z$, and the surrogate $S$ is taken as some EHR surrogate for $Y$, e.g., count of its main diagnostic code. See Figure \ref{fig:diagram:surrogate}(b) for the data generation mechanism. We can tell from the figure that post-hoc surrogates satisfy $Z \indp S \mid Y$. 


We define that a surrogate variable $S$ is \emph{valid} with respect to a function class $\mathcal{G}$ if for any function $g \in \mathcal{G}$ of $Z$,
\begin{equation}
Z\indp S\mid g(Z) \Rightarrow Z\indp Y\mid g(Z).
\end{equation}
It is not hard to see that such a valid surrogate can be used to get useful knowledge of $Y\mid Z$ without observing $Y$. In the context of the Maxway CRT, we can then use the data of surrogates to learn the function $g$. We then provide a few examples where the surrogate variable is valid. 
Firstly, we establish in Proposition \ref{prop:surro_early} that early-endpoint surrogates (Figure \ref{fig:diagram:surrogate}(a)) are valid. Secondly, for post-hoc surrogates, we need a bit more structure for it to be valid. 
In the EHR studies introduced above, we often consider $Y$ as a binary variable. We show in Proposition \ref{prop:2} that post-hoc surrogates are valid if the outcome $Y$ is binary. 
We also show in Proposition \ref{prop:surro_linear} that if $Y \sim Z$ is a linear model and if we restrict $g$ to be a linear function of $Z$, then post-hoc surrogates are valid. The proofs of the propositions can be found in Appendix \ref{subsection:perfect_surrogate_proof}. 

\begin{prop}
Assume that the surrogate variable $S$ satisfies $Y \indp Z \mid S$. Then $S$ is a valid surrogate with respect to any function class $\mathcal{G}$. 
\label{prop:surro_early}
\end{prop}

\begin{prop}
Assume that the surrogate variable $S$ satisfies $S \indp Z \mid Y$. 
Further, assume that $Y\in\{0,1\}$, and that for some $a\in\mathbb{R}$, $\PP{S\leq a\mid Y=1}\neq \PP{S\leq a\mid Y=0}.$
Then $S$ is a valid surrogate with respect to any function class $\mathcal{G}$. 
\label{prop:2}
\end{prop}

\begin{prop}
Assume that the surrogate variable $S$ satisfies $S \indp Z \mid Y$. Assume further that $Y = Z \trans \thetas + \eta$, for some $\eta$ independent of $Z$. If $\EE{SZ} \neq 0$, then $S$ is a valid surrogate with respect to $\mathcal{G} = \cb{g: g(Z) = Z \trans a, a \in \mathbb{R}^p }$. 
\label{prop:surro_linear}
\end{prop}

More generally, when the relationship between $S$ and $Y$ becomes more complicated, $S$ may not be valid anymore. In Figure \ref{fig:diagram:surrogate}(d), we provide an example under which the surrogate $S$ is not valid. However, even in this scenario, if the data has some additional structure, then we are able to modify $g$ and get $Z \indp Y \mid g(Z)$. More specifically, in Figure \ref{fig:diagram:surrogate}(c), $Z$ can be decomposed into two independent components $Z_1$ and $Z_2$. In this case, it is straightforward to see that if $Z \indp S \mid g(Z)$, then $Z \indp Y \mid (g(Z), Z_1)$. Here, 
$S$ doesn't fully capture the relationship between $Y$ and $Z$; in particular, the part of $Z_1$ is left out. Nevertheless, if we append $g(Z)$ with the leftout information $Z_1$, then them together will be able to fully capture $Y \mid Z$.


Given the above discussions, we propose in Algorithm \ref{alg:outsample:cart:SA:SSL} the Maxway CRT approach for SA-SSL that naturally uses the large unlabeled surrogate samples to learn $g(\Z)$. On the one hand, as will be seen in Convergence rate example \ref{exam:SIM_theory}, when the surrogate $S$ is valid, our approach tends to be more robust than the SSL Maxway CRT, i.e., Algorithms \ref{alg:outsample:cart}. On the other hand, when $S$ is invalid or of poor quality, the Maxway CRT still preserves the possibility to draw valid inference due to its double robustness introduced by Theorems \ref{theo:exact_inference} and \ref{theo:almost_double_robust}, whereas the existing SA-SSL approaches \citep[e.g.,][]{hong2019semi}, which rely solely on the valid surrogate assumption, tend to fail. 
}

\begin{algorithm}[H]
\caption{\label{alg:outsample:cart:SA:SSL} The surrogate-assisted semi-supervised learning (SA-SSL) Maxway CRT.}
{\bf Input:} learning algorithms $\Lsc_g^S$, $\Lsc_h$, and $\Lsc_{\rho}$; a test statistic function $T$; unlabeled data with surrogate $\bD^{uS}=(\bs^u,\bx^u,\bZ^u)$; labeled data $\bD=(\by,\bx,\bZ)$; and number of randomizations $M$. 

\vspace{3pt}

Implementation procedures are the same as those in Algorithm \ref{alg:outsample:cart} except that we obtain $g(\Z)$ through $g(\Z)=\Lsc_g^S(\bD^{uS};\Z)$ leveraging the surrogate $\bs^u$.
\end{algorithm}

\begin{remark}
We briefly remark on practical choices of the learning algorithm $\Lsc_g^S$ for $S\sim Z$. It is not hard to show that when $Y\sim Z$ follows a GLM, $S$ satisfying $S \indp Z \mid Y$ follows a single index model (SIM) given $Z$. Thus, when we desire to fit a penalized GLM to learn $Y\sim Z$ in the SSL scenario as in Implementation example \ref{example:1}, we can fit a penalized SIM for $S\sim Z$ in SA-SSL to learn $g(\cdot)$. For nonparametric or machine learning approaches like RF used in Implementation example~\ref{example:2}, we suggest still fitting RF in SA-SSL to learn $S\sim Z$ and $g(\cdot)$.
\label{rem:sass:learn}
\end{remark}

{\darkred
Finally, we provide an example, where we study the rate of the type-I error bound using the SSL and SA-SSL Maxway CRT approaches.

\begin{rateexampleprime}{1'}[Gaussian linear model cont'd]
\label{exam:SIM_theory}
Suppose that we have $n$ labeled samples $\bD=(\by,\bx,\bZ)$ and $N$ unlabeled samples with surrogate: $\bD^{uS}=(\bs^u,\bx^u,\bZ^u)$ where $N\gg n$ and $\bs^u=(S_1^u,S_2^u,\ldots,S_N^u)$.
Assume that the surrogates are post-hoc surrogates satisfying $S \indp Z \mid Y$. 
Assume further that $Y_i = Z_{i \cdot} \trans \theta^{\star} + \eta_i$, and $X_i = Z_{i \cdot}\trans \beta^{\star} + \varepsilon_i$, where $\varepsilon_i \sim \mathcal{N} (0,1)$ and $\eta_i \sim \mathcal{N} (0,1)$ are noise terms independent of $Z_{i \cdot}$.

\textit{Implementation of the Maxway CRT.}
Let $\theta$ and $\beta$ be estimators of $\thetas$ and $\betas$ respectively. We take $g(Z_{i \cdot}) =  Z_{i \cdot} \trans \theta$ to be an estimate of the conditional mean function of $Y_i$, take the transformation $R(X_i, Z_{i \cdot}) = X_i - Z_{i \cdot} \trans \beta$ and take $h(Z)$ as the null set.

\textit{Convergence rate assumptions.}
Assume that $\beta^{\star}$ is $s_\beta$ sparse, and it can be estimated with lasso on data of sample size $m$ with rate $\norm{\beta - \beta^{\star}}^2 \lesssim s_\beta \log(p)/m$. 
Similarly, assume that $\theta^{\star}$ is $s_\theta$ sparse, and it can be estimated with lasso on data of sample size $m$ with rate $\norm{\theta - \theta^{\star}}^2 \lesssim s_\theta \log(p)/m$. We refer to \citet{bickel2009simultaneous} and \citet{van2009conditions} for a more detailed discussion on the rate of lasso.

\textit{Rate of type-I error inflation using SA-SSL Maxway CRT.}
Since $Y \sim Z$ follows a linear model and the surrogate $S$ satisfies $S \indp Z \mid Y$, the surrogate $S$ follows a single index model (SIM) given $Z$, i.e., $S = f(Z\trans\thetas, e)$ with $e \indp Z$. \citet{li1989regression} establishes that when $S$ follows a SIM, the direction of $\thetas$ can be recovered using the least square regression of $S$ against $Z$; see also \citet{zhang2022prior}. In particular, this implies that if we run lasso with $S$ as response and $Z$ as predictors on the unlabeled samples, we can obtain an estimator $\theta$ of $\thetas$ such that $\norm{\theta \gamma - \theta^{\star}}^2 \lesssim s_\theta \log(p)/N$ for some constant $\gamma$. Therefore, we can show that using the inner-product statistic, the type-I error inflation of the SA-SSL Maxway CRT can be bounded by $\PP{p_{\maxway} \leq \alpha} - \alpha \lesssim \sqrt{s_\theta \log(p)/N}\sqrt{s_\beta \log(p) n/N}$. For arbitrary statistic, the bound becomes $\sqrt{s_\theta \log(p)n/N}\sqrt{s_\beta \log(p) n/N}$.

\textit{Comparison with the SSL Maxway CRT and the model-X CRT.} We can further establish that in this example using the inner-product statistic, the type-I error inflation of the SSL Maxway CRT can be bounded by $\PP{p_{\maxway} \leq \alpha} - \alpha \lesssim \sqrt{s_\theta \log(p)/n}\sqrt{s_\beta \log(p) n/N}$. For arbitrary statistic, the bound of the type-I error inflation becomes $\sqrt{s_\theta \log(p)}\sqrt{s_\beta \log(p) n/N}$. For the model-X CRT, the bound is $\sqrt{s_\theta \log(p)n/N}\sqrt{s_\beta \log(p) n/N}$. We include a summary of the bounds in Table~\ref{tab:summary_of_bound_SIM}. 
We first note that compared with the SSL Maxway CRT, the SA-SSL Maxway CRT achieves a better convergence rate. These results are due to the larger sample size of the surrogate samples. We also note that compared with the model-X CRT, the SA-SSL Maxway CRT gives a smaller rate of type-I error inflation as long as $s_{\theta} \log(p)/N \to 0$ using inner-product statistic (or $s_{\theta} \log(p)n/N \to 0$ for arbitrary statistic), demonstrating the robustness of the Maxway CRT. 
\end{rateexampleprime}

\begin{table}[htbp]
    \caption{Comparison of bounds on type-I error inflation in Convergence rate example \ref{exam:SIM_theory}. }
    \label{tab:summary_of_bound_SIM}
    \centering
    \begin{tabular}{|c|c|r|}
    \hline
        Method& Test statistic & \makecell{Rate of bound on\\type-I error inflation}  \\
        \hline
        SA-SSL Maxway CRT& Inner-product & $\sqrt{n} \Delta_{\theta} \Delta_{\beta} \qquad  $ \\
        \hline
        SA-SSL Maxway CRT& Arbitrary & $n \Delta_{\theta} \Delta_{\beta} \qquad $ \\
        \hline
        SSL Maxway CRT & Inner-product & $\sqrt{N} \Delta_{\theta} \Delta_{\beta} \qquad $\\
        \hline
        SSL Maxway CRT& Arbitrary & $\sqrt{n N} \Delta_{\theta} \Delta_{\beta} \qquad $ \\
        \hline
        Model-X CRT& Arbitrary/Inner-product & $\Delta_{\beta} \qquad$ \\
        \hline
        \multicolumn{3}{|r|}{Here we take $\Delta_{\theta}=\sqrt{s_\theta \log(p) /N}$ and $\Delta_{\beta}=\sqrt{s_\beta \log(p) /N}$. }\\
        \hline
    \end{tabular}
\end{table}

We provide more details of this example in Appendix \ref{subsection:detail_example_gauss_linear_cont}. We also include an additional example in Appendix \ref{subsection:example}, where we give rates of the bound on type-I error inflation of the model-X and Maxway CRT. }

\subsection{Transfer learning}\label{sec:tl}
{\darkred

The core idea of SA-SSL is to learn the low-dimensional $g(\Z)$ from some external data set that (i) has a much larger sample size or richer information than $\bD$ being used for the CRT; (ii) can correctly reveal, or at least be fairly informative to the true model of $\by\sim\Z$ in the targeted $\bD$. Generally speaking, any external data or knowledge with properties (i) and (ii) can be potentially incorporated with the Maxway framework to enhance the robustness of inference. This motivates us to further consider a transfer learning (TL) scenario where the external knowledge comes from some source data set $\bD^e=(\by^e,\bZ^e)$ with a much larger sample size (e.g., the ethnic majority group) compared to the target data $\bD=(\by,\bx,\bZ)$ (e.g., some minority group). The model $\by^e\sim\bZ^e$ tends to share some similarity with $\by\sim\bZ$, e.g., $Y$ depending on the same small subset of covariates in $Z$. However, such similarity is not always ensured and methods adaptive to the model discrepancy between the source and target data are highly desirable and preferable \citep[e.g.][]{li2022transfer,li2022transfer1, tian2022transfer,gu2022robust}.

With a similar spirit to the SA-SSL scenario, we propose Algorithm \ref{alg:outsample:cart:TL} that learns $g(\Z)$ using the external source data $\bD^e=(\by^e,\bZ^e)$ and transfers it to assist the Maxway CRT on the target data $\bD=(\by,\bx,\bZ)$.

\begin{algorithm}[H]
\caption{\label{alg:outsample:cart:TL} The transfer learning (TL) Maxway CRT.}
{\bf Input:} learning algorithms $\Lsc_g^e$, $\Lsc_h$, and $\Lsc_{\rho}$; a test statistic function $T$; unlabeled data $\bD^u=(\bx^u,\bZ^u)$; labeled data $\bD=(\by,\bx,\bZ)$; external source data $\bD^e=(\by^e,\bZ^e)$; and number of randomizations $M$. 

\vspace{3pt}

Implementation procedures are the same as those in Algorithm \ref{alg:outsample:cart} except that we obtain $g(\Z)$ through $g(\Z)=\Lsc_g^e(\bD^e;\Z)$ leveraging the source data set $\bD^e$.
\end{algorithm}

Similar to what to see in the SA-SSL Maxway CRT, the TL Maxway CRT tends to be more robust than the SSL Maxway CRT when the source and target data have very similar models for $Y\sim Z$, due to the larger sample size of $\y^e$. Meanwhile, when the source data is more dissimilar from the target data, the TL Maxway CRT remains viable for making valid inference thanks to its double robustness, as established by Theorems \ref{theo:exact_inference} and \ref{theo:almost_double_robust}. 
 }
 
\section{Simulation studies}\label{sec:sim}

\subsection{Semi-supervised setting}\label{sec:sim:ss}

We first conducted simulation studies to evaluate the Maxway CRT and compare it with existing approaches in terms of robustness and power, under the SSL scenario introduced in Section \ref{sec:const:semi}. {\bf R} codes for the implementation can be found at \url{https://github.com/moleibobliu/Maxway_CRT}. For data generation, we consider the following three configurations with different types of models for $X\mid Z$ and $Y\mid Z$. 

\begin{enumerate}
\item[] (SS.I) {\bf Gaussian linear $X\mid Z$ and $Y\mid Z$.} Generate $Z\in\mathbb{R}^p$ from $\mathcal{N}(\bzero, \bSigma)$ where $p=500$ and $\bSigma=(\sigma_{ij})_{p\times p}$ with $\sigma_{ij}=0.5^{|i-j|}$. Then generate $X$ and $Y$ following:
    \[
    X=0.3\sum_{j=1}^5\nu_j Z_j+\eta\sum_{\ell\in\Isc_1}\nu_{\ell}Z_{\ell}+\epsilon_1;\quad Y=\gamma h(X,Z)+0.3\sum_{j=1}^5\nu_j Z_j+\eta\sum_{\ell\in\Isc_2}\nu_{\ell}Z_{\ell}+\epsilon_2,
    \]
where $\epsilon_1,\epsilon_2$ are two $\mathcal{N}(0,1)$ noises, each $\nu_{j}$ is randomly picked from $\{-1,1\}$, and $\Isc_1,\Isc_2$ are two disjoint sets of indices randomly drawn from $\{6,7,\ldots,p\}$ satisfying $|\Isc_1|=|\Isc_2|=25$. 

\item[] (SS.II) {\bf Logistic linear $X\mid Z$ and gaussian linear $Y\mid Z$.} Generate $\Isc_1$, $\Isc_2$, $\nu_j$, $Z$ and $Y$ in the same way as (SS.I) while generate $X$ from $\{0,1\}$ following:
\[
\PP{X=1\mid Z}={\rm expit}\Big(0.3\sum_{j=1}^5\nu_j Z_j+\eta\sum_{\ell\in\Isc_1}\nu_{\ell}Z_{\ell}\Big),\quad\mbox{where}\quad{\rm expit}(a)=\frac{e^{a}}{1+e^{a}}.
\]

\item[] (SS.III) {\bf Non-linear $X\mid Z$ and $Y\mid Z$.} Generate $Z\in\mathbb{R}^p$ from $\mathcal{N}(\bzero, \bSigma)$ where $p=40$ and $\bSigma=(0.2^{|i-j|})_{p\times p}$, and $X$ and $Y$ following:
\begin{align*}
X=&0.5(I_1+I_3)+0.4(I_2+I_4+I_1I_4+I_2I_3)+0.15\Big(\sum_{j=21}^{24}I_j+ I_{21}I_{22}+I_{23}I_{24}\Big)+\epsilon_1;\\
Y=&\gamma h(X,\Z) +0.5(I_1+I_4+I_1I_4+I_2I_3)+0.4(I_2+I_3)+0.15\Big(\sum_{j=31}^{34}I_j+ I_{31}I_{32}+I_{33}I_{34}\Big)+\epsilon_2,
\end{align*}
where $\epsilon_1,\epsilon_2$ are again two $\mathcal{N}(0,1)$ noises, $I_1=I(Z_1>0)$, $I_2=I(Z_2>0.5)$, $I(Z_3>-0.5)=I_3$, $I(|Z_4|>1)$ and $I_j=I(Z_j>0)$ for any $j\geq 5$.

\end{enumerate}
Term $h(X,\Z)$ depicts the effect of $X$ on $Y$ given $Z$ under the alternative, and $\gamma$ measures the magnitude and direction of $X$'s effect. For $h(X,\Z)$, we separately consider two choices in Configurations (SS.I) and (SS.II): (SS.I) $h(X,\Z)=X$ for pure linear effect and (SS.II) $h(X,\Z)=X+X\sum_{j=1}^5Z_j$ for a mixture of linear and interaction effect, and set $h(X,\Z)=(0.5X^2+\sin\{\pi(X-1)/4\})(I_1+I_2)$ in Configuration (SS.III). When evaluating the type-I error of testing $X\indp Y\mid\Z$, we set $\gamma=0$. For power evaluation, we let $\gamma$ vary within a proper range to plot the power curve of each approach against $\gamma$. 

In (SS.I) and (SS.II), $\Isc_1\cap\Isc_2=\emptyset$ so the parameter $\eta$ controls the part of $\Z$'s effects nearly not confounding the relationship between $X$ and $Y$. As it increases, the overlap between $\Z$'s effects on $X$ and $Y$ gets smaller, and thus $Z$'s confounding effect on $X$ and $Y$ becomes less significant. We set $\eta=0$ for ``strong overlapping", $\eta=0.1$ for ``moderate overlapping", and $\eta=0.2$ for ``weak overlapping". In all configurations, we generate $n=250$ samples with $(X,Y,Z)$ for randomization tests and $N$ additional samples of $(X,Z)$ to estimate the distribution of $X\mid Z$. We let $N$ vary in a wide range to evaluate the type-I error inflation under different qualities of X-modeling. We also evaluate the power under a large $N$ enabling accurate estimation of $X\mid Z$ and proper type-I error control.

We implement four approaches including (a) the {\bf Maxway$_{\rm in}$ CRT}: the in-sample training version of our approach, i.e., Algorithm \ref{alg:insample:cart}; (b) the {\bf Maxway$_{\rm out}$ CRT}: the holdout training version of our approach in Algorithm \ref{alg:outsample:cart}; (c) the {\bf model-X CRT} proposed by \cite{candes2018panning}; and (d) the {\bf model-X CPT} proposed by \cite{berrett2018conditional}. Since the Maxway$_{\rm out}$ CRT requires an estimate of the conditional model $Y\mid Z$ independent from the data used for randomization tests, we generate $n=250$ additional samples of $(Y,Z)$ as the training dataset in Algorithm \ref{alg:outsample:cart}. In this way, we actually use more labels for the Maxway$_{\rm out}$ CRT than that for other approaches. For a fair comparison between our proposal and the model-X CRT and CPT, one should refer to the Maxway$_{\rm in}$ version. While we include the Maxway$_{\rm out}$ CRT for comparison to study if the Maxway$_{\rm in}$ CRT would encounter over-fitting issues in the finite-sample studies.

In Configuration (SS.I), we model $X\mid Z$ as a gaussian linear model with variance $\sigma^2$, then implement linear lasso tuned by cross-validation on the unlabeled data to estimate its conditional mean and use the sample variance of its residual evaluated on the labeled testing data to estimate $\sigma^2$. This could ensure the resampled $\x\supm$ to have nearly the same scale as the observed $\x$ and thus the randomization test robust to the estimation error in $\sigma^2$. In all approaches, we use the linear lasso to estimate $Y\mid Z$ and construct both the d$_0$ and d$_{\mathrm{I}}$ statistics; see Implementation example \ref{example:1}. We also follow Implementation example \ref{example:1} to specify $g(Z)$ and fit a linear model for $X-\EEhat{X\mid Z}$ against $g(Z)$, to adjust the conditional distribution of $X$. For (SS.II), we adopt the same setup as in (SS.I) with all the linear models replaced with the logistic model. In Configuration (SS.III), we use RF to learn the non-linear $Y\mid Z$ and measure the non-linear dependence of $Y$ on $X$, as introduced in Implementation example \ref{example:2}. Also, we fit RF to learn $X\mid Z$ and adopt Implementation example \ref{example:2} to adjust it against $g(Z)$. For all configurations, we set the reduced dimensionality $k=\lceil 2\log p\rceil$ and the nominal level as $0.05$.

In Figure \ref{fig:gauss}, we plot the type-I error against $N$ varying from $250$ to $2000$, as well as the power adjusted to type-I error (defined as the original average power minus the type-I error) when $N=2000$ under Configuration (SS.I). In the main paper, we only include the results corresponding to pure linear effect, i.e. $h(X,\Z)=X$, and the d$_0$ test statistic. We observe similar patterns as Figure \ref{fig:gauss} in the remaining setups of (SS.I), i.e. $h(X,\Z)$ containing the interaction effect or the d$_{\mathrm{I}}$ test statistic is used. These results are presented in Figures \ref{fig:gauss:add:1}--\ref{fig:gauss:add:3} of Appendix \ref{sec:app:num}. Similarly, we present in Figure \ref{fig:binary} the type-I error and power under Configuration (SS.II) with $h(X,\Z)=X$ and the d$_0$ statistic used, with the other setups of (SS.II) presented in Figures \ref{fig:binary:add:1}--\ref{fig:binary:add:3}. Finally, we plot the type-I error and average power (evaluated with $N=3200$) under Configuration (SS.III) in Figure \ref{fig:ml}.

\begin{figure}
    \centering
    \includegraphics[width=0.37\textwidth]{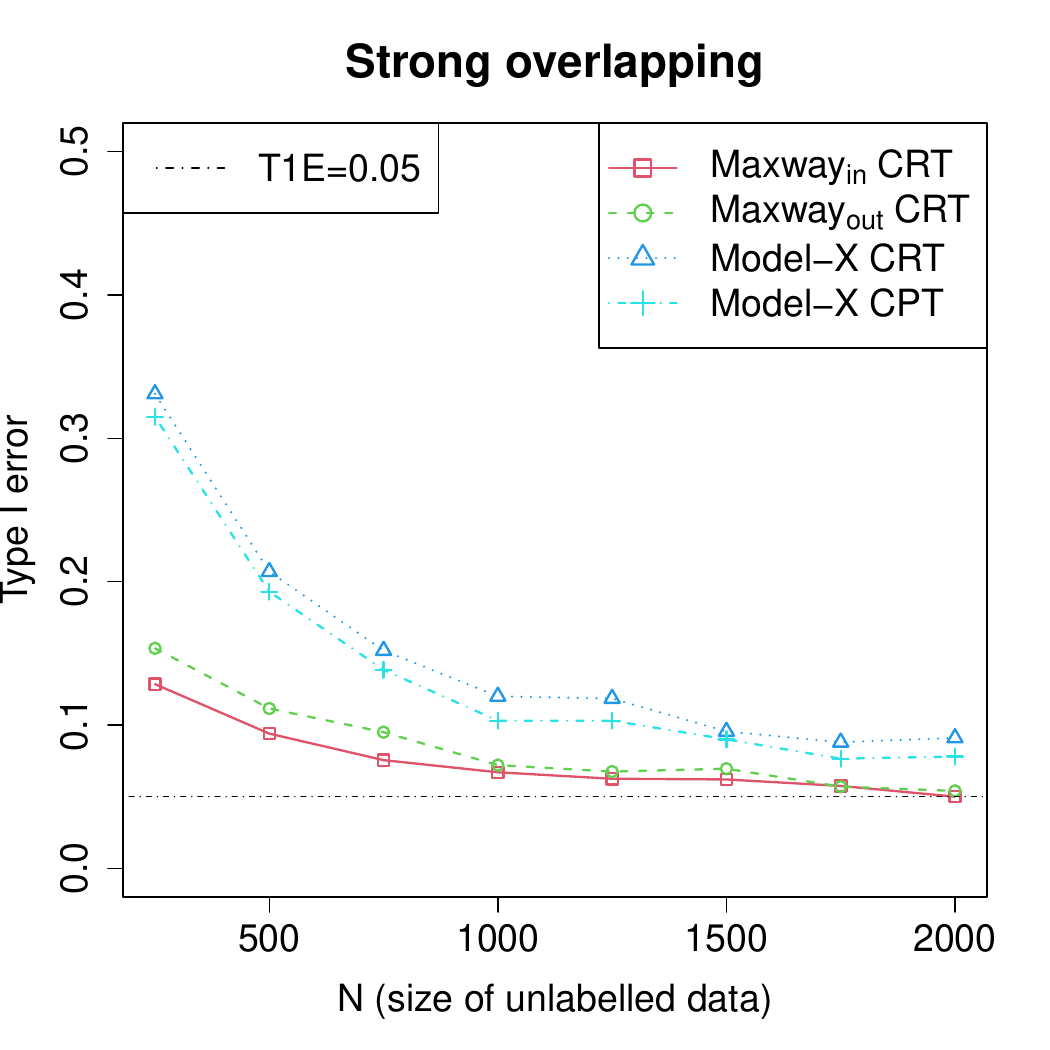}
    \includegraphics[width=0.37\textwidth]{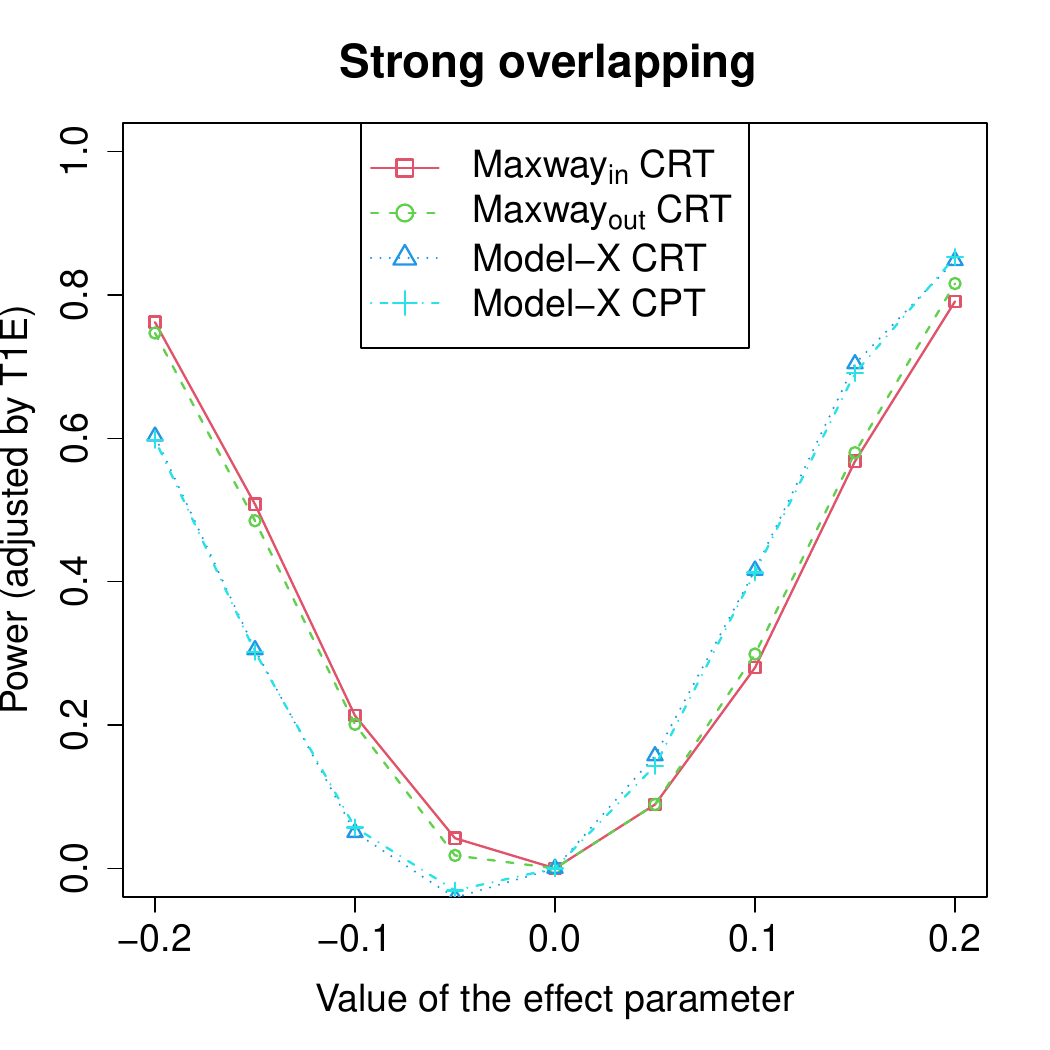}
    \includegraphics[width=0.37\textwidth]{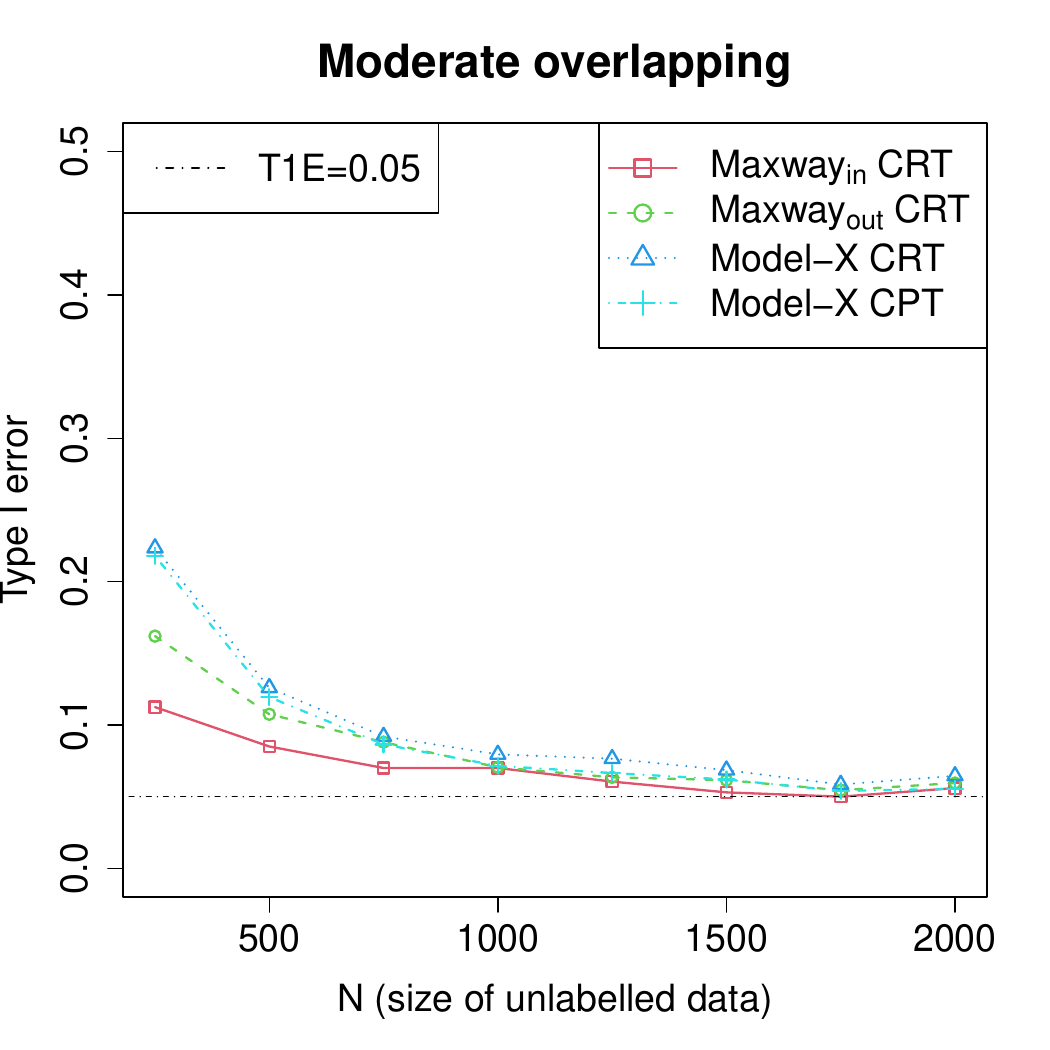}
    \includegraphics[width=0.37\textwidth]{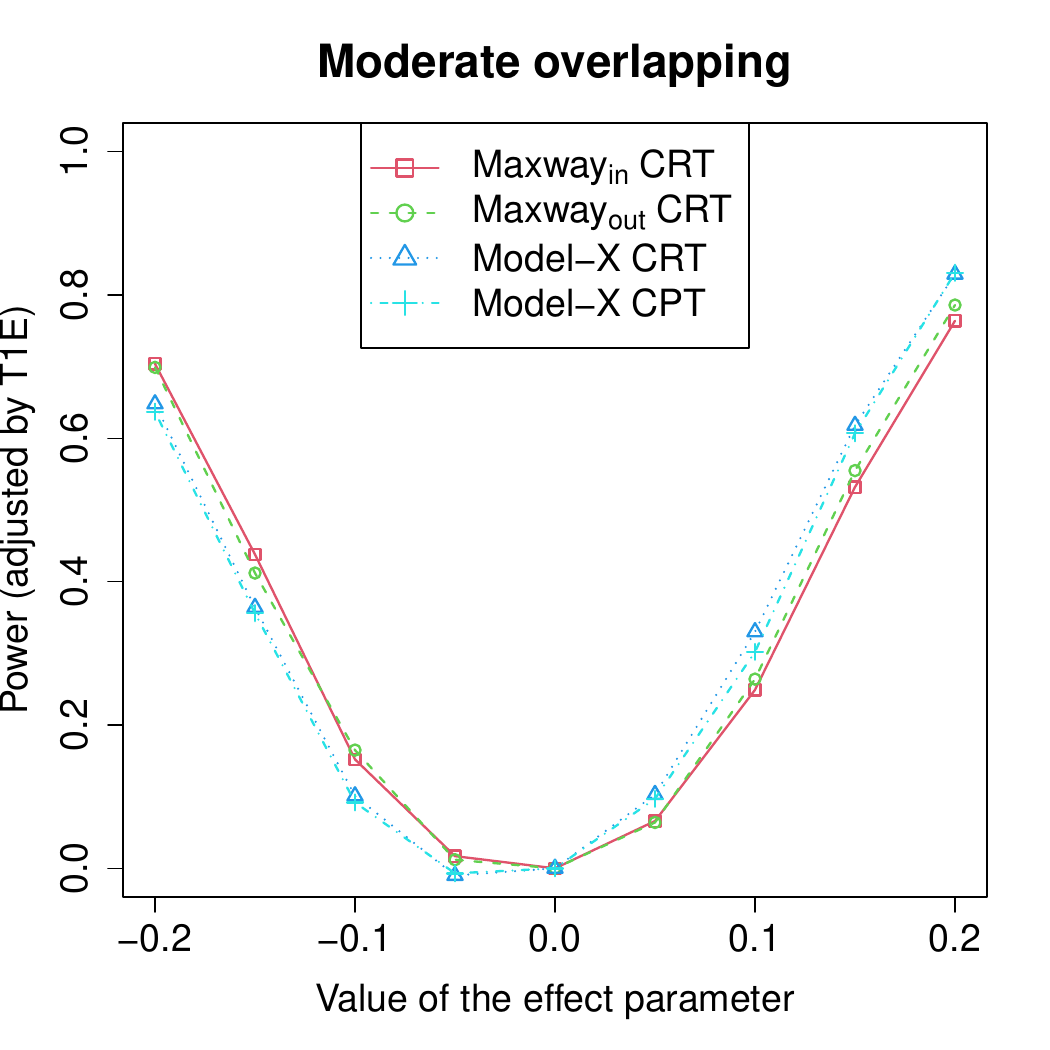}
    \includegraphics[width=0.37\textwidth]{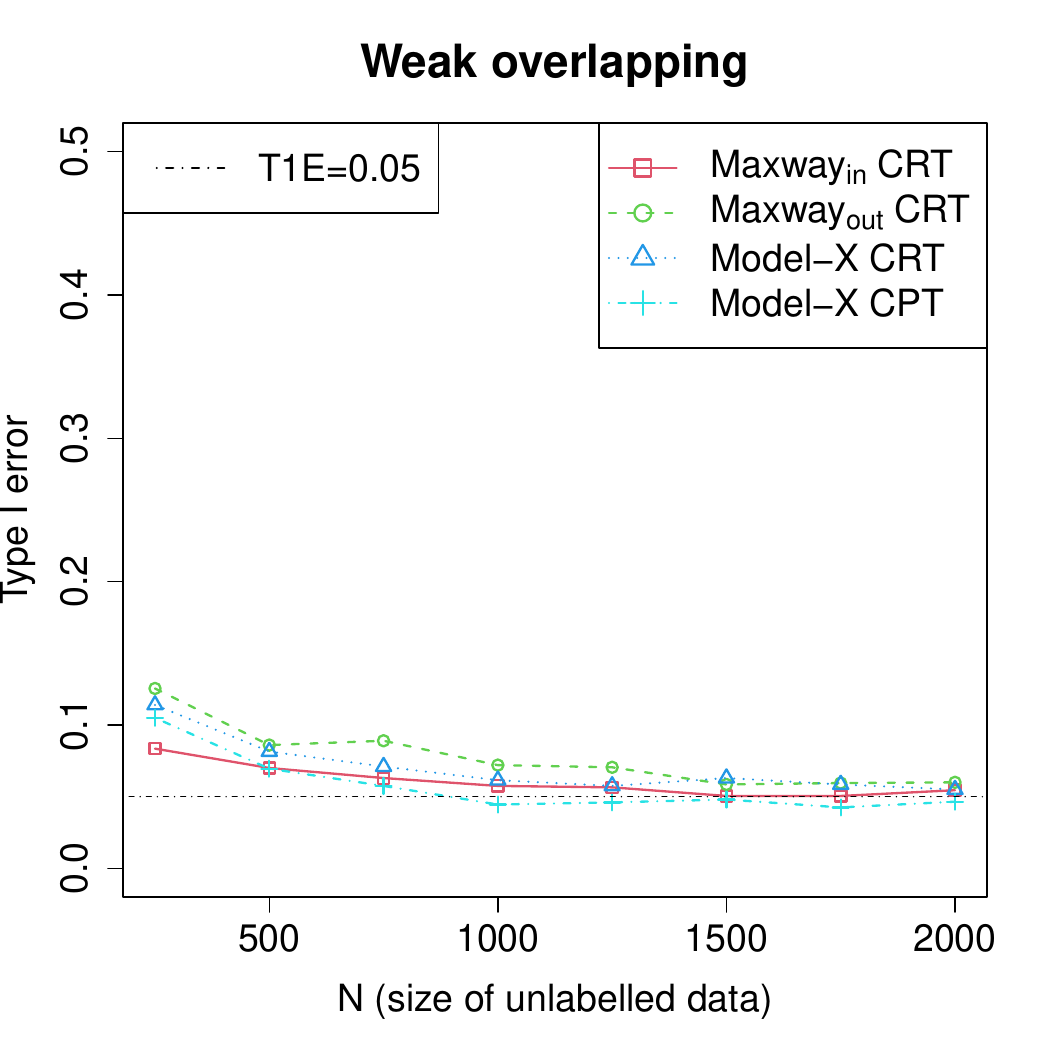}
    \includegraphics[width=0.37\textwidth]{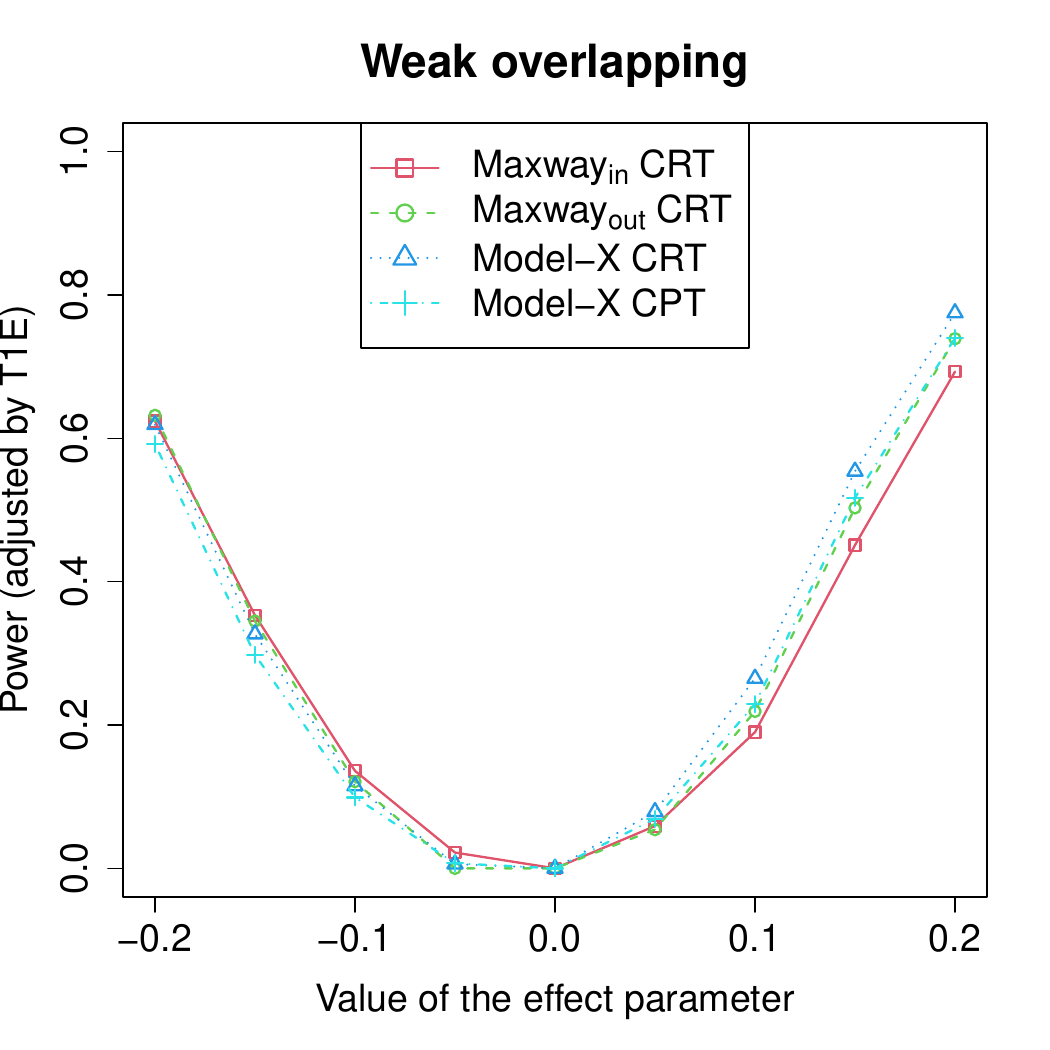}
    \caption{Type-I error and average power (adjusted by type-I error, i.e. the original average power minus the type-I error) under the three overlapping scenarios (i.e. $\eta=0,0.1,0.2$) of Configuration (SS.I) {\bf gaussian linear $X\mid Z$ and $Y\mid Z$} with $h(X,Z)=X$ and the d$_0$ statistic used for testing, as introduced in Section \ref{sec:sim:ss}. The replication number is $500$ and all standard errors are below $0.01$.} 
    \label{fig:gauss}
\end{figure}

\begin{figure}
    \centering
    \includegraphics[width=0.37\textwidth]{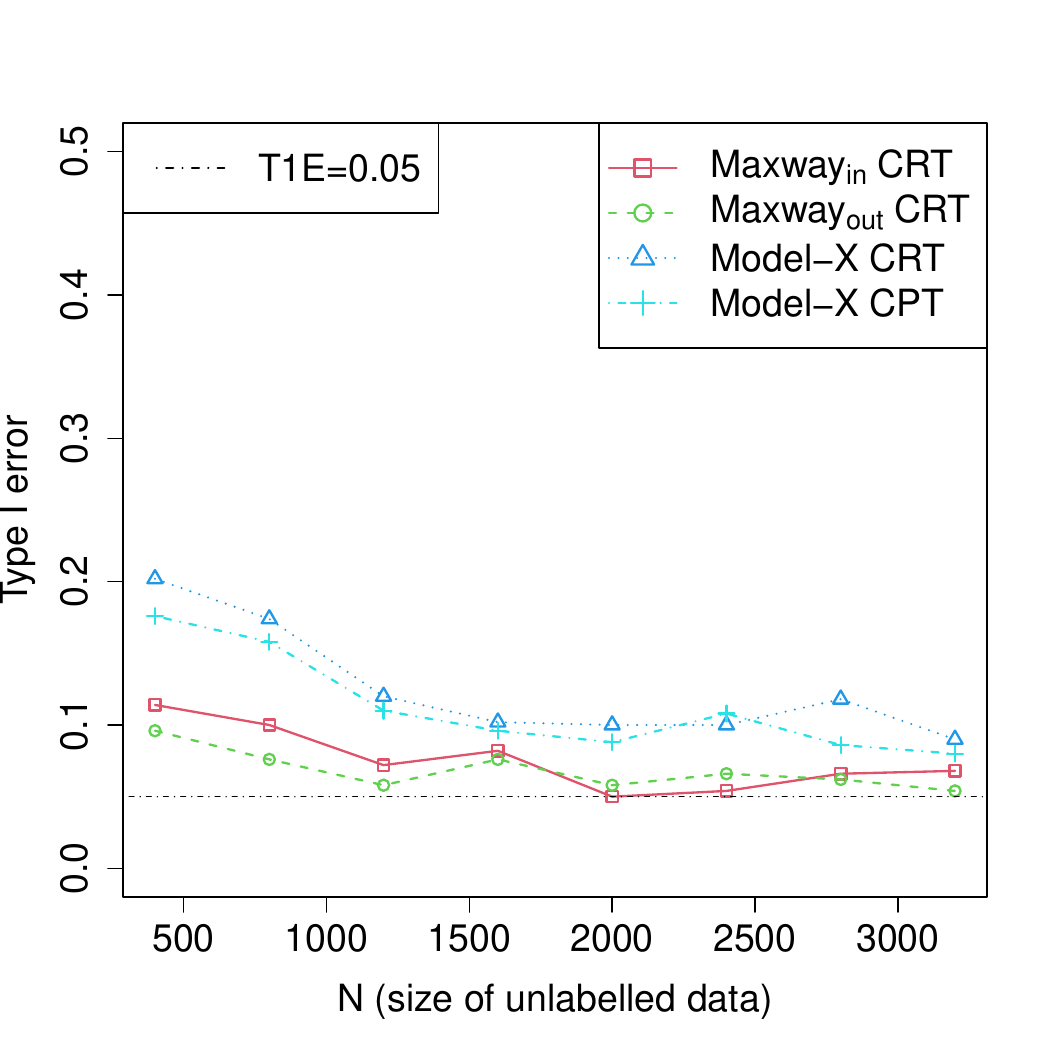}
    \includegraphics[width=0.37\textwidth]{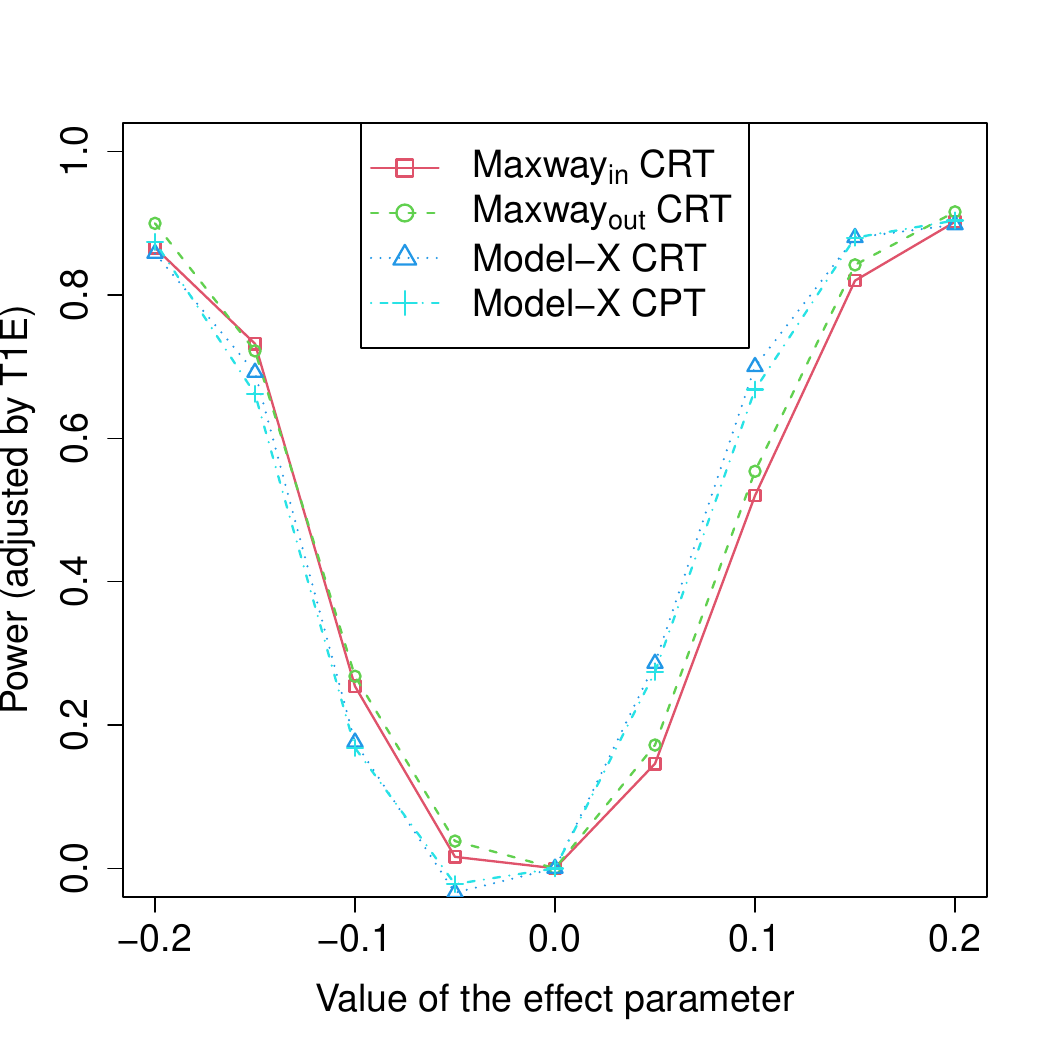}
    \caption{Type-I error and average power (adjusted by type-I error) under Configuration (SS.III) {\bf non-linear $X\mid Z$ and $Y\mid Z$} with the importance of $X$ measured using random forest, as introduced in Section \ref{sec:sim:ss}. The replication number is $500$ and all standard errors are below $0.01$.} 
    \label{fig:ml}
\end{figure}

Under the linear model configurations (SS.I) and (SS.II), the Maxway$_{\rm in}$ CRT shows much less type-I error inflation than both the model-X CRT and CPT when there is a strong overlapped effect of $Z$ on $X$ and $Y$ or small $N$. Under the weak overlapping scenarios or with larger $N$, all approaches tend to have better validity while our approach still achieves better type-I error control. To understand the results, note that for model-X inference, the dependence of $X$ on $Z$ is not adequately characterized and adjusted due to the shrinkage bias of lasso. This is even pronounced when the quality of model-X is poor under small $N$, and could cause severe confound to $X$ and $Y$ especially when their shared variation from $Z$ (i.e. $0.3\sum_{j=1}^5\nu_j Z_j$) is more dominating as in the strong overlapping scenario. Our Maxway approach mitigates this impact by adjusting the learned $X\mid Z$ against the low-dimensional important features predictive of $Y$.

{\darkred 
Interestingly, this adjustment also makes the powers of our approach different from the model-X CRT and CPT, especially under strong overlapping. Note that in our data generation, $Z$ has the same sign of the confounding effect. When such confound is not adjusted adequately, it tends to make $X$ and $Y$ positively correlated. Consequently, when $\gamma$, the signal of $X$ is positive, the inadequately adjusted confounding of $Z$ makes the effect of $X$ on $Y$ spuriously stronger and thus increases the power of the model-X approaches compared with the Maxway approach. In contrast, it makes the power of model-X lower than the Maxway when $\gamma<0$ and $Z$'s confounding effect is opposite to the effect of $X$. We study this phenomenon with more details in Appendix \ref{sec:app:sym} and propose an alternative power calculation procedure in simulations to adjust for such confounding bias and produce a more comparable power evaluation. As shown in Figure \ref{fig:bias:adj}, after proper adjustment, the model-X and Maxway approaches show basically the same power across all signals.} Thus, compared with the model-X approaches, Maxway has a similar power on average but its performance is more balanced between the positive and negative signals. 

In Configuration (SS.III), the importance of $X$ is characterized by a highly non-linear test statistic more complicated than those in (SS.I) and (SS.II), as well as existing semiparametric inference approaches like DML and double selection. Also, the models of $X\mid Z$ and $Y\mid Z$ are non-linear and harder to estimate than those in (SS.I) and (SS.II). So compared with the linear settings, all approaches require a larger training size for $X\mid Z$ and lower dimensionality of $Z$ to achieve proper type-I error control. While our Maxway approach still attains significantly better type-I error control and more balanced power than the model-X CRT/CPT. This is again because our adjustment with a low-dimensional RF model reduces the confound not well adjusted by the high-dimensional RF model for $X\mid Z$.

To understand how the Maxway CRT with the in-sample training of $Y\mid Z$ performs compared to its out-of-sample version using an additional set of $n=250$ labeled samples for a holdout $Y\mid Z$ training, we inspect and compare the performance of the Maxway$_{\rm in}$ and Maxway$_{\rm out}$ CRT in the three configurations. 
One may expect that Maxway$_{\rm out}$ should have better type-I error control than Maxway$_{\rm in}$, due to the potential over-fitting issue of Maxway$_{\rm in}$ discussed in Section \ref{sec:const:semi}. Interestingly, this is the case in Configuration (SS.III) with non-linear and complicated models but is contradictory to our results in (SS.I) with linear models. This is probably because, unlike random forest, lasso highly shrinks the model coefficients and concurs with ``under-fitting" rather than ``over-fitting". Thus, the in-sample fitting of $Y\mid Z$ with lasso  results in a smaller empirical partial correlation between $X$ and $Y$ under the null, which is related to the observation that the mean square of residuals of lasso tends to over-estimate the noise level \citep{sun2012scaled}.

\subsection{Surrogate-assisted semi-supervised setting}\label{sec:sim:sass}

{\darkred

We further extend our simulation studies to the SA-SSL scenario described in Section \ref{sec:sa:ssl}. Consider two data generation configurations with different models of $X\mid Z$, $Y\mid Z$, and $S\mid (Y,Z)$.

\begin{enumerate}

\item[] (SAS.I) {\bf Logistic linear $X\mid Z$ and $Y\mid Z$.} Generate $Z\in\mathbb{R}^p$ from $\mathcal{N}(\bzero, \bSigma)$ where $p=500$ and $\bSigma=(0.3^{|i-j|})_{p\times p}$. Then generate $X$ and $Y$ following:
\[
\PP{X=1\mid Z}={\rm expit}\Big(0.2\sum_{j=1}^5\nu_j Z_j\Big),\quad \PP{Y=1\mid Z}={\rm expit}\Big(\gamma X+0.4\sum_{j=1}^5\nu_j Z_j\Big),
\]
where $\nu_{j}$ is randomly picked from $\{-1,1\}$. Finally, generate $S$ given $(Y,Z)$ following
\[
S=\zeta_y Y + 0.2\zeta_z\sum_{\ell\in\Isc}Z+\epsilon,
\]
where $\epsilon\sim \mathcal{N}(0,1)$ and $\Isc$ is an index set randomly drawn from $\{6,7,\ldots,p\}$ satisfying $|\Isc|=10$.

\item[] (SAS.II) {\bf Non-linear $X\mid Z$ and $Y\mid Z$.} Generate $Z\in\mathbb{R}^p$ from $\mathcal{N}(\bzero, \bSigma)$ where $p=40$ and $\bSigma=(0.3^{|i-j|})_{p\times p}$, and $X$ and $Y$ following:
\begin{align*}
X=&\One{0.5I_1+0.4I_2+0.5I_3+0.4I_4+0.5(I_1I_4+I_2I_3)+\epsilon_1>0};\\
Y=&\One{\gamma \sin\{\pi(X-1)/4\} +0.3I_1+0.5I_2+0.5I_3+0.6I_4+0.5(I_1I_2+I_3I_4)+\epsilon_2>0},
\end{align*}
where $\epsilon_1,\epsilon_2\sim\mathcal{N}(0,0.5)$, $I_1=I(Z_1>0)$, $I_2=I(Z_2>0.5)$, $I(Z_3>-0.5)=I_3$ and $I(|Z_4|>1)$. Again, generate $S$ following:
\[
S=0.5\zeta_y Y + 0.5\zeta_z\sum_{\ell\in\Isc}Z+\epsilon,
\]
where $\epsilon\sim \mathcal{N}(0,1)$ and $\Isc$ is an index set randomly drawn from $\{6,7,\ldots,p\}$ satisfying $|\Isc|=10$. 

\end{enumerate}

Similar to Section \ref{sec:sim:ss}, generation mechanisms of $(Y,X,Z)$ in (SAS.I) and (SAS.II) correspond to linear and non-linear models respectively. We again set $\gamma=0$ to evaluate type-I error and let $\gamma$ vary from $-0.8$ to $0.8$ for power evaluation. As is outlined in Algorithm \ref{alg:outsample:cart:SA:SSL}, different from the SSL scenario, we generate surrogate $S$ for the $N$ unlabeled samples and train models for $S\sim Z$ instead of $Y\sim Z$ to learn the function $g(Z)$ used in the Maxway CRT. To generate $S$ in (SAS.I) and (SAS.II), we consider three settings separately: (i) {\bf strong and perfect surrogate}: $\zeta_y=3$ and $\zeta_z=0$; (ii) {\bf weak and perfect surrogate}: $\zeta_y=1$ and $\zeta_z=0$; (iii) {\bf strong and imperfect surrogate}: $\zeta_y=3$ and $\zeta_z=10^{-1/2}$. When $|\zeta_y|$ becomes larger, $S$ will be more predictive of $Y$ and thus $S\sim Z$ can provide more precise $g(Z)$. When $\zeta_z=0$, it holds that $Z\indp S\mid Y$ and by Proposition \ref{prop:2}, $S$ is a perfect surrogate. For $\zeta_z\neq 0$, it is not hard to show that $S$ is imperfect and $S\sim Z$ could be less informative of $Y\sim Z$ while there is still a hope of leveraging $S$ to improve robustness since $S\sim Z$ is a sparse model involving all predictors of $Y$. We again generate $n=250$ samples with $(X,Y,Z)$ for randomization tests and $N$ samples of $(S,X,Z)$ to estimate the distribution of $X\mid Z$ and learn $g(Z)$ from the model of $S\sim Z$, with $N$ varying in a proper range to evaluate the performance in controlling type-I error. For power evaluation, we stick to the strong and perfect surrogate and $N=2000$ in both configurations.

We include three approaches for comparison include (a) the {\bf SA-SSL Maxway CRT} introduced in Algorithm \ref{alg:outsample:cart:SA:SSL}; (b) the {\bf model-X CRT}; and (c) the {\bf model-X CPT}. Similar to Section \ref{sec:sim:ss}, we adopt Implementation example \ref{example:1} based on (logistic) lasso in Configuration (SAS.I) and Implementation example \ref{example:2} based on RF in (SAS.II), and use the same test statistic to implement the model-X CRT and CPT. The only difference lies in the step of learning $g(Z)$ as we no longer use $Y$ but follow Remark \ref{rem:sass:learn} to fit a sparse SIM in (SAS.I) and an RF model in (SAS.II) for $S\sim Z$ with all $N$ unlabeled samples. For both configurations, we set the parameter $k=\lceil 1.5\log p\rceil$ and the nominal level as $0.05$. 

The type-I error and power plots are presented in Figure \ref{fig:sur:binary} for Configuration (SAS.I) and in Figure \ref{fig:sur:rf} for (SAS.II). The SA-SSL Maxway CRT with all different qualities of surrogates (i.e. settings (i)--(iii)) has significantly smaller type-I errors than the model-X CRT and CPT. For example, the Maxway CRT with perfect surrogate successfully controls the type-I error around the nominal level $0.05$ when $N=2000$ under both configurations while the model-X inference approaches have their type-I error inflation larger than $50\%$ of the nominal level under (SAS.I) and $100\%$ under (SAS.II). Similar to the SSL scenario, the Maxway CRT shows lower power than the model-X CRT/CPT when $\gamma>0$ and higher than the latter when $\gamma<0$ while they achieve similar power in overall. This is again due to the relatively inadequate adjustment of $Z$'s confounding effect by the model-X approaches as discussed in Section \ref{sec:sim:ss} and studied in Appendix \ref{sec:app:sym}. In addition, the Maxway CRT constructed with a strong and perfect surrogate achieves the best type-I error control among Settings (i)--(iii) in both configurations. Interestingly, a strong but imperfect surrogate turns out to work better than a perfect but weak surrogate for relatively small $N$ but worse than the latter for large $N$.

Finally, we note that the TL scenario introduced in Section \ref{sec:tl} can be studied with quite similar designs being used in this section because, for both SA-SSL and TL, we are essentially concerned about the same question. That is how the discrepancy between the prior information learned from the surrogate or the external data and the underlying true model of $Y\sim \Z$ affects the performance of our method. Such discrepancy is reflected by the data generation parameter $\zeta_z$ in this section.

}

\begin{figure}[H]
    \centering
    \includegraphics[width=0.45\textwidth]{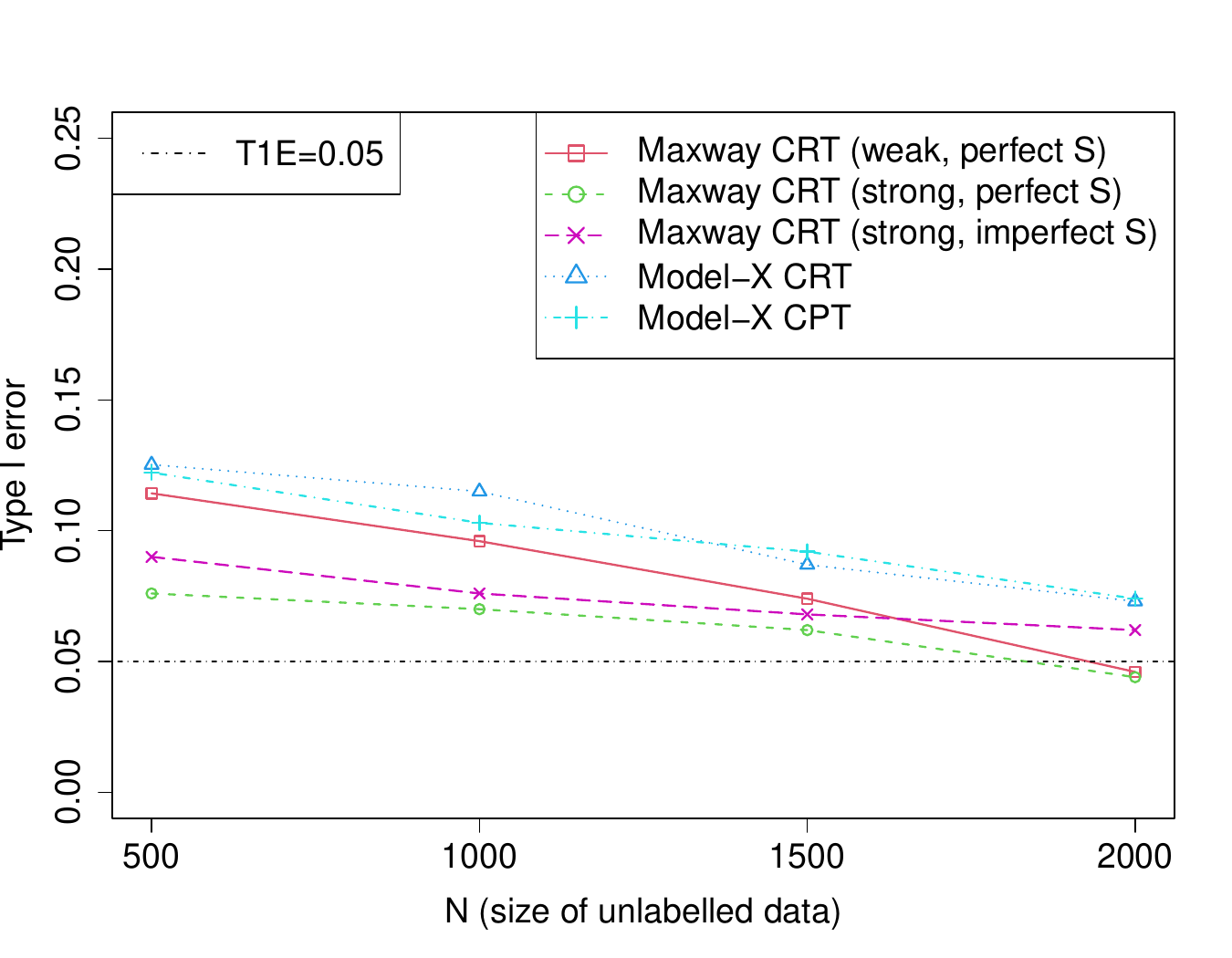}
    \includegraphics[width=0.45\textwidth]{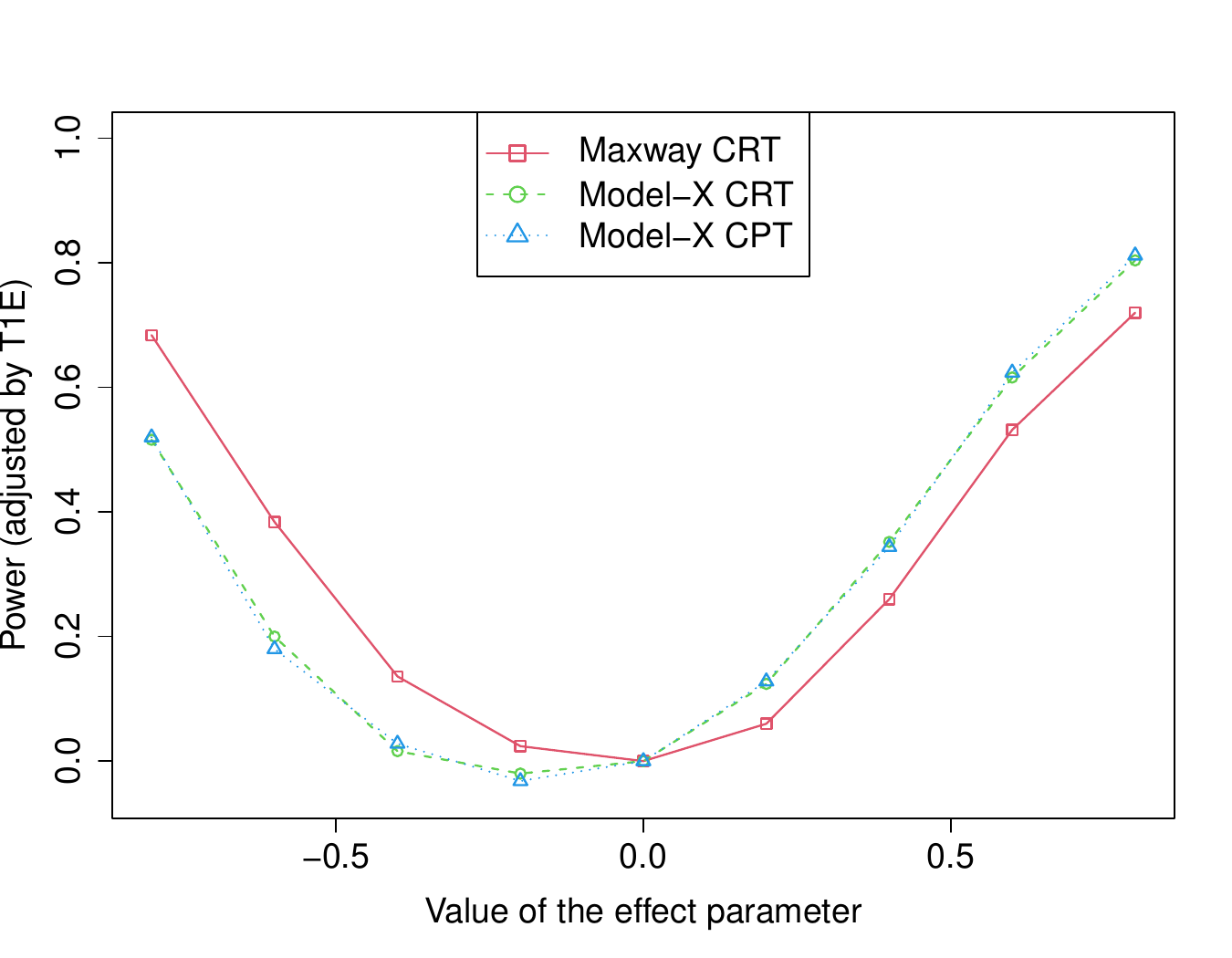}
    \caption{\darkred Type-I error and average power (adjusted by type-I error) under the Configuration (SAS.I) described in Section \ref{sec:sim:sass}. The three versions of the (SA-SSL) Maxway CRT in the Type-I error plot refer to the setups (i)--(iii) of $S\mid (Z,Y)$. The replication number is $1000$ and all standard errors are below $0.01$.} 
    \label{fig:sur:binary}
\end{figure}

\begin{figure}[H]
    \centering
    \includegraphics[width=0.45\textwidth]{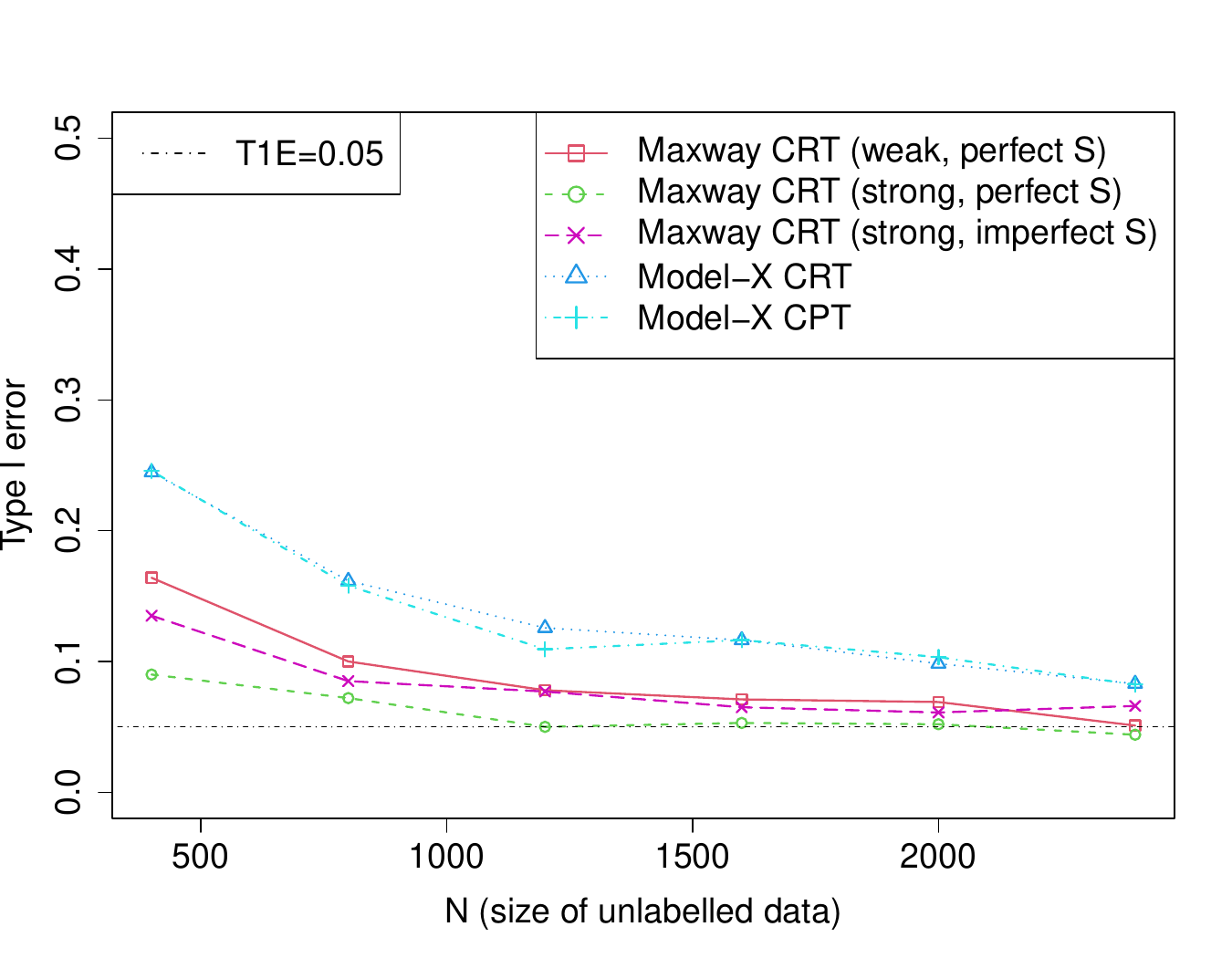}
    \includegraphics[width=0.45\textwidth]{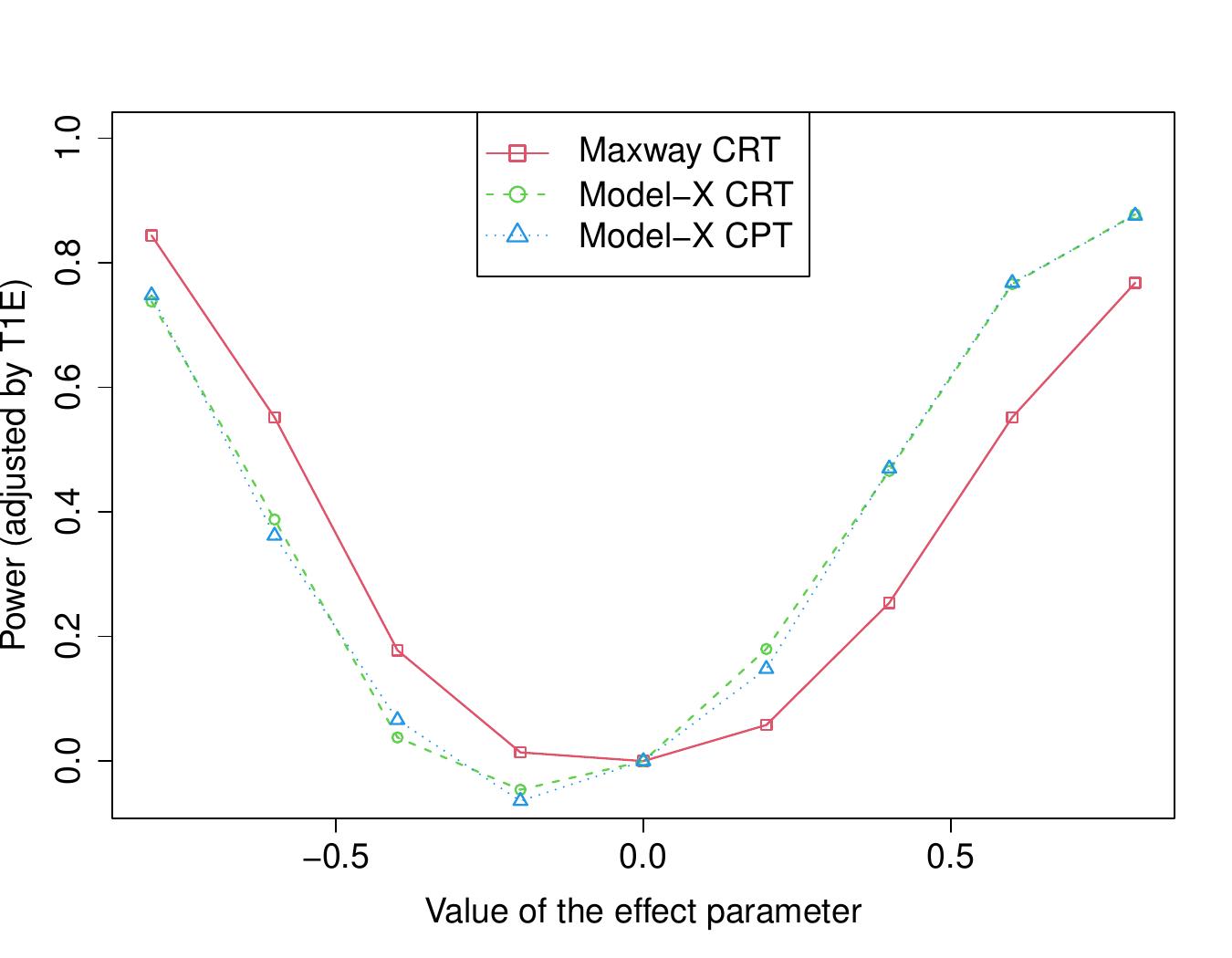}
    \caption{\darkred Type-I error and average power (adjusted by type-I error) under the Configuration (SAS.II) described in Section \ref{sec:sim:sass}. The three versions of the (SA-SSL) Maxway CRT refer to the setups (i)--(iii) of $S\mid (Z,Y)$. The replication number is $1000$ and all standard errors are below $0.01$.} 
    \label{fig:sur:rf}
\end{figure}

\section{Real Examples}

{\darkred

\subsection{An SA-SSL: studying obesity paradox with EHR data}

Besides the standard SSL setting, we also implement the Maxway CRT on an SA-SSL example concurred in an EHR-based biomedical study. Obesity is a common risk factor for type II diabetes (T2D) \citep{chan1994obesity,reaven1995pathophysiology} and both obesity and T2D are known to increase the risk of heart failure (HF) \citep{ali1999clinical,kenchaiah2002obesity}. However, it has been found in existing studies that among the patients already having T2D, the risk of HF becomes negatively associated with the presence of obesity, which is known as the ``obesity paradox" \citep{hainer2013obesity}. For example, in the cohort study of \citep{pagidipati2020association} with around 14,000 subjects having T2D and cardiovascular disease (CVD) at baseline, the HF risk in the overweight group was actually lower than the under/normal weight group (hazard ratio $0.83$, $95\%$ CI $0.71$--$0.98$). Interestingly, it is still an open problem whether the obesity paradox is a true negative association or just an epidemiological artifact caused by insufficient adjustment of confounding effects. As an example, one possible explanation for the obesity paradox is the lead-time bias, which suggests that obese individuals may develop CVD at an earlier age when they are generally in a healthier state and have fewer accompanying medical conditions than non-obese individuals who develop CVD \citep{elagizi2018overview,pagidipati2020association}. CRT could be a promising method to solve this problem since it can remove the confounding bias by conditioning on a large number of demographic and baseline adjustment features. We studied this problem on an EHR cohort extracted from Mass General Brigham (MGB) Healthcare System.

Our T2D cohort is defined as the subjects with at least one International Classification of Diseases (ICD) code of T2D occurring before the onset of the diagnostic code for HF. For each subject, we define the time of having the first T2D code in EHR as the baseline, and take the exposure variable $X$ as the indicator for the presence of overweight at this baseline. To remove potential confounding bias, we include the following sets of EHR features in the adjustment covariates $Z$: (i) demographic variables like age, gender, and ethnicity; (ii) indicators for the presence of all the other diagnostic codes (rolled up to PheCodes) at the baseline, which reflects the existence of any other diseases; (iii) the total health utilization measured by the days of visits up to the baseline. The outcome $Y$ is the gold standard label for HF obtained via chart review and the surrogate variable $S$ is naturally taken as the log count of the ICD code for HF. There are $N=11858$ subjects with $n=84$ labeled and the number of cases ($Y=1$) being $27$, and the number of the baseline adjustment features $p=516$. We notice that $S$ is an informative yet error-prone outcome for the true HF status $Y$, with the area under the receiver operating characteristic curve (AUC) being $0.85$ on the labeled samples.

\begin{table}
\caption{\label{tab:info:obe} Summary information of the data sets used in our real example of SA-SSL, studying the obesity paradox with the EHR data from MGB.}
 \centering
 \begin{tabular}{cccc}
    \hline
   & CRT data & X-modeling data & Surrogate data \\
    \hline
    \\[-2.5ex] 
   Structure & $(\by,\bx,\bZ)$ & $(\bx^u,\bZ^u)$ &  $(\bs^u,\bZ^u)$ \\
   Sample size  & 84 & 11,858 & 11,858  \\
  \# of Cases ($Y=1$ or $S>0$) & 27 & -- & 3,387  \\
  \# of Exposed ($X=1$) & 12 & 1,531 &  -- \\
   \\[-2.5ex] 
      \hline
    \end{tabular}
\end{table}

We include the SA-SSL version of the Maxway CRT (see Algorithm \ref{alg:outsample:cart:SA:SSL}), the model-X CRT, and the model-X CPT to test $X\indp Y\mid Z$. For implementation, we again fit logistic lasso to estimate both $X\mid Z$ and $Y\mid Z$ in all approaches, single index regression with the lasso penalty \citep{neykov2016l1} for $S\sim Z$ to learn the Maxway distribution as suggested in Remark \ref{rem:sass:learn}, and construct the d$_0$ statistic for importance measure. For all approaches, we resample for $M=1000$ times to estimate the $p$-values.

\begin{table}
\caption{\label{table:pval:sas} $p$-values of the d$_0$ statistic for the association between the risk of heart failure (HF) and the presence of overweight conditional on all the other disease conditions at the baseline of T2D.}
    \centering
    
 \begin{tabular}{cccc}
    \hline
        & Model-X CRT & Model-X CPT & Maxway CRT \\
    \hline
    \\[-2.5ex] 
   $p$-value & $0.052$ & $0.116$ &  $0.041$  \\
   \\[-2.5ex] 
      \hline
    \end{tabular}
\end{table}

The output $p$-values are presented in Table \ref{table:pval:sas}. The model-X and Maxway CRT output close $p$-values rejecting $X\indp Y\mid Z$ at the level $0.05$ while the model-X CPT produces a much bigger $p$-value being non-significant. This is probably due to the conservativeness of the CPT caused by conditioning on more observed information, which could become even more pronounced with a small labeled sample size $n$ in this example. Note that, unlike the model-X CPT, the Maxway CRT does not produce a more conservative $p$-value than the model-X CRT in this example. More importantly, we found the observed partial covariance between $Y$ and $X$ (adjusted to $Z$) is positive in all methods, indicating that the presence of obesity increases the risk of HF in the T2D cohort. Thus, after adequately adjusting for the other disease conditions at the baseline, obesity shows no benefit but probably an adverse effect in terms of survival from HF. This finding supports the popular argument that the ``obesity paradox" is actually an epidemiological artifact \citep{elagizi2018overview}.

}

\subsection{A TL example: the adverse effect of statins among Africans}

{\darkred 
Coronary artery disease (CAD) is a prevalent disease that affects the functioning of the heart and is the leading cause of death worldwide. Statins are commonly prescribed drugs that reduce low-density lipoprotein (LDL) levels, subsequently lowering CAD risks through {\em HMGCR} inhibition \citep{nissen2005statin}. However, the use of statins is associated with an increased risk of new-onset type II diabetes (T2D). Previous studies have examined the potential side effects of statins in developing T2D \citep{waters2013cardiovascular,macedo2014statins}; however, there is still no sufficient and robust evidence as to whether and on what kind of population statin use increases the risk of T2D. In this example, our goal is to test the effect of statin use on T2D risk among the African (AFR) cohort. In such cases, one useful strategy is leveraging the larger European (EUR) data set to assist the analysis of the target AFR data in the belief that genetic models ($Y\sim Z$) of the two ethnic groups are similar \citep[e.g.]{cai2022semi}.

While randomized control trials can be expensive and sometimes unethical, and observational studies based on medical records may encounter unmeasured confounding bias, we take an alternative route to study this problem based on UK Biobank (UKB) data that links the T2D phenotype with genomic profiles. In specific, we use the genetic variant {\em rs12916}-T as a surrogate variable for statin use. It serves as a treatment indicator: if a subject carries {\em rs12916}-T, then set the exposure $X=1$, and if they do not carry it, then $X=0$. This variant can be used as a trustworthy substitute treatment variable for statin use because it is located in the {\em HMGCR} gene, which encodes the drug target of statins, and has been shown to be an unbiased and reliable proxy for the pharmacological action of statins on their target, HMG-CoA reductase inhibition \citep{swerdlow2015hmg,wurtz2016metabolomic}. To be more specific, \cite{wurtz2016metabolomic} demonstrates a strong similarity between the metabolic changes resulting from statin use, such as lowered LDL cholesterol levels, and those associated with {\em rs12916}-T, with an R-square of $0.94$. Also note that such a strategy, i.e., using some functional genetic variants as proxies for certain pharmacological actions, has been frequently adopted in biomedical studies \citep[e.g.]{interleukin2012interleukin,liu2021integrative,guo2022assessing}.

For all the EUR and AFR subjects in UKB data, we extract their statin proxy variant {\em rs12916}-T, as well as $p=355$ adjustment features $Z$ including age, gender, and $353$ genetic variants associated with T2D or its related phenotypes including high LDL, high–density lipoprotein (HDL) and body mass index (BMI). Response $Y$ is chosen as the status of T2D diagnosed by doctors as a medical condition (either self-reported or with diagnostic codes). Similar to the TL scenario introduced in Section \ref{sec:tl}, we take the $n=3,345$ AFR subjects as the target data set $\bD=(\by,\bx,\bZ)$ used for the CRT. We also use the same AFR samples for X-modeling. Meanwhile, we incorporate a large source data $(\by^e,\bZ^e)$ consisting of $446,531$ EUR subjects with the same set of variables, to provide external knowledge about $g(\Z)$ as described in Algorithm \ref{alg:outsample:cart:TL}. Basic information about the data sets is summarized in Table \ref{tab:info:ukb}. Note that since UKB is a typical cohort not associated with any specific diseases, our data has a low $\mathbb{P}(Y=1)$. Thus, compared to the seemingly large total sample sizes, the case numbers ($433$ on AFR; $27,433$ on EUR) may better reflect the amounts of statistically effective information. In this sense, although using $\bD$, the same data as the CRT, for X-modeling, we still expect $X\mid Z$ to be more effectively estimated than $Y$'s model estimated using $\bD$ because the number of exposed ($X=1$) subjects is significantly larger than the case number.

}

We include the TL Maxway CRT (i.e., Algorithm \ref{alg:outsample:cart:TL}), the model-X CRT, and the model-X CPT to test $X\indp Y\mid Z$ on AFR. For implementation, we fit logistic lasso to estimate both $X\mid Z$ and $Y\mid Z$ in all approaches, and low-dimensional logistic regression to learn the Maxway distribution as described in Implementation example \ref{example:2}. For the importance measure, we use the d$_0$ test statistic. For all approaches, we resample $\x$ for $M=2000$ times to estimate the $p$-values.

{\darkred 

\begin{table}
\caption{\label{tab:info:ukb} Summary information of the data sets used in our real example of TL, studying the adverse effect of statins among the AFR cohort from UKB.}
\centering
 \begin{tabular}{cccc}
    \hline
   & CRT data & X-modeling data & External data \\
    \hline
    \\[-2.5ex] 
   Structure & $(\by,\bx,\bZ)$ & $(\bx^u,\bZ^u)=(\bx,\bZ)$ &  $(\by^e,\bZ^e)$ \\
   Sample size  & 3,345 & 3,345 & 446,531  \\
  \# of Cases ($Y=1$) & 433 & -- &  27,433  \\
  \# of Exposed ($X=1$) & 1,357 & 1,357 &  -- \\
   \\[-2.5ex] 
      \hline
    \end{tabular}
\end{table}

\begin{table}
\caption{\label{table:pval} $p$-values of the d$_0$ statistic for the association between the risk of T2D and the statin variant {\em rs12916}-T conditional on other T2D related gene variants among the AFR subjects.}
    \centering
 \begin{tabular}{ccccc}
    \hline
   & Model-X CRT & Model-X CPT & Maxway CRT \\
    \hline
    \\[-2.5ex] 
   $p$-value & $0.033$ & $0.038$ &  $0.043$  \\
   \\[-2.5ex] 
      \hline
    \end{tabular}
\end{table}

The output $p$-values of the three approaches are presented in Table \ref{table:pval}. While the model-X CRT produces the smallest $p$-value and our method produces the largest one, the $p$-values of the three methods are not that far from each other and all lead to the decision of rejecting the null hypothesis when the nominal level is $0.05$. We find that the observed partial covariance between $Y$ and $X$ (adjusted to $Z$) is positive, indicating that the presence of {\em rs12916}-T, the functional SNP of statins, significantly increases the risk of T2D among the AFR subjects. Thus, from a biological perspective and focusing on the AFR cohort, our study supports and complements findings in existing clinical studies according to which statins tend to increase the risk of new-onset T2D \citep{waters2013cardiovascular,carter2013risk,macedo2014statins,mansi2015statins}. 

Finally, we notice an interesting fact that in the low-dimensional logistic regression of $Y$ against $X$, age and gender on the data $\bD$, the $p$-value for the effect of $X$ turns out to be $0.030$, which is the closest to the output of the model-X CRT and farthest from the Maxway CRT. This result may indicate that compared with the model-X CRT, our method actually has a more adequate adjustment to the high-dimensional genetic features.

}

\section{Discussion}\label{sec:discuss}
{\darkred
\paragraph{Power of the Maxway CRT.} While our studies mainly focus on robustness, power is another important aspect. We shall remark on the power of our method based on its connection with the model-X distilled CRT (dCRT). Note that when knowledge of $X\mid Z$ is perfect, i.e., $X\indp Z\mid h(Z)$ and $\rho=\rho^\star$, our Algorithm \ref{alg:insample:cart}, the Maxway$_{\rm in}$ CRT is actually {\em equivalent} to the dCRT with the ``distillation" procedure $g(\Z)=\Lsc_g((\by,\bZ);\Z)$ in \cite{liu2020fast}. Meanwhile, under this perfect model-X scenario, Algorithm \ref{alg:outsample:cart}, the Maxway$_{\rm out}$ CRT is essentially the same as \cite{katsevich2020theoretical}'s version of the dCRT with an out-of-sample $g(\cdot)$. Thus, their power analysis of the d$_0$CRT, a natural specification of the dCRT against local ($n^{-1/2}$-rate) semiparametric alternatives 
can be directly applied to our method. Their results imply that the Maxway$_{\rm out}$ CRT with a d$_0$ construction achieve an essential power against local alternatives; see Theorem 4.1 and Section 4 of \cite{katsevich2020theoretical}. Further, when the machine learning estimator of $\EE{Y\mid Z}$ is consistent, our method can be shown to achieve local efficiency. We also notice that a similar analysis in \cite{wang2020power} could be applied on the Maxway$_{\rm in}$ CRT constructed with lasso. In addition, as is shown in extensive simulation studies in \cite{liu2020fast} and this work, beyond the partial linear model and d$_0$ statistic, the model-X and Maxway approaches also have essential power against various types of nonlinear and  hierarchical interaction alternative.

Though our simulation studies show some power discrepancy between our method and the model-X (d)CRT at the first glance, we find that this only reflects the shrinkage bias issue of the model-X CRT shifting the power curve rather than contradicting the power equivalence between the model-X dCRT and the Maxway CRT. We demonstrate this point through additional derivation and simulation results in Appendix \ref{sec:app:sym}. Interestingly, after a simple adjustment on such shifting bias, the model-X and Maxway approaches produce nearly the same power across all signals. This is coherent with our above discussion.

}

\paragraph{An even more robust CRT.} In this paper, we have proposed the Maxway CRT, which improves upon the type-I error inflation compared to the original model-X CRT. As the name ``model and adjust X with the assistance of Y" suggests, the role of $X$ and $Y$ is not symmetric in the Maxway CRT. In particular, the Maxway CRT requires full knowledge of the distribution of $X \mid Z$, whereas it only requires a sufficient statistic of the distribution of $Y \mid Z$, which is used to ``adjust X".  As discussed in Section \ref{sec:sa:ssl}, surrogate datasets can be used to obtain such sufficient statistics, but they may not be good enough to train the full  distribution of $Y \mid Z$. In other application scenarios where full knowledge of the distribution of $Y \mid Z$ is available, we can further enhance the robustness of the CRT. Consider the following procedure: Run the Maxway CRT and obtain a $\pval$ $p_{\operatorname{maxway},1}$. Swap $\x$ and $\y$, run the Maxway CRT again on the swapped dataset, and obtain a $\pval$ $p_{\operatorname{maxway},2}$. Finally, report the maximum of the two $\pval$s, i.e., take $p_{\textnormal{model-xy} } = \max\cb{p_{\operatorname{maxway},1},  p_{\operatorname{maxway},2}}$. It is not hard to verify that under the same conditions as in Theorem \ref{theo:almost_double_robust}, the type-I error of this new procedure can be bounded by 
\[
\PP{p_{\textnormal{model-xy} }  \leq \alpha} \leq \alpha + 2\EE{\Delta_x \Delta_y} + \min\cb{\EE{\Delta_{x|g,h}}, \EE{\Delta_{y|g,h}}},
\] 
where $\Delta_{x|g,h}$ is the total variation distance between the true distribution of $\x \mid g(\Z), h(\Z)$ and the estimated one, and the same is true for $\Delta_{y|g,h}$ and $\y$. Unlike the Maxway CRT, this procedure is ``truly" doubly robust, since either perfect information of the distribution of $X\mid Z$ or that of $Y\mid Z$ can ensure the exact type-I error control. Nevertheless, the new procedure is strictly more conservative than the Maxway CRT, which might decrease its power. We leave the further investigation of this procedure to future work. Questions of interest include: What are the suitable application scenarios? How would this procedure compare to the Maxway CRT in practice in terms of power and type-I error? 

\paragraph{Extension to Knockoffs.}
Another interesting direction for future work is in extending our proposed method to other model-X procedures, including the model-X knockoffs. When the X-modeling is not perfect, \citet{barber2020robust} quantify the inflation in FDR of the model-X knockoffs in terms of a distance between the true distribution and the sampling distribution.
It will be of interest to study whether some knowledge of the distribution of $Y \mid X$ can help decrease the inflation in FDR for knockoffs. In this paper, we gain extra robustness in type-I error control by conditioning on a sufficient statistic of $Y \mid Z$. The question is, can we construct knockoffs conditioning on similar statistics and thus obtain better bounds for FDR?

\paragraph{Maxway distribution.}
When introducing the Maxway CRT in Algorithm \ref{alg:maxway}, we formulate the Maxway distribution $\rho$ as an estimator of $\rhos$, where $\rhos$ is taken to be the conditional distribution of $\x$ given $g(\Z)$ and $h(\Z)$. This is not the only possible formulation of $\rhos$. In fact, let $\rhos$ be a distribution such that for any $\tilde{\x} \sim \rhos$ independent from $\x$, the following conditions hold:
\begin{enumerate}
    \item[(1)] For any $h$ and $g$, $\tilde{\x}$ is exchangeable with the observed $\x$ conditional on $g(\Z)$.
    \item[(2)] If $\x\indp \Z\mid h(\Z)$, then for any $g$, $\tilde{\x}$ is exchangeable with $\x$ conditional on $g(\Z),h(\Z)$.
\end{enumerate}
We can show that the Maxway CRT with this new definition of $\rhos$ still enjoys the almost double robustness property given by Theorems \ref{theo:exact_inference} and \ref{theo:almost_double_robust}. Obviously, the conditional distribution of $\x$ given $g(\Z)$ and $h(\Z)$ is just one of the distributions satisfying these conditions and it is chosen in our framework due to its intuitive interpretation and implementation in practice. This more general definition of the Maxway distribution given by Conditions (1) and (2) additionally reveals that $g(\Z)$ and $h(\Z)$ are not symmetric in our framework and may not be treated in the same way when constructing the Maxway distribution. For example, one could just set $g(\Z)$ as the ``predictors" and $h(\Z)$ as the ``offsets" when learning the Maxway distribution, which further reduces the statistical dimensionality of the regression procedure.


\bibliographystyle{apalike}
\bibliography{library}

\begin{thebibliography}{}

\bibitem[Ali et~al., 1999]{ali1999clinical}
Ali, A.~S., Rybicki, B.~A., Alam, M., Wulbrecht, N., Richer-Cornish, K., Khaja,
  F., Sabbah, H.~N., and Goldstein, S. (1999).
\newblock Clinical predictors of heart failure in patients with first acute
  myocardial infarction.
\newblock {\em American heart journal}, 138(6):1133--1139.

\bibitem[Barber et~al., 2020]{barber2020robust}
Barber, R.~F., Cand{\`e}s, E.~J., and Samworth, R.~J. (2020).
\newblock Robust inference with knockoffs.
\newblock {\em The Annals of Statistics}, 48(3):1409--1431.

\bibitem[Bates et~al., 2020]{bates2020causal}
Bates, S., Sesia, M., Sabatti, C., and Cand{\`e}s, E. (2020).
\newblock Causal inference in genetic trio studies.
\newblock {\em Proceedings of the National Academy of Sciences},
  117(39):24117--24126.

\bibitem[Belloni et~al., 2014]{belloni2014inference}
Belloni, A., Chernozhukov, V., and Hansen, C. (2014).
\newblock Inference on treatment effects after selection among high-dimensional
  controls.
\newblock {\em The Review of Economic Studies}, 81(2):608--650.

\bibitem[Berrett et~al., 2020]{berrett2018conditional}
Berrett, T.~B., Wang, Y., Barber, R.~F., and Samworth, R.~J. (2020).
\newblock The conditional permutation test for independence while controlling
  for confounders.
\newblock {\em Journal of the Royal Statistical Society: Series B (Statistical
  Methodology)}, 82(1):175--197.

\bibitem[Bickel et~al., 2009]{bickel2009simultaneous}
Bickel, P.~J., Ritov, Y., and Tsybakov, A.~B. (2009).
\newblock Simultaneous analysis of lasso and dantzig selector.
\newblock {\em The Annals of statistics}, 37(4):1705--1732.

\bibitem[Cai et~al., 2022]{cai2022semi}
Cai, T., Li, M., and Liu, M. (2022).
\newblock Semi-supervised triply robust inductive transfer learning.
\newblock {\em arXiv preprint arXiv:2209.04977}.

\bibitem[Cand\`es et~al., 2018]{candes2018panning}
Cand\`es, E., Fan, Y., Janson, L., and Lv, J. (2018).
\newblock Panning for gold: model-$\mbox{X}$ knockoffs for high dimensional
  controlled variable selection.
\newblock {\em Journal of the Royal Statistical Society: Series B},
  80(3):551--577.

\bibitem[Carter et~al., 2013]{carter2013risk}
Carter, A.~A., Gomes, T., Camacho, X., Juurlink, D.~N., Shah, B.~R., and
  Mamdani, M.~M. (2013).
\newblock Risk of incident diabetes among patients treated with statins:
  population based study.
\newblock {\em Bmj}, 346:f2610.

\bibitem[Chan et~al., 1994]{chan1994obesity}
Chan, J.~M., Rimm, E.~B., Colditz, G.~A., Stampfer, M.~J., and Willett, W.~C.
  (1994).
\newblock Obesity, fat distribution, and weight gain as risk factors for
  clinical diabetes in men.
\newblock {\em Diabetes care}, 17(9):961--969.

\bibitem[Chernozhukov et~al., 2018]{chernozhukov2016double}
Chernozhukov, V., Chetverikov, D., Demirer, M., Duflo, E., Hansen, C., Newey,
  W., and Robins, J. (2018).
\newblock Double/debiased machine learning for treatment and structural
  parameters.
\newblock {\em The Econometrics Journal}, 21(1):C1--C68.

\bibitem[Consortium et~al., 2012]{interleukin2012interleukin}
Consortium, I.~M. et~al. (2012).
\newblock The interleukin-6 receptor as a target for prevention of coronary
  heart disease: a mendelian randomisation analysis.
\newblock {\em The Lancet}, 379(9822):1214--1224.

\bibitem[Denis et~al., 2021]{denis2021regularized}
Denis, N., Vira, S., and Vasilis, S. (2021).
\newblock Regularized orthogonal machine learning for nonlinear semiparametric
  models.
\newblock {\em The Econometrics Journal}.

\bibitem[Dukes et~al., 2021]{dukes2021doubly}
Dukes, O., Vansteelandt, S., and Whitney, D. (2021).
\newblock On doubly robust inference for double machine learning.
\newblock {\em arXiv preprint arXiv:2107.06124}.

\bibitem[Elagizi et~al., 2018]{elagizi2018overview}
Elagizi, A., Kachur, S., Lavie, C.~J., Carbone, S., Pandey, A., Ortega, F.~B.,
  and Milani, R.~V. (2018).
\newblock An overview and update on obesity and the obesity paradox in
  cardiovascular diseases.
\newblock {\em Progress in cardiovascular diseases}, 61(2):142--150.

\bibitem[Gu et~al., 2022]{gu2022robust}
Gu, T., Han, Y., and Duan, R. (2022).
\newblock Robust angle-based transfer learning in high dimensions.
\newblock {\em arXiv preprint arXiv:2210.12759}.

\bibitem[Guo et~al., 2022]{guo2022assessing}
Guo, X., Wei, W., Liu, M., Cai, T., Wu, C., and Wang, J. (2022).
\newblock Assessing the most vulnerable subgroup to type ii diabetes associated
  with statin usage: Evidence from electronic health record data.
\newblock {\em Journal of the American Statistical Association},
  (just-accepted):1--26.

\bibitem[Hainer and Aldhoon-Hainerov{\'a}, 2013]{hainer2013obesity}
Hainer, V. and Aldhoon-Hainerov{\'a}, I. (2013).
\newblock Obesity paradox does exist.
\newblock {\em Diabetes care}, 36(Supplement\_2):S276--S281.

\bibitem[Hong et~al., 2019]{hong2019semi}
Hong, C., Liao, K.~P., and Cai, T. (2019).
\newblock Semi-supervised validation of multiple surrogate outcomes with
  application to electronic medical records phenotyping.
\newblock {\em Biometrics}, 75(1):78--89.

\bibitem[Hou et~al., 2021]{hou2021surrogate}
Hou, J., Guo, Z., and Cai, T. (2021).
\newblock Surrogate assisted semi-supervised inference for high dimensional
  risk prediction.
\newblock {\em arXiv preprint arXiv:2105.01264}.

\bibitem[Huang and Janson, 2020]{huang2020relaxing}
Huang, D. and Janson, L. (2020).
\newblock Relaxing the assumptions of knockoffs by conditioning.
\newblock {\em The Annals of Statistics}, 48(5):3021--3042.

\bibitem[Katsevich and Ramdas, 2022]{katsevich2020theoretical}
Katsevich, E. and Ramdas, A. (2022).
\newblock On the power of conditional independence testing under
  model-$\mbox{X}$.
\newblock {\em Electronic Journal of Statistics}, 16(2):6348--6394.

\bibitem[Kenchaiah et~al., 2002]{kenchaiah2002obesity}
Kenchaiah, S., Evans, J.~C., Levy, D., Wilson, P.~W., Benjamin, E.~J., Larson,
  M.~G., Kannel, W.~B., and Vasan, R.~S. (2002).
\newblock Obesity and the risk of heart failure.
\newblock {\em New England Journal of Medicine}, 347(5):305--313.

\bibitem[Li and Duan, 1989]{li1989regression}
Li, K.-C. and Duan, N. (1989).
\newblock Regression analysis under link violation.
\newblock {\em The Annals of Statistics}, 17(3):1009--1052.

\bibitem[Li et~al., 2022a]{li2022transfer}
Li, S., Cai, T.~T., and Li, H. (2022a).
\newblock Transfer learning for high-dimensional linear regression: Prediction,
  estimation and minimax optimality.
\newblock {\em Journal of the Royal Statistical Society Series B: Statistical
  Methodology}, 84(1):149--173.

\bibitem[Li et~al., 2022b]{li2022transfer1}
Li, S., Ren, Z., Sabatti, C., and Sesia, M. (2022b).
\newblock Transfer learning in genome-wide association studies with knockoffs.
\newblock {\em Sankhya B}, pages 1--39.

\bibitem[Liu et~al., 2020]{liu2020fast}
Liu, M., Katsevich, E., Janson, L., and Ramdas, A. (2020).
\newblock Fast and powerful conditional randomization testing via distillation.
\newblock {\em arXiv preprint arXiv:2006.03980}.

\bibitem[Liu et~al., 2021a]{liu2021integrative}
Liu, M., Xia, Y., Cho, K., and Cai, T. (2021a).
\newblock Integrative high dimensional multiple testing with heterogeneity
  under data sharing constraints.
\newblock {\em Journal of Machine Learning Research}, 22(126):1--26.

\bibitem[Liu et~al., 2021b]{liu2021double}
Liu, M., Zhang, Y., and Zhou, D. (2021b).
\newblock Double/debiased machine learning for logistic partially linear model.
\newblock {\em The Econometrics Journal}.

\bibitem[Macedo et~al., 2014]{macedo2014statins}
Macedo, A.~F., Douglas, I., Smeeth, L., Forbes, H., and Ebrahim, S. (2014).
\newblock Statins and the risk of type 2 diabetes mellitus: cohort study using
  the uk clinical practice pesearch datalink.
\newblock {\em BMC cardiovascular disorders}, 14(1):1--12.

\bibitem[Mansi et~al., 2015]{mansi2015statins}
Mansi, I., Frei, C.~R., Wang, C.-P., and Mortensen, E.~M. (2015).
\newblock Statins and new-onset diabetes mellitus and diabetic complications: a
  retrospective cohort study of us healthy adults.
\newblock {\em Journal of general internal medicine}, 30(11):1599--1610.

\bibitem[Neykov et~al., 2016]{neykov2016l1}
Neykov, M., Liu, J.~S., and Cai, T. (2016).
\newblock L1-regularized least squares for support recovery of high dimensional
  single index models with gaussian designs.
\newblock {\em The Journal of Machine Learning Research}, 17(1):2976--3012.

\bibitem[Nissen et~al., 2005]{nissen2005statin}
Nissen, S.~E., Tuzcu, E.~M., Schoenhagen, P., Crowe, T., Sasiela, W.~J., Tsai,
  J., Orazem, J., Magorien, R.~D., O'Shaughnessy, C., and Ganz, P. (2005).
\newblock Statin therapy, ldl cholesterol, c-reactive protein, and coronary
  artery disease.
\newblock {\em New England Journal of Medicine}, 352(1):29--38.

\bibitem[Niu et~al., 2022]{niu2022reconciling}
Niu, Z., Chakraborty, A., Dukes, O., and Katsevich, E. (2022).
\newblock Reconciling model-x and doubly robust approaches to conditional
  independence testing.
\newblock {\em arXiv preprint arXiv:2211.14698}.

\bibitem[Pagidipati et~al., 2020]{pagidipati2020association}
Pagidipati, N.~J., Zheng, Y., Green, J.~B., McGuire, D.~K., Mentz, R.~J., Shah,
  S., Aschner, P., Delibasi, T., Rodbard, H.~W., Westerhout, C.~M., et~al.
  (2020).
\newblock Association of obesity with cardiovascular outcomes in patients with
  type 2 diabetes and cardiovascular disease: Insights from tecos.
\newblock {\em American heart journal}, 219:47--57.

\bibitem[Prentice, 1989]{prentice1989surrogate}
Prentice, R.~L. (1989).
\newblock Surrogate endpoints in clinical trials: definition and operational
  criteria.
\newblock {\em Statistics in medicine}, 8(4):431--440.

\bibitem[Reaven, 1995]{reaven1995pathophysiology}
Reaven, G.~M. (1995).
\newblock Pathophysiology of insulin resistance in human disease.
\newblock {\em Physiological reviews}, 75(3):473--486.

\bibitem[Sesia et~al., 2020]{sesia2020multi}
Sesia, M., Katsevich, E., Bates, S., Cand{\`e}s, E., and Sabatti, C. (2020).
\newblock Multi-resolution localization of causal variants across the genome.
\newblock {\em Nature communications}, 11(1):1--10.

\bibitem[Sudarshan et~al., 2021]{sudarshan2021contra}
Sudarshan, M., Puli, A., Subramanian, L., Sankararaman, S., and Ranganath, R.
  (2021).
\newblock Contra: Contrarian statistics for controlled variable selection.
\newblock In {\em International Conference on Artificial Intelligence and
  Statistics}, pages 1900--1908. PMLR.

\bibitem[Sun and Zhang, 2012]{sun2012scaled}
Sun, T. and Zhang, C.-H. (2012).
\newblock Scaled sparse linear regression.
\newblock {\em Biometrika}, 99(4):879--898.

\bibitem[Swerdlow et~al., 2015]{swerdlow2015hmg}
Swerdlow, D.~I., Preiss, D., Kuchenbaecker, K.~B., Holmes, M.~V., Engmann,
  J.~E., Shah, T., Sofat, R., Stender, S., Johnson, P.~C., Scott, R.~A., et~al.
  (2015).
\newblock Hmg-coenzyme a reductase inhibition, type 2 diabetes, and bodyweight:
  evidence from genetic analysis and randomised trials.
\newblock {\em The Lancet}, 385(9965):351--361.

\bibitem[Tansey et~al., 2018]{tansey2018holdout}
Tansey, W., Veitch, V., Zhang, H., Rabadan, R., and Blei, D.~M. (2018).
\newblock The holdout randomization test: Principled and easy black box feature
  selection.
\newblock {\em arXiv preprint arXiv:1811.00645}.

\bibitem[Tian and Feng, 2022]{tian2022transfer}
Tian, Y. and Feng, Y. (2022).
\newblock Transfer learning under high-dimensional generalized linear models.
\newblock {\em Journal of the American Statistical Association}, pages 1--14.

\bibitem[Van De~Geer and B{\"u}hlmann, 2009]{van2009conditions}
Van De~Geer, S.~A. and B{\"u}hlmann, P. (2009).
\newblock On the conditions used to prove oracle results for the lasso.
\newblock {\em Electronic Journal of Statistics}, 3:1360--1392.

\bibitem[VanderWeele, 2013]{vanderweele2013surrogate}
VanderWeele, T.~J. (2013).
\newblock Surrogate measures and consistent surrogates.
\newblock {\em Biometrics}, 69(3):561--565.

\bibitem[Wang and Janson, 2022]{wang2020power}
Wang, W. and Janson, L. (2022).
\newblock A high-dimensional power analysis of the conditional randomization
  test and knockoffs.
\newblock {\em Biometrika}, 109(3):631--645.

\bibitem[Waters et~al., 2013]{waters2013cardiovascular}
Waters, D.~D., Ho, J.~E., Boekholdt, S.~M., DeMicco, D.~A., Kastelein, J.~J.,
  Messig, M., Breazna, A., and Pedersen, T.~R. (2013).
\newblock Cardiovascular event reduction versus new-onset diabetes during
  atorvastatin therapy: effect of baseline risk factors for diabetes.
\newblock {\em Journal of the American College of Cardiology}, 61(2):148--152.

\bibitem[W{\"u}rtz et~al., 2016]{wurtz2016metabolomic}
W{\"u}rtz, P., Wang, Q., Soininen, P., Kangas, A.~J., Fatemifar, G., Tynkkynen,
  T., Tiainen, M., Perola, M., Tillin, T., Hughes, A.~D., et~al. (2016).
\newblock Metabolomic profiling of statin use and genetic inhibition of hmg-coa
  reductase.
\newblock {\em Journal of the American College of Cardiology},
  67(10):1200--1210.

\bibitem[Zhang et~al., 2020]{zhang2020maximum}
Zhang, L., Ding, X., Ma, Y., Muthu, N., Ajmal, I., Moore, J.~H., Herman, D.~S.,
  and Chen, J. (2020).
\newblock A maximum likelihood approach to electronic health record phenotyping
  using positive and unlabeled patients.
\newblock {\em Journal of the American Medical Informatics Association},
  27(1):119--126.

\bibitem[Zhang et~al., 2022]{zhang2022prior}
Zhang, Y., Liu, M., Neykov, M., and Cai, T. (2022).
\newblock Prior adaptive semi-supervised learning with application to ehr
  phenotyping.
\newblock {\em Journal of Machine Learning Research}, 23(83):1--25.

\bibitem[Zhao and Yu, 2006]{zhao2006model}
Zhao, P. and Yu, B. (2006).
\newblock On model selection consistency of lasso.
\newblock {\em Journal of Machine learning research}, 7(Nov):2541--2563.

\end{thebibliography}

\clearpage
\newpage
\setcounter{page}{1}

\appendix
\setcounter{lemma}{0}
\setcounter{equation}{0}
\setcounter{theorem}{0}
\setcounter{figure}{0}
\setcounter{table}{0}
\renewcommand{\thefigure}{A\arabic{figure}}
\renewcommand{\thetable}{A\arabic{table}}
\renewcommand{\theequation}{A\arabic{equation}}
\renewcommand{\thelemma}{A\arabic{lemma}}
\renewcommand{\thetheorem}{A\arabic{theorem}}

\setcounter{definition}{0}
\renewcommand{\thedefinition}{A\arabic{definition}}

\section*{Appendix}

\section{Proofs}
\subsection{Two useful lemmas}
\begin{lemma}
\label{lemma:conditional_TV}
Let $U \in \RR^k$,$V\in \RR^k$ and $W \in \RR^l$ be three random vectors. Then
\begin{equation}
    d_{\operatorname{TV}}\p{(U,W), (V,W)}
    \leq \EE{d_{\operatorname{TV}} \p{U \mid W, V \mid W}}. 
\end{equation}
\end{lemma}
\begin{proof}
Recall that by definition of the total variation distance, we have that
\begin{equation}
\begin{split}
 d_{\operatorname{TV}}\p{(U,W), (V,W)}
 &= \sup_{A \in B(\mathbb{R}^{k + l})}  \abs{\PP{(U,W) \in A} - \PP{(V,W) \in A} }, \textnormal{ and}\\
d_{\operatorname{TV}} \p{U \mid W, V \mid W}
& = \sup_{A \in B(\mathbb{R}^{k})}  \abs{\PP{U \in A \mid W} - \PP{V \in A \mid W} },
\end{split}
\end{equation}
where $B(\mathbb{R}^{k + l})$ and $B(\mathbb{R}^{k })$ are the Borel $\sigma$-algebra on $\mathbb{R}^{k + l}$ and $\mathbb{R}^{k}$ respectively.  
For any fixed $A \in B(\mathbb{R}^{k + l})$, let $A_w = \cb{u \in \mathbb{R}^k: (u,w) \in A}$. 
Then,
\begin{equation}
\begin{split}
\PP{(U,W) \in A} - \PP{(V,W) \in A}
&= \EE{\PP{U \in A_W \mid W} - \PP{V \in A_W \mid W}}\\
& \leq \EE{\sup_{A \in B(\mathbb{R}^{k})}  \abs{\PP{U \in A \mid W} - \PP{V \in A \mid W}}}\\
& = \EE{d_{\operatorname{TV}} \p{U \mid W, V \mid W}}. 
\end{split}
\end{equation}
Similarly, by swapping $U$ and $V$, we can get that
\begin{equation}
    \PP{(V,W) \in A} - \PP{(U,W) \in A} \leq \EE{d_{\operatorname{TV}} \p{U \mid W, V \mid W}}. 
\end{equation}
Combining the two, we get that
\begin{equation}
\abs{\PP{(V,W) \in A} - \PP{(U,W) \in A}} \leq \EE{d_{\operatorname{TV}} \p{U \mid W, V \mid W}},
\end{equation}
and thus
\begin{equation}
d_{\operatorname{TV}}\p{(U,W), (V,W)} \leq \EE{d_{\operatorname{TV}} \p{U \mid W, V \mid W}}.
\end{equation}

\end{proof}

\begin{lemma}
\label{lemma:M_infinite_finite}
Let $M \in \mathbb{N}_+$ be a fixed positive integer.  Let $A \in [0,1]$ be a random variable. Assume that there exists a constant $\delta \geq 0$, such that $\PP{A \leq \alpha} \leq \alpha + \delta$ for any $\alpha \in (0,1)$. Let $B$ be a random variable such that conditional on $A$, we have that $B \sim \operatorname{Binom}(M, A)$. Write $R = \frac{1 + B}{1+ M}$. Then 
\begin{equation}
\PP{R \leq \alpha} \leq \alpha + \delta. 
\end{equation}
\end{lemma}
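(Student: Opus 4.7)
The plan is to reduce everything to a statement about a single monotone function $F$ of $A$, and then combine stochastic dominance with the well-known super-uniformity of the CRT $p$-value when $A$ is uniform.

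First, I would condition on $A$ and observe that the event $\{R \le \alpha\}$ is the event $\{B \le \alpha(M+1) - 1\}$. Setting $k = \lfloor \alpha(M+1) - 1\rfloor$, this reads $\{B \le k\}$. Define $F(a) = \mathbb{P}(\mathrm{Binom}(M,a) \le k)$. Since the Binomial CDF is non-increasing in its success probability, $F:[0,1] \to [0,1]$ is non-increasing. For $\alpha < 1/(M+1)$ we get $k = -1$, so $F \equiv 0$ and the conclusion is trivial; so I assume $\alpha \ge 1/(M+1)$, in which case $F(0)=1$, $F(1)=0$, and $\mathbb{P}(R \le \alpha) = \mathbb{E}[F(A)]$.

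Next, I would compare $\mathbb{E}[F(A)]$ to $\mathbb{E}[F(U)]$ for $U \sim \mathrm{Unif}(0,1)$ using the layer-cake representation. Because $F$ is non-increasing, every superlevel set $\{F > s\}$ is of the form $[0, c_s)$ (or $[0,c_s]$) for some $c_s \in [0,1]$. Hence
\begin{equation*}
\mathbb{E}[F(A)] - \mathbb{E}[F(U)] = \int_0^1 \bigl(\mathbb{P}(A \le c_s) - \mathbb{P}(U \le c_s)\bigr)\, ds \le \int_0^1 \delta \, ds = \delta,
\end{equation*}
where the inequality uses the hypothesis $\mathbb{P}(A \le t) \le t + \delta = \mathbb{P}(U \le t) + \delta$ for every $t \in (0,1)$ (and the boundary cases $t=0,1$ follow by monotonicity/continuity).

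Finally, I would handle the uniform case $\mathbb{E}[F(U)] \le \alpha$ by the standard CRT argument: if $A \sim \mathrm{Unif}(0,1)$ and $V_1,\dots,V_M$ are i.i.d.\ $\mathrm{Unif}(0,1)$ independent of $A$, then $\tilde B := \sum_{i=1}^M \mathbf{1}\{V_i \le A\}$ has the same conditional law as $B\mid A$, and $1 + \tilde B$ is the rank of $A$ among the $M+1$ exchangeable variables $A, V_1,\dots,V_M$. By exchangeability this rank is uniform on $\{1,\dots,M+1\}$, so $R = (1+\tilde B)/(M+1)$ is uniform on $\{1/(M+1),\dots,1\}$, which is stochastically larger than $\mathrm{Unif}(0,1)$; in particular $\mathbb{E}[F(U)] = \mathbb{P}(R \le \alpha) \le \alpha$. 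Chaining the two bounds gives $\mathbb{P}(R \le \alpha) \le \alpha + \delta$. The only mildly delicate step is the layer-cake comparison — one has to be careful about strict vs.\ non-strict inequalities and about atoms of $A$ — but this is routine once $F$ has been identified as monotone.
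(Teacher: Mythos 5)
Your proof is correct, and it shares its skeleton with the paper's argument --- both reduce the claim to bounding $\mathbb{E}[F(A)]$ for the non-increasing function $F(a)=\mathbb{P}(\operatorname{Binom}(M,a)\le \lfloor\alpha(M+1)\rfloor-1)$ and both split the bound into a uniform baseline contributing $\alpha$ plus a perturbation contributing $\delta$ --- but the two halves are executed differently. For the perturbation, the paper integrates by parts against $1-F_A(a)$ and uses that $\frac{d}{da}F(a)\le 0$ together with $F_A(a)\le a+\delta$; your layer-cake comparison of $\mathbb{E}[F(A)]$ with $\mathbb{E}[F(U)]$ is the same monotonicity argument in a slightly different guise, and the delicate points you flag (atoms of $A$, open versus closed superlevel sets) are indeed harmless here because $F$ is a continuous polynomial in $a$ and $U$ is atomless. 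The genuine divergence is in the uniform baseline: the paper evaluates it by the explicit beta integral $\int_0^1 a^k(1-a)^{M-k}\,da = k!(M-k)!/(M+1)!$, summing to $\lfloor\alpha(M+1)\rfloor/(M+1)\le\alpha$, whereas you obtain the same quantity by realizing $1+\tilde B$ as the rank of $A$ among $M+1$ i.i.d.\ uniforms and invoking exchangeability. Your route is the more conceptual one (it is the classical validity argument for randomization $p$-values and explains \emph{why} the baseline is $\alpha$), while the paper's computation is self-contained and avoids introducing the auxiliary variables $V_1,\dots,V_M$; either is a complete proof.
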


\begin{proof}
We start by noting the $R \leq \alpha$ is equivalent to $B \leq \lfloor \alpha(M+1) \rfloor - 1$. Thus, 
\[\PP{R \leq \alpha \mid A} = \sum_{k = 0}^{\lfloor \alpha(M+1) \rfloor - 1} \binom{M}{k} A^{k} (1-A)^{M-k}.  \]
Therefore
\[
\begin{split}
\PP{R \leq \alpha }
&= \EE{\PP{R \leq \alpha \mid A}}
= \EE{ \sum_{k = 0}^{\lfloor \alpha(M+1) \rfloor - 1} \binom{M}{k} A^{k} (1-A)^{M-k}}\\
&= \int_{0}^1  \frac{d}{da}\p{ \sum_{k = 0}^{\lfloor \alpha(M+1) \rfloor - 1} \binom{M}{k}  a^{k} (1-a)^{M-k}} (1 - F_A(a)) da,
\end{split}
\]
where $F_A(\cdot)$ is the cumulative distribution function of $A$. Note that derivative is always non-positive, because for any $0< a_1 < a_2 < 1$, 
\[
\PP{\operatorname{Binom}(M, a_1) \leq \lfloor \alpha(M+1) \rfloor - 1} \geq 
\PP{\operatorname{Binom}(M, a_2) \leq \lfloor \alpha(M+1) \rfloor - 1}. 
 \]
Note also that we have $F_A(a) \leq a + \delta$. 
Thus 
\[
\begin{split}
\PP{R \leq \alpha }
&\leq \int_{0}^1  \frac{d}{da}\p{ \sum_{k = 0}^{\lfloor \alpha(M+1) \rfloor - 1} \binom{M}{k}  a^{k} (1-a)^{M-k}} (1 - a - \delta) da\\
&\leq \int_{0}^1  \frac{d}{da}\p{ \sum_{k = 0}^{\lfloor \alpha(M+1) \rfloor - 1} \binom{M}{k}  a^{k} (1-a)^{M-k}} (1 - a) da\\
& \qquad \qquad \qquad \qquad  + \int_{0}^1  \frac{d}{da}\p{ \sum_{k = 0}^{\lfloor \alpha(M+1) \rfloor - 1} \binom{M}{k}  a^{k} (1-a)^{M-k}} \delta da\\
& = \int_{0}^1   \sum_{k = 0}^{\lfloor \alpha(M+1) \rfloor - 1} \binom{M}{k}  a^{k} (1-a)^{M-k} da +  \delta \sqb{\sum_{k = 0}^{\lfloor \alpha(M+1) \rfloor - 1} \binom{M}{k}  a^{k} (1-a)^{M-k}}_{a \to 0}^{a \to 1}. 
\end{split}
\]

For the first term, note that 
\[
\int_{0}^1 a^k (1 - a)^{M-k} = \frac{k! (M-k)!}{(M+1)!}.
\]
Thus 
\[\int_{0}^1   \sum_{k = 0}^{\lfloor \alpha(M+1) \rfloor - 1} \binom{M}{k}  a^{k} (1-a)^{M-k} da
=  \sum_{k = 0}^{\lfloor \alpha(M+1) \rfloor - 1}  \frac{1}{M+1}
= \frac{\lfloor \alpha(M+1) \rfloor}{M+1} \leq \alpha. 
\]

For the second term, we can easily verify that $\lim_{a \to 1} \sum_{k = 0}^{\lfloor \alpha(M+1) \rfloor - 1} \binom{M}{k}  a^{k} (1-a)^{M-k} = 1$ and $\lim_{a \to 0} \sum_{k = 0}^{\lfloor \alpha(M+1) \rfloor - 1} \binom{M}{k}  a^{k} (1-a)^{M-k} = 0$. Thus,
\[ \delta \sqb{\sum_{k = 0}^{\lfloor \alpha(M+1) \rfloor - 1} \binom{M}{k}  a^{k} (1-a)^{M-k}}_{a \to 0}^{a \to 1} = \delta. \]

Combining the two terms, we get 
\[ \PP{R \leq \alpha } \leq \alpha + \delta. \]
\end{proof}

\subsection{Proof of Theorem \ref{theo:exact_inference}}
We start with (i). If $\x,  \x^{1}, \dots, \x^{(M)}$ are exchangeable conditioning on $\Z$, then $\x,  \x^{1}, \dots, \x^{(M)}$ are also exchangeable conditioning on $\Z$ and $\y$. Therefore, the statistics $T(\y, \x, g(\Z), h(\Z))$, $\dots$, $T(\y,\x^{(M)},g(\Z), h(\Z))$ will be exchangeable. As a consequence, $\sum_{m=1}^M \mathbf{1}\Big\{T(\y,\x\supm,g(\Z), h(\Z))\geq T(\y,\x, g(\Z), h(\Z) )\Big\}$ is stochastically no smaller than $\operatorname{Unif}\cb{0, \dots, M}$, and thus the Maxway CRT $p$-value defined in \eqref{eqn:maxway_CRT} is stochastically no smaller than $\operatorname{Unif}[0,1]$.

\sloppy{For (ii), we start by noting that $\Z \indp \y \mid g(\Z)$ together with $\x \indp \y \mid \Z$ implies that $\x \indp \y \mid (g(\Z), h(\Z))$. Thus $\x,  \x^{1}, \dots, \x^{(M)}$ are exchangeable conditioning on $g(\Z)$, $h(\Z)$ and $\y$. Therefore, similar to what we have seen in (i), we have that the statistic $T(\y, \x, g(\Z), h(\Z))$, $\dots$, $T(\y,\x^{(M)},g(\Z), h(\Z))$ will be exchangeable. Thus $\sum_{m=1}^M \mathbf{1}\Big\{T(\y,\x\supm,g(\Z), h(\Z))\geq T(\y,\x, g(\Z), h(\Z) )\Big\}$ is stochastically no smaller than $\operatorname{Unif}\cb{0, \dots, M}$, and hence the Maxway CRT $p$-value defined in \eqref{eqn:maxway_CRT} is stochastically no smaller than $\operatorname{Unif}[0,1]$. }

{\darkred
\subsection{Proof of Proposition \ref{theo:main0}}

We follow the proof idea of Theorem 4 in \citet{berrett2018conditional}. Let $\tilde{\x}$ be an additional copy drawn also from $\rho^{n}( \cdot \mid g(\Z), h(\Z))$ independently of $Y$ and of $\x, \x^{(1)}, \dots, \x^{(M)}$. Then, since conditional on $g(\Z), h(\Z)$, the copies $\tilde{\x}, \x^{(1)}, \dots, \x^{(M)}$ are independent and are independent of $\x$ and $\y$, we have 
\begin{equation}
\begin{split}
&d_{\operatorname{TV}}\Big(\big( (\x, \y, \x^{(1)}, \dots, \x^{(M)}) \mid g(\Z), h(\Z) \big), \big( (\tilde{\x}, \y, \x^{(1)}, \dots, \x^{(M)}) \mid g(\Z), h(\Z) \big)  \Big)\\
& \qquad = d_{\operatorname{TV}}\Big(\big( (\x, \y) \mid g(\Z), h(\Z) \big), \big( (\tilde{\x}, \y) \mid g(\Z), h(\Z) \big)  \Big)\\
& \qquad = d_{\operatorname{TV}} \Big(p( \cdot \mid g(\Z), h(\Z)), q( \cdot \mid g(\Z), h(\Z)) \Big).
\end{split}
\end{equation}

Now let $A_{\alpha}$ be defined as
\begin{equation}
A_{\alpha}:=\left\{\left(\mathbf{x}, \mathbf{y}, \mathbf{x}^{(1)}, \ldots, \mathbf{x}^{(M)}\right): \frac{1+\sum_{m=1}^{M} \mathbf{1}\left\{T(\mathbf{y}, \mathbf{x}^{(m)}, g(\Z), h(\Z)) \geq T(\mathbf{y}, \mathbf{x}, g(\Z), h(\Z))\right\}}{1+M} \leq \alpha\right\}, 
\end{equation}
i.e., the set where we would obtain a $p$-value $p_{\maxway} \leq \alpha$. Then
\begin{equation}
\begin{split}
&\PP{p_{\maxway} \leq \alpha \mid g(\Z), h(\Z)}
 = \PP{(\x, \y, \x^{(1)}, \dots, \x^{(M)}) \in A_{\alpha} \mid g(\Z), h(\Z)}\\
&\leq \PP{(\tilde{\x}, \y, \x^{(1)}, \dots, \x^{(M)}) \in A_{\alpha} \mid g(\Z), h(\Z)}\\
&\qquad  + d_{\operatorname{TV}}\Big(\big( (\x, \y, \x^{(1)}, \dots, \x^{(M)}) \mid g(\Z), h(\Z) \big), \big( (\tilde{\x}, \y, \x^{(1)}, \dots, \x^{(M)}) \mid g(\Z), h(\Z) \big)  \Big)\\
&= \PP{(\tilde{\x}, \y, \x^{(1)}, \dots, \x^{(M)}) \in A_{\alpha} \mid g(\Z), h(\Z)}\\
&\qquad \qquad \qquad \qquad + d_{\operatorname{TV}} \Big(p( \cdot \mid g(\Z), h(\Z)), q( \cdot \mid g(\Z), h(\Z)) \Big). 
\end{split}
\end{equation}
Finally, since $\tilde{\x}, \x^{(1)}, \ldots, \x^{(M)}$ are clearly i.i.d. after conditioning on $\y, g(\Z), h(\Z)$, and are therefore exchangeable, by definition of $A_{\alpha}$ we must have
\begin{equation}
\PP{(\tilde{\x}, \y, \x^{(1)}, \dots, \x^{(M)}) \in A_{\alpha} \mid g(\Z), h(\Z)} \leq \alpha,
\end{equation}
proving the desired bound.

\subsection{Proof of Theorem \ref{theo:almost_double_robust}}

We will show that the bound in \eqref{eqn:mayway_bound0} is no larger than $2 \Delta_x \Delta_y + \Delta_{x|g,h}$. We start with upper bounding the bound in \eqref{eqn:mayway_bound0}. Let $q^{\star}( \cdot \mid g(\Z), h(\Z))
= \rho^{\star n}( \cdot \mid g(\Z), h(\Z)) f_{\y \mid g(\Z), h(\Z)} ( \cdot \mid g(\Z), h(\Z)) = f_{\x \mid g(\Z), h(\Z)}( \cdot \mid g(\Z), h(\Z)) f_{\y \mid g(\Z), h(\Z)} ( \cdot \mid g(\Z), h(\Z))$. In words, $q^{\star}$ is the distribution of sampling $\x$ and $\y$ independently from their distribution conditional on $g(\Z)$ and $h(\Z)$. Then we have that
\begin{equation}
\label{equation:tv_bound1}
\begin{split}
&d_{\operatorname{TV}} \Big(p( \cdot \mid g(\Z), h(\Z)), q( \cdot \mid g(\Z), h(\Z)) \Big)\\
& \qquad \qquad \leq d_{\operatorname{TV}} \Big(p( \cdot \mid g(\Z), h(\Z)), q^{\star}( \cdot \mid g(\Z), h(\Z)) \Big) \\
& \qquad \qquad \qquad \qquad \qquad \qquad+ d_{\operatorname{TV}} \Big(q^{\star}( \cdot \mid g(\Z), h(\Z)), q( \cdot \mid g(\Z), h(\Z)) \Big)\\
& \qquad \qquad = d_{\operatorname{TV}} \Big(p( \cdot \mid g(\Z), h(\Z)), q^{\star}( \cdot \mid g(\Z), h(\Z)) \Big) \\
& \qquad \qquad \qquad \qquad \qquad \qquad + d_{\operatorname{TV}} \Big(\rho^{\star  n}( \cdot \mid g(\Z), h(\Z)), \rho^{n}( \cdot \mid g(\Z), h(\Z)) \Big)\\
& \qquad \qquad = d_{\operatorname{TV}}\Big(p( \cdot \mid g(\Z), h(\Z)), q^{\star}( \cdot \mid g(\Z), h(\Z)) \Big) + \Delta_{x|g,h}. 
\end{split}
\end{equation}

We will then focus on studying $d_{\operatorname{TV}} \Big(p( \cdot \mid g(\Z), h(\Z)), q^{\star}( \cdot \mid g(\Z), h(\Z)) \Big)$. To this end, we switch notation and define $\x, \y, \check{\x}$ and $\check{\y}$ to be four random vectors sampled independently conditioning on $\Z$ from the following distributions respectively: 
\begin{equation}
\begin{split}
    &\x \sim f_{\x \mid \Z}(\cdot \mid \Z), \quad 
    \check{\x} \sim f_{\x \mid g(\Z), h(\Z)}(\cdot \mid  g(\Z), h(\Z)), \\
    &\y \sim f_{\y \mid \Z}(\cdot \mid \Z), \quad 
    \check{\y} \sim f_{\y \mid g(\Z), h(\Z)}(\cdot \mid  g(\Z), h(\Z)). 
\end{split}
\end{equation}
Then we can write the total variation distance as
\begin{equation}
\label{equation:tv_bound2}
    \begin{split}
 & d_{\operatorname{TV}} \Big(p( \cdot \mid g(\Z), h(\Z)), q^{\star}( \cdot \mid g(\Z), h(\Z)) \Big)\\
 & \qquad \qquad = \sup_{A \in B(\mathbb{R}^{2n})} \abs{\PP{(\x,\y) \in A \mid g(\Z), h(\Z)} - \PP{(\check{\x},\check{\y}) \in A \mid g(\Z), h(\Z)}},
    \end{split}
\end{equation}
where $B(\mathbb{R}^{2n})$ is the Borel $\sigma$-algebra on $\mathbb{R}^{2n}$. We will then study properties of $\PP{(\check{\x},\check{\y}) \in A \mid g(\Z), h(\Z)}$. To this end, 
for any fixed $\mathbf{x}_0 \in \mathbb{R}^n$, let $q_A(\mathbf{x}_0, g(\Z), h(\Z)) = \PP{(\mathbf{x}_0, \y) \in A \mid g(\Z), h(\Z)}$, and let 
$\check{q}_A(\mathbf{x}_0, g(\Z), h(\Z)) = \PP{(\mathbf{x}_0, \check{\y}) \in A \mid g(\Z), h(\Z)}$. 
Since $\check{\y} \sim f_{\y \mid g(\Z), h(\Z)}(\cdot \mid  g(\Z), h(\Z))$, we have that $q_A(\mathbf{x}_0, g(\Z), h(\Z)) = \check{q}_A(\mathbf{x}_0, g(\Z), h(\Z))$. This then implies that
\begin{equation}
\begin{split}
\PP{(\check{\x},\check{\y}) \in A \mid g(\Z), h(\Z)} 
&= \EE{\PP{(\check{\x},\check{\y}) \in A \mid g(\Z), h(\Z), \check{\x}} \mid g(\Z), h(\Z)}\\
&  = \EE{\check{q}_A(\check{\x}, g(\Z), h(\Z)) \mid g(\Z), h(\Z) } \qquad (\star)\\
& = \EE{q_A(\check{\x}, g(\Z), h(\Z)) \mid g(\Z), h(\Z)}\\
& = \EE{\PP{(\check{\x},\y) \in A \mid g(\Z), h(\Z), \check{\x}} \mid g(\Z), h(\Z)} \qquad (\ast)\\
&= \PP{(\check{\x},\y) \in A \mid g(\Z), h(\Z)}. 
\end{split}
\end{equation}
Here line $(\star)$ follows from $\check{\x} \indp \check{\y} \mid g(\Z), h(\Z)$ and line $(\ast)$ follows from $\check{\x} \indp \y \mid g(\Z), h(\Z)$. We can also conduct the same analysis with $\x$ and $\y$ flipped and get
\begin{equation}
\PP{(\check{\x},\check{\y}) \in A \mid g(\Z), h(\Z)} = 
\PP{(\x,\check{\y}) \in A \mid g(\Z), h(\Z)} = 
\PP{(\check{\x},\y) \in A \mid g(\Z), h(\Z)}. 
\end{equation}

Therefore,
\begin{equation}
\label{eqn:decom0}
\begin{split}
   & \PP{(\x,\y) \in A \mid g(\Z), h(\Z)} - \PP{(\check{\x},\check{\y}) \in A \mid g(\Z), h(\Z)}\\
   & \qquad \qquad = \PP{(\x,\y) \in A \mid g(\Z), h(\Z)} - \PP{(\x,\check{\y}) \in A \mid g(\Z), h(\Z)}\\
   & \qquad \qquad \qquad \qquad + \PP{(\check{\x},\y) \in A \mid g(\Z), h(\Z)} - \PP{(\check{\x},\check{\y}) \in A \mid g(\Z), h(\Z)}. 
   \end{split}
\end{equation}
For any fixed $\mathbf{x}_0 \in \mathbb{R}^n$, let $\pi_A(\mathbf{x}_0, \Z) = \PP{(\mathbf{x}_0, \y) \in A \mid \Z}$ and let $\check{\pi}_A(\mathbf{x}_0, \Z) = \PP{(\mathbf{x}_0, \check{\y}) \in A \mid \Z}$. 
Then we can rewrite \eqref{eqn:decom0} as
\begin{equation}
\begin{split}
& \PP{(\x,\y) \in A \mid g(\Z), h(\Z)} - \PP{(\check{\x},\check{\y}) \in A \mid g(\Z), h(\Z)}\\
& \qquad = \PP{(\x,\y) \in A \mid g(\Z), h(\Z)} - \PP{(\x,\check{\y}) \in A \mid g(\Z), h(\Z)}\\
& \qquad \qquad + \PP{(\check{\x},\y) \in A \mid g(\Z), h(\Z)} - \PP{(\check{\x},\check{\y}) \in A \mid g(\Z), h(\Z)}\\
& \qquad = \EE{\PP{(\x,\y) \in A\mid \x, \Z}\mid g(\Z), h(\Z)}  - \EE{\PP{(\x,\check{\y}) \in A\mid \x, \Z}\mid g(\Z), h(\Z)}\\
&  \qquad \qquad + \EE{\PP{(\check{\x},\y) \in A\mid \check{\x},  \Z}\mid g(\Z), h(\Z)}  - \EE{\PP{(\check{\x},\check{\y}) \in A\mid \check{\x}, \Z}\mid g(\Z), h(\Z)}\\
& \qquad = \EE{\pi_A(\x, \Z)\mid g(\Z), h(\Z)} - \EE{\check{\pi}_A(\x, \Z) \mid g(\Z), h(\Z)}\\
&  \qquad \qquad+ \EE{\pi_A(\check{\x}, \Z) \mid g(\Z), h(\Z)} - \EE{\check{\pi}_A(\check{\x}, \Z)\mid g(\Z), h(\Z)},
   \end{split}
\end{equation}
where the last line follows from the fact that $\x, \y, \check{\x}, \check{\y}$ are independent conditioning on $\Z$. Simplifying the above equations, we get that
\begin{equation}
\label{eqn:diff_diff_theta}
\begin{split}
&\PP{(\x,\y) \in A \mid g(\Z), h(\Z)} - \PP{(\check{\x},\check{\y}) \in A \mid g(\Z), h(\Z)}\\
& \qquad = \EE{\pi_A(\x, \Z) - \check{\pi}_A(\x, \Z) \mid g(\Z), h(\Z)}
- \EE{\pi_A(\check{\x}, \Z) - \check{\pi}_A(\check{\x}, \Z) \mid g(\Z), h(\Z)}. 
\end{split}
\end{equation}
We note that the difference between $\pi_A$ and $\check{\pi}_A$ can be bounded by the total variation distance:
\begin{equation}
\begin{split}
    \abs{\pi_A(\mathbf{x}_0, \Z) - \check{\pi}_A(\mathbf{x}_0, \Z)}
    &\leq d_{\operatorname{TV}}(\y \mid \Z, \check{\y} \mid \Z)\\
    &= d_{\operatorname{TV}}\p{f_{\y \mid \Z} ( \cdot \mid \Z), f_{\y \mid g(\Z), h(\Z)} ( \cdot \mid g(\Z), h(\Z))}
    = \Delta_y. 
\end{split}
\end{equation}
Therefore, \eqref{eqn:diff_diff_theta} is essentially the difference of expectations of a bounded function under different distributions. 
Hence, by properties of the total variations distance, we have that
\begin{equation}
\label{equation:tv_bound3}
\begin{split}
    &\abs{\PP{(\x,\y) \in A \mid g(\Z), h(\Z)} - \PP{(\check{\x},\check{\y}) \in A \mid g(\Z), h(\Z)}}\\
    & \qquad \qquad\qquad \qquad\qquad \qquad \leq 2 \Delta_y d_{\operatorname{TV}}(\x \mid \Z, \check{\x} \mid \Z)
    = 2 \Delta_y \Delta_x. 
\end{split}
\end{equation}

Finally, combining \eqref{equation:tv_bound1}, \eqref{equation:tv_bound2} and \eqref{equation:tv_bound3}, we get that 
\begin{equation}
    d_{\operatorname{TV}} \Big(p( \cdot \mid g(\Z), h(\Z)), q( \cdot \mid g(\Z), h(\Z)) \Big)
    \leq 2 \Delta_x\Delta_y + \Delta_{x|g,h}.
\end{equation}
Together with Proposition \ref{theo:main0}, the above implies that 
\begin{equation}
\PP{p_{\operatorname{maxway}}(\bD) \leq \alpha \mid g(\Z), h(\Z)} 
 \leq \alpha + 2\Delta_x\Delta_y + \Delta_{x|g,h}.
\end{equation}

\subsection{Proof of Proposition \ref{theo:lower_bound}}
\label{appendix:lower_bound_proof}
Following the proof of Theorem 5 in \citet{berrett2018conditional}, we can show that 
there exists a statistic $T$ such that 
\begin{equation}
\label{eqn:lower_eqn1}
\begin{split}
&\sup _{\alpha \in[0,1]}\p{ \PP{p_{\operatorname{maxway}}(\bD) \leq \alpha \mid \y, g(\Z), h(\Z)} -\alpha} \\
& \qquad  \geq d_{\operatorname{TV}} \Big(f_{\x \mid \y, g(\Z), h(\Z)}( \cdot \mid \y, g(\Z), h(\Z)), \rho^n( \cdot \mid g(\Z), h(\Z)) \Big)  -0.5(1+o(1)) \sqrt{\frac{\log (M)}{M}}. 
\end{split}
\end{equation}
The proof steps are exactly the same, except that we replace $\Z$ by $g(\Z), h(\Z)$. For simplicity, we omit the details here. 

In particular, in the proof, we construct a test statistic that achieves the above lower bound. This test statistic is constructed in the following way. By properties of the total variation distance, we have that there exists a set $A(\y,g(\Z), h(\Z))$ such that 
\begin{equation}
\begin{split}
&   \PP{\x \in A(\y,g(\Z), h(\Z)) \mid \y, g(\Z), h(\Z)} 
 = \PP{\tilde{\x} \in A(\y,g(\Z), h(\Z)) \mid \y,g(\Z), h(\Z)} +\\
& \qquad\qquad\qquad\qquad\qquad  d_{\operatorname{TV}} \Big(f_{\x \mid \y, g(\Z), h(\Z)}( \cdot \mid \y, g(\Z), h(\Z)), \rho^n( \cdot \mid g(\Z), h(\Z)) \Big).
\end{split}
\end{equation}
where $\tilde{\x} \sim \rho^n( \cdot \mid g(\Z), h(\Z))$ independently of $\y$. The test statistic $T$ is constructed as 
\begin{equation}
    T(\x, \y, g(\Z), h(\Z)) = \mathbbm{1} \cb{\x \in A(\y,g(\Z), h(\Z))}. 
\end{equation}

Lemma \ref{lemma:conditional_TV} further implies that
\begin{equation}
\label{eqn:lower_eqn2}
\begin{split}
& \EE{d_{\operatorname{TV}} \Big(f_{\x \mid \y, g(\Z), h(\Z)}( \cdot \mid \y, g(\Z), h(\Z)), \rho^n( \cdot \mid g(\Z), h(\Z)) \Big) \mid g(\Z), h(\Z)} \\
& \qquad \geq 
d_{\operatorname{TV}} \Big(f_{\x, \y \mid  g(\Z), h(\Z)}( \cdot \mid g(\Z), h(\Z)), \rho^n( \cdot \mid g(\Z), h(\Z)) f_{\y \mid g(\Z), h(\Z)} ( \cdot \mid g(\Z), h(\Z)) \Big) \\
&\qquad = d_{\operatorname{TV}} \Big(p( \cdot \mid g(\Z), h(\Z)), q( \cdot \mid g(\Z), h(\Z)) \Big). 
\end{split}
\end{equation}
The desired results then follows from combining \eqref{eqn:lower_eqn1} and \eqref{eqn:lower_eqn2}. 
}

\subsection{Proof of Theorem \ref{theo:inner-product_stats}}
 Before proving the results, we introduce some notations. We use the bold letter $\boeta$ and $\bep$ to denote the vector $\boeta = (\eta_1, \dots, \eta_n)$ and $\bep = (\varepsilon_1, \dots, \varepsilon_n)$. 

We start with the case where $M = \infty$. In this case, the Maxway CRT $\pval$ can be written in the following form:
\begin{equation}
\breve{p}_{\operatorname{maxway}}(\bD) = \PP[\breve \x \sim \rho^n( \cdot \mid g(\Z) , h(\Z))]{T(\y,\breve \x,g(\Z), h(\Z))\geq T(\y,\x, g(\Z), h(\Z) ) \mid \x,\y,\Z }.
\end{equation}
This is in turn equivalent to the following form:
\begin{equation}
\begin{split}
\breve{p}_{\operatorname{maxway}}(\bD) = 
\PP{\abs{\breve \x \trans \y}  \geq \abs{\x \trans \y}  \mid \x,\y,\Z }
\end{split}
\end{equation}
where $\breve \x \sim \mathcal{N}(\mu_x(\Z), 1)$. Conditioning on $\x,\y, \Z$, the term $\y \trans \breve \x $ follows a gaussian distribution with mean $\mu_x(\Z) \trans \y $ and variance $ \norm{\y}^2 $. Therefore, 
\begin{equation}
\label{eqn:inner_product_key1}
\begin{split}
\breve{p}_{\operatorname{maxway}}(\bD)  =  \PP{ \abs{\breve \x \trans \y} \geq \abs{\x \trans \y}  \mid \x,\y,\Z }  = \Phi\p{\frac{-\mu_x(\Z) \trans \y - \abs{\x \trans \y}}{\norm{\y}}} + \Phi\p{\frac{\mu_x(\Z) \trans \y - \abs{\x \trans \y}}{\norm{\y}}},
\end{split}
\end{equation}
where $\Phi$ is the cumulative distribution function of a standard gaussian distribution. 

We will then focus on the term $(\x - \mu_x(\Z)) \trans \y/\norm{\y}$, and show that this term has a distribution very close to a standard gaussian. The term $(\x - \mu_x(\Z)) \trans \y$ can be decomposed into a few terms.
 \begin{equation}
\begin{split}
\p{\x -\mu_x(\Z)}\trans \y 
 &=\p{\x - \smux(\Z) + \smux(\Z) - \tmux(\Z) + \tmux(\Z) - \mu_x(\Z)}\trans \y\\
& = \bvare \trans \y + \p{\tmux(\Z) - \mu_x(\Z)} \trans \y +  \p{\smux(\Z) - \tmux(\Z)} \trans \y\\
& = \bvare \trans \y + \p{\tmux(\Z) - \mu_x(\Z)} \trans \y +  \p{\smux(\Z) - \tmux(\Z)} \trans \boeta +\\
& \qquad \qquad  \p{\smux(\Z) - \tmux(\Z)} \trans \tmuy(\Z) + \p{\smux(\Z) - \tmux(\Z)} \trans \p{\smuy(\Z) - \tmuy(\Z)},
\end{split}
\end{equation}
where recall that $\tmuy(Z) = \EE{\smuy(Z) \mid g(Z), h(Z)}$ and $\tmux(Z) = \EE{\smux(Z) \mid g(Z), h(Z)}$. 

The first term $\bvare \trans \y /\norm{\y}$ is distributed as a standard gaussian.  Moreover, this term is distributed as a standard gaussian conditional on $\y$ and $\Z$.  We will then move on to establish that the rest of the terms are small. The absolute value of the second term $\p{\tmux(\Z) - \mu_x(\Z)} \trans \y / \norm{\y}$ can be bounded using the Cauchy-Schwarz inequality: $\abs{\p{\tmux(\Z) - \mu_x(\Z)} \trans \y }/ \norm{\y} \leq \norm{\tmux(\Z) - \mu_x(\Z)}$. Therefore, $\EE{\abs{\p{\tmux(\Z) - \mu_x(\Z)} \trans \y }/ \norm{\y}} \leq \sqrt{\EE{ \norm{\tmux(\Z) - \mu_x(\Z)}^2}} = \sqrt{n} \sqrt{\EE{\p{\tmux(Z) - \mu_x(Z)}^2}}$. 

 \sloppy{For the third term $\p{\smux(\Z) -\tmux(\Z)}\trans \boeta$, note that $\EE{\eta \mid Z} = 0$; thus
$\EE{\p{\p{\smux(\Z) -\tmux(\Z)}\trans \boeta}^2} = n \EE{(\smux(Z) - \tmux(Z))^2 \eta^2} = n \EE{(\smux(Z) - \tmux(Z))^2} \EE{\eta^2}$, since the cross terms vanish.} Thus, 
\begin{equation}
\begin{split}
\EE{\abs{\p{\smux(\Z) -\tmux(\Z)}\trans \boeta}/\norm{y}} &\leq \sqrt{\EE{\p{\p{\smux(\Z) -\tmux(\Z)}\trans \boeta}^2} \EE{1/\norm{\y}^2}}\\
& = \sqrt{n} \sqrt{\EE{(\smux(Z) - \tmux(Z))^2}} \sqrt{\EE{\eta^2}} \sqrt{ \EE{1/\norm{\y}^2}}\\
& \leq C \sqrt{\EE{(\smux(Z) - \tmux(Z))^2}},
\end{split}
\end{equation}
for some constant $C$. The last inequality follows from Lemma \ref{lemma:y_bar_inverse}. 

\sloppy{The fourth term $\p{\smux(\Z) - \tmux(\Z)} \trans \tmuy(\Z)$ can be bounded similarly as the third term. Specifically, since $\tmux(Z) = \EE{X \mid g(Z), h(Z)} = \EE{\smux(Z) \mid g(Z), h(Z)}$, we have $\EE{\smux(Z) - \tmux(Z) \mid g(Z), h(Z)} = 0$. We also note that $\tmuy$ is measurable with respect to $h$ and $g$; thus $\EE{\p{\smux(Z) - \tmux(Z)} \tmuy(Z)} = 0$. Therefore, $\EE{\p{\p{\smux(\Z) - \tmux(\Z)} \trans \tmuy(\Z)}^2} = n \EE{\p{\smux(Z) - \tmux(Z)}^2 \tmuy(Z)^2}$.} Hence, 
\begin{equation}
\begin{split}
\EE{\abs{\p{\smux(\Z) - \tmux(\Z)} \trans \tmuy(\Z)}/\norm{\y}} 
&\leq  \sqrt{n} \sqrt{\EE{\p{\smux(Z) - \tmux(Z)}^2 \tmuy(Z)^2}} \sqrt{ \EE{1/\norm{\y}^2}}\\
& \leq C\sqrt{\EE{\p{\smux(Z) - \tmux(Z)}^2 \tmuy(Z)^2}}, 
\end{split}
\end{equation}
for some constant $C$. Again, the last inequality follows from Lemma \ref{lemma:y_bar_inverse}. 

Finally, again by Lemma \ref{lemma:y_bar_inverse}, the fifth term can be bounded by
\begin{equation}
\begin{split}
&\EE{\abs{\p{\smux(\Z) - \tmux(\Z)} \trans \p{\smuy(\Z) - \tmuy(\Z)}}/\norm{\y}} \\
&\qquad = \EE{\EE{\abs{\p{\smux(\Z) - \tmux(\Z)} \trans \p{\smuy(\Z) - \tmuy(\Z)}}/\norm{\y} \mid \Z}} \\
&\qquad \leq \frac{C}{\sqrt{n}}\EE{\abs{\p{\smux(\Z) - \tmux(\Z)} \trans \p{\smuy(\Z) - \tmuy(\Z)}}} \\
& \qquad \leq \frac{C}{\sqrt{n}}
 \sqrt{\EE{\norm{\smux(\Z) - \tmux(\Z)}^2}} \sqrt{\EE{\big\|\smuy(\Z) - \tmuy(\Z) \big\|^2}} \\
& \qquad \leq C\sqrt{n} \sqrt{\EE{\norm{\smux(Z) - \tmux(Z)}^2}} \sqrt{\EE{\big\|\smuy(Z) - \tmuy(Z) \big\|^2}},
\end{split}
\end{equation}
for some constant $C$. 

\sloppy{The above analysis implies that the term $(\x - \mu_x(\Z)) \trans \y/\norm{\y}$ can be written as $(\x - \mu_x(\Z)) \trans \y/\norm{\y} = W + U$, where $W\mid \y, \Z \sim \mathcal{N}(0,1) $ and 
\begin{equation}
\label{eqn:inner_product_key2}
\EE{\abs{U}} \leq C\p{\sqrt{n} \Delta_{\rho, \operatorname{mean}} + \Delta_{x, \operatorname{mean}} +  \tilde{\Delta}_{x, \operatorname{mean}} + \sqrt{n}\Delta_{x, \operatorname{mean}} \Delta_{y, \operatorname{mean}}},
\end{equation}
for some constant $C$, where 
\begin{equation}
    \begin{split}
         &\Delta_{x, \operatorname{mean}} = \sqrt{\EE{(\smux(Z) - \tmux(Z))^2}} , \qquad 
        \tilde{\Delta}_{x, \operatorname{mean}} = \sqrt{\EE{((\smux(Z) - \tmux(Z)\tmuy(Z))^2}}\\
        &\Delta_{y, \operatorname{mean}} = \sqrt{\EE{(\smuy(Z) - \tmuy(Z))^2}}, \qquad 
        \Delta_{\rho, \operatorname{mean}} = \sqrt{\EE{(\tmux(Z) - \mu_x(Z))^2}}. 
    \end{split}
\end{equation}
Thus, $\x \trans \y / \norm{\y}$ can be expressed as $\x \trans \y / \norm{\y} = \mu_x(\Z) \trans \y/\norm{\y} + W + U$, where $W\mid \y, \Z \sim \mathcal{N}(0,1) $ and $\EE{\abs{U}}$ is small.} Plugging the above into \eqref{eqn:inner_product_key1}, we get 
\begin{equation}
\begin{split}
&\breve{p}_{\operatorname{maxway}}(\bD)  \\
&\qquad = \Phi\p{\frac{-\mu_x(\Z) \trans \y}{\norm{\y}} - \frac{\abs{\x \trans \y}}{\norm{\y}}} + \Phi\p{\frac{\mu_x(\Z) \trans \y}{\norm{\y}} - \frac{\abs{\x \trans \y}}{\norm{\y}}}\\
&\qquad = \Phi\p{\frac{-\mu_x(\Z) \trans \y}{\norm{\y}} -\abs{\frac{\mu_x(\Z) \trans \y}{\norm{\y}} + W + U}} + \Phi\p{\frac{\mu_x(\Z) \trans \y}{\norm{\y}} - \abs{\frac{\mu_x(\Z) \trans \y}{\norm{\y}} + W + U}}\\
&\qquad \geq \Phi\p{\frac{-\mu_x(\Z) \trans \y}{\norm{\y}} -\abs{\frac{\mu_x(\Z) \trans \y}{\norm{\y}} + W}} + \Phi\p{\frac{\mu_x(\Z) \trans \y}{\norm{\y}} - \abs{\frac{\mu_x(\Z) \trans \y}{\norm{\y}} + W}} - \frac{2}{\sqrt{2\pi}} \EE{|U|}\\
& \qquad = \PP{\abs{V + \frac{\mu_x(\Z) \trans \y}{\norm{\y}}} \geq \abs{W + \frac{\mu_x(\Z) \trans \y}{\norm{\y}}} \mid \Z, \y, W} - \frac{2}{\sqrt{2\pi}} \EE{|U|},
\end{split}
\end{equation}
where $V \sim \mathcal{N}(0,1)$, and is independent of other random variables. Note that conditional on $\y$ and $\Z$, the random variables $V$ and $W$ are independently and identically distributed as $\mathcal{N}(0,1)$. Hence, conditional on $\y$ and  $\Z$, $ \PP{\abs{V + \frac{\mu_x(\Z) \trans \y}{\norm{\y}}} \geq \abs{W + \frac{\mu_x(\Z) \trans \y}{\norm{\y}}} \mid \Z, \y, W} \sim \operatorname{Unif}[0,1]$. This further implies that for any $\alpha \in (0,1)$, 
\[\PP{ \PP{\abs{V + \frac{\mu_x(\Z) \trans \y}{\norm{\y}}} \geq \abs{W + \frac{\mu_x(\Z) \trans \y}{\norm{\y}}} \mid \Z, \y, W}  \leq \alpha} = \alpha. \]
Thus $\PP{\breve{p}_{\operatorname{maxway}}(\bD) \leq \alpha} \leq \alpha + \frac{2}{\sqrt{2\pi}} \EE{|U|}$. Combining the above results with \eqref{eqn:inner_product_key2}, we get 
\begin{equation}
\begin{split}
 \PP{\breve{p}_{\operatorname{maxway}}(\bD) \leq \alpha} \leq \alpha + C\p{\sqrt{n} \Delta_{\rho, \operatorname{mean}} + \Delta_{x, \operatorname{mean}} +  \tilde{\Delta}_{x, \operatorname{mean}} + \sqrt{n}\Delta_{x, \operatorname{mean}} \Delta_{y, \operatorname{mean}}}.
 \end{split}
\end{equation}

We will then study the finite $M$ case. We can immediately verify that, the $\pval$ $p_{\operatorname{maxway}} (\bD)$ defined in \eqref{eqn:maxway_CRT} satisfies
$p_{\operatorname{maxway}} (\bD) = \frac{B + 1}{M + 1}$,
where $B \sim \operatorname{Binom}(M, \check{p}_{\operatorname{maxway}})$ conditioning on $\x, \y$ and $\Z$. Therefore, Lemma \ref{lemma:M_infinite_finite} implies that
\begin{equation}
\label{eqn:inner_product_proof_concl}
\begin{split}
\PP{p_{\operatorname{maxway}}(\bD) \leq \alpha} 
&\leq \alpha + C\p{\sqrt{n} \Delta_{\rho, \operatorname{mean}} + \Delta_{x, \operatorname{mean}} +  \tilde{\Delta}_{x, \operatorname{mean}} + \sqrt{n}\Delta_{x, \operatorname{mean}} \Delta_{y, \operatorname{mean}}}.
\end{split}
\end{equation}

{\darkred
\begin{lemma}
\label{lemma:y_bar_inverse}
Under the conditions of Theorem \ref{theo:inner-product_stats}, there exists a constant $C_3$, such that for any function $a(Z)$ of $Z$, 
\begin{equation}
    \EE{\frac{1}{\norm{\y + a(\Z)}^2}\mid \Z} \leq \frac{C_3}{n}. 
\end{equation}
\end{lemma}
\begin{proof}
By Assumption \ref{assu:mean_model}, we have that $Y_i = \smuy(Z_{i \cdot }) + \eta$, where $\eta$ is independent of $Z$. Therefore, conditioning on $Z_{i \cdot }$, $Y_i + a(Z_{i \cdot})$ has a density upper bounded by $C_1$. Picking any $\delta > 0$, we have that
\begin{equation}
\begin{split}
\EE{1/\|\y+ a(\Z)\|^2 \mid \Z}
&= \EE{\frac{1}{\sum_i (Y_i + a(Z_{i \cdot}))^2} \mid \Z}
\leq \frac{1}{n^2}\sum_i\EE{\frac{1}{(Y_i + a(Z_{i \cdot}))^2} \mid Z_{i \cdot }}\\
&= \frac{1}{n^2}\sum_i\EE{\frac{1}{(Y_i + a(Z_{i \cdot}))^2}\mathbbm{1}\cb{|Y_i + a(Z_{i \cdot})| \leq \delta} \mid Z_{i \cdot }}\\
&\qquad \qquad + \frac{1}{n^2}\sum_i\EE{\frac{1}{(Y_i + a(Z_{i \cdot}))^2}\mathbbm{1}\cb{ |Y_i + a(Z_{i \cdot})| > \delta} \mid Z_{i \cdot }}\\
& \leq \frac{1}{n} \p{2C_1\delta + \frac{1}{\delta^2}}. 
\end{split}
\end{equation}
We can then simply take $C_3 = 2C_1\delta + 1/\delta^2$. 
\end{proof}
}

\subsection{Proof of \eqref{eqn:mx_bound_inner}}
\label{subsection:proof_mx_bound_inner}
Similar to the proof of Theorem \ref{theo:inner-product_stats}, we start with the case where $M = \infty$. We again define an ``infinite-$M$" $\pval$
\begin{equation}
\breve{p}_{\mx}(\bD) = \PP[\breve \x \sim \mathcal{N}(\mu_{x, \mx}(\Z), I)]{T(\y,\breve \x,g(\Z), h(\Z))\geq T(\y,\x, g(\Z), h(\Z) ) \mid \x,\y,\Z }.
\end{equation}
This is in turn equivalent to the following form:
\begin{equation}
\begin{split}
\breve{p}_{\mx}(\bD) = 
\PP{\abs{\breve \x \trans \y}  \geq \abs{\x \trans \y}  \mid \x,\y,\Z }
\end{split}
\end{equation}
where $\breve \x \sim \mathcal{N}(\mu_{x, \mx}(\Z), I)$. Conditioning on $\x,\y, \z$, the term $\y \trans \breve \x $ follows a gaussian distribution with mean $\mu_{x, \mx}(\Z) \trans \y $ and variance $ \norm{\y}^2 $. Therefore, 
\begin{equation}
\label{eqn:inner_product_key1_CRT}
\begin{split}
\breve{p}_{\mx}(\bD)  &=  \PP{ \abs{\breve \x \trans \y} \geq \abs{\x \trans \y}  \mid \x,\y,\Z }\\
&= \Phi\p{\frac{-\mu_{x, \mx}(\Z) \trans \y - \abs{\x \trans \y}}{\norm{\y}}} + \Phi\p{\frac{\mu_{x, \mx}(\Z) \trans \y - \abs{\x \trans \y}}{\norm{\y}}},
\end{split}
\end{equation}
where $\Phi$ is the cumulative distribution function of a standard gaussian distribution. 

\sloppy{We will then focus on the term $(\x - \mu_{x, \mx}(\Z)) \trans \y/\norm{\y}$, and show that this term has a distribution very close to a standard gaussian. The term $(\x - \mu_{x, \mx}(\Z)) \trans \y$ can be decomposed into the following:
 \begin{equation}
\p{\x -\mu_{x, \mx}(\Z)}\trans \y 
 =\p{\x - \smux(\Z) + \smux(\Z)  - \mu_{x, \mx}(\Z)}\trans \y
= \bvare \trans \y + \p{\smux(\Z)  - \mu_{x, \mx}(\Z)}\trans \y. 
\end{equation}
Thus,
 \begin{equation}
\p{\x -\mu_{x, \mx}(\Z)}\trans \y /\norm{\y}
= \bvare \trans \y/\norm{\y} + \p{\smux(\Z)  - \mu_{x, \mx}(\Z)}\trans \y/\norm{\y}.
\end{equation}
The first term $\bvare \trans \y /\norm{\y}$ is distributed as a standard gaussian.  Moreover, this term is distributed as a standard gaussian conditional on $\y$ and $\Z$.  The absolute value of the second term $\p{\smux(\Z) - \mu_{x, \mx}(\Z)} \trans \y / \norm{\y}$ can be bounded using the Cauchy-Schwarz inequality: $\abs{\p{\smux(\Z) - \mu_{x, \mx}(\Z)} \trans \y }/ \norm{\y} \leq \norm{\smux(\Z) - \mu_{x, \mx}(\Z)}$. Therefore, $\EE{\abs{\p{\smux(\Z) - \mu_{x, \mx}(\Z)} \trans \y }/ \norm{\y}} \leq \sqrt{\EE{ \norm{\smux(\Z) - \mu_{x, \mx}(\Z)}^2}} = \sqrt{n} \sqrt{\p{\smux(Z) - \mu_{x, \mx}(Z)}^2}$. }

Then, following the same analysis in \eqref{eqn:inner_product_key2} - \eqref{eqn:inner_product_proof_concl}, we get the desired result. 

\subsection{Proof of Theorem \ref{theo:d0_stats}}

The proof idea is very similar to that of the inner-product statistics. We start with the infinite $M$ case, write down explicitly the expression of the $\pval$ and do a decomposition of a key term. Finally, Lemma \ref{lemma:M_infinite_finite} can be applied to connect the infinite $M$ case and the finite $M$ case. 

To this end, as in the proof of Theorem \ref{theo:inner-product_stats}, we define the infinite-$M$ $\pval$ 
\begin{equation}
\breve{p}_{\operatorname{maxway}}(\bD) = \PP[\breve \x \sim \rho^n( \cdot \mid g(\Z) , h(\Z))]{T(\y,\breve \x,g(\Z), h(\Z))\geq T(\y,\x, g(\Z), h(\Z) ) \mid \x,\y,\Z }.  
\end{equation}
This is in turn equivalent to the following form:
\begin{equation}
\begin{split}
\breve{p}_{\operatorname{maxway}}(\bD) = 
\PP{\abs{(\breve \x - \mu_x(\Z)) \trans (\y - \mu_y(\Z))}  \geq \abs{(\x - \mu_x(\Z)) \trans (\y - \mu_y(\Z))}  \mid \x,\y,\Z }
\end{split}
\end{equation}
where $\breve \x \sim \mathcal{N}(\mu_x(\Z), 1)$. Conditioning on $\x,\y, \z$, the term $(\breve \x - \mu_x(\Z)) \trans (\y - \mu_y(\Z))$ follows a gaussian distribution with mean $0$ and variance $ \norm{\y - \mu_y(\Z)}^2 $. Therefore, 
\begin{equation}
\label{eqn:d0_key1}
\begin{split}
\breve{p}_{\operatorname{maxway}}(\bD) & =  \PP{\abs{(\breve \x - \mu_x(\Z)) \trans (\y - \mu_y(\Z))}  \geq \abs{(\x - \mu_x(\Z)) \trans (\y - \mu_y(\Z))}  \mid \x,\y,\Z }\\
& =  2\Phi\p{- \frac{\abs{(\x - \mu_x(\Z)) \trans (\y - \mu_y(\Z))}}{\norm{\y - \mu_y(\Z)}}},
\end{split}
\end{equation}
where $\Phi$ is the cumulative distribution function of a standard gaussian distribution. 

We will then focus on the term $(\x - \mu_x(\Z)) \trans (\y - \mu_y(\Z))/\norm{\y - \mu_y(\Z)}$, and show that this term has a distribution very close to a standard gaussian. The term $\p{\x -\mu_x(\Z)}\trans (\y - \mu_y(\Z))$ can be decomposed into a few terms.
 \begin{equation}
\begin{split}
&\p{\x -\mu_x(\Z)}\trans (\y - \mu_y(\Z)) \\
 &\qquad =\p{\x - \smux(\Z) + \smux(\Z) - \tmux(\Z) + \tmux(\Z) - \mu_x(\Z)}\trans (\y - \mu_y(\Z))\\
&\qquad = \bvare \trans \y + \p{\tmux(\Z) - \mu_x(\Z)} \trans (\y - \mu_y(\Z)) +  \p{\smux(\Z) - \tmux(\Z)} \trans (\y - \mu_y(\Z))\\
&\qquad = \bvare \trans (\y - \mu_y(\Z)) + \p{\tmux(\Z) - \mu_x(\Z)} \trans (\y - \mu_y(\Z)) +  \p{\smux(\Z) - \tmux(\Z)} \trans \boeta +\\
&\qquad \qquad \qquad  \p{\smux(\Z) - \tmux(\Z)} \trans (\tmuy(\Z) - \mu_y(\Z)) + \p{\smux(\Z) - \tmux(\Z)} \trans \p{\smuy(\Z) - \tmuy(\Z)},
\end{split}
\end{equation}
where $\tmuy(Z) = \EE{\smuy(Z) \mid g(Z), h(Z)}$ and $\tmux(Z) = \EE{\smux(Z) \mid g(Z), h(Z)}$. 

\sloppy{The first term $\bvare \trans (\y - \mu_y(\Z)) /\norm{(y - \mu_y(\Z)}$ is distributed as a standard gaussian.  Moreover, this term is distributed as a standard gaussian conditional on $\y$ and $\Z$.  We will then move on to establish that the rest of the terms are small. The absolute value of the second term $\p{\tmux(\Z) - \mu_x(\Z)} \trans (\y - \mu_y(\Z)) / \norm{\y - \mu_y(\Z)}$ can be bounded using the Cauchy-Schwarz inequality: $\abs{\p{\tmux(\Z) - \mu_x(\Z)} \trans (\y - \mu_y(\Z)) }/ \norm{\y - \mu_y(\Z)} \leq \norm{\tmux(\Z) - \mu_x(\Z)}$. Therefore, $\EE{\abs{\p{\tmux(\Z) - \mu_x(\Z)} \trans (\y - \mu_y(\Z)) }/ \norm{\y - \mu_y(\Z)}} \leq \sqrt{\EE{ \norm{\tmux(\Z) - \mu_x(\Z)}^2}} = \sqrt{n} \sqrt{\p{\tmux(Z) - \mu_x(Z)}^2}$.} 

 For the third term $\p{\smux(\Z) -\tmux(\Z)}\trans \boeta$, note that $\EE{\eta \mid Z} = 0$; thus
$\EE{\p{\p{\smux(\Z) -\tmux(\Z)}\trans \boeta}^2} = n \EE{(\smux(Z) - \tmux(Z))^2 \eta^2} = n \EE{(\smux(Z) - \tmux(Z))^2} \EE{\eta^2}$, since the cross terms vanish. Thus, 
\begin{equation}
\begin{split}
&\EE{\abs{\p{\smux(\Z) -\tmux(\Z)}\trans \boeta}/\norm{(y - \mu_y(\Z)}}\\
& \qquad \qquad\leq \sqrt{\EE{\p{\p{\smux(\Z) -\tmux(\Z)}\trans \boeta}^2} \EE{1/\norm{(y - \mu_y(\Z)}^2}}\\
& \qquad \qquad = \sqrt{n} \sqrt{\EE{(\smux(Z) - \tmux(Z))^2}} \sqrt{\EE{\eta^2}} \sqrt{ \EE{1/\norm{(y - \mu_y(\Z)}^2}}\\
& \qquad \qquad \leq C\sqrt{\EE{(\smux(Z) - \tmux(Z))^2}},
\end{split}
\end{equation}
for some constant $C$. The last inequality follows from applying Lemma \ref{lemma:y_bar_inverse}. 

The fourth term $\p{\smux(\Z) - \tmux(\Z)} \trans \p{\tmuy(\Z) - \mu_y(\Z)}$ can be bounded similarly as the third term. Specifically, since $\tmux(Z) = \EE{X \mid g(Z), h(Z)} = \EE{\smux(Z) \mid g(Z), h(Z)}$, we have $\EE{\smux(Z) - \tmux(Z) \mid g(Z), h(Z)} = 0$. We also note that $\tmuy$ and $\mu_y$ are measurable with respect $h$ and $g$; thus $\EE{\p{\smux(Z) - \tmux(Z)} \p{\tmuy(Z) - \mu_y(Z)}} = 0$. Therefore, $\EE{\p{\p{\smux(\Z) - \tmux(\Z)} \trans \p{\tmuy(\Z) - \mu_y(\Z)}}^2} = n \EE{\p{\smux(Z) - \tmux(Z)}^2 \p{\tmuy(Z) - \mu_y(Z)}^2}$. Hence, again by Lemma \ref{lemma:y_bar_inverse},
\begin{equation}
\begin{split}
&\EE{\abs{\p{\smux(\Z) - \tmux(\Z)} \trans \p{\tmuy(\Z) - \mu_y(\Z)}}/\norm{\y - \mu_y(\Z)}}  \\
& \qquad \qquad \leq \sqrt{n} \sqrt{\EE{\p{\smux(Z) - \tmux(Z)}^2 \p{\tmuy(Z) - \mu_y(Z)}^2}} \sqrt{ \EE{1/\norm{\y - \mu_y(\Z)}^2}}\\
& \qquad \qquad \leq C \sqrt{\EE{\p{\smux(Z) - \tmux(Z)}^2 \p{\tmuy(Z) - \mu_y(Z)}^2}},
\end{split}
\end{equation}
for some constant $C$.

Finally, by Lemma \ref{lemma:y_bar_inverse}, the fifth term can be bounded by
\begin{equation}
\begin{split}
&\EE{\abs{\p{\smux(\Z) - \tmux(\Z)} \trans \p{\smuy(\Z) - \tmuy(\Z)}}/\norm{\y - \mu_y(\Z)}} \\
&\qquad \leq \EE{\EE{\abs{\p{\smux(\Z) - \tmux(\Z)} \trans \p{\smuy(\Z) - \tmuy(\Z)}}/\norm{\y - \mu_y(\Z)} \mid \Z}} \\
&\qquad \leq \frac{C}{\sqrt{n}}\EE{\abs{\p{\smux(\Z) - \tmux(\Z)} \trans \p{\smuy(\Z) - \tmuy(\Z)}}} \\
& \qquad \leq \frac{C}{\sqrt{n}}
 \sqrt{\EE{\norm{\smux(\Z) - \tmux(\Z)}^2}} \sqrt{\EE{\big\|\smuy(\Z) - \tmuy(\Z)\big\|^2}}\\
& \qquad \leq C\sqrt{n} 
 \sqrt{\EE{\norm{\smux(Z) - \tmux(Z)}^2}} 
 \sqrt{\EE{\big\|\smuy(Z) - \tmuy(Z)\big\|^2}}, 
\end{split}
\end{equation}
for some constant $C$. 

The above analysis implies that the term $(\x - \mu_x(\Z)) \trans (\y - \mu_y(\Z))/\norm{\y- \mu_y(\Z)}$ can be written as $(\x - \mu_x(\Z)) \trans (\y - \mu_y(\Z))/\norm{\y - \mu_y(\Z)} = W + U$, where $W\mid \y, \Z \sim \mathcal{N}(0,1) $ and 
\begin{equation}
\label{eqn:d0_key2}
\begin{split}
\EE{\abs{U}} &\leq  C\p{\sqrt{n} \Delta_{\rho, \operatorname{mean}} +  \check{\Delta}_{x, \operatorname{mean}} + \sqrt{n} \Delta_{x, \operatorname{mean}} \Delta_{y, \operatorname{mean}}},
\end{split}
\end{equation}
for some constant $C  > 0$, 
where  
\begin{equation}
    \begin{split}
         &\Delta_{x, \operatorname{mean}} = \sqrt{\EE{(\smux(Z) - \tmux(Z))^2}} , \quad 
        \check{\Delta}_{x, \operatorname{mean}} = \sqrt{\EE{(\smux(Z) - \tmux(Z))^2(\tmuy(Z) - \mu_y(Z))^2}}\\
        &\Delta_{y, \operatorname{mean}} = \sqrt{\EE{(\smuy(Z) - \tmuy(Z))^2}}, \quad 
        \Delta_{\rho, \operatorname{mean}} = \sqrt{\EE{(\tmux(Z) - \mu_x(Z))^2}}
    \end{split}
\end{equation}
Plugging the above into \eqref{eqn:d0_key1}, we get 
\begin{equation}
\begin{split}
\breve{p}_{\operatorname{maxway}}(\bD)  
& = 2 \Phi\p{- \frac{\abs{(\x - \mu_x(\Z)) \trans (\y - \mu_y(\Z))}}{\norm{\y - \mu_y(\Z)}}}\\
& = 2\Phi\p{-\abs{W+U}}
 \geq 2\Phi \p{-\abs{W}} - \frac{2}{\sqrt{2\pi}}\EE{\abs{U}}. 
\end{split}
\end{equation}
Since $2\Phi \p{-\abs{W}} \sim \operatorname{Unif}[0,1]$, the above implies that 
$\PP{\breve{p}_{\operatorname{maxway}}(\bD) \leq \alpha} \leq \alpha + \frac{2}{\sqrt{2\pi}} \EE{|U|}$. Combining the above results with \eqref{eqn:d0_key2}, we get 
\begin{equation}
\begin{split}
 \PP{\breve{p}_{\operatorname{maxway}}(\bD) \leq \alpha} \leq \alpha + 
C\p{\sqrt{n} \Delta_{\rho, \operatorname{mean}} + \Delta_{x, \operatorname{mean}} +  \check{\Delta}_{x, \operatorname{mean}} + \sqrt{n} \Delta_{x, \operatorname{mean}} \Delta_{y, \operatorname{mean}}}.
 \end{split}
\end{equation}

We will then study the finite $M$ case. We can immediately verify that, the $\pval$ $p_{\operatorname{maxway}} (\bD)$ defined in \eqref{eqn:maxway_CRT} satisfies
$p_{\operatorname{maxway}} (\bD) = \frac{B + 1}{M + 1}$,
where $B \sim \operatorname{Binom}(M, \check{p}_{\operatorname{maxway}})$ conditioning on $\x, \y$ and $\Z$. Therefore, Lemma \ref{lemma:M_infinite_finite} implies that
\begin{equation}
\begin{split}
 \PP{p_{\operatorname{maxway}}(\bD) \leq \alpha} \leq \alpha + C\p{\sqrt{n} \Delta_{\rho, \operatorname{mean}} + \Delta_{x, \operatorname{mean}} +  \check{\Delta}_{x, \operatorname{mean}} + \sqrt{n} \Delta_{x, \operatorname{mean}} \Delta_{y, \operatorname{mean}}}.
\end{split}
\end{equation}

\subsection{Proof of propositions: valid surrogates}
\label{subsection:perfect_surrogate_proof}

{\darkred
\subsubsection{Proof of Proposition \ref{prop:surro_early}}

If $Z\indp S\mid g(Z)$, then for any function $\psi$ and $\phi$,
\begin{equation}
    \begin{split}
& \EE{\psi(Z) \phi(Y) \mid g(Z)}
= \EE{\EE{\psi(Z) \phi(Y) \mid S,Z}\mid g(Z)}
= \EE{\psi(Z) \EE{ \phi(Y) \mid S,Z}\mid g(Z)}\\
& \qquad \stackrel{(a)}{=} \EE{\psi(Z) \EE{ \phi(Y) \mid S}\mid g(Z)}
\stackrel{(b)}{=} \EE{\psi(Z) \mid g(Z)} \EE{\EE{ \phi(Y) \mid S} \mid g(Z)}\\
& \qquad \stackrel{(c)}{=} \EE{\psi(Z) \mid g(Z)} \EE{ \phi(Y)\mid g(Z)}. 
    \end{split}
\end{equation}
Therefore, $Z\indp Y\mid g(Z)$. 
Here $(a)$ is because $Y \indp Z \mid Z$, $(b)$ is because $Z \indp S \mid g(Z)$, and $(c)$ is because 
\begin{equation}
\EE{ \phi(Y)\mid g(Z)}
= \EE{\EE{\phi(Y) \mid S,Z}\mid g(Z)}
= \EE{\EE{\phi(Y) \mid S}\mid g(Z)}. 
\end{equation}
}
\subsubsection{Proof of Proposition \ref{prop:2}}

Since $S \indp Z \mid Y$, we have
\begin{equation*}
\begin{split}
\PP{S\leq a\mid Z}=&\PP{S\leq a\mid Z,Y=1}\PP{Y=1\mid Z}+\PP{S\leq a\mid Z,Y=0}\PP{Y=0\mid Z}\\
=&\PP{S\leq a\mid Y=1}\PP{Y=1\mid Z}+\PP{S\leq a\mid Y=0}\left\{1-\PP{Y=1\mid Z}\right\},
\end{split}    
\end{equation*}
which implies that
\begin{equation}
\PP{Y=1\mid Z}=\frac{\PP{S\leq a\mid Z}-\PP{S\leq a\mid Y=0}}{\PP{S\leq a\mid Y=1}-\PP{S\leq a\mid Y=0}}.
\label{equ:prop:2:1}
\end{equation}
If $Z\indp S\mid g(Z)$, then $g(Z)$ is sufficient for $\PP{S\leq a\mid Z}$ as a function of $Z$. Hence, by (\ref{equ:prop:2:1}), $\PP{Y=1\mid Z}$ can be represented as a function of $g(Z)$, which indicates that $Z\indp Y\mid g(Z)$.

{\darkred
\subsubsection{Proof of Proposition \ref{prop:surro_linear}}
Here, we will show that for any linear function $g$ of $Z$, $Z\indp S\mid g(Z)$ implies that $Z\indp Y\mid g(Z)$. 

We require $g$ to be a linear function of $Z$. Assume that $g(Z) = Z \trans a$ for some $a \in \mathbb{R}^p$. It then suffices to show that $a = \gamma \thetas$ for some $\gamma \neq 0$. This is because $S \indp Y \mid Z\trans \thetas$. 
Define $t(Z) = \EE{S \mid Z} \in \mathbb{R}$. On the one hand, we note that $\EE{t(Z)^2 \mid g(Z)} = \EE{S t(Z) \mid g(Z)}=
\EE{S\mid g(Z)} \EE{t(Z) \mid g(Z)} = \p{\EE{t(Z) \mid g(Z)}}^2$. Hence, $\operatorname{Var}\sqb{t(Z) \mid g(Z)} = 0$. This then implies that $t(Z) = l(g(Z)) = l(Z \trans a)$ almost surely for some function $l$. On the other hand, since $Y \sim Z$ follows a linear model and the surrogate $S$ satisfies $S \indp Z \mid Y$, the surrogate $S$ follows a single index model (SIM) given $Z$, i.e., $S = f(Z\trans\thetas, e)$ with $e \indp Z$. Therefore, $t(Z) = \EE{S \mid Z} = \tilde{l}(Z \trans \thetas)$ for some function $\tilde{l}$.

We have established that $t(Z) = l(Z \trans a) = \tilde{l}(Z \trans \thetas)$ almost surely. This then implies that $\operatorname{Var}\sqb{t(Z) \mid Z\trans \thetas} = \operatorname{Var}\sqb{l(Z \trans a) \mid Z\trans \thetas}=
\operatorname{Var}\sqb{\tilde{l}(Z \trans \thetas) \mid Z\trans \thetas} = 0$, which implies that either $l$ is a constant function (a.s.) or $a = \gamma \thetas$ for some $\gamma \neq 0$. However, if $l$ is a constant function, then $t(Z)$ is constant and $\EE{S Z} = \EE{t(Z) Z} = c\EE{Z} = 0$, which contradicts the assumption that $\EE{S Z} \neq 0$. Therefore, $a = \gamma \thetas$ for some $\gamma \neq 0$.  
}

\section{Robustness of the transformed Maxway CRT}
\label{section:robust_transformed}
Similar to our analysis in Sections \ref{section:exact_inference}, \ref{sec:theory:robust:1} and \ref{section:specific_stats}, we can obtain robustness results on the transformed Maxway CRT. Specifically, we establish Theorems \ref{theo:exact_inference_tran}-\ref{theo:inner-product_stats_trans} as corollaries of Theorems \ref{theo:exact_inference}-\ref{theo:inner-product_stats} respectively. 

\begin{theorem}
\label{theo:exact_inference_tran}
Suppose that either of the following conditions holds: (SS.I) the vectors $\r, \r^{(1)}, \dots,  \r\supm$ are exchangeable conditioning on $\Z$; (SS.II) the vectors $\r, \r^{(1)}, \dots,  \r\supm$ are exchangeable conditioning on $\{h(\Z),g(\Z)\}$, and $\Z \indp \y \mid g(\Z)$. Then the transformed Maxway CRT $p$-value defined in \eqref{eqn:transform_maxway_CRT} is valid, i.e., $\PP{p_{\operatorname{t-maxway}}(\bD) \leq \alpha} \leq \alpha$ for any $\alpha \in [0,1]$ under the null hypothesis \eqref{eqn:indep_test}. 
\end{theorem}

\begin{theorem}[Type-I error bound: arbitrary test statistic]
Under the null hypothesis \eqref{eqn:indep_test}, for any $\alpha \in (0,1)$,  
\begin{equation}
\label{eqn:mayway_bound_trans}
\begin{split}
\PP{p_{\operatorname{t-maxway}}(\bD) \leq \alpha} 
& \leq \alpha + 2\EE{ \Delta_r \Delta_y} + \EE{\Delta_{r|g,h}},
\end{split}
\end{equation} 
where 
\begin{equation}
    \begin{split}
         \Delta_r &= d_{\operatorname{TV}} \Big(\rho_r^{\star n}( \cdot \mid g(\Z), h(\Z)), f_{\r \mid \Z}( \cdot \mid \Z) \Big),\\
         \Delta_y &= d_{\operatorname{TV}}\p{f_{\y \mid \Z} ( \cdot \mid \Z), f_{\y \mid g(\Z), h(\Z)} ( \cdot \mid g(\Z), h(\Z))},\\
		\Delta_{r|g,h} &= d_{\rm TV} \p{\rho_r^{\star n}( \cdot \mid g(\Z) , h(\Z)),\rho_r^n( \cdot \mid g(\Z) , h(\Z))}.
    \end{split}
\end{equation}
Here, $\rho^{\star n}_r$ is the distribution of $\r \mid g(\Z), h(\Z)$, and $\rho^{n}_r$, as an estimate $\rho^{\star n}_r$, is the distribution from which $\r\supm$ is sampled in Algorithm \ref{alg:transform_maxway}. 
\label{thm:main:1_trans}
\end{theorem}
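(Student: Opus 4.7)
\textbf{Proof proposal for Theorem \ref{thm:main:1_trans}.}

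The plan is to reduce Theorem \ref{thm:main:1_trans} to Theorem \ref{thm:main:1} by reinterpreting the transformed Maxway CRT as an ordinary Maxway CRT applied to the transformed variable $\r = R(\x, \Z)$ in place of $\x$. First I will note the crucial observation, already used informally in Section \ref{sec:}, that under the null hypothesis $\x \indp \y \mid \Z$, any measurable transformation of $(\x, \Z)$ preserves this conditional independence: since $\r = R(\x, \Z)$ is a deterministic function of $(\x, \Z)$, we have $\r \indp \y \mid \Z$ as well. Consequently, $\r$ satisfies the same null hypothesis with respect to $\y$ and $\Z$ that $\x$ satisfied originally.

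Next I will observe that Algorithm \ref{alg:transform_maxway} is literally the same procedure as Algorithm \ref{alg:maxway}, with $\r$ in place of $\x$, $\rho_r$ in place of $\rho$, and $\rhos_r$ in place of $\rhos$: the statistic $T(\y, \r\supm, g(\Z), h(\Z))$ is compared against $T(\y, \r, g(\Z), h(\Z))$, and the resamples $\r\supm$ are drawn from $\rho_r^n(\cdot \mid g(\Z), h(\Z))$. Thus $p_{\operatorname{t-maxway}}(\bD)$ is exactly the Maxway $\pval$ that would be produced if the underlying data triple were $(\y, \r, \Z)$ rather than $(\y, \x, \Z)$, the estimated conditional distribution of $\r \mid g(\Z), h(\Z)$ were $\rho_r^n$, and the true conditional distribution were $\rho_r^{\star n}$.

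Given these two observations, I would invoke Theorem \ref{thm:main:1} directly on the triple $(\y, \r, \Z)$. The three discrepancies appearing in the bound translate as follows: the analog of $\Delta_x$ becomes $\Delta_r = d_{\operatorname{TV}}\bigl(\rho_r^{\star n}(\cdot \mid g(\Z), h(\Z)), f_{\r \mid \Z}(\cdot \mid \Z)\bigr)$, the analog of $\Delta_y$ is unchanged (it involves only $\y$ and $\Z$), and the analog of $\Delta_{x \mid g,h}$ becomes $d_{\operatorname{TV}}\bigl(\rho_r^{\star n}(\cdot \mid g(\Z), h(\Z)), \rho_r^n(\cdot \mid g(\Z), h(\Z))\bigr)$, which matches the definition in the theorem statement. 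Applying Theorem \ref{thm:main:1} under the (valid) null $\r \indp \y \mid \Z$ yields \eqref{eqn:mayway_bound_trans} immediately.

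The only subtlety I foresee is verifying that the continuous-density hypothesis on $T$ translates cleanly under the substitution $\x \leadsto \r$; since Theorem \ref{thm:main:1} assumes $T(\y, a, g, h)$ has a continuous density when $a \sim \rho^{\star n}(\cdot \mid g(\Z), h(\Z))$ or $a \sim \rho^{\star n}(\cdot \mid \Z)$, the assumption in Theorem \ref{thm:main:1_trans} is posed for $a$ drawn from the corresponding $\r$-distributions, so the hypothesis is exactly what is needed to pass through the reduction. No new ideas or calculations beyond Theorem \ref{thm:main:1} are required; the argument is essentially a change of variables.
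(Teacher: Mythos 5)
Your reduction is exactly how the paper handles this result: it presents Theorem \ref{thm:main:1_trans} as a corollary of Theorem \ref{thm:main:1}, obtained by viewing the transformed Maxway CRT as the ordinary Maxway CRT applied to $(\y,\r,\Z)$ and noting that $\r=R(\x,\Z)$ inherits the null conditional independence $\r\indp\y\mid\Z$. Your write-up just makes that substitution explicit, so the proposal is correct and matches the paper's approach.
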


For the inner product statistic, let $\mu^{\star}_y(Z) = \EE{Y \mid Z}$ and $\mu^{\star}_r(Z) = \EE{R(X,Z) \mid Z}$. 
Let $\tmuy(Z) = \EE{Y \mid g(Z), h(Z)} = \EE{\smuy(Z) \mid g(Z), h(Z)}$ and $\tilde{\mu}_r(Z) = \EE{R(X,Z) \mid g(Z), h(Z)} = \EE{\mu^{\star}_r(Z) \mid g(Z), h(Z)}$. 
We work under the following assumption. 
\begin{assumption}
\label{assu:mean_model_transform}
Under the null hypothesis, $Y = \smuy(Z) + \eta$ and $R(X,Z) = \mu^{\star}_r(Z) + \varepsilon$, where $\eta$ and $\epsilon$ are mean-zero random variables independent of $Z$ and independent of each other, and $\varepsilon \sim \mathcal{N}(0,1)$. 
\end{assumption} 
We focus on the following statistic
\begin{equation}
\label{eqn:corr_stats_transform}
T(\y,\r,g(\Z),h(\Z)) = \abs{\r \trans \y}.
\end{equation}
\begin{theorem}[Type-I error bound: inner-product statistic]
\label{theo:inner-product_stats_trans}
Under Assumption \ref{assu:mean_model_transform}, assume further that the transformed Maxway CRT (Algorithm \ref{alg:transform_maxway}) samples $\r\supm$ from a normal distribution, i.e., $\rho_r(\cdot \mid g(Z), h(Z))$ corresponds to $\mathcal{N}(\mu_r(Z), 1)$.
Assume that $\eta$ is a continuous random variable whose density is upper bounded by a constant $C_1$. 
Assume further that there exists a positive constant $C_2$ such that $\EE{\eta^2} \leq C_2$. 
Then there exists a positive constant $C$ such that for any $\alpha \in (0,1)$, the type-I error of the Maxway CRT using the inner-product statistic defined in \eqref{eqn:corr_stats} can be bounded by 
\begin{equation}
\label{eqn:mayway_bound_inner_trans}
\PP{p_{\operatorname{maxway}}(\bD) \leq \alpha}  \leq \alpha  + 
C  \p{\sqrt{n}\Delta_{\rho, \operatorname{mean}}  + \sqrt{n}\Delta_{r, \operatorname{mean}} \Delta_{y, \operatorname{mean}} + (\Delta_{r, \operatorname{mean}} + \tilde{\Delta}_{r, \operatorname{mean}})},
\end{equation} 
where  
\begin{equation}
    \begin{split}
         &\Delta_{r, \operatorname{mean}} = \sqrt{\EE{(\mu_{r}^{\star}(Z) - \tilde{\mu}_r(Z))^2}} , \qquad 
        \tilde{\Delta}_{r, \operatorname{mean}} = \sqrt{\EE{((\mu_{r}^{\star}(Z) - \tilde{\mu}_r(Z)\tmuy(Z))^2}}\\
        &\Delta_{y, \operatorname{mean}} = \sqrt{\EE{(\smuy(Z) - \tmuy(Z))^2}}, \qquad 
        \Delta_{\rho, \operatorname{mean}} = \sqrt{\EE{(\tilde{\mu}_r(Z) - \mu_r(Z))^2}}. 
    \end{split}
\end{equation}
\end{theorem}

\section{Additional examples and details}
\label{appendix:examples}

\subsection{Details of Convergence rate example \ref{exam:gaussian_linear0}}
\label{subsection:detail_example_gauss_linear}

Since we consider the transformed Maxway CRT (Algorithm \ref{alg:transform_maxway}) here, we will make use of Theorems \ref{thm:main:1_trans} and \ref{theo:inner-product_stats_trans} instead of Theorems \ref{theo:almost_double_robust} and \ref{theo:inner-product_stats} for notation clarity. 

\subsubsection{Arbitrary statistic}
We will apply Theorem \ref{thm:main:1_trans}. 
We will use the Pinsker's inequality to bound the total variation distance in terms of the Kullback-Leibler divergence. Specifically, if $A = (A_1, \dots, A_n)$ and $B = (B_1, \dots, B_n)$ where $A_i$'s are i.i.d. random variables and $B_i$'s are i.i.d. random variables, then the total variation between the distribution of $A$ and the distribution of $B$ can be bounded by 
\begin{equation}
\label{eqn:pinsker}
d_{\operatorname{TV}}(f_A(\cdot), f_B(\cdot)) \leq  
\sqrt{\frac{1}{2}  d_{\operatorname{KL}}(f_A(\cdot),f_B(\cdot))} \leq
\sqrt{\frac{n}{2}} \sqrt{d_{\operatorname{KL}}(f_{A_1}(\cdot),f_{B_1}(\cdot))}. 
 \end{equation}
We also note that for two univariate gaussian distributions, the Kullback-Leibler divergence between them is given by
\begin{equation}
\label{eqn:kl_gaussian}
d_{\operatorname{KL}} \p{\mathcal{N}( \mu_1, \sigma_1^2), \mathcal{N}(\mu_2, \sigma_2^2)} = \frac{(\mu_1 - \mu_2)^2}{2 \sigma_2^2} + \frac{\sigma_1^2 - \sigma_2^2}{2 \sigma_2^2} + \log(\sigma_2/\sigma_1). 
 \end{equation}
If $\mu_1, \mu_2 \stackrel{p}{\to} \mu$, $\sigma_1, \sigma_2 \stackrel{p}{\to} \sigma$, and $(\mu_1 - \mu_2)^2 \asymp \sigma_1^2 - \sigma_2^2 \ll 1$, then the above Kullback-Leibler divergence satisfies
\begin{equation}
\label{eqn:kl_gaussian2}
d_{\operatorname{KL}} \p{\mathcal{N}( \mu_1, \sigma_1^2), \mathcal{N}(\mu_2, \sigma_2^2)} \asymp (\mu_1 - \mu_2)^2 + (\sigma_1^2 - \sigma_2^2).
\end{equation}

We will analyze each of the terms $\Delta_r$, $\Delta_y$, and $\Delta_{r|g,h}$. For $\Delta_r$, recall that the $\Delta_r = d_{\operatorname{TV}} \Big(\rho_r^{\star n}( \cdot \mid g(\Z), h(\Z)), f_{\r \mid \Z}( \cdot \mid \Z) \Big) \lesssim n d_{\operatorname{KL}} \Big(\rho_r^{\star}( \cdot \mid g(Z_{i \cdot}), h(Z_{i \cdot})), f_{R(X_i,Z_{i \cdot}) \mid Z_{i \cdot}}( \cdot \mid Z_{i \cdot}) \Big)$. We note that the two distributions are both gaussian, and thus we can apply \eqref{eqn:kl_gaussian2} and bound the KL-divergence by the difference between means and variances. The mean of the first distribution is $\EE{R(X_i,Z_{i \cdot}) \mid g(Z_{i \cdot}), h(Z_{i \cdot})}= \EE{X_i - Z_{i \cdot} \trans \beta\mid g(Z_{i \cdot}), h(Z_{i \cdot})} = \EE{Z_{i \cdot} \trans (\betas-\beta) \mid g(Z_{i \cdot}), h(Z_{i \cdot})}$, whereas the mean of the second is $Z_{i \cdot} \trans (\betas-\beta)$. Therefore, 
\[
\begin{split}
 \EE{\p{\mu_1 - \mu_2}^2} 
&=  \EE{\operatorname{Var}\sqb{Z_{i \cdot} \trans \betas - Z_{i \cdot} \trans \beta\mid g(Z_{i \cdot}), h(Z_{i \cdot})}}
\leq \EE{\p{Z_{i \cdot} \trans \p{\beta - \betas}}^2}\\
& = \p{\beta - \betas}\trans \Sigma_z \p{\beta - \betas}
 \lesssim \Delta_{x, \lin}^2,
 \end{split}
\]
if the largest eigenvalue of $\Sigma_z$ is bounded. For the variances, the variance of the first distribution is $\operatorname{Var}\sqb{R(X_i,Z_{i \cdot}) \mid g(Z_{i \cdot}), h(Z_{i \cdot})}  =  \operatorname{Var}\sqb{\eta_i + Z_{i \cdot}\trans (\betas - \beta) \mid g(Z_{i \cdot}), h(Z_{i \cdot})}$. The variance of the second distribution is simply $1$. Therefore,
\[
\begin{split}
\EE{\sigma_1^2 - \sigma_2^2}
&= \EE{\operatorname{Var}\sqb{\eta_i + Z_{i \cdot}\trans (\betas - \beta)  \mid g(Z_{i \cdot}), h(Z_{i \cdot})} - 1} 
= \EE{\operatorname{Var}\sqb{Z_{i \cdot}\trans (\betas - \beta)  \mid  g(Z_{i \cdot}), h(Z_{i \cdot})}} \\
&\leq \EE{ \p{Z_{i \cdot}\trans (\beta - \betas)}^2}
 = \p{\beta - \betas}\trans \Sigma_z \p{\beta - \betas}
 \lesssim \Delta_{x, \lin}^2.
\end{split}
\]
Thus \eqref{eqn:kl_gaussian2} implies that $\EE{\Delta_r^2} \lesssim n\Delta_{x, \lin}^2$. 

The exact same analysis can be applied to $\Delta_y$ and we can get $\EE{\Delta_y^2} \lesssim n\Delta_{y, \lin}^2$. 
Similarly, for $\Delta_{r|g,h}$, since both the mean and variance of the gaussian distribution are estimated with linear regression with rate $\Delta_{\rho, \lin}$, we have that
$\EE{\Delta_{r|g,h}^2} \lesssim n\Delta_{\rho, \lin}^2$. 

Finally, by Theorem \ref{thm:main:1_trans}, we have that
\[
\begin{split}
&\textnormal{Type-I error inflation of the Maxway CRT} \\
& \qquad \leq 2\EE{\Delta_r \Delta_y} + \EE{\Delta_{x\mid g,h}}
\leq 2\sqrt{\EE{\Delta_r^2}\EE{\Delta_y^2}} + \sqrt{\EE{\Delta_{r\mid g,h}}^2}\\
& \qquad \lesssim n \Delta_{x, \lin} \Delta_{y, \lin}+ \sqrt{n} \Delta_{\rho, \lin}.
\end{split}
\]

\subsubsection{Inner-product statistic}
We will apply Theorem \ref{theo:inner-product_stats_trans}. In order to use Theorem \ref{theo:inner-product_stats_trans}, we implement the (transformed) Maxway CRT in a slightly different way. In particular, we take $\rho_r(Z_{i\cdot})$ to be $\mathcal{N}(\gamma_0 + \gamma_1 (Z_{i\cdot}\trans \theta),1)$, where $\gamma_0$ and $\gamma_1$ are estimated with linear regression of $R(X_i, Z_{i\cdot})$ on $Z_{i\cdot}\trans \theta$ (on external data) with error rate $\Delta_{\rho, \lin}$. 

To apply Theorem \ref{theo:inner-product_stats_trans}, we start with making sense of the notation and verifying the assumptions. Using the notation in Theorem \ref{theo:inner-product_stats_trans}, we have that $\smuy(Z) = Z \trans \thetas$ and $\mu_r^{\star}(Z) = Z \trans (\betas - \beta)$.
We also have that $\tmuy(Z) = \EE{Z \trans \thetas \mid Z \trans \theta}$ and $\tilde{\mu}_r(Z) = \EE{Z \trans (\betas - \beta) \mid Z \trans \theta}$. 
Therefore, we have that $Y = \smuy(Z) + \eta$ and $R(X,Z) = \mu_r^{\star}(Z) + \epsilon$ with both $\eta$ and $\epsilon$ following $\mathcal{N}(0,1)$, and thus the conditions in Assumption~\ref{assu:mean_model_transform} are satisfied. 

Now we are ready to study the bound in Theorem \ref{theo:inner-product_stats_trans}. We will analyze the terms $\Delta_{r,\operatorname{mean}}$,
$\tilde{\Delta}_{r,\operatorname{mean}}$,
$\Delta_{y,\operatorname{mean}}$ and $\Delta_{\rho,\operatorname{mean}}$ one by one. For $\Delta_{r,\operatorname{mean}}$, we have that
\[
\begin{split}
\Delta_{r,\operatorname{mean}}^2
&= \EE{\p{\tilde{\mu}_r(Z)- \mu_r^{\star}(Z)}^2} = \EE{\p{Z \trans (\beta - \betas) - \EE{Z \trans (\beta - \betas) \mid Z \trans \theta}}^2}\\
&\leq \EE{\p{Z \trans \p{\beta - \betas}}^2}
 = \p{\beta - \betas}\trans \Sigma_z \p{\beta - \betas}
 \lesssim \Delta_{x, \lin}^2.
 \end{split}
 \]
 Here, the last inequality follows from the fact that the largest eigenvalue of $\Sigma_z$ is bounded. 
 For $\tilde{\Delta}_{r,\operatorname{mean}}$, the analysis is similar, but slightly more complicated. 
 We firstly note that $\EE{\tmuy(Z)^4} \leq \EE{\p{Z \trans \thetas}^4} = 3 \p{\thetas{}\trans \Sigma_z \thetas}^2 \leq C$ for some constant $C >0$ if the largest eigenvalue of $\Sigma_z$ is bounded above by a constant.  Here, we make use of the fact that the fourth moment of a $\mathcal{N}(0, \sigma^2)$ random variable is $3 \sigma^4$. Thus, 
 \[
\begin{split}
\tilde{\Delta}_{r,\operatorname{mean}}^2
&=  \EE{\p{\tilde{\mu}_r(Z)- \mu_r^{\star}(Z)}^2\tilde{\mu}_y(Z)^2}
= \sqrt{\EE{\tilde{\mu}_y(Z)^4}} \sqrt{\EE{\p{\tilde{\mu}_r(Z)- \mu_r^{\star}(Z)}^4}}\\
&= \sqrt{C}\sqrt{\EE{\p{Z \trans (\beta - \betas) - \EE{Z \trans (\beta - \betas) \mid Z \trans \theta}}^4}}
\leq \sqrt{C} \sqrt{\EE{\p{Z \trans \p{\beta - \betas}}^4}}\\
& = \sqrt{3C}\p{\beta - \betas}\trans \Sigma_z \p{\beta - \betas}
 \lesssim \Delta_{x, \lin}^2. 
 \end{split}
 \]
Then for $\Delta_{y,\operatorname{mean}}$, 
 \[
\begin{split}
\Delta_{y,\operatorname{mean}}^2  
& = \EE{\p{\tilde{\mu}_y(Z)- \mu_y^{\star}(Z)}^2}
 = \EE{\p{Z\trans \thetas - \EE{Z\trans \thetas\mid Z\trans \theta}}^2}\\
&\leq \EE{\p{Z \trans \p{\theta - \thetas}}^2}
 =  \p{\theta - \thetas}\trans \Sigma_z \p{\theta - \thetas}
 \lesssim \Delta_{y, \lin}^2.
 \end{split}
 \]
 For $\Delta_{\rho,\operatorname{mean}}^2$, note that since $Z$ is distributed as a multivariate gaussian, $\tilde{\mu}_r(Z) = \EE{R(X,Z)\mid Z \trans \theta} = \EE{Z \trans (\betas - \beta)\mid Z \trans \theta}$ is linear in $Z \trans \theta$. In other words, there exist $\gamma_0^{\star}$ and $\gamma_1^{\star}$ such that $\tilde{\mu}_r(Z) = \gamma_0^{\star} + \gamma_1^{\star} (Z \trans \theta)$. Therefore, 
 \[
\EE{\Delta_{\rho,\operatorname{mean}}^2}
= \EE{\p{\tilde{\mu}_r(Z) - \mu_r(Z) }^2}
= \EE{\p{\gamma_0^{\star} + \gamma_1^{\star} (Z \trans \theta) -(\gamma_0 + \gamma_1 (Z \trans \theta)) }^2}
\lesssim \Delta_{\rho, \lin}^2.
 \]

 Finally, by Theorem \ref{theo:inner-product_stats_trans}, we have that
\[
\begin{split}
&\textnormal{Type-I error inflation of the Maxway CRT (inner-product statistic)} \\
&\qquad \lesssim \sqrt{n}\Delta_{\rho,\operatorname{mean}}
+\Delta_{r,\operatorname{mean}}
+ \tilde{\Delta}_{r,\operatorname{mean}}
+ \sqrt{n} \Delta_{r,\operatorname{mean}}  \Delta_{y,\operatorname{mean}}
\\
&\qquad \lesssim \sqrt{n} \Delta_{x, \lin} \Delta_{y, \lin}+ \sqrt{n} \Delta_{\rho, \lin}.
\end{split}
\]

\subsubsection{Model-X CRT}
For the model-X CRT, assume that we sample $X_i\supm$ independently from $\mathcal{N}(Z_{i \cdot} \trans \beta, 1)$. With arbitrary test statistic, we have that
\[
\PP{p_{\operatorname{mx}}(\bD) \leq \alpha} 
 \leq \alpha + \EE{d_{\operatorname{TV}}\p{f_{\x \mid \Z}( \cdot \mid \Z) , \widetilde{\rho}^n_{\x \mid \Z}( \cdot \mid \Z) }}.
\]
Recall that conditioning on $\Z$, $X_i \sim \mathcal{N}(Z_{i \cdot} \trans \betas, 1)$. 
Therefore, by \eqref{eqn:pinsker} and \eqref{eqn:kl_gaussian2}, we have that
\[
\begin{split}
    &\textnormal{Type-I error inflation of the model-X CRT (arbitrary statistic)} \\
    &\qquad \qquad =\EE{d_{\operatorname{TV}}\p{f_{\x \mid \Z}( \cdot \mid \Z) , \widetilde{\rho}^n_{\x \mid \Z}( \cdot \mid \Z) }} 
    \lesssim n\sqrt{\EE{\p{Z_{i \cdot} \trans \p{\beta - \betas}}^2}} \\
    &\qquad \qquad = \sqrt{n}\sqrt{\p{\beta - \betas}\trans \Sigma_z \p{\beta - \betas}}
 \lesssim \sqrt{n}\Delta_{x, \lin},
 \end{split}
\]
if the largest eigenvalue of $\Sigma_z$ is bounded.

If we draw attention to the inner-product statistic and apply \eqref{eqn:mx_bound_inner}, we can get a similar bound on the type-I error inflation of the model-X CRT. 
\[
\begin{split}
    &\textnormal{Type-I error inflation of the model-X CRT (inner-product statistic)} \\
    &\qquad \qquad \lesssim n\sqrt{\EE{\p{Z_{i \cdot} \trans \p{\beta - \betas}}^2}}  = \sqrt{n}\sqrt{\p{\beta - \betas}\trans \Sigma_z \p{\beta - \betas}}
 \lesssim \sqrt{n}\Delta_{x, \lin}.
 \end{split}
\]

{\darkred
\subsection{Details of Convergence rate example \ref{exam:SIM_theory}}
\label{subsection:detail_example_gauss_linear_cont}
In this example, we assume that $Z_{i \cdot} \sim \mathcal{N}(0, \Sigma_z)$, $Y_i = Z_{i \cdot} \trans \theta^{\star} + \eta_i$, and $X_i = Z_{i \cdot}\trans \beta^{\star} + \varepsilon_i$, where $\eta_i \sim \mathcal{N} (0,1)$ and $\varepsilon_i \sim \mathcal{N} (0,1)$ are noise terms independent with $Z_{i \cdot}$ and $\Sigma_z \succ 0$. Assume further that the surrogate satisfies $S_i \indp Z_i \mid Y_i$. Furthermore, assume that $\EE{S Z} \neq 0$, i.e., $Z$ has some predictive power of $S$.

We will make use of results in Convergence rate example \ref{exam:gaussian_linear0}. It suffices to quantify $\Delta_{\rho, \lin}$, $\Delta_{x, \lin}$ and $\Delta_{y, \lin}$ for semi-supervised learning and surrogate-assisted semi-supervised learning. 

Since $\Delta_{\rho, \lin}$ denote the error of a one-dimension linear regression, we have that $\Delta_{\rho, \lin} \asymp 1\sqrt{n}$. For $\Delta_{x, \lin}$, this is the error of a lasso regression on the unlabeled data of size $N$. Therefore, under our convergence rate assumptions, we have that $\Delta_{x, \lin} \lesssim \sqrt{s_\beta \log(p)/N}$. 

Finally, we study $\Delta_{y,\lin}$. In the semi-supervised learning scenario, $\Delta_{y, \lin}$ is the error of a lasso regression on the labeled holdout training data, and thus under our convergence rate assumptions, we have that $\Delta_{y, \lin} \lesssim \sqrt{s_\theta \log(p)/n}$. In the surrogate-assisted semi-supervised scenario, we need to do slightly more work. Since $Y \sim Z$ follows a linear model and the surrogate $S$ satisfies $S \indp Z \mid Y$, $S$ follows a single index model (SIM) given $Z$, i.e., $S = f(Z\trans\thetas, e)$ with $e \indp Z$. \citet{li1989regression} establishes that when $S$ follows a SIM, the direction of $\thetas$ can be recovered using the least square regression of $S$ against $Z$ (see also \citet{zhang2022prior}). In particular, this implies that if we run lasso with $S$ as response and $Z$ as predictors on the unlabeled samples, we can obtain an estimator $\theta$ of $\thetas$ such that $\norm{\theta \gamma - \theta^{\star}}^2 \lesssim s_\theta \log(p)/N$ for some constant $\gamma \neq 0$. 
Without additional information, we do not know the value of $\gamma$. However, this constant $\gamma$ does not have an impact on the Maxway CRT procedure or the resulting type-I error. 
More precisely, when we run the Maxway CRT, if we consider two candidates, $g_1$ and $g_2$, of the $g$ function such that $g_1(Z) = \gamma g_2(Z)$, then the conditional distribution of $X \mid g_1(Z) = g_1(z)$ is the same as $X \mid g_2(Z) = g_2(z)$ for any $z \in \mathbb{R}^p$. Therefore, the Maxway CRT procedure is the same if we take $g(Z)$ to be either $Z \trans \theta$ or $\gamma Z \trans\theta$. It thus suffices to bound $\Delta_{y, \lin}$ in the case where we take $g(Z)$ to be $\gamma Z \trans\theta$. 
When we take $g(Z)$ to be $\gamma Z \trans\theta$, $\Delta_{y, \lin}$ corresponds to $\norm{\theta \gamma - \theta^{\star}}$ and hence $\Delta_{y, \lin} \lesssim \sqrt{s_\theta \log(p)/N}$. 
}

\subsection{An additional convergence rate example}
\label{subsection:example}

\begin{rateexampleprime}{2}
Gaussian linear model; $g(\Z)$ is a subset of variables in $\Z$.
\label{exam:gauss_variable_selection}
\leavevmode
\begin{description}[font=\itshape,leftmargin=0cm,labelindent=0cm]
    \item[Modelling assumptions.]
Suppose that we have $n$ labeled samples $\bD=(\by,\bx,\bZ)$ and $N$ unlabeled samples (with or without surrogate): $\bD^{u}=(\bs^u,\bx^u,\bZ^u)$ or $\bD^{uS}=(\bx^u,\bZ^u)$.
Assume further that $Z_{i \cdot} \sim \mathcal{N}(0, \Sigma_z)$, $Y_i = Z_{i \cdot} \trans \theta^{\star} + \eta_i$, and $X_i = Z_{i \cdot}\trans \beta^{\star} + \varepsilon_i$, where $\eta_i \sim \mathcal{N} (0,1)$ and $\varepsilon_i \sim \mathcal{N} (0,1)$ are noise terms independent with $Z_{i \cdot}$ and $\Sigma_z \succ 0$. Let $\mcS^{\star} = \cb{j: \theta_j \neq 0}$ be the support set of $\theta$.

\item[Implementation of the Maxway CRT.]
We implement the transformed Maxway CRT (Algorithm \ref{alg:transform_maxway}) here. We take $g(Z_{i \cdot}) = Z_{i, \mcS}$ to be an estimate of $Z_{i, \mcS^{\star}}$. Take the transformation $R(X_i, Z_{i \cdot}) = X_i - Z_{i \cdot} \trans \beta$ where $Z_{i \cdot} \trans \beta$ is an estimate of the conditional mean function of $X_i$, and $h(Z)$ to be trivially null. 

 \item[Convergence rate assumptions.]
Assume that $\beta^{\star}$ is $s_\beta$ sparse, and it can be estimated with lasso on data of sample size $m$ with rate $\norm{\beta - \beta^{\star}}^2 \lesssim s_\beta \log(p)/m$. 
Further, assume that the set $\mcS^{\star}$ can be recovered exactly with lasso with probability $1 - \delta$, where $\delta = o_p(1)$. Note that the specific value of $\delta$ may depend on whether we are in a semi-supervised scenario or a surrogate-assisted semi-supervised scenario. We refer to \citet{bickel2009simultaneous} and \citet{van2009conditions} for a more detailed discussion on the rate of lasso and to \citet{zhao2006model} for the variable selection consistency of lasso. 

\item[Rate of type-I error inflation.]
\sloppy{In this example, we can establish that with arbitrary test statistic, $\EE{\Delta_x} \lesssim \sqrt{n} \EE{\norm{\beta - \beta^{\star}}} \lesssim \sqrt{s_\beta \log(p) n /N}$ and that $\EE{\Delta_{x|g,h}} \lesssim \sqrt{  n s_\theta/N}$. }
Therefore, the type-I error inflation of the Maxway CRT can be bounded by
\begin{equation}
\begin{split}
\EE{\Delta_{x|g,h}} +  2\EE{ \Delta_x \Delta_y}  \lesssim \sqrt{ \frac{  s_\theta n}{N}} + \delta \sqrt{\frac{s_\beta \log(p) n}{N}}.
\end{split}
\end{equation}
As a comparison, for the model-X CRT, we have
\begin{equation}
\textnormal{Type-I error inflation of the model-X CRT} \lesssim \sqrt{\frac{s_\beta \log(p) n}{N}}.
\end{equation}
We immediately notice that the bound of the Maxway CRT is better in rate than that of the model-X CRT if $s_\theta \lesssim s_\beta \log(p)$, i.e., if the model of $Y$ is not more complex than that of $X$ by a factor of $\log(p)$.
\end{description}

\end{rateexampleprime}

\subsubsection{Details of Convergence rate example \ref{exam:gauss_variable_selection}}
Since we consider the transformed Maxway CRT (Algorithm \ref{alg:transform_maxway}) here, we will make use of Theorem \ref{thm:main:1_trans} instead of Theorem \ref{theo:almost_double_robust} for notation clarity. 

Similar to the analysis in Convergence rate examples \ref{exam:gaussian_linear0} and \ref{exam:SIM_theory}, we will again use the Pinsker's inequality to bound the total variation distance in terms of the Kullback-Leibler divergence. 

We will analyze each of the terms $\Delta_r$, $\Delta_y$, and $\Delta_{r|g,h}$. For $\Delta_r$, recall that the $\Delta_r = d_{\operatorname{TV}} \Big(\rho_r^{\star n}( \cdot \mid g(\Z), h(\Z)), f_{\r \mid \Z}( \cdot \mid \Z) \Big) \lesssim n d_{\operatorname{KL}} \Big(\rho_r^{\star}( \cdot \mid g(Z_{i \cdot}), h(Z_{i \cdot})), f_{R(X_i,Z_{i \cdot}) \mid Z_{i \cdot}}( \cdot \mid Z_{i \cdot}) \Big)$. We note first that the two distributions are both gaussian; thus we can apply \eqref{eqn:kl_gaussian2} and bound the KL-divergence by the difference between means and variances. The mean of the first distribution is $\EE{R(X_i,Z_{i \cdot}) \mid g(Z_{i \cdot}), h(Z_{i \cdot})}= \EE{X_i - Z_{i \cdot} \trans \beta\mid g(Z_{i \cdot}), h(Z_{i \cdot})} = \EE{Z_{i \cdot} \trans (\betas-\beta) \mid g(Z_{i \cdot}), h(Z_{i \cdot})}$, whereas the mean of the second is $Z_{i \cdot} \trans (\betas-\beta)$. Therefore, 
\[
 \EE{\p{\mu_1 - \mu_2}^2} 
=  \EE{\operatorname{Var}\sqb{Z_{i \cdot} \trans \betas - Z_{i \cdot} \trans \beta\mid g(Z_{i \cdot}), h(Z_{i \cdot})}}
\leq \EE{\p{Z_{i \cdot} \trans \p{\beta - \betas}}^2}
 \lesssim s_\beta \log(p)/N,
\]
if the largest eigenvalue of $\Sigma_z$ is bounded. For the variances, the variance of the first distribution is $\operatorname{Var}\sqb{R(X_i,Z_{i \cdot}) \mid g(Z_{i \cdot}), h(Z_{i \cdot})}  =  \operatorname{Var}\sqb{\eta_i + Z_{i \cdot}\trans (\betas - \beta) \mid g(Z_{i \cdot}), h(Z_{i \cdot})}$. The variance of the second distribution is simply 1. Therefore,
\[
\begin{split}
\EE{\sigma_1^2 - \sigma_2^2}
&= \EE{\operatorname{Var}\sqb{\eta_i + Z_{i \cdot}\trans (\betas - \beta)  \mid g(Z_{i \cdot}), h(Z_{i \cdot})} - 1} 
= \EE{\operatorname{Var}\sqb{Z_{i \cdot}\trans (\betas - \beta)  \mid  g(Z_{i \cdot}), h(Z_{i \cdot})}} \\
&\leq \EE{ \p{Z_{i \cdot}\trans (\beta - \betas)}^2}
\lesssim s_\beta \log(p)/N. 
\end{split}
\]
Thus \eqref{eqn:kl_gaussian2} implies that $\EE{\Delta_r^2} \lesssim n s_\beta \log(p)/N$. 

The term $\Delta_y$ is zero when exact support recovery can be achieved. Otherwise, $\Delta_y$ is always upper bounded by 1. Therefore, we can bound $2\EE{ \Delta_r \Delta_y}$ by 
\[
2\EE{ \Delta_r \Delta_y} 
\lesssim \sqrt{\PP{\Delta_y > 0 } \EE{\Delta_r^2}}
 \leq  \delta \sqrt{\frac{s_\beta \log(p) n}{N}}. 
\]

The term $\Delta_{r|g,h}$ corresponds to a linear regression on $|\mcS|$ covariates, the rate of convergence of which is given by $\sqrt{ | \mcS| / N}$. But $|\mcS| = s_\theta$ with high probability. Thus $\EE{\Delta_{r|g,h}} \asymp \sqrt{n s_\theta/N}$. 

Finally, for the model-X CRT, assume that we sample $X_i\supm$ independently from $\mathcal{N}(Z_{i \cdot} \trans \beta, \widehat{\sigma}_x^2)$, where $\widehat{\sigma}_x^2$ is the sample variance of $\x - \Z\beta$ on (external) data of sample size $N$. Adopting notation from \eqref{eqn:d_x_prime} and making use of \eqref{eqn:original_CRT_bound_var}, we have that
\[
\begin{split}
\PP{p_{\operatorname{mx}}(\bD) \leq \alpha} 
 &\leq \alpha + \EE{d_{\operatorname{TV}}\p{f_{\x \mid \Z}( \cdot \mid \Z) , f_{\x \mid h(\Z)}( \cdot \mid h(\Z)) }} +\\
& \qquad \qquad \qquad \qquad  \EE{d_{\operatorname{TV}}\p{f_{\x \mid h(\Z)}( \cdot \mid h(\Z)) , \widetilde{\rho}^n_{\x \mid \Z}( \cdot \mid \Z) }},\\
& = \alpha + \EE{\Delta_x'} +  \EE{\Delta_{x|g,h}'}.
\end{split}
\]
If we redo the arguments for $d_r$ but replacing all ``conditional on $g$ and $h$" by ``conditional on $h$", we can show that $\EE{\Delta_{x}'} \lesssim \sqrt{n s_\beta \log(p)/N}$. Similarly, for $\Delta_{x|g,h}'$, we can redo the arguments for $d_{\rho}$ but replacing all ``conditional on $g$ and $h$" by ``conditional on $h$", and show that $\EE{\Delta_{x|g,h}'} \lesssim \sqrt{n /N}$.

\section{Power discrepancy between the model-X and Maxway CRT}\label{sec:app:sym}

{\darkred

In this section, we studied the issue of power discrepancy between the model-X and the Maxway approaches observed in most of our simulation studies. For demonstration, we take Configuration (SS.I) with gaussian linear $X\mid Z$ and $Y\mid Z$ as an example. Recall that if we set the ``overlapping" parameter $\eta=0$, the function $h(X,Z)=X$ for partial linear effect, and all the random signs $\nu_j=1$ for simplicity, we will have 
\begin{equation}
X=0.3\sum_{j=1}^5 Z_j+\epsilon_1;\quad Y=\gamma X+0.3\sum_{j=1}^5 Z_j+\epsilon_2,
\label{equ:data:gen}
\end{equation}
where $\epsilon_1,\epsilon_2\sim\mathcal{N}(0,1)$ are independent noises. The model-X CRT first fits lasso for $X\sim Z$ and $Y\sim Z$ to derive the coefficients:
\begin{align*}
\widehat\gamma_{xz}=&{\rm argmin}_{\gamma_{xz}}(2N)^{-1}\|\x-\Z\gamma_{xz}\|_2^2+\lambda_{x}\|\gamma_{xz}\|_1;\\
\widehat\gamma_{yz}=&{\rm argmin}_{\gamma_{xz}}(2n)^{-1}\|\y-\Z\gamma_{yz}\|_2^2+\lambda_{y}\|\gamma_{yz}\|_1.
\end{align*}
Then it constructs the d$_0$ statistic as $|T\subcrt|$ where $T\subcrt=(\x-\Z\widehat\gamma_{xz})\trans(\y-\Z\widehat\gamma_{yz})$, and
\[
\EE{X\mid Z}=\Z\trans\gamma_{xz}^*,\quad\bepsilon_x=\x-\Z\trans\gamma_{xz}^*,\quad\EE{Y\mid Z}=\Z\trans\gamma_{yz}^*,\quad\bepsilon_y=\y-\gamma\bepsilon_x-\Z\trans\gamma_{yz}^*.
\]
In the Maxway CRT, $X$'s predictor $\Z\widehat\gamma_{xz}$ is actually replaced by 
\[
\Z\widehat\gamma_{xz}+\EEhat{R\mid g(\Z)}=\Z\widehat\gamma_{xz}+\EEhat{X-Z\trans\widehat\gamma_{xz}\mid g(\Z)}=\Z(\widehat\gamma_{xz}+\widehat\gamma_{rz}),
\]
where $g(\Z)=(\bZ\widehat{\gamma}_{yz},\bZ_{\bullet,\mathrm{top}(k)})$ and $\widehat\gamma_{rz}$ satisfies $\Z\widehat\gamma_{rz}=\EEhat{X-Z\trans\widehat\gamma_{xz}\mid g(\Z)}$. So the d$_0$ statistic of the Maxway CRT is taken as $|T\submax|$ where $T\submax=(\x-\Z\widehat\gamma_{xz}-\Z\widehat\gamma_{rz})\trans(\y-\Z\widehat\gamma_{yz})$; see Implementation example \ref{example:1} for more details. In Figure \ref{fig:hist}, we plot the histograms of the test statistics extracted in the model-X and the Maxway procedures side by side under the above-mentioned setting with $\gamma=0$. Based on these, we now demonstrate the cause of power discrepancy. Heuristically, inspecting the expansion of $T\subcrt$:
\begin{align*}
T\subcrt=\bepsilon_x\trans\bepsilon_y+\gamma\|\bepsilon_x\|_2^2+(\bepsilon_y+\gamma\bepsilon_x)\trans\Z(\gamma_{xz}^*-\widehat\gamma_{xz})+\bepsilon_x\trans\Z(\gamma_{yz}^*-\widehat\gamma_{yz})+(\gamma_{xz}^*-\widehat\gamma_{xz})\trans\Z\trans\Z(\gamma_{yz}^*-\widehat\gamma_{yz}),
\end{align*}
we could neglect $(\bepsilon_y+\gamma\bepsilon_x)\trans\Z(\gamma_{xz}^*-\widehat\gamma_{xz})$ since $(\bepsilon_y+\gamma\bepsilon_x)\trans\Z/n$ concentrates to $\bzero$ and $\gamma_{xz}^*-\widehat\gamma_{xz}$ is relatively small because $\widehat\gamma_{xz}$ is estimated with significantly larger than $n$ samples in the SSL scenario. Then the CRT samples $\x\supm=\Z\widehat\gamma_{xz}+\bepsilon_x\supm$ where $\bepsilon_x\supm\sim\Nsc(\bzero,\|\x-\Z\widehat\gamma_{xz}\|_2^2/n)$ and extract the $p$-value as:
\[
2\Phi\left(-\frac{\bepsilon_x\trans\bepsilon_y+\gamma\|\bepsilon_x\|_2^2+\bepsilon_x\trans\Z(\gamma_{yz}^*-\widehat\gamma_{yz})+(\gamma_{xz}^*-\widehat\gamma_{xz})\trans\Z\trans\Z(\gamma_{yz}^*-\widehat\gamma_{yz})}{\|\y-\Z\widehat\gamma_{yz}\|_2^2}\right),
\]
where $\Phi(\cdot)$ represents the cumulative density of $\Nsc(0,1)$. Note that the terms $\bepsilon_x\trans\bepsilon_y/\|\y-\Z\widehat\gamma_{yz}\|_2^2$ and $\bepsilon_x\trans\Z(\gamma_{yz}^*-\widehat\gamma_{yz})/\|\y-\Z\widehat\gamma_{yz}\|_2^2$ center around zero due to the orthogonality between $\bepsilon_x$ and $\{\bepsilon_y,\Z\}$ and have very similar variances between the model-X and the Maxway CRT. This is verified by the histograms in Figure \ref{fig:hist}, in which the test statistics extracted by the two methods empirically show very similar variance. Thus, the power discrepancy between the two methods is mainly driven by the mean shifting terms ``$\gamma\|\bepsilon_x\|_2^2/\|\y-\Z\widehat\gamma_{yz}\|_2^2$" and ``$(\gamma_{xz}^*-\widehat\gamma_{xz})\trans\Z\trans\Z(\gamma_{yz}^*-\widehat\gamma_{yz})/\|\y-\Z\widehat\gamma_{yz}\|_2^2$".

Recall that the estimator $\widehat\gamma_{yz}$ remains to be the same between the model-X and Maxway procedures. So there is no difference between the two methods in terms of $\gamma\|\bepsilon_x\|_2^2$ and $\|\y-\Z\widehat\gamma_{yz}\|_2^2$. Therefore, we only need to compare the term $(\gamma_{xz}^*-\widehat\gamma_{xz})\trans\Z\trans\Z(\gamma_{yz}^*-\widehat\gamma_{yz})/n$ (the model-X) with $(\gamma_{xz}^*-\widehat\gamma_{xz}-\widehat\gamma_{rz})\trans\Z\trans\Z(\gamma_{yz}^*-\widehat\gamma_{yz})/n$ (the Maxway). Under model (\ref{equ:data:gen}), $\gamma_{xz}^*=0.3(\mathbf{1}_5\trans,\mathbf{0}_{p-5}\trans)\trans$ and $\gamma_{yz}^*=0.3(1+\gamma)(\mathbf{1}_5\trans,\mathbf{0}_{p-5}\trans)\trans$ where $|\gamma|<1$. It is known that the lasso estimators $\widehat\gamma_{xz}$ and $\widehat\gamma_{yz}$ tend to shrink to $0$ compared with $\gamma_{xz}^*$ and $\gamma_{yz}^*$ due to regularization. Consistent with this, we found in our simulation that the first five entries in $\gamma_{xz}^*-\widehat\gamma_{xz}$ and $\gamma_{yz}^*-\widehat\gamma_{yz}$ lie between $[0,0.3)$ and $[0,0.3(1+\gamma))$ respectively at most times and the remaining entries are $0$ or very close to $0$. This leads to a non-negligible and positive $(\gamma_{xz}^*-\widehat\gamma_{xz})\trans\Z\trans\Z(\gamma_{yz}^*-\widehat\gamma_{yz})/n$. Compared to this, in the Maxway CRT, the corresponding term ``$(\gamma_{xz}^*-\widehat\gamma_{xz}-\widehat\gamma_{rz})\trans\Z\trans\Z(\gamma_{yz}^*-\widehat\gamma_{yz})/n$" tends to be relatively small and centered around zero because our method picks (some of) the leading confounding covariates in $g(\Z)=(\bZ\widehat{\gamma}_{yz},\bZ_{\bullet,\mathrm{top}(k)})$ and use low dimensional regression against $g(\Z)$ to adjust for the bias caused by shrinkage.

Consequently, one could find in Figure \ref{fig:hist} that under the null model with $\gamma\|\bepsilon_x\|_2^2=0$, the test statistics of the model-X CRT still show a positive mean shifting while the Maxway test statistics center closely to zero. This partially explains why our method achieves better type-I error control under the null model. More importantly, under the alternative model with $\gamma\neq0$, this positive bias will make the effect spuriously larger when $\gamma>0$ (i.e., the same sign with the mean shifting) and smaller when $\gamma<0$ (i.e., the opposite sign). This explains the phenomenon that compared to our method, the model-X CRT has a larger power for positive $\gamma$, a smaller power for negative $\gamma$, and a similar overall power. From this discussion, one could conclude that in terms of power, the Maxway CRT is also more preferable to the model-X since it is more symmetric and balanced between positive and negative effects. 

\begin{figure}[htb!]
    \centering
    \includegraphics[width=0.47\textwidth]{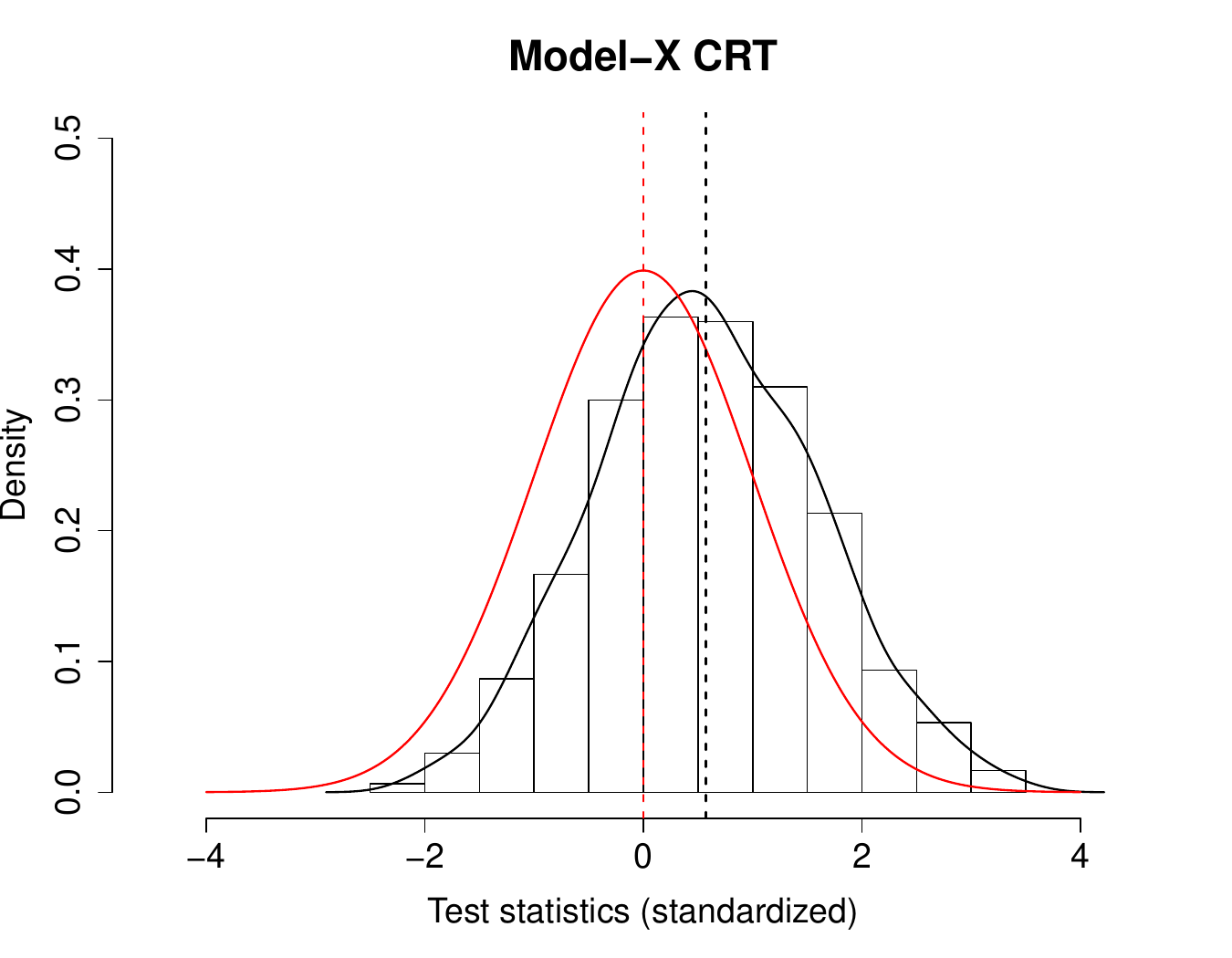}
    \includegraphics[width=0.47\textwidth]{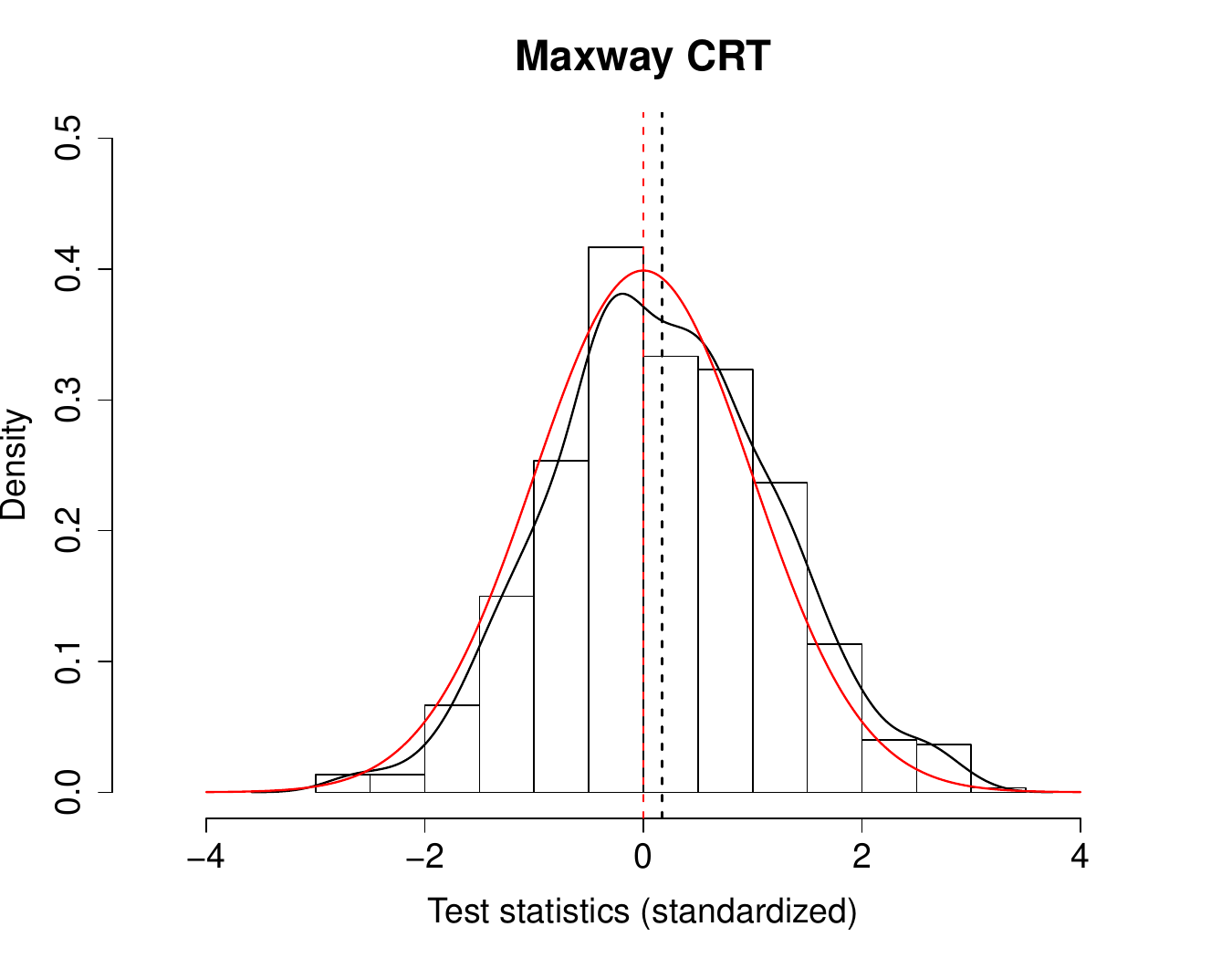}
   \caption{Histograms of the model-X and the Maxway's d$_0$ test statistics under Configuration (SS.I) with the data generation parameters $\eta=0$ and $\gamma=0$. Red curves represent the density function of the standard normal distribution. The dot lines indicate the mean values of the plotted distributions. The replication number is $1000$.} 
    \label{fig:hist}
\end{figure}

For further demonstration, we also conduct an additional simulation study with exactly the same setup of data generation and methods implementation. The only difference is that the test statistics are instead taken as $|T\subcrt-\mbox{mean}(T\subcrt)|$ and $|T\submax-\mbox{mean}(T\submax)|$ where $\mbox{mean}(T\subcrt)$ and $\mbox{mean}(T\submax)$ represent the mean of the observed test statistics under the null model estimated through simulation. In this way, the mean shifting of $\mbox{mean}(T\subcrt)$ and $\mbox{mean}(T\submax)$ discussed above can be effectively removed. The resulting average power of the model-X and Maxway CRT are plotted in Figure \ref{fig:bias:adj}. One can see that after removing the mean shifting term incurred by lasso shrinkage, the two methods turn out to have very close power. This demonstrates that overall, our proposed method has no power loss compared to the model-X CRT.

\begin{figure}[htb!]
    \centering
    \includegraphics[width=0.65\textwidth]{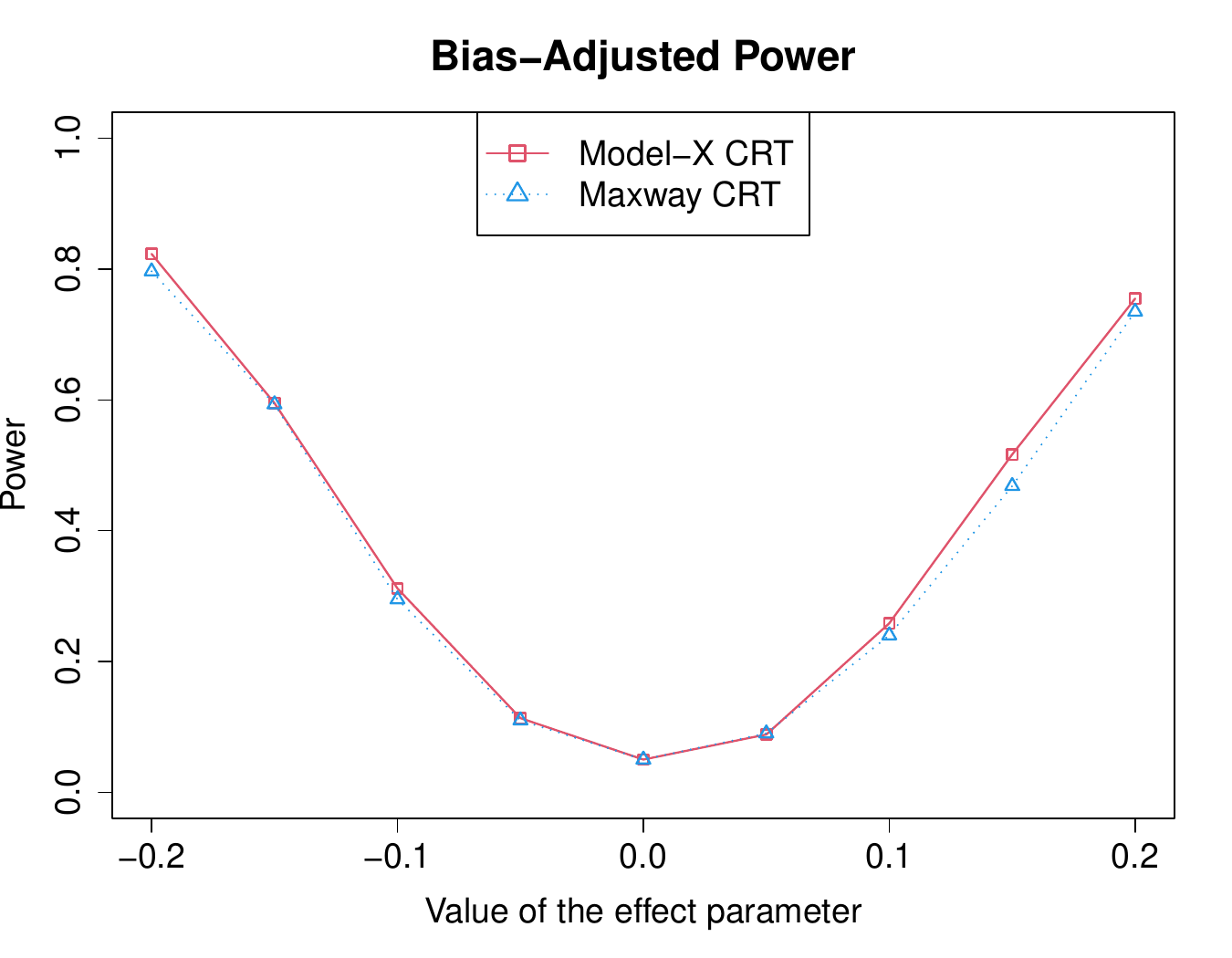}
    \caption{Average power of the mean-shifting-adjusted d$_0$ statistic under Configuration (SS.I) with $\eta=0$ and the effect $\gamma$ varying from $-0.2$ to $0.2$. The replication number is $1000$.}
    \label{fig:bias:adj}
\end{figure}

}

\newpage

\section{Additional numerical results}\label{sec:app:num}

\begin{figure}[htbp!]
    \centering
    \includegraphics[width=0.34\textwidth]{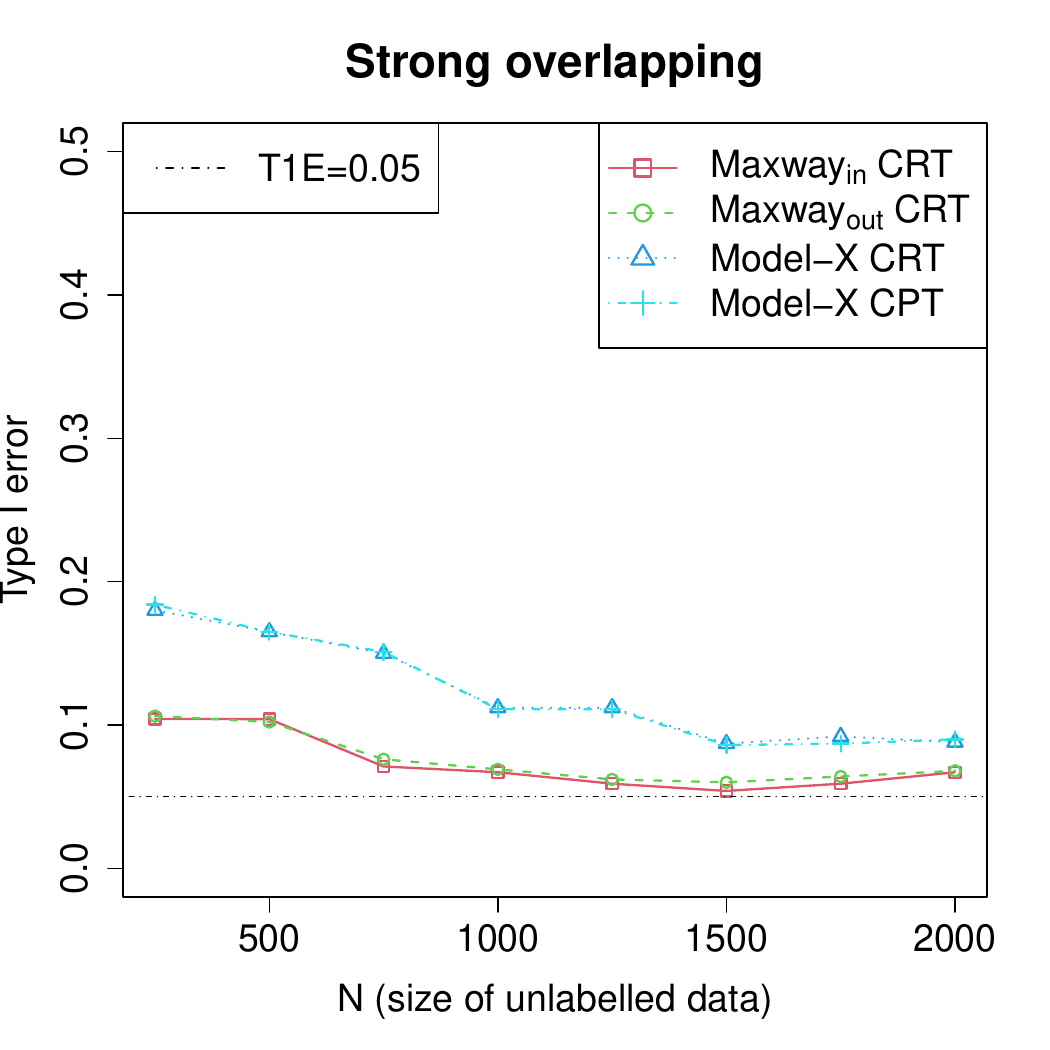}
    \includegraphics[width=0.34\textwidth]{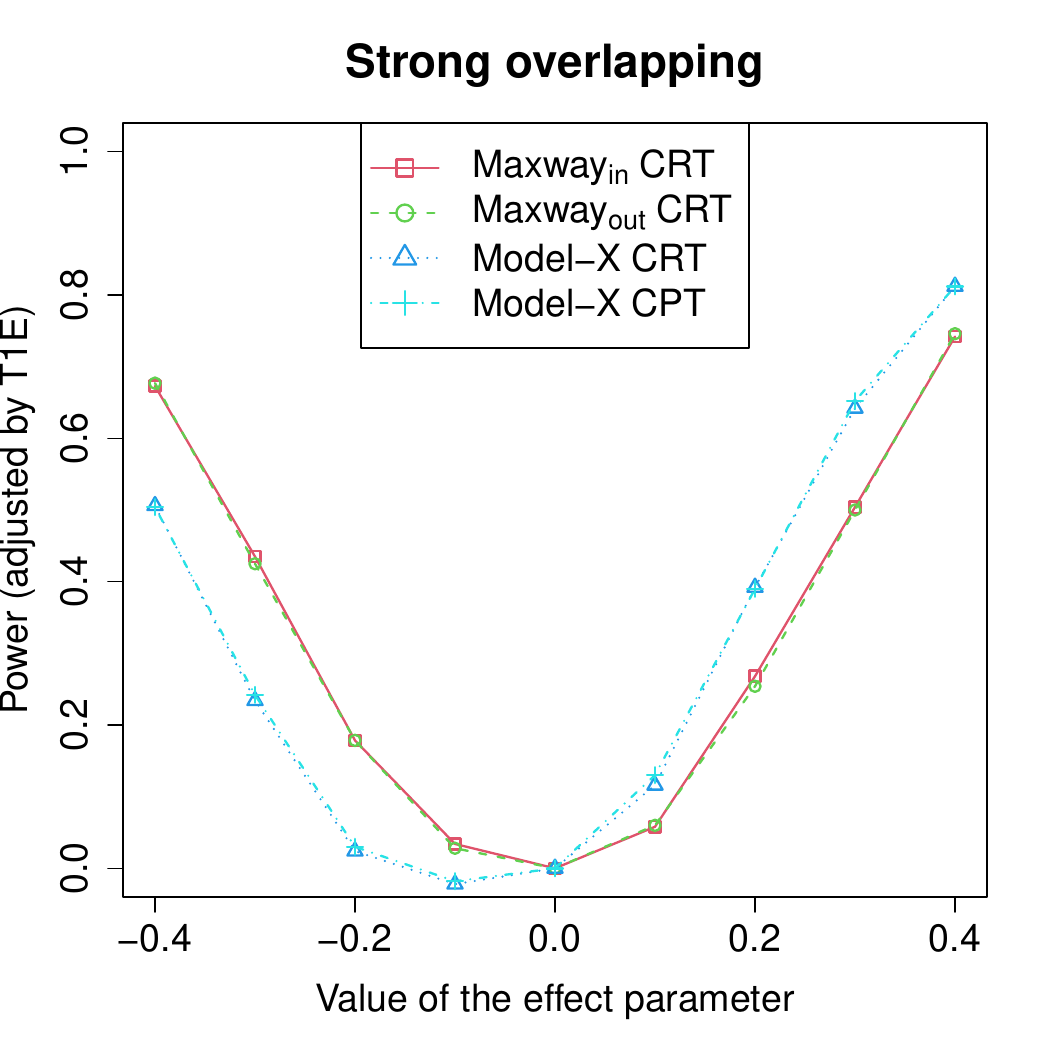}
    \includegraphics[width=0.34\textwidth]{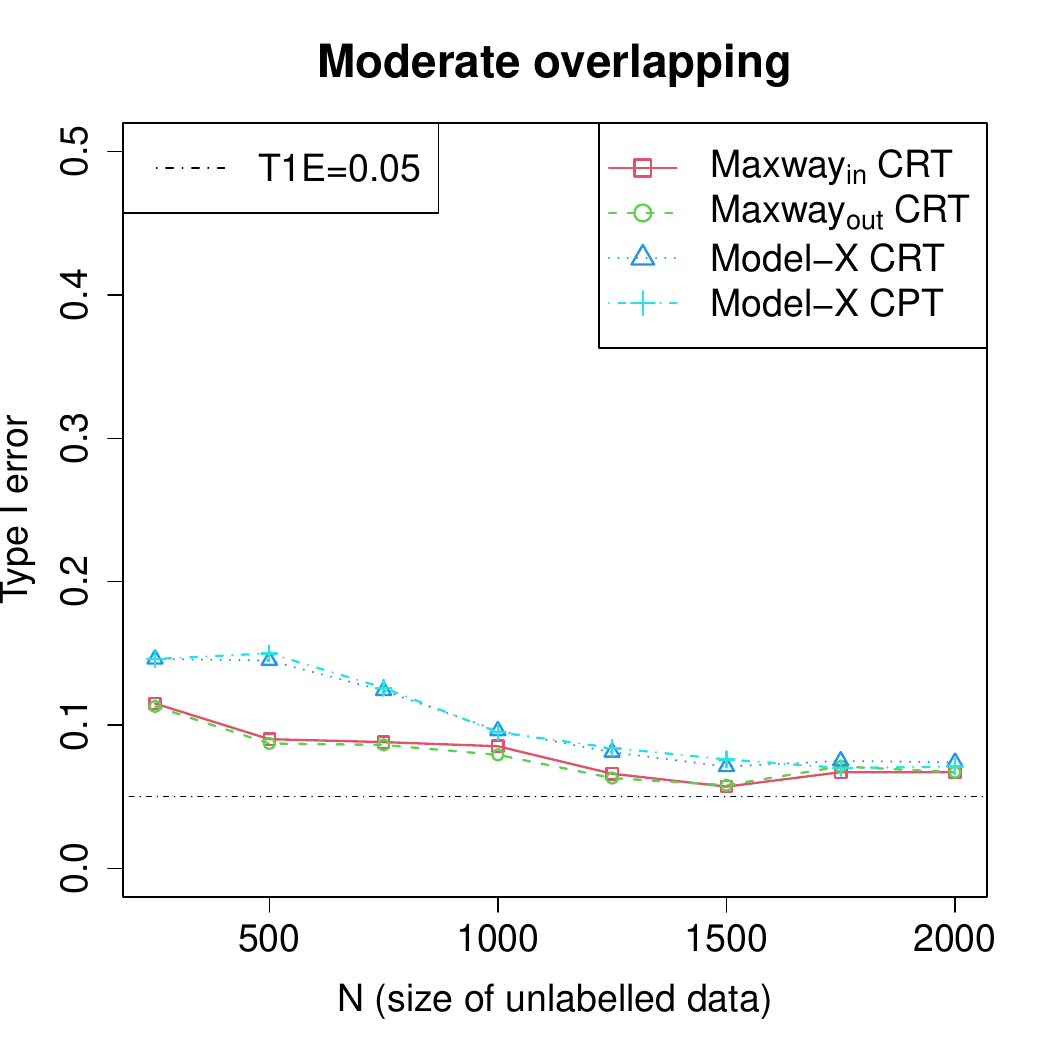}
    \includegraphics[width=0.34\textwidth]{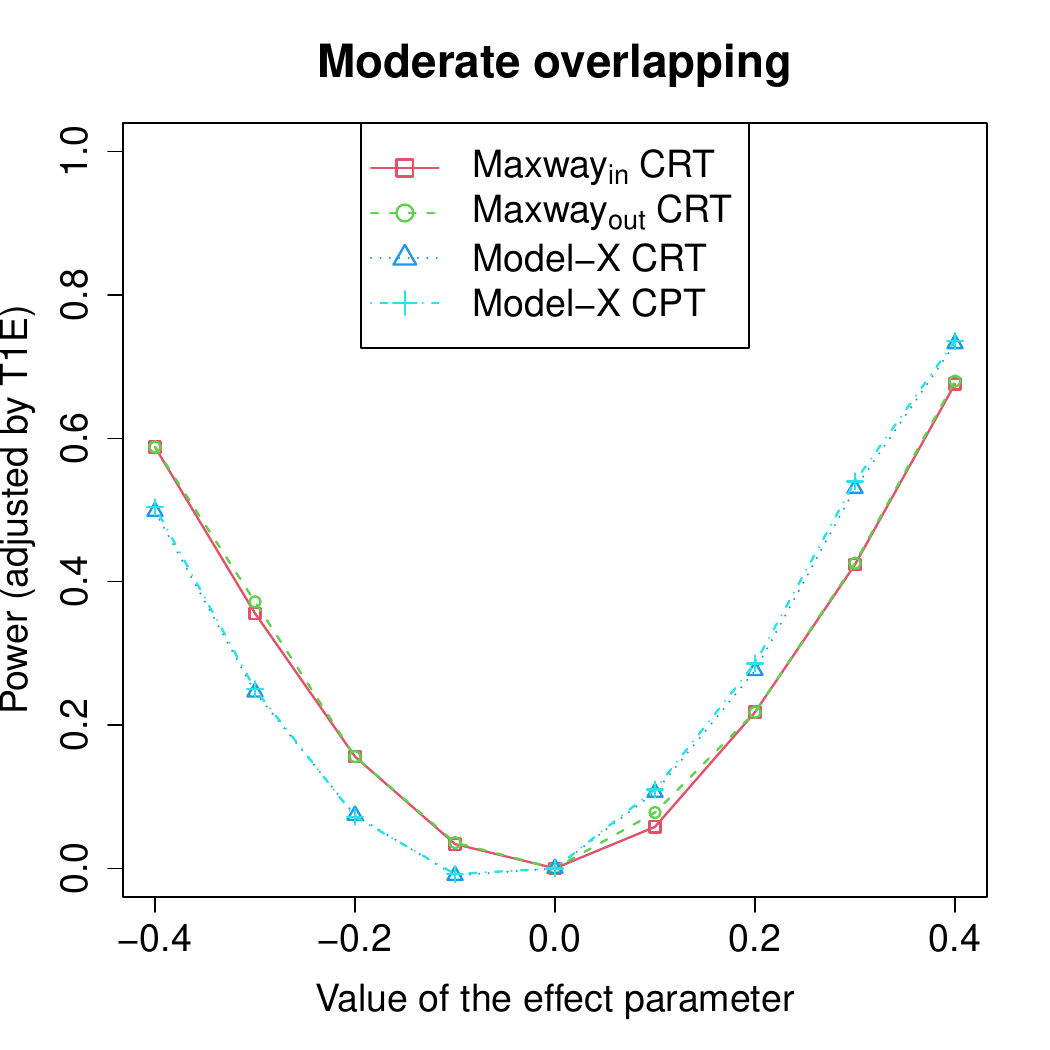}
    \includegraphics[width=0.34\textwidth]{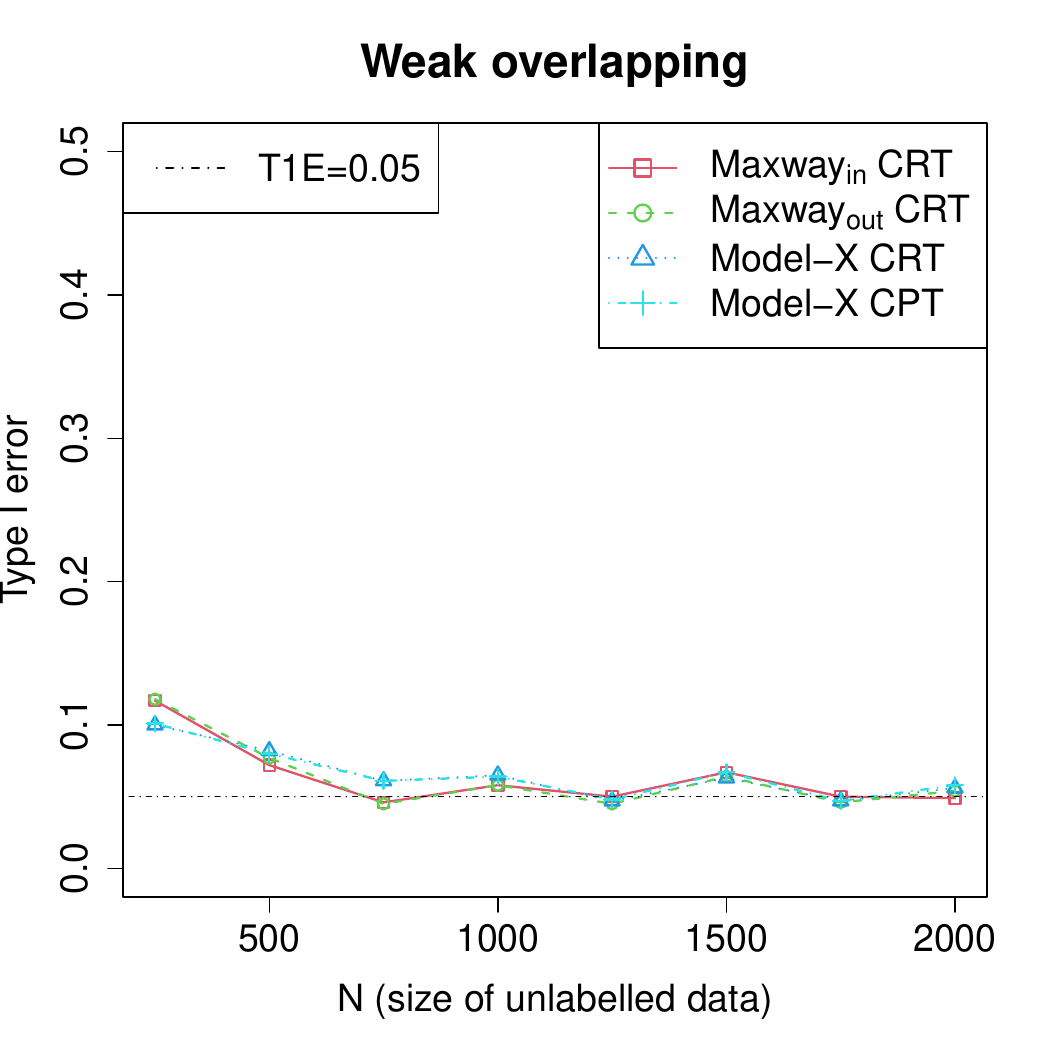}
    \includegraphics[width=0.34\textwidth]{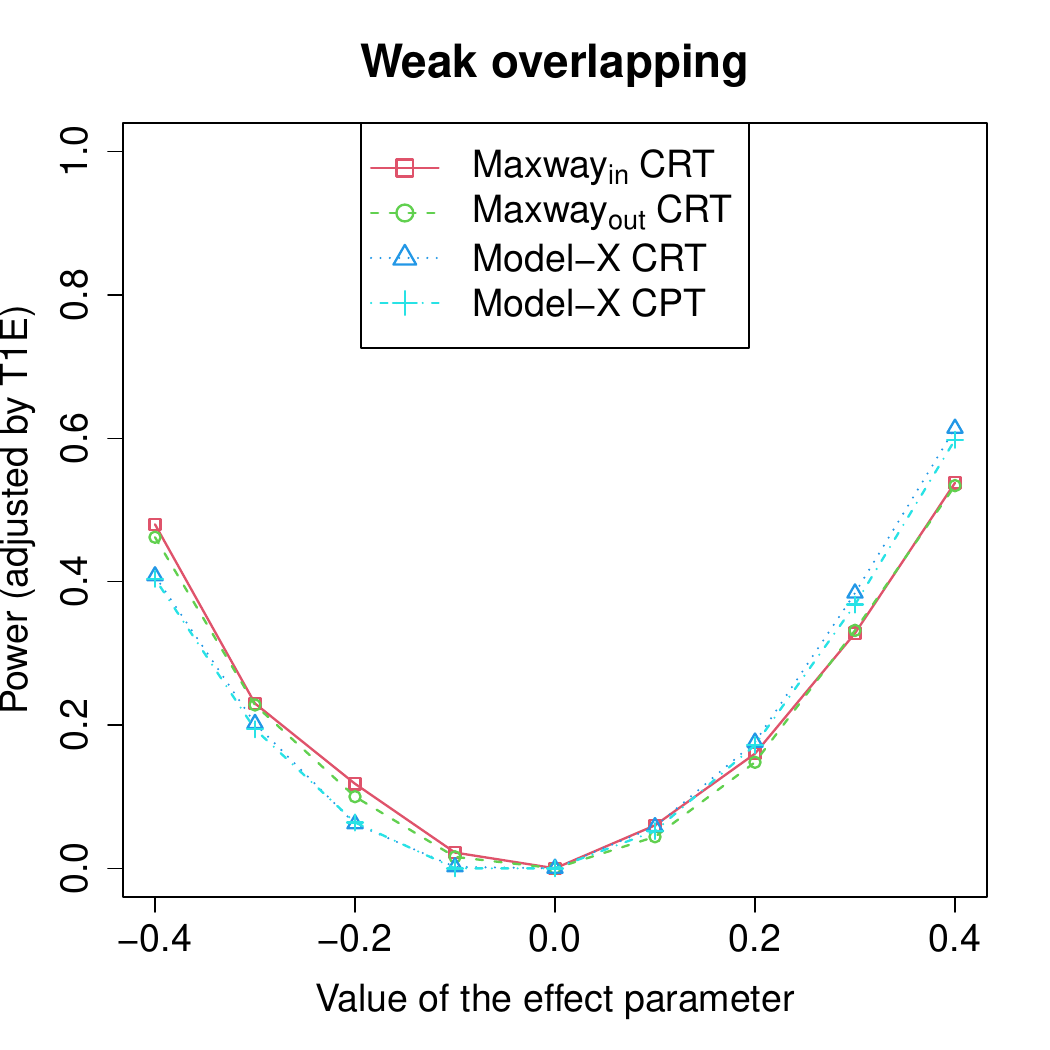}
    \caption{Type-I error and average power (adjusted by type-I error) under the three overlapping scenarios (i.e. $\eta=0,0.1,0.2$) of Configuration (SS.II) {\bf logistic linear $X\mid Z$ and gaussian linear $Y\mid Z$} with $h(X,Z)=X$ and the d$_0$ statistic used for testing, as introduced in Section \ref{sec:sim:ss}. The replication number is $500$ and all standard errors are below $0.01$.} 
    \label{fig:binary}
\end{figure}

\begin{figure}[htb!]
    \centering
    \includegraphics[width=0.41\textwidth]{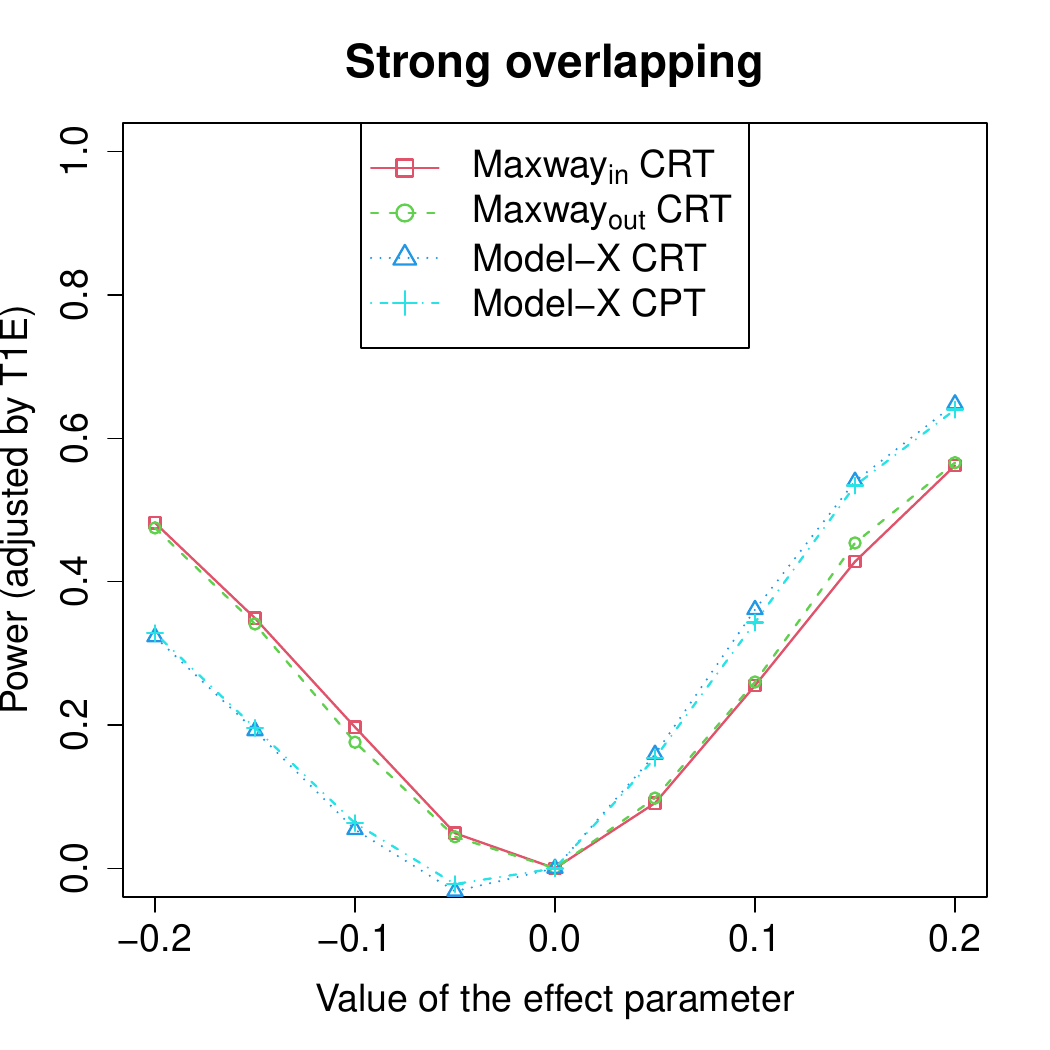}
    \includegraphics[width=0.41\textwidth]{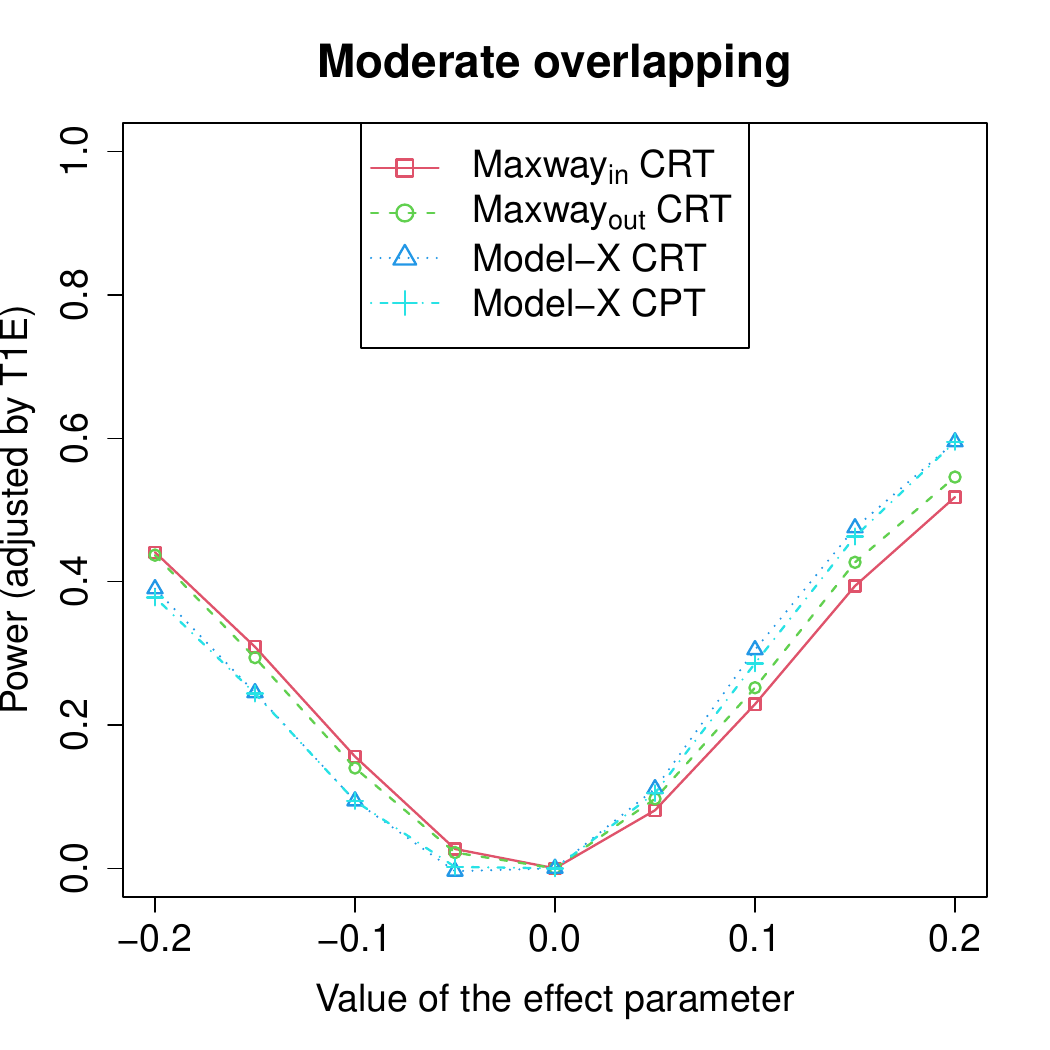}
    \includegraphics[width=0.41\textwidth]{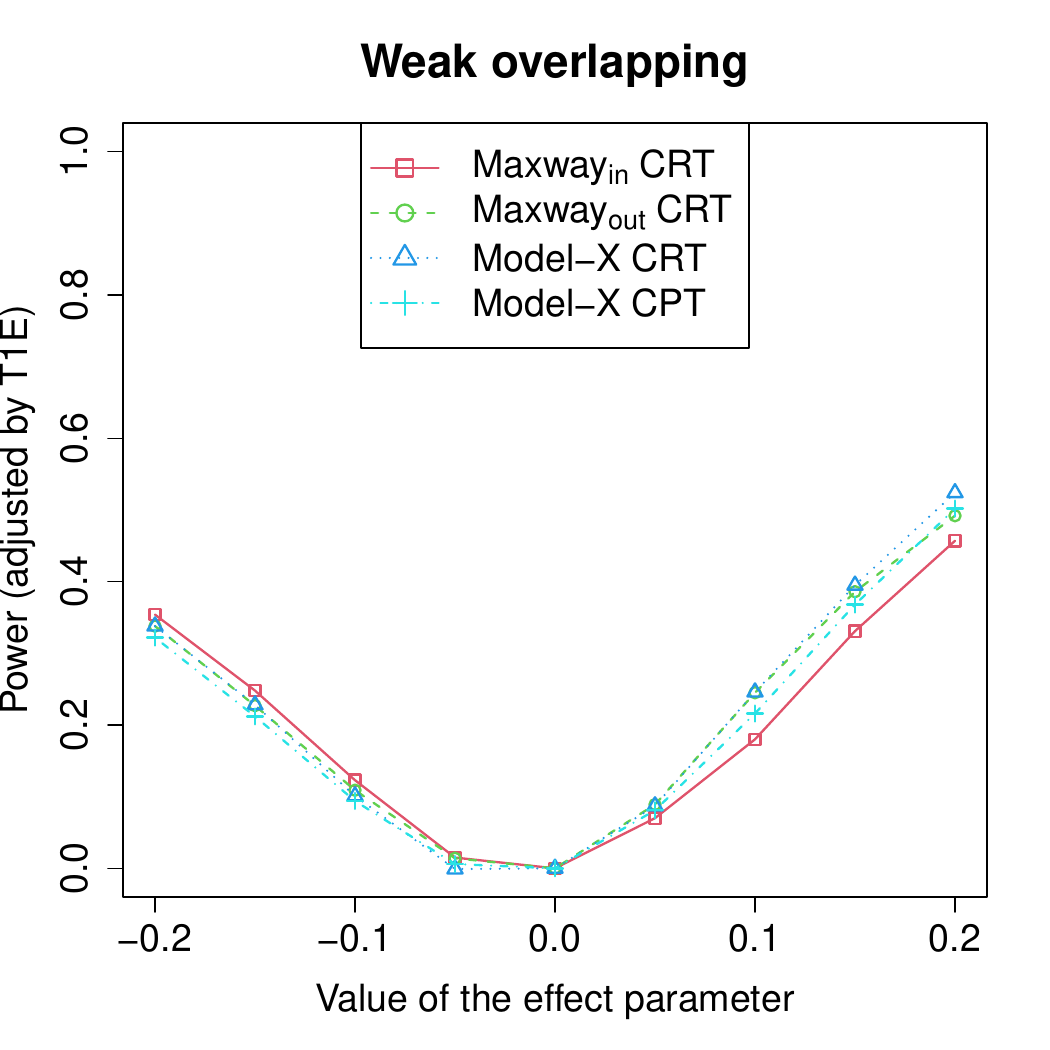}
    \caption{Average power (adjusted by type-I error) under the three overlapping scenarios (i.e. $\eta=0,0.1,0.2$) of Configuration (SS.I) with the effect of $X$ containing interaction: $h(X,\Z)=X+X\sum_{j=1}^5Z_j$ and the d$_0$ statistic used for testing, as introduced in Section \ref{sec:sim}. The type-I error has been presented in the left panel of Figure \ref{fig:gauss}. The replication number is $500$ and all standard errors are below $0.01$.} 
    \label{fig:gauss:add:1}
\end{figure}

\begin{figure}[htbp]
    \centering
    \includegraphics[width=0.41\textwidth]{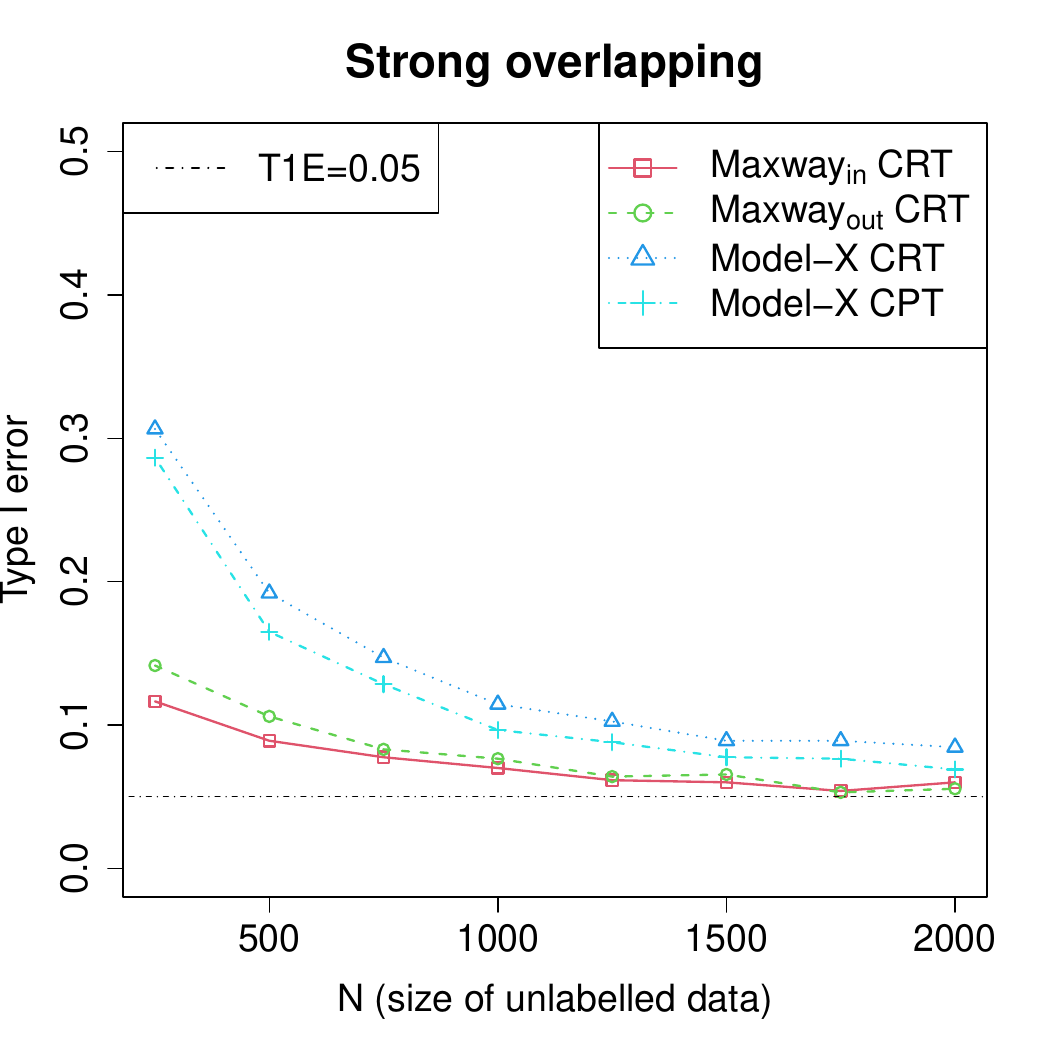}
    \includegraphics[width=0.41\textwidth]{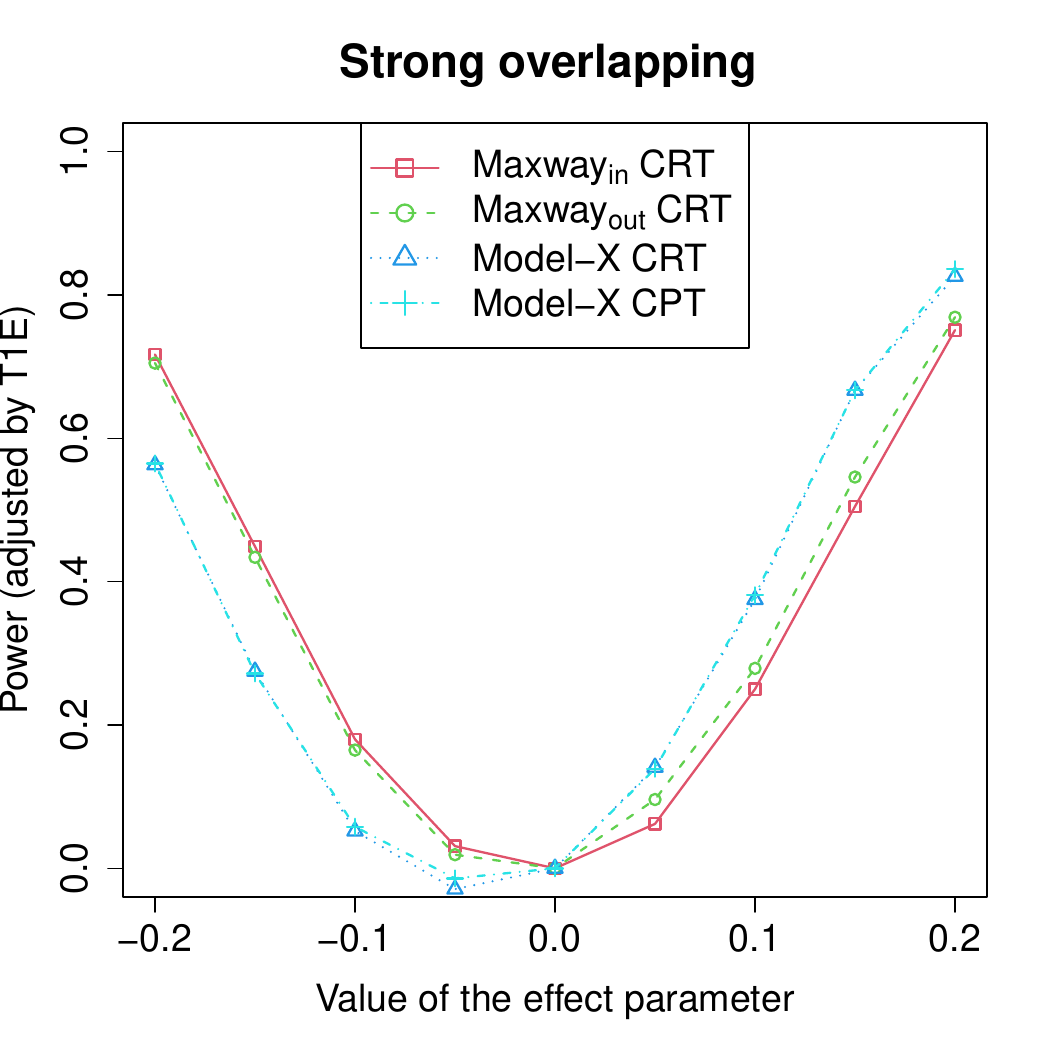}
    \includegraphics[width=0.41\textwidth]{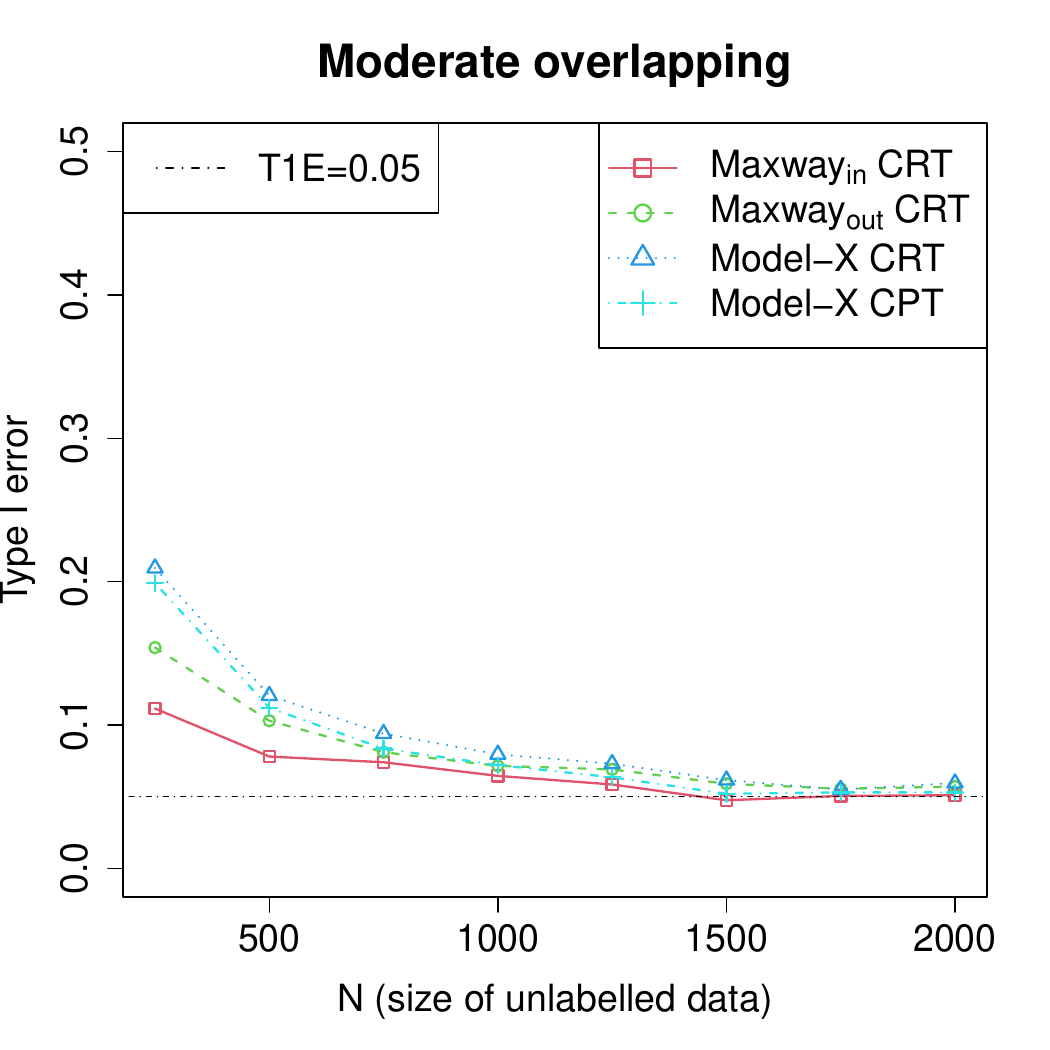}
    \includegraphics[width=0.41\textwidth]{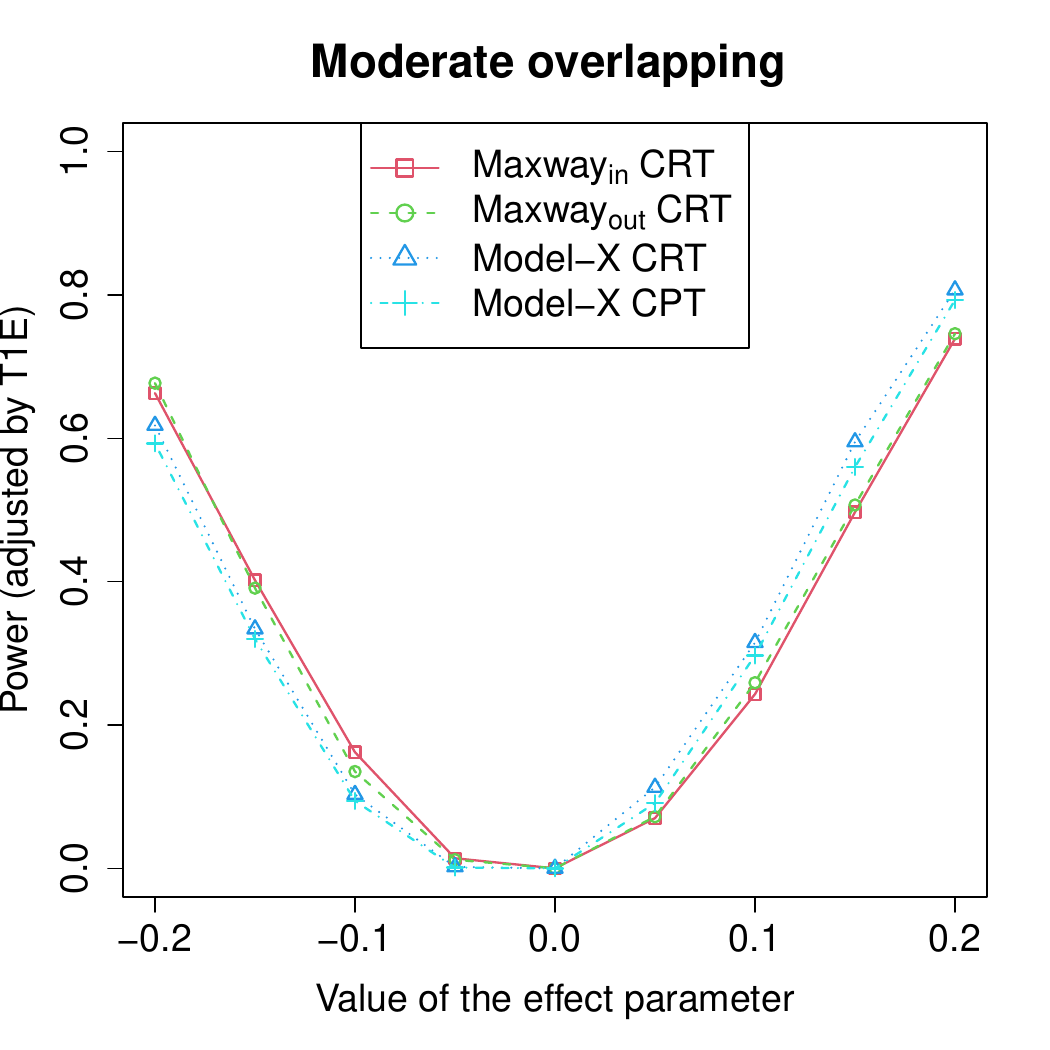}
    \includegraphics[width=0.41\textwidth]{figures/T1E_gaussian_linear_model_magnxy02.pdf}
    \includegraphics[width=0.41\textwidth]{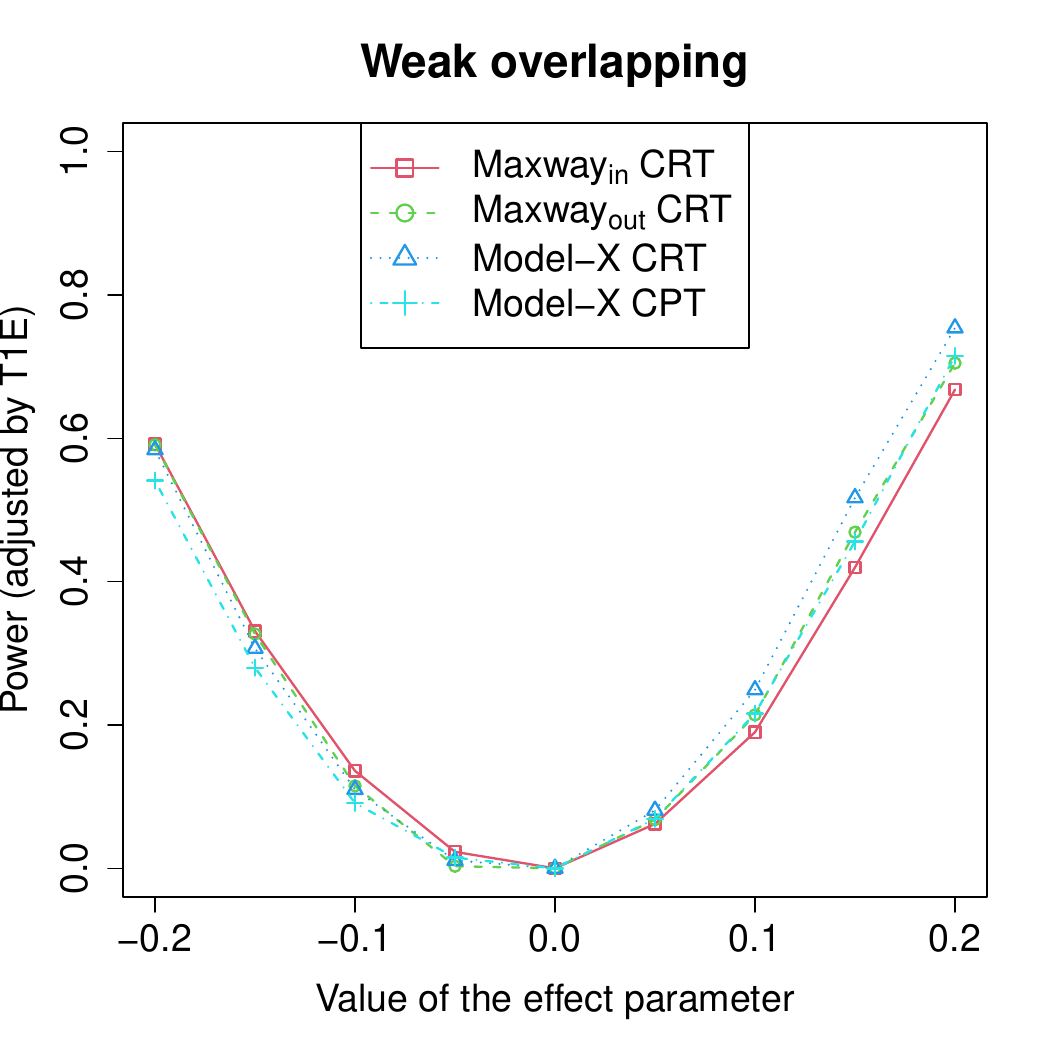}
    \caption{Type-I error and average power (adjusted by type-I error) under the three overlapping scenarios (i.e. $\eta=0,0.1,0.2$) of Configuration (SS.I) with $h(X,Z)=X$ and the d$_{\mathrm{I}}$ statistic used for testing, as introduced in Section \ref{sec:sim}. The replication number is $500$ and all standard errors are below $0.01$.} 
    \label{fig:gauss:add:2}
\end{figure}

\begin{figure}[htbp]
    \centering
    \includegraphics[width=0.41\textwidth]{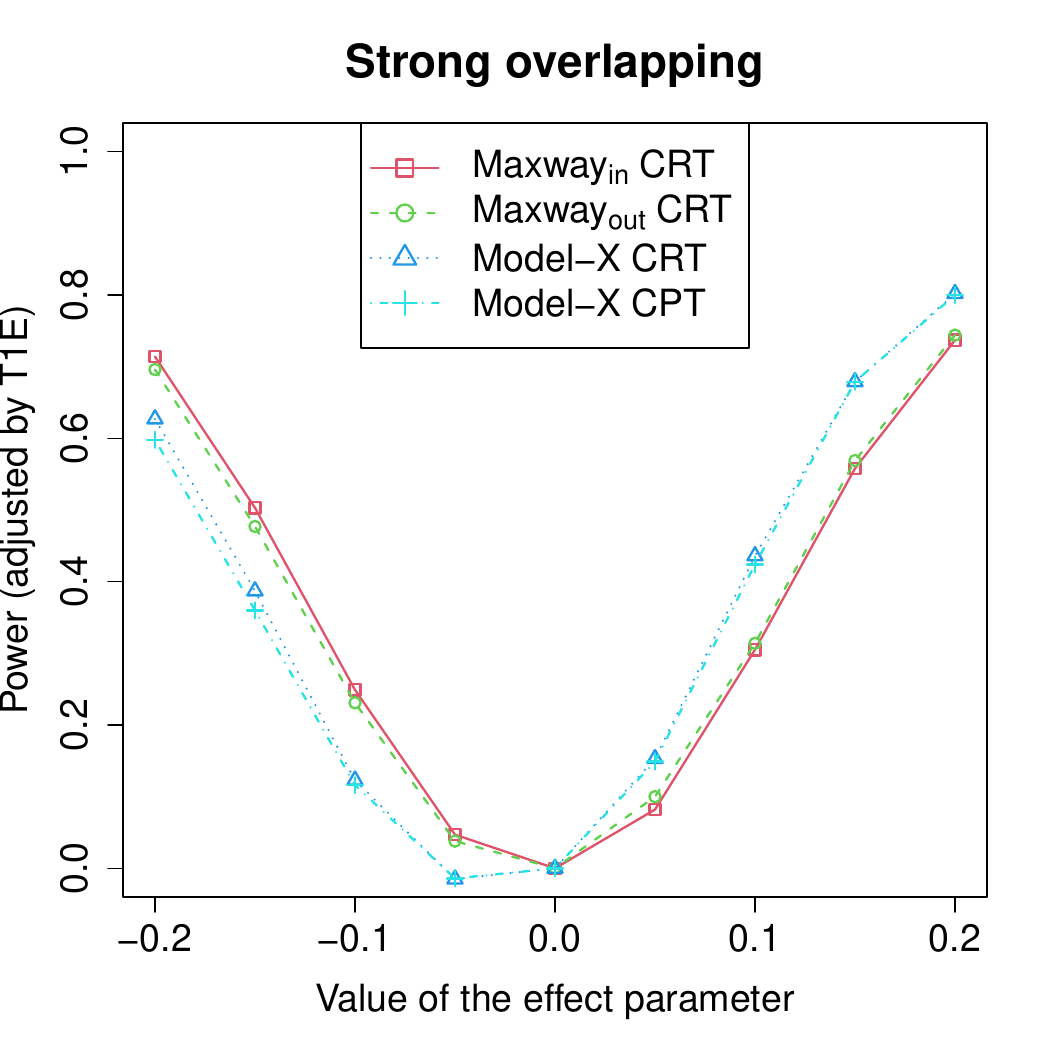}
    \includegraphics[width=0.41\textwidth]{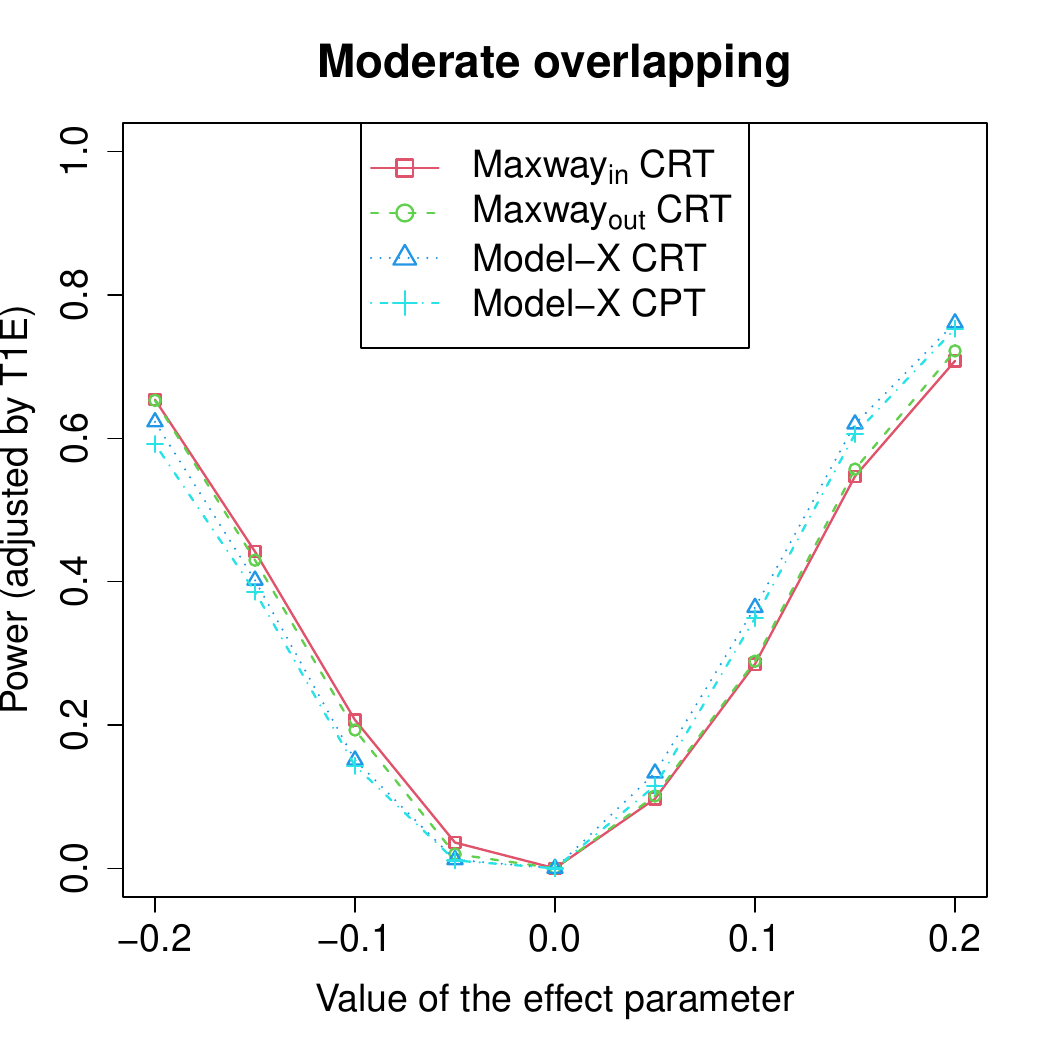}
    \includegraphics[width=0.41\textwidth]{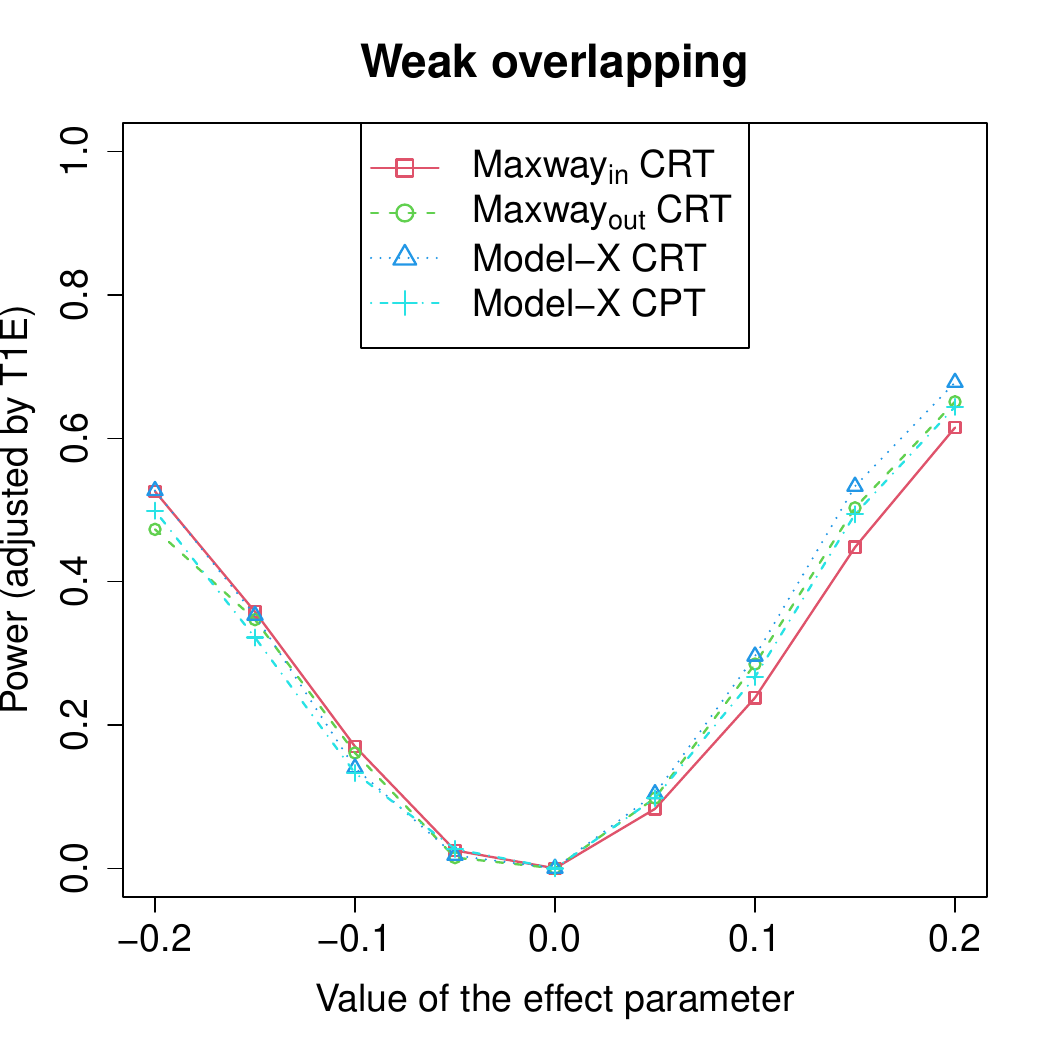}
    \caption{Type-I error and average power (adjusted by type-I error) under the three overlapping scenarios (i.e. $\eta=0,0.1,0.2$) of Configuration (SS.I) with the effect of $X$ containing interaction: $h(X,\Z)=X+X\sum_{j=1}^5Z_j$ and the d$_{\mathrm{I}}$ statistic used for testing, as introduced in Section \ref{sec:sim}. The type-I error has been presented in the left panel of Figure \ref{fig:gauss:add:2}. The replication number is $500$ and all standard errors are below $0.01$.} 
    \label{fig:gauss:add:3}
\end{figure}

\begin{figure}[htbp]
    \centering
    \includegraphics[width=0.41\textwidth]{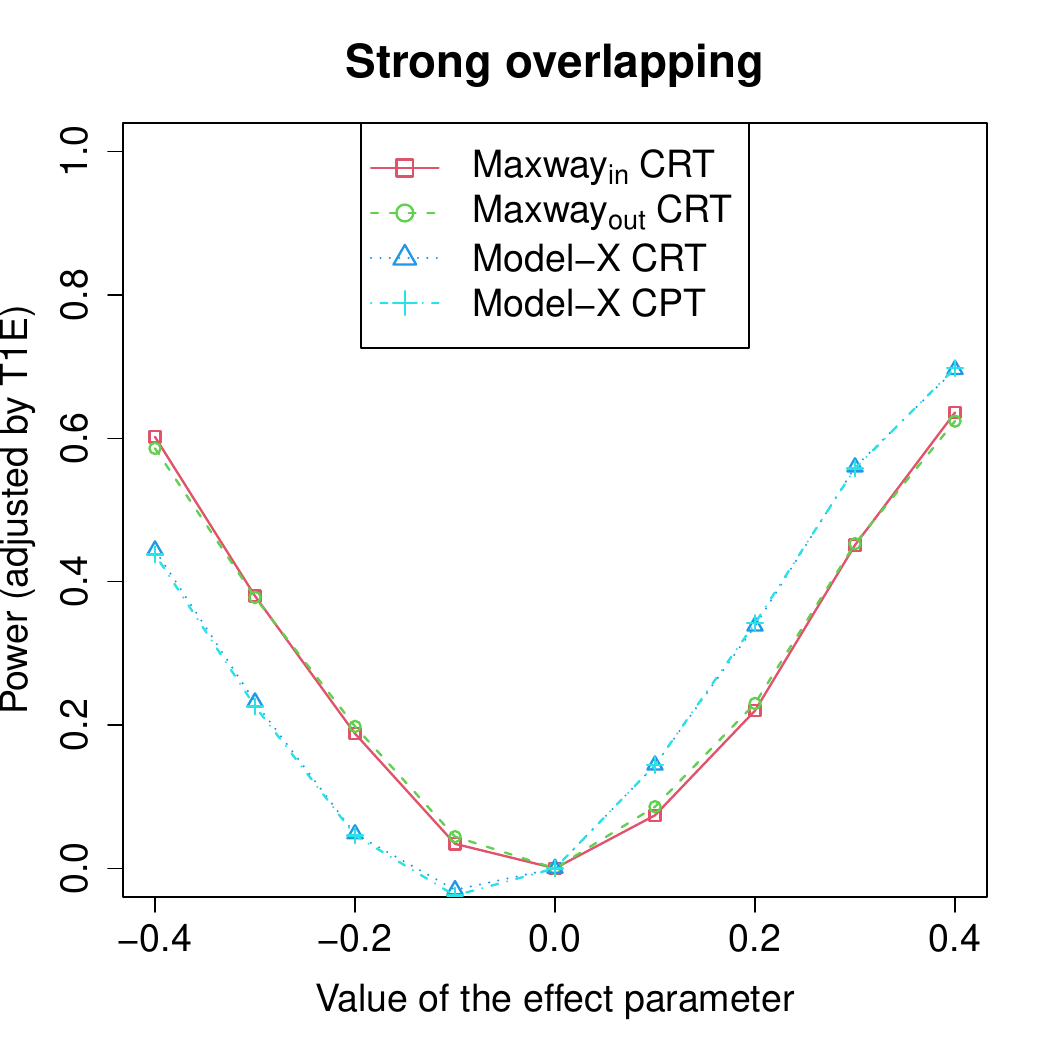}
    \includegraphics[width=0.41\textwidth]{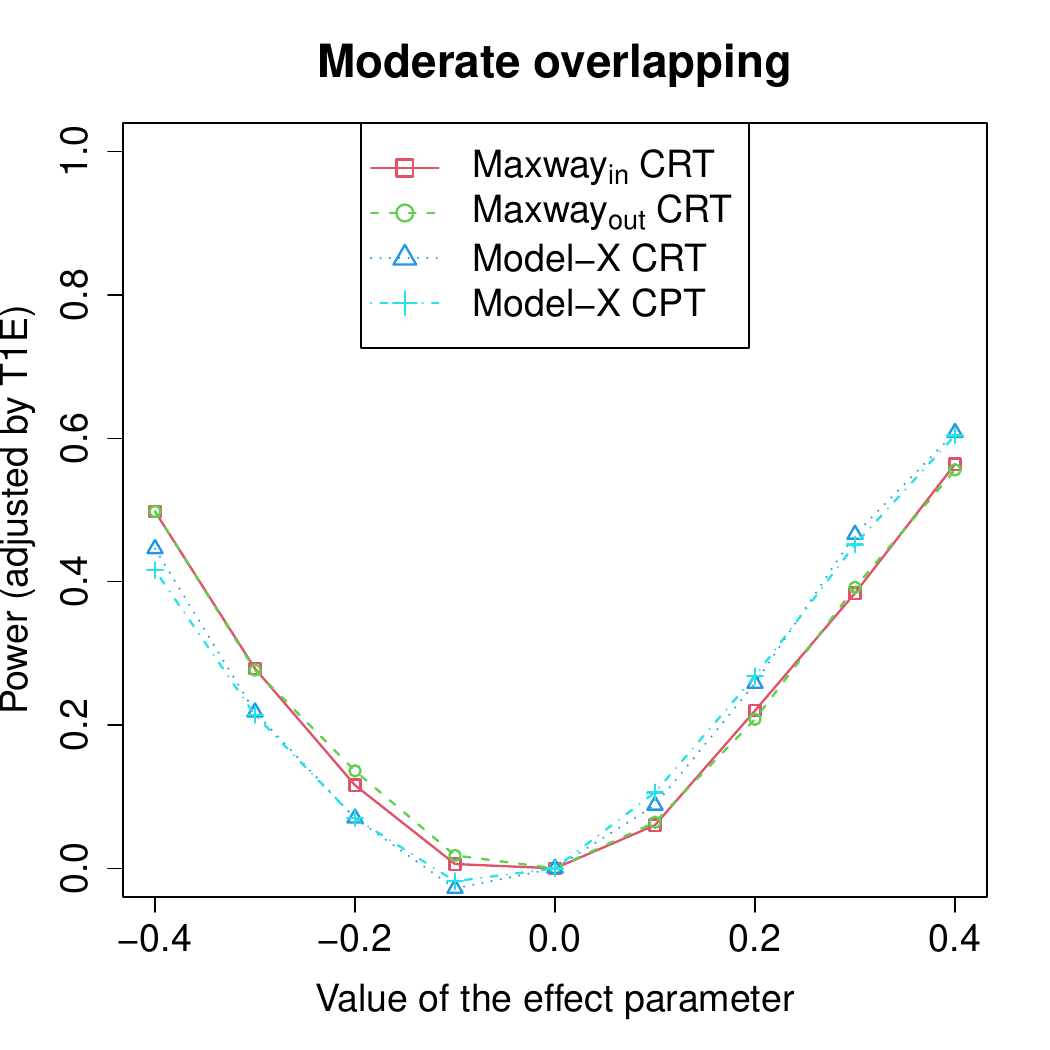}
    \includegraphics[width=0.41\textwidth]{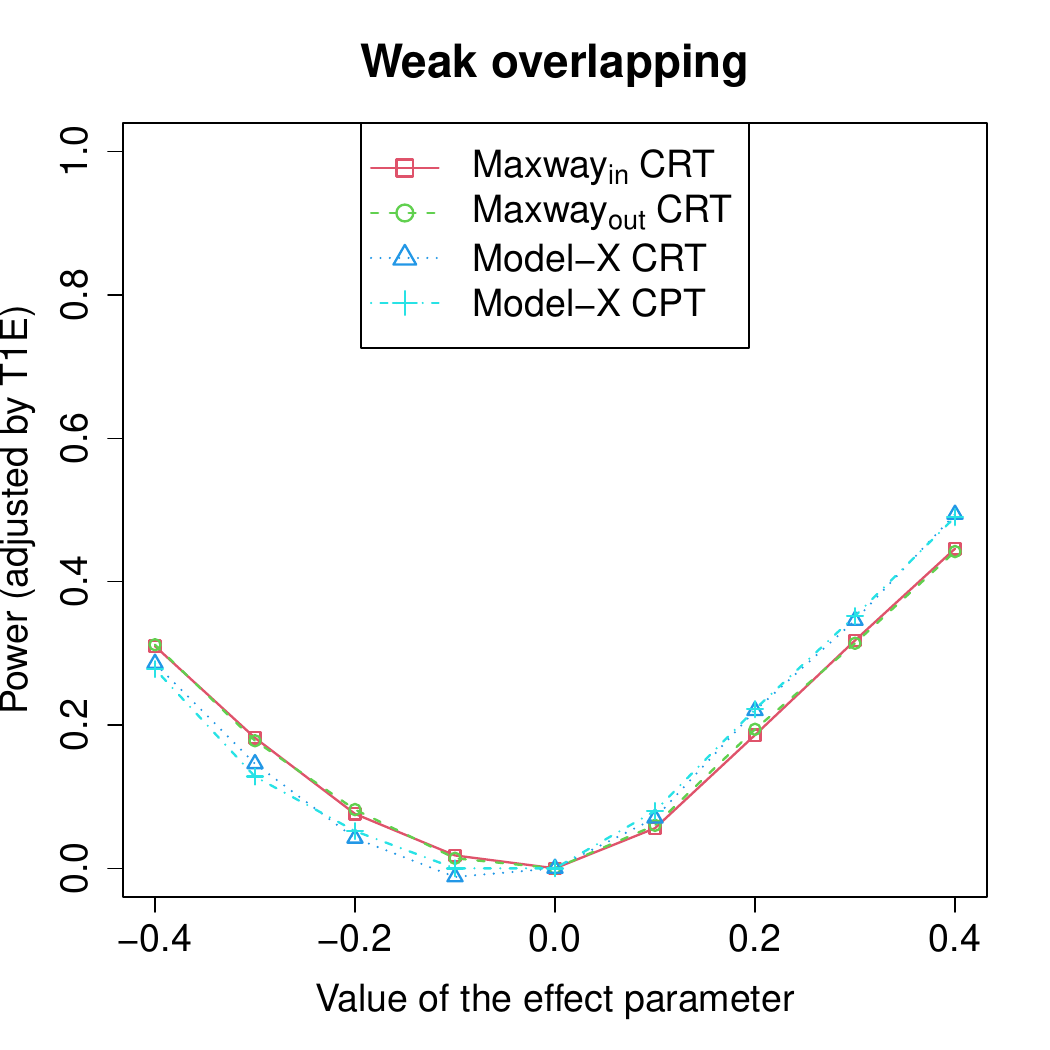}
    \caption{Average power (adjusted by type-I error) under the three overlapping scenarios (i.e. $\eta=0,0.1,0.2$) of Configuration (SS.II) with the effect of $X$ containing interaction: $h(X,\Z)=X+X\sum_{j=1}^5Z_j$ and the d$_0$ statistic used for testing, as introduced in Section \ref{sec:sim}. The type-I error has been presented in the left panel of Figure \ref{fig:binary}. The replication number is $500$ and all standard errors are below $0.01$.}
    \label{fig:binary:add:1}
\end{figure}

\begin{figure}[htbp]
    \centering
    \includegraphics[width=0.41\textwidth]{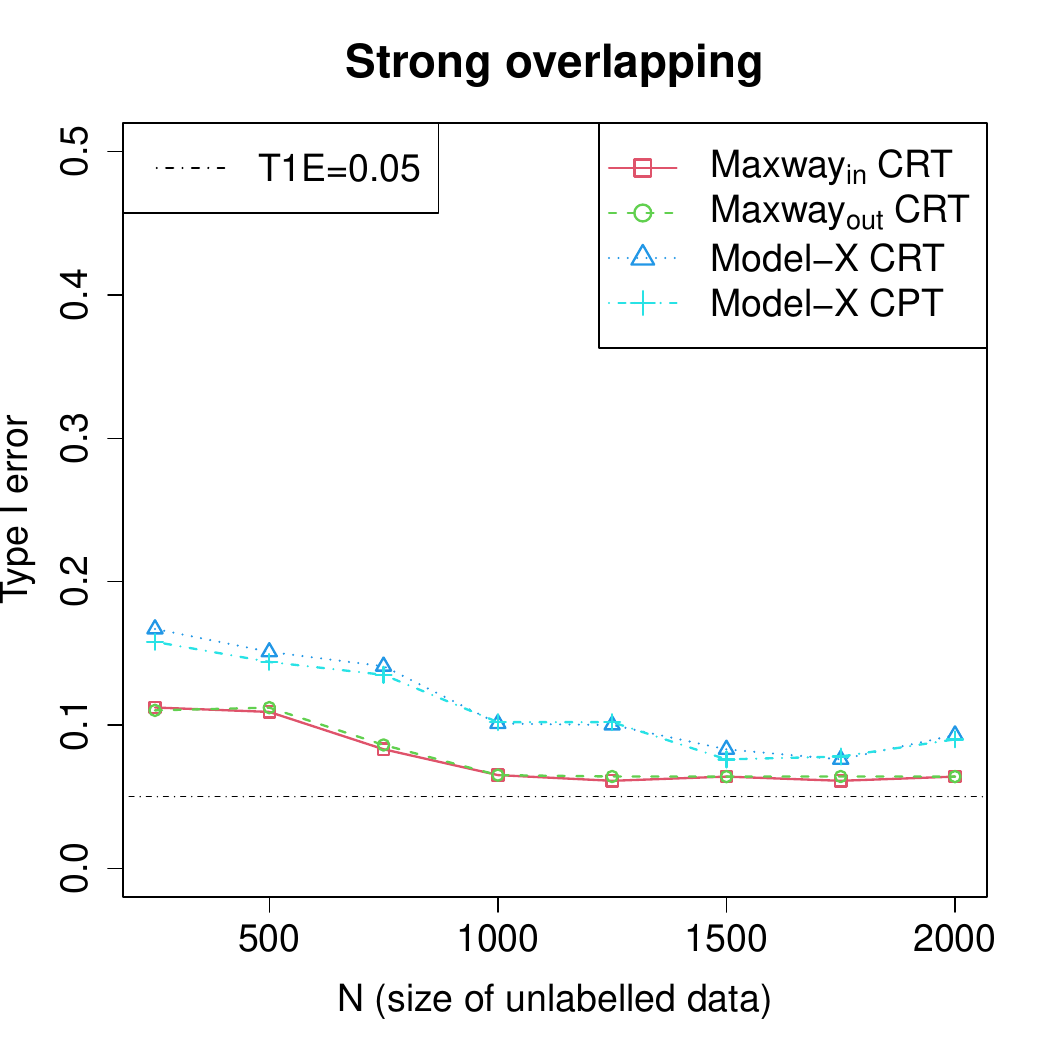}
    \includegraphics[width=0.41\textwidth]{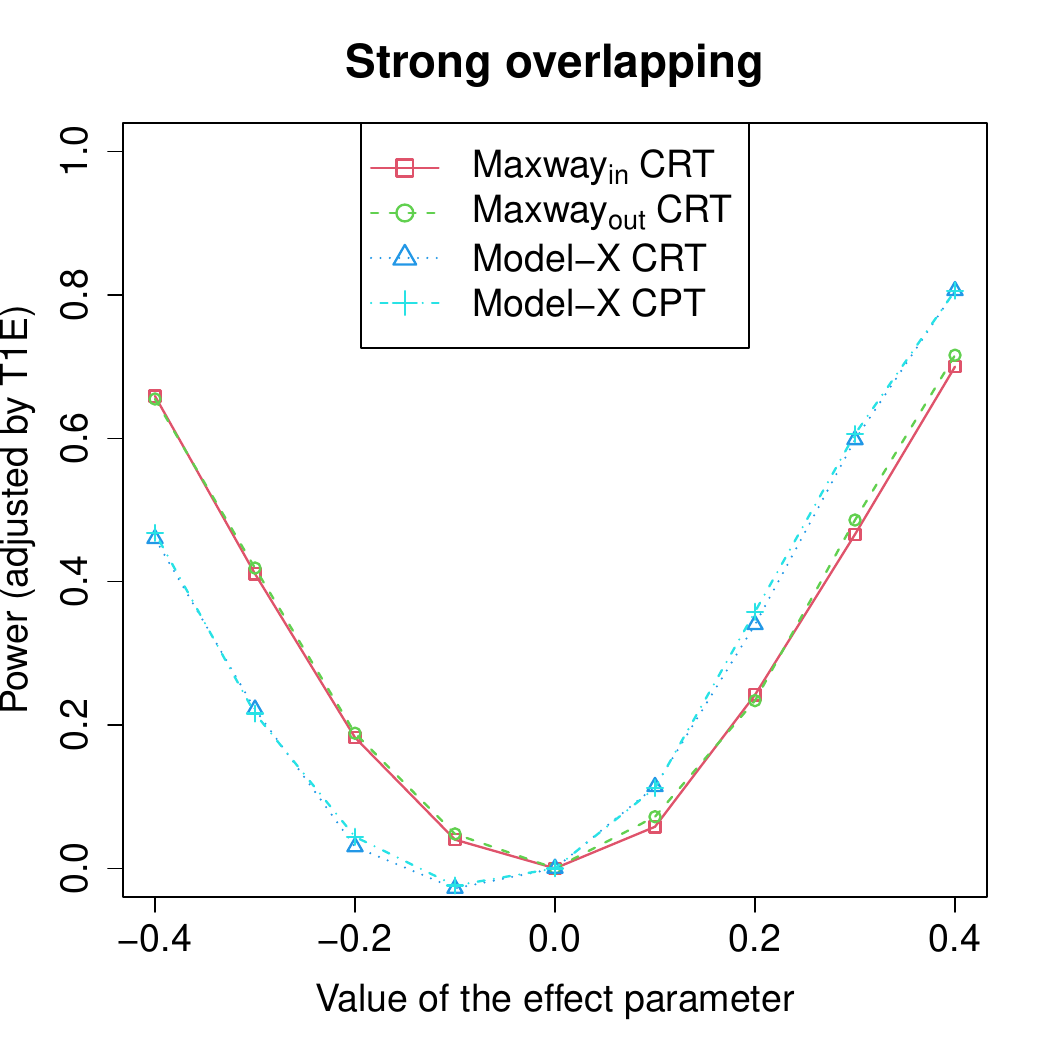}
    \includegraphics[width=0.41\textwidth]{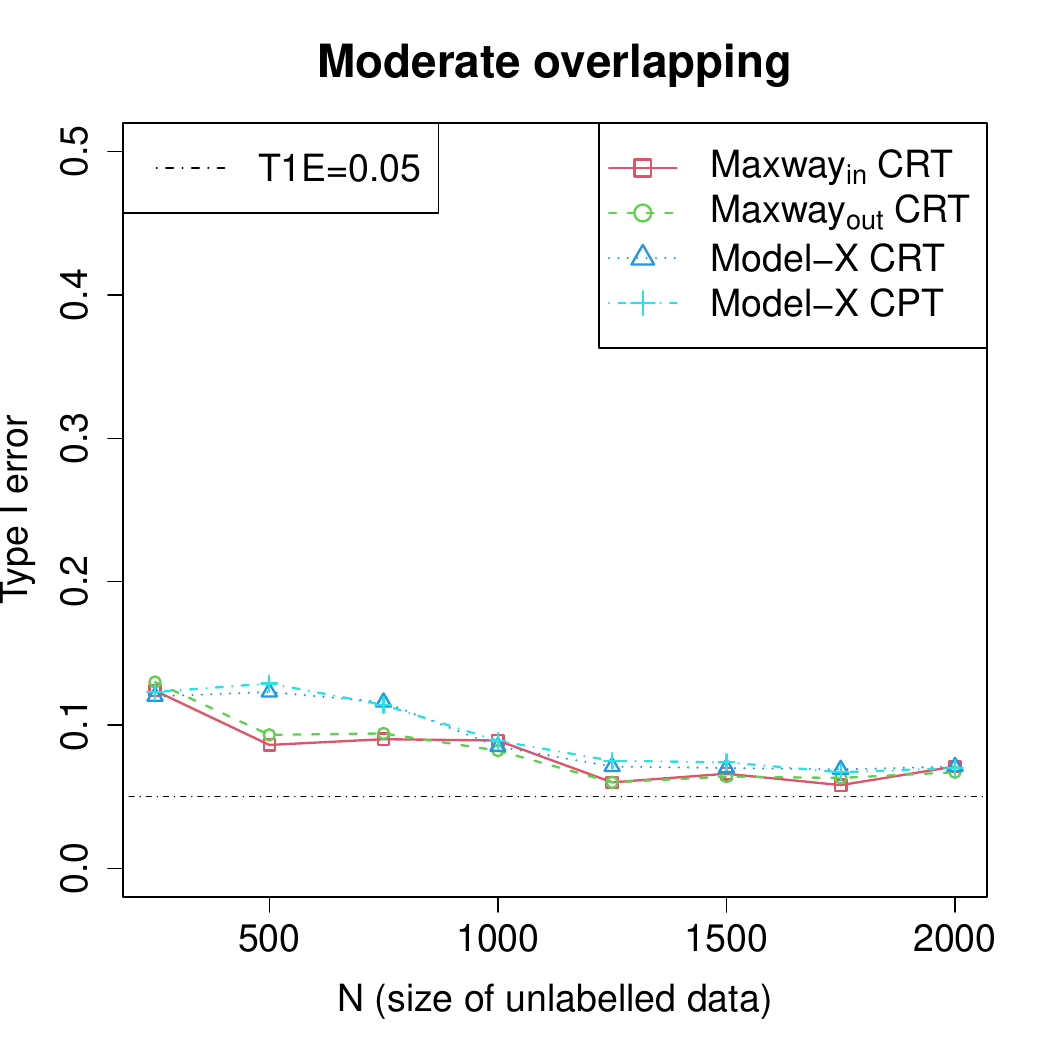}
    \includegraphics[width=0.41\textwidth]{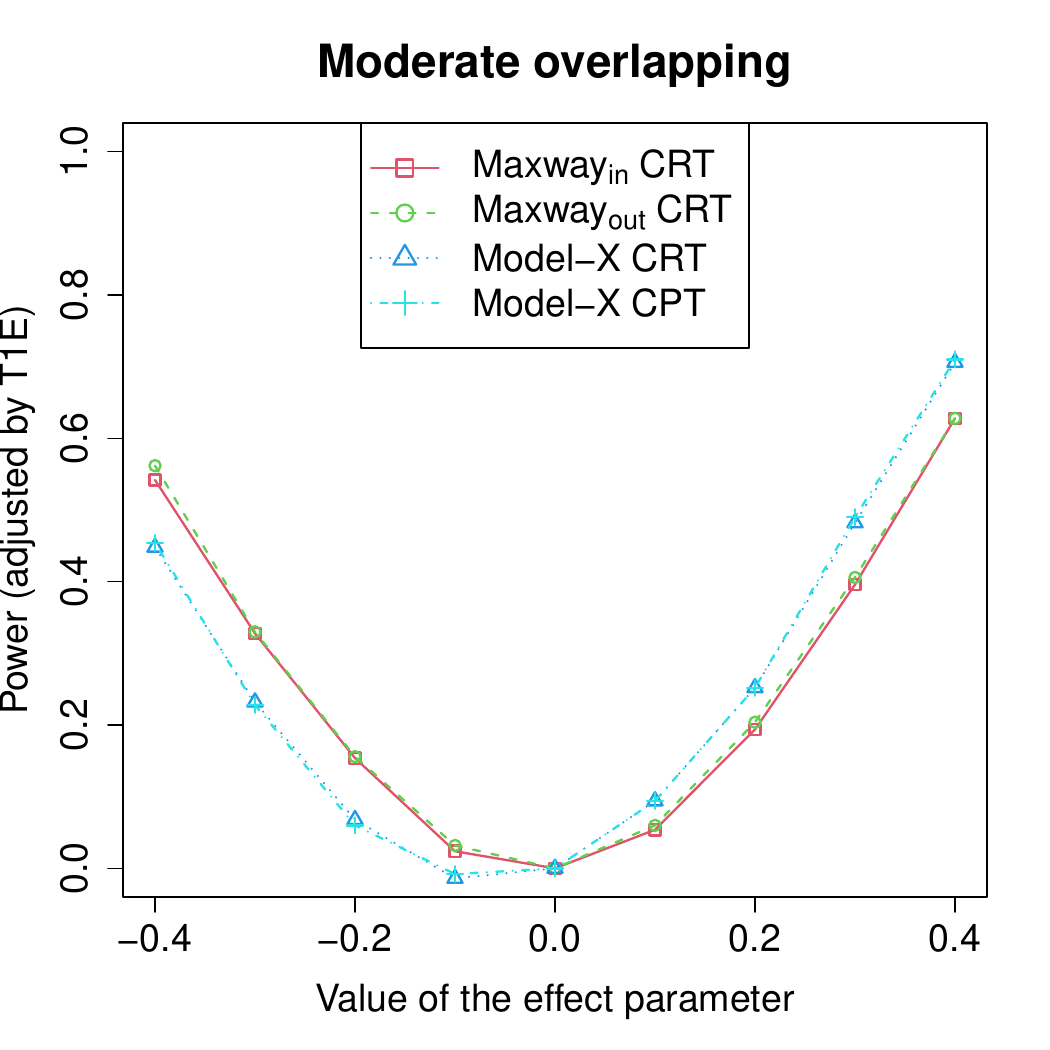}
    \includegraphics[width=0.41\textwidth]{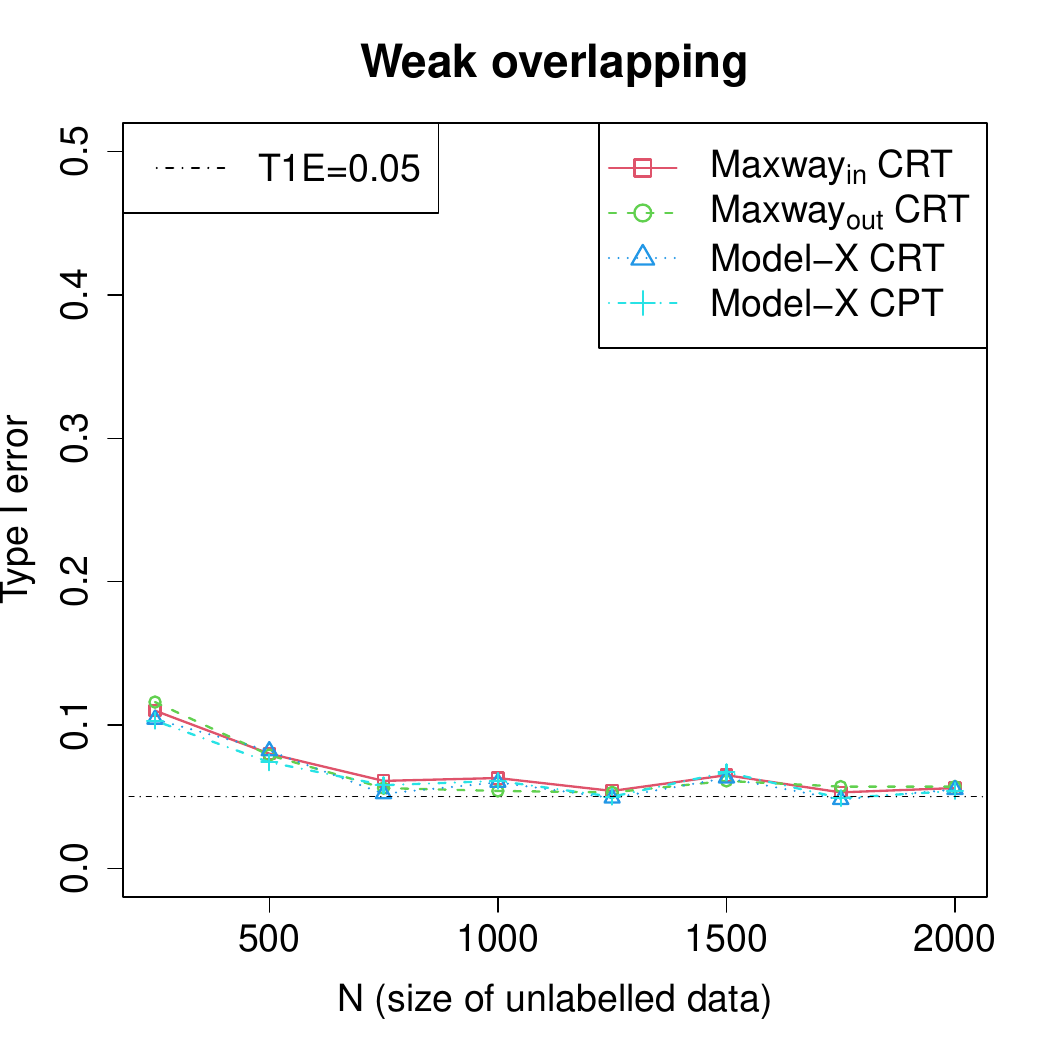}
    \includegraphics[width=0.41\textwidth]{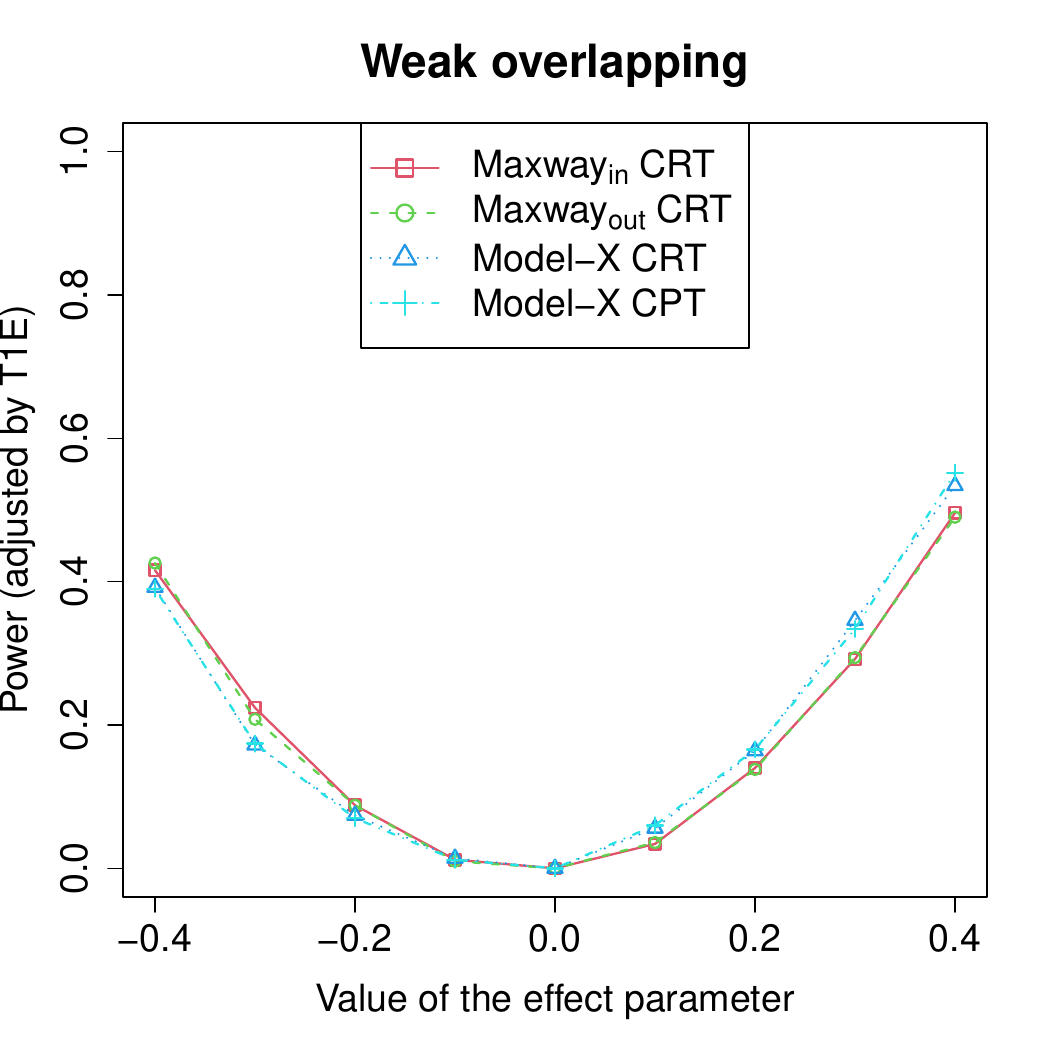}
    \caption{Type-I error and average power (adjusted by type-I error) under the three overlapping scenarios (i.e. $\eta=0,0.1,0.2$) of Configuration (SS.II) with $h(X,Z)=X$ and the d$_{\mathrm{I}}$ statistic used for testing, as introduced in Section \ref{sec:sim}. The replication number is $500$ and all standard errors are below $0.01$.} 
    \label{fig:binary:add:2}
\end{figure}

\begin{figure}[htbp]
    \centering
    \includegraphics[width=0.41\textwidth]{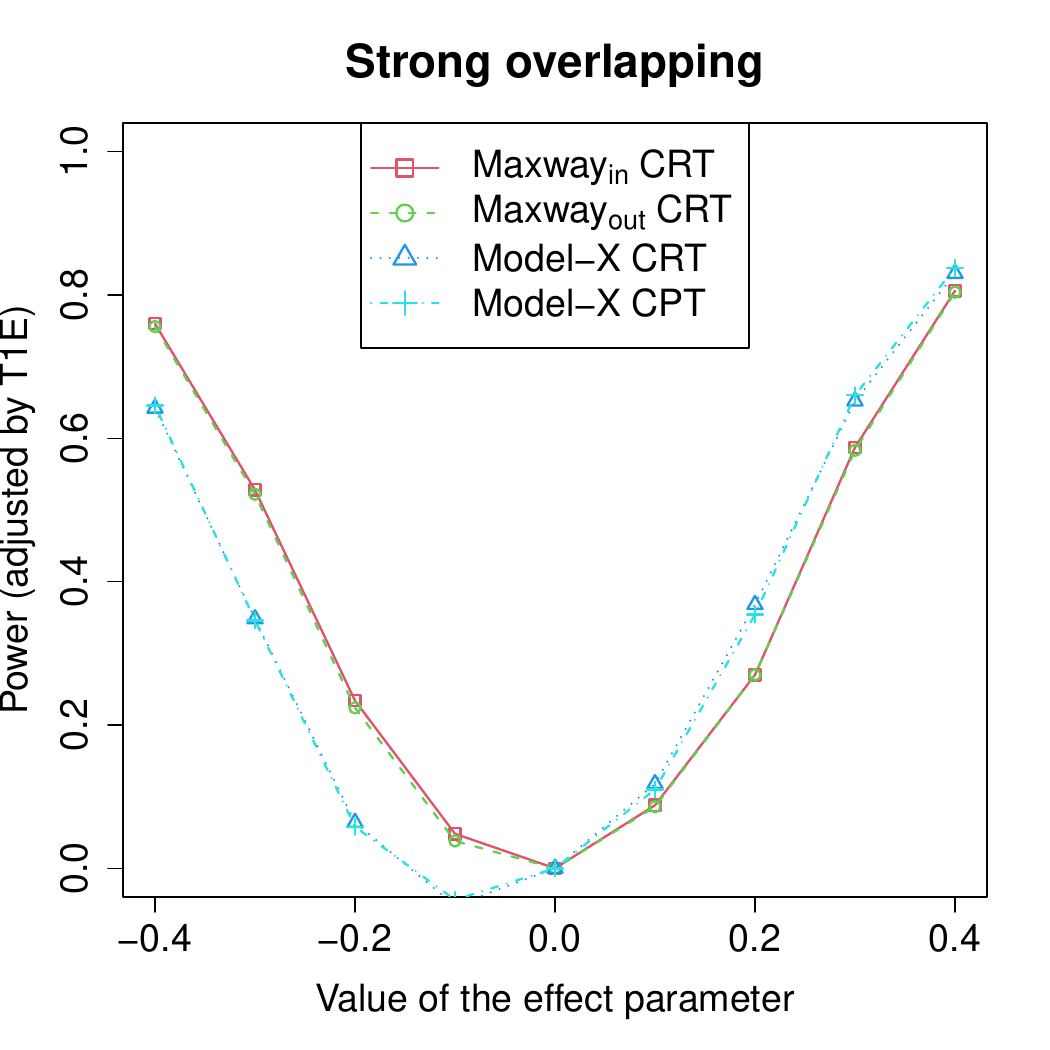}
    \includegraphics[width=0.41\textwidth]{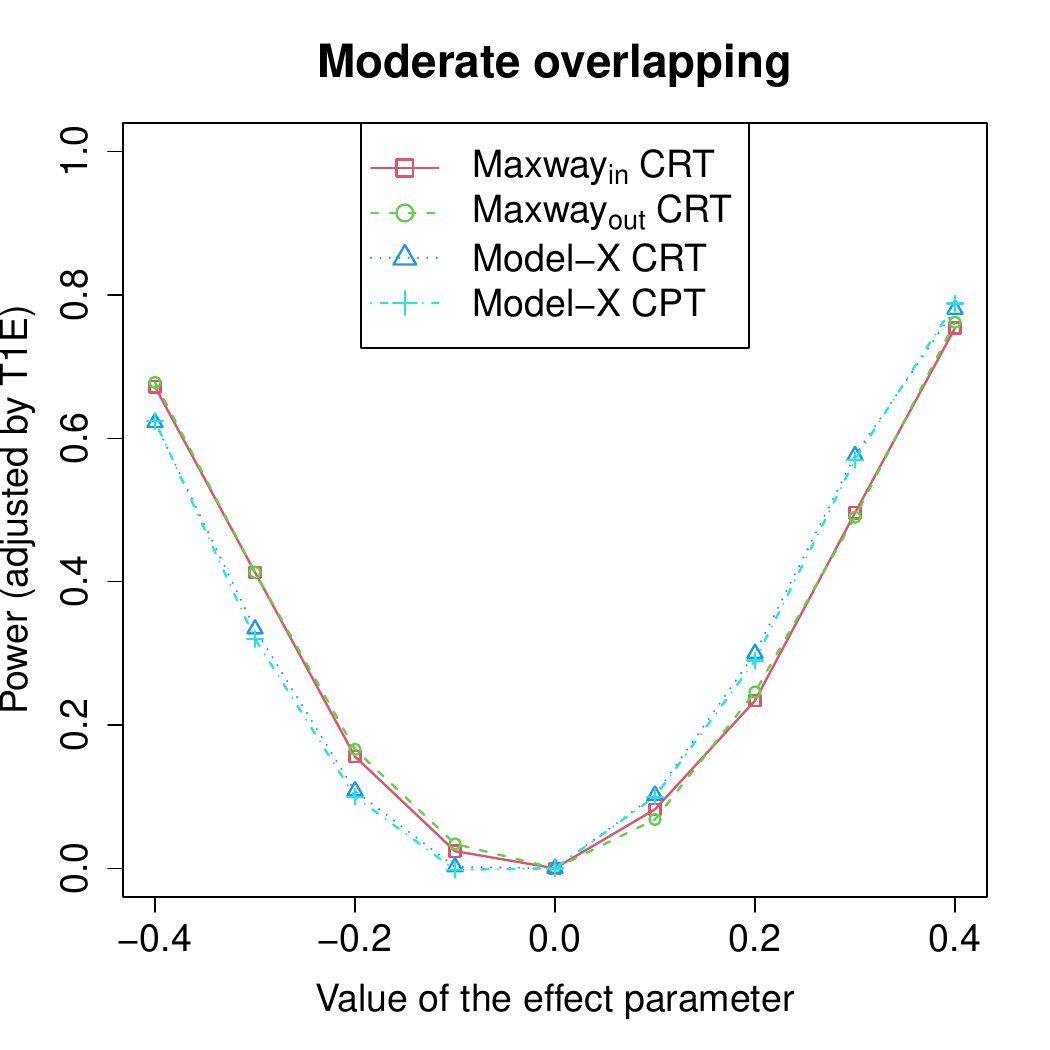}
    \includegraphics[width=0.41\textwidth]{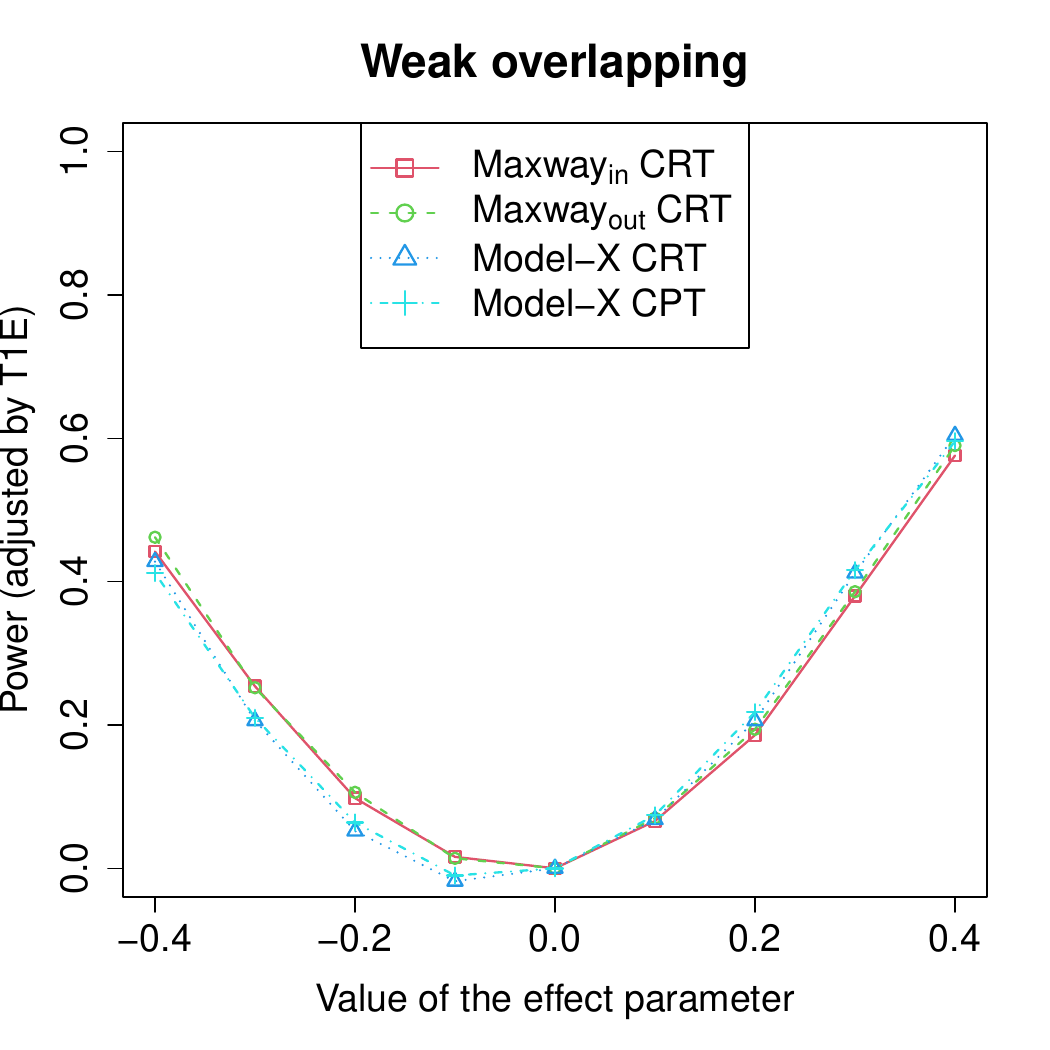}
    \caption{Type-I error and average power (adjusted by type-I error) under the three overlapping scenarios (i.e. $\eta=0,0.1,0.2$) of Configuration (SS.II) with the effect of $X$ containing interaction: $h(X,\Z)=X+X\sum_{j=1}^5Z_j$ and the d$_{\mathrm{I}}$ statistic used for testing, as introduced in Section \ref{sec:sim}. The type-I error has been presented in the left panel of Figure \ref{fig:binary:add:2}. The replication number is $500$ and all standard errors are below $0.01$.} 
    \label{fig:binary:add:3}
\end{figure}

\end{document}